\pdfoutput=1
\documentclass[a4paper,11pt]{article}
\usepackage{amsthm,amsfonts,amssymb,amsmath,wasysym,stmaryrd,textcomp}
\usepackage[section]{placeins}
\usepackage[pdftex]{graphicx}
  \DeclareGraphicsExtensions{.pdf}
\usepackage{color}
\usepackage{booktabs}
\newtheorem{theorem}{Theorem}[section]
\newtheorem{lemma}[theorem]{Lemma}
\newtheorem{proposition}[theorem]{Proposition}

\newtheorem{definition}[theorem]{Definition}

\newtheorem{remark}[theorem]{Remark}
\newtheorem{notation}{Notation}
\setcounter{topnumber}{2}
\setcounter{bottomnumber}{2}
\setcounter{totalnumber}{4}

\begin{document}
\title{A graph-theoretical approach for the computation of connected iso-surfaces based on
volumetric data}
\author{Abdulaziz Ali and Dieter Bothe\\
\mbox{}\\
Mathematical Modeling and Analysis\\
Center of Smart Interfaces\\
TU Darmstadt\\
Alarich-Weiss-Str.\ 10\\
64287 Darmstadt\\
Germany}
\maketitle
\newpage
\tableofcontents
\newpage
\begin{abstract}
The existing combinatorial methods for iso-surface computation are efficient for pure visualization purposes,
but it is known that the resulting iso-surfaces can have holes, and topological problems like missing or wrong connectivity
can appear. To avoid such problems, we introduce a graph-theoretical method for the computation of iso-surfaces on cuboid
meshes in $\mathbb{R}^3$. The method for the generation of iso-surfaces employs labeled cuboid graphs
$G(V,E,\mathcal{F})$ such that $V$ is the set of vertices of a cuboid $C\subset\mathbb{R}^3$, $E$ is the set of
edges of $C$ and $\mathcal{F}\,:\,V\rightarrow [0,1]$. The nodes of $G$ are weighted by the values of $\mathcal{F}$
which represents the volumetric information, e.g. from a Volume of Fluid method. Using a given iso-level
$c\in (0,1)$, we first obtain all iso-points, i.e. points where the value $c$ is attained by the edge-interpolated
$\mathcal{F}$-field. The iso-surface is then built from iso-elements which are composed of triangles and are
such that their polygonal boundary has only iso-points as vertices. All vertices lie on the faces of a single mesh cell.

We give a proof that the generated iso-surface is connected up to the boundary of the domain and it can be
decomposed into different oriented components. Two different components may have discrete points or line segments
in common. The graph-theoretical method for the computation of iso-surfaces developed in this paper enables to
recover local information of the iso-surface that can be used e.g. to compute discrete mean curvature and to
solve surface PDEs. Concerning the computational effort, the resulting
algorithm is as efficient as existing combinatorial methods.
\end{abstract}
\noindent{\bf Keywords: } connected iso-surface, iso-surface topology, iso-path, surface pseudo-normal.
\newpage
\section{Introduction}
An iso-surface is a level set of a continuous function whose domain is, in the considered case, $\mathbb{R}^3$.
Iso-surfaces are for example used to visualize scalar volume data processed in medicine, computational
fluid dynamics (CFD), geophysics, and chemistry. In medical imaging, by applying X-ray computed tomography
(CT) one obtains volume data which can be used to detect bones, tumors and cancer. In two-fluid systems,
the Volume of Fluid (VOF) method provides VOF-data which implicitly describes the interface between the fluids.

The Marching Cubes method~\cite{Lorensen87marchingcubes:} is a well known method for volume
visualization. It is a combinatorial and not a graph-theoretical method, being based on the tabulation
of 256 different configurations. This set of possible configurations is not complete and, hence,
resulting iso-surfaces can have holes. More generally, it is known that commonly employed iso-surface
algorithms can lead to surfaces with wrong connectivities and holes; see~\cite{Etiene:2012:TVI:2197070.2197097}
and further references therein. While this may not be an issue in a pure visualization context, it is a severe
problem if surface transport equations are to be solved, like surfactant transport on a fluid
surface~\cite{Alke_and_Bothe}. Other works like~\cite{Nielson:1991:ADR:949607.949621}
and~\cite{Chernyaev95marchingcubes} resolve the
ambiguity in Marching Cubes. In~\cite{conf/dgci/Lachaud96}, a topological approach for the
computation of iso-surfaces is given, some geometrical properties of the iso-surface are derived and an
algorithm for the computation of iso-surfaces is given. But there are still configurations which are not
covered by~\cite{Nielson:1991:ADR:949607.949621} as well as by~\cite{conf/dgci/Lachaud96}. For
example configurations in which an iso-surface only touches one or more vertices of a cuboid are not investigated
in~\cite{Chernyaev95marchingcubes},~\cite{conf/dgci/Lachaud96} and~\cite{Nielson:1991:ADR:949607.949621}.

The present work, we introduce a novel graph-theoretical method for the generation of iso-surfaces. We will show
that the generated iso-surface is connected up to the boundary of the domain and it can be decomposed into
different oriented components. Two different components may have discrete points or line segments as an
intersection. If two different components have common points or line segments, then each single component
can be identified and computed as if they are disjoint. This decomposition
of iso-surfaces into oriented components can be used e.g. to compute discrete mean curvature and to solve
surface PDEs for instance by means of a Finite Area method.

Our graph-theoretical approach employs labeled cuboid graphs. A labeled cuboid graph is denoted by
$G(V,E,\mathcal{F})$ such that $V$ is the set of vertices of a cuboid  $C\subset\mathbb{R}^3$, $E$
is the set of edges of $C$ and $\mathcal{F}\,:\,V\rightarrow [0,1]$. The nodes of $G$ are weighted
by the values of $\mathcal{F}$ and the weights lie in $[0,1]$. Using a given iso-level $c\in (0,1)$,
we interpolate on each edge of the graph to get all points, where the value $c$ is attained.
We call such a point on an edge of $G$, where the interpolated values equals $c$, an iso-point if one
of the edge end points has a label greater than $c$ and the other one less than or equal to $c$. From
this we get a piece of iso-surface whose boundary is a polygon such that its vertices
are iso-points and each of the edges lies on a face of $C$. We call such an iso-surface piece an iso-element, its
boundary an iso-path and each of the edges of the iso-path an iso-line. To compute the iso-element we use
a center point $P\in\mathbb{R}^3$ which is the arithmetic mean of the corresponding iso-points.
Then the iso-element is defined as the union of all triangles spanned by an iso-line edge and the vertex $P$.
Each of the sketches $(a)$ to $(d)$ of Figure~\ref{image_1} shows an iso-element of a labeled
cuboid graph. Sketch $(a)$ has a triangular iso-element. Sketches $(b)$ to $(d)$ of
Figure~\ref{image_1} have iso-elements such that the iso-paths may not lie on a common plane.

The iso-elements described above result from iso-paths which run on the faces of a single graph $G(V,E,\mathcal{F})$.
Besides this, iso-elements can also occur from two neighboring graphs in which case they lie on their common
face. This special case is of fundamental importance for connectivity of the iso-surface.

\begin{notation}
Let $G(V,E,\mathcal{F})$ be a labeled cuboid graph and $c\in (0,1)$ be an iso-level. Then we use the symbols
$\circ,\circleddash,\square,\bullet$ in sketches which illustrate $G(V,E,\mathcal{F})$ as well as subgraphs
of it. The symbols are used to characterize the labels of the nodes of $G$ and for interpolated values at points
that lie on edges of $G$. The symbol $\circleddash$ correspond to labels less than $c$. The symbol $\bullet$
correspond to labels greater than $c$ and the symbol $\square$ to labels equal to $c$. In addition, the symbol
$\circ$ correspond to labels less than or equal to $c$.
\label{int:not-1}
\end{notation}
\begin{figure}[!ht]
\includegraphics[width=0.8\linewidth]{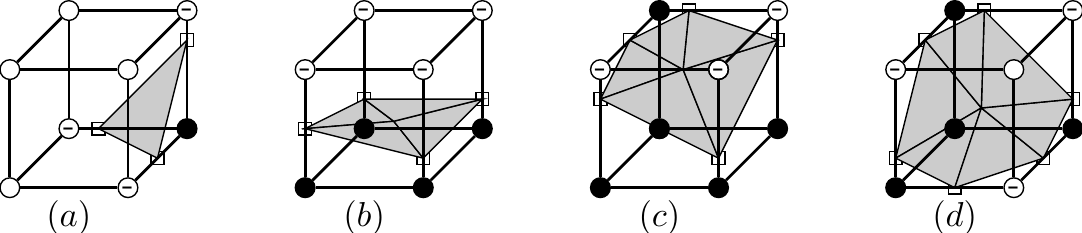}
\caption{The sketches $(a)$ to $(d)$ show iso-elements of labeled cuboid graphs. Each iso-element
is determined by its polygonal boundary, an iso-path.}
\label{image_1}
\end{figure}
\FloatBarrier

The iso-surface construction algorithms known so far do not yield in a simple way the local
information of the iso-surface such as neighborhood relations at common points or
edges of two or more iso-elements. Such topological information is required
to compute discrete mean curvature and to solve PDEs on an iso-surface. Hence, the development of a
new iso-surface computation method which provides a decomposition of the iso-surface into connected
components is required. We achieve this by introducing a purely graph-theoretical method for the
computation and decomposition of iso-surfaces. The resulting algorithm is very efficient.

The sketches in Figure~\ref{image_1} show labeled cuboid graphs having one iso-path.
But a labeled cuboid graph can have more than one iso-path and hence more than
one iso-element. The sketches in Figure~\ref{image_2} demonstrate that the number of
iso-paths depends on the labels of the graph. These are not meaningless, pathological cases, but they
occur naturally in dynamic processes with topological changes such as the crown splash of an impacting
droplet considered as an application in Section~5. Therefore, one of the main tasks is to introduce a
classification of labeled
cuboid graphs according to their subgraphs such that the number of iso-paths in a labeled cuboid graph
can be identified. The identification of the number of iso-paths and further analysis of the subgraphs of
a labeled cuboid graph is the main part of the present work.
\begin{figure}[!ht]
\includegraphics[width=0.8\linewidth]{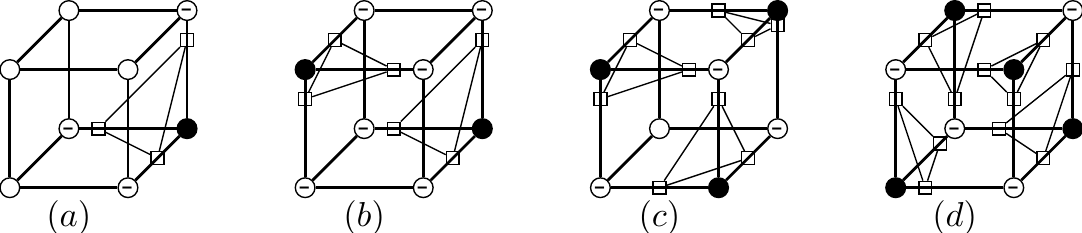}
\caption{Sketches $(a)$ to $(d)$ show labeled cuboid graphs with one to four distinct iso-paths.}
\label{image_2}
\end{figure}

To understand the topology of an iso-surface, we need to investigate pairs of labeled cuboid graphs
$G_1(V_1,E_1,\mathcal{F}_1)$ and $G_2(V_2,E_2,\mathcal{F}_2)$ for a common iso-level $c\in (0,1)$,
where the underlying cuboids $C_1$ and $C_2$, have a common face. Such graphs will be called
face-neighbored. If both labeled cuboid graphs have iso-paths with a common iso-line lying on the
common face of the cuboids, then they have iso-elements which have a common edge
(see Figure~\ref{image_3}). In case both end points of the common edge
are vertices of the common face then there is a possibility that more than two iso-paths can meet at
the common edge. Additionally, if the common iso-line of two iso-paths is a diagonal of a cuboid face,
then in each cuboid we can have a maximum of two iso-paths passing through the common edge; hence, it
can be a common edge for four different iso-paths. An edge of a cuboid can be a common edge for four
distinct cuboids and in each cuboid we can have a maximum of two iso-paths passing
through the common edge. Hence, there can be up to eight distinct iso-paths passing through the common edge.
All these cases have to be treated for a rigorous iso-surface computation. The sketches in Figure~\ref{image_3}
demonstrate two iso-paths of a pair of labeled cuboid graphs with a common edge.
\begin{figure}[!ht]
\includegraphics[width=0.6\linewidth]{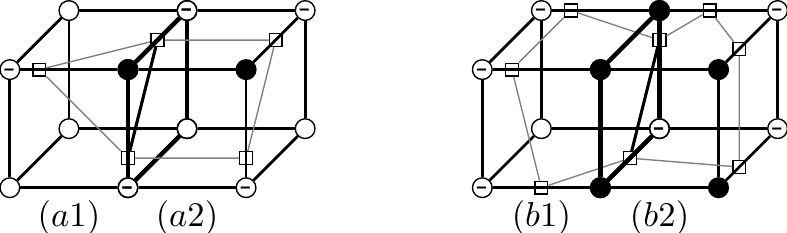}
\caption{Each pair of sketches $(a1), (a2)$ and $(b1), (b2)$ shows the iso-paths of
face-neighbored labeled cuboid graphs. The common face edges in both sketches is bold-framed.
Each pair of iso-paths in both sketches above has a common iso-line which is drawn in bold.}
\label{image_3}
\end{figure}
\FloatBarrier

An iso-point $P\in\mathbb{R}^3$ of an iso-path can be a common point of $4,5,6,7$ or $8$ iso-paths.
Let $P_i$ for $i=1,\ldots,n$ lie on the $i$-th iso-path $\omega_i$ such that $\overline{PP_i}$ is
an iso-line of $\omega_i$. Let $P_{c_i}$ be the center of $\omega_i$. Then we construct a
piece of iso-surface region $\Gamma_p$ using all iso-points $P_i$ and all iso-path centers
$P_{c_i}$ and $P$. Such a region which will be computed in Section 8 is required for computing
discrete mean curvature at the iso-point $P$ (see e.g. \cite{Meyer02Vismath}).

The iso-surface computation algorithm needs a partition $\mathcal{T}$ of a closed polygonal domain
$\overline{\Omega}\subset\mathbb{R}^3$ into cuboids such that two cuboids have either a common face or
a common vertex or a common edge or they are disjoint. Here, polygonal domain means domain with
polygonal boundary. Then, to obtain labeled cuboid graphs, the vertices of each of the cuboids in
the partition $\mathcal{T}$ of $\overline{\Omega}$ are labeled with real numbers in $[0,1]$. The labeled
cuboid graphs will be divided into two classes, depending on whether they have a single iso-path
or at least two iso-paths (cf.,Figure~\ref{image_2}).
Labeled cuboid graphs of the first class, having only a single iso-path, are called irreducible;
members of the other class are called reducible. This classification of labeled cuboid graphs is
fundamental for our algorithm.

The algorithm for iso-surface construction uses operations on labeled cuboid graphs such
that a reducible labeled cuboid graph $G(V,E,\mathcal{F})$ will be transformed to an irreducible one
$G'(V,E,\mathcal{F}')$. Each step in the transformation of $G$ from reducibility to irreducibility
gives an iso-element of $G$. An irreducible labeled cuboid graph finally has a single iso-element.
In addition, we can have an iso-path on single faces of a reducible labeled cuboid graph, on
which at least two iso-lines lie.

We get the iso-surfaces by collecting all iso-elements of each of the labeled cuboid graphs in
$\mathcal{T}$. The algorithm provides rich local information on the iso-surface like the number of
iso-elements with a common edge, or iso-elements with at least one edge of it being an edge of the
underlying cuboid net, or at least one edge of the iso-element being a face diagonal
of a cuboid. These local information is used for a simple identification of the connected components
of the iso-surface.

Another main result of the present work is that the number $n\in\mathbb{N}$ of iso-elements with a
common iso-line is an element of $\{2,4,6,8\}$. Hence, the number of connected components of the
iso-surface with a common iso-line may be an element of $\{1,2,3,4\}$. In particular, we will show
that triple lines at which three surfaces meet do not occur.

The same principle ideas works to discretization using tetrahedra \cite{second-article-ali-bothe}.
It is very likely that the same principle ideas can be adapted to other discretizations (more
general polyhedral cells) and also to space dimensions different from 3.

This algorithm for computation of iso-surfaces has computational complexity of order $O(N)$, where $N$ is
the number of cuboids in $\mathcal{T}$. Iso-surface computation algorithms based on
combinatorial approaches have as well a complexity of $O(N)$, but the algorithms do not give the full
topological information of the iso-surfaces. More severe, the iso-surface will in general be not complete.\\

\vspace{-0.3cm}
\noindent The paper is organized as follows:\\
\vspace{-0.3cm}

\noindent In Section 2 we introduce the main definitions and notations. We define
the concept of a labeled cuboid graph and its subgraphs. In addition, graph operations on
labeled cuboid graphs and on subgraphs are defined. Furthermore, we define iso-paths which
are the boundary of iso-elements that will be computed using labeled cuboid graphs. In Section 3,
equivalence classes of labeled cuboid graphs and its subgraphs are introduced. Rules for computation
of iso-paths of labeled cuboid graphs are given in Section 4. The classification of the different
types of labeled cuboid graphs for the computation of iso-paths and iso-path computation rules are
given in Section 5. Additionally, the algorithm for a complete iso-path computation of all labeled
cuboid graphs $\mathcal{T}$ is given. Furthermore, we include figures illustrating computed iso-surfaces
for snap shots of a simulated collision of two liquid droplets and
for a crown splash of an impacting droplet. The connectedness of the constructed iso-surface is proven
in Section 6. Finally, in Section 7 we give definitions of neighborhoods of iso-paths on which
the decomposition of iso-surfaces in connected components is based. Having this decomposition, we
show in Section 8 how to efficiently compute iso-surface normals with common orientation on a single component
and how to compute a surface region around an iso-point within the component from which discrete mean curvature
can be computed.

In this paper we introduce appropriate symbols and definitions which do not always follow the traditional way
but efficiently describe the method. Proofs are often given with the help of appropriate figures.
\newpage
\section{Labeled Cuboid Graphs}
In this paper a {\it labeled graph} $G(V,E,\mathcal{F})$ denotes a triple $(V,E,\mathcal{F})$ of
$V\subset\mathbb{R}^3$, $E\subset\mathbb{R}^3$ and $\mathcal{F}\,:\,V\longrightarrow [0,1]$ such
that $V$ is a set of vertices (nodes), $E$ is a set of edges with end points in $V$ and
$\mathcal{F}$ assigns real numbers from $[0,1]$ as weights (labels) to the nodes.

We call $C\subset\mathbb{R}^3$ a {\it cuboid} if $C=[a_1,b_1]\times[a_2,b_2]\times[a_3,b_3]$.
Let $G(V,E,\mathcal{F})$ be a labeled graph. We call $G(V,E,\mathcal{F})$ a {\it labeled cuboid graph},
if $V=\{P_1,\ldots,P_8\}$ and $E=\{e_1,\ldots,e_{12}\}$ are the set of vertices and edges of $C$,
respectively. In this case we call $C$ the {\it cuboid} of $G$. We say that a graph
$H(V_h,E_h,\mathcal{F}_h)$ is a {\it subgraph} of $G(V,E,\mathcal{F})$ if $V_h\subset V$,
$E_h\subset E$, and $\mathcal{F}_h$ is the restriction of $\mathcal{F}$ to $V_h$.

Let $G(V,E,\mathcal{F})$ be a graph and $P_1$, $P_2\in V$. We call $P_1$ {\it incident} to $P_2$
if $P_1$ and $P_2$ are the end points of an edge $e\in E$. In case $H(V_h,E_h,\mathcal{F}_h)$ is
a subgraph of $G(V,E,\mathcal{F})$ we write "$P_1$ {\it is incident to} $P_2$ {\it in} $H$" to
express the fact that $P_1$ and $P_2$ are the end points of an edge $e\in E_h$.

In the present paper, we only consider labeled graphs $G(V,E,\mathcal{F})$ having the following
connectivity property: For any node $P_1\in V$ there exists another node $P_2\in V$ such
that $P_1$ is incident to $P_2$. \\

\noindent{\bf Note: }If suitable, we use the abbreviation $G$ for a labeled cuboid graph
$G(V,E,\mathcal{F})$. Additionally, we sometimes abbreviate a subgraph $H(V_h,E_h,\mathcal{F}_h)$
of $G$ by $H$ and also use the notation $G_i(V_i,E_i,\mathcal{F}_i)$, abbreviated as $G_i$, for
labeled cuboid graphs $G(V_i,E_i,\mathcal{F}_i)$, where $i\in\mathbb{N}$. Analogous abbreviations
will be used for subgraphs of $G_i$. Moreover, we use the shorthand notation $G'$ for the labeled
cuboid graph $G(V,E,\mathcal{F}')$; analogously for subgraphs of it.

\subsection{Notations}
Below we give some notations which will be used in this work. Recall from Notation~\ref{int:not-1}
the symbols $\circ,\circleddash,\square,\bullet$ which will be used to indicate the weight of the
nodes of a labeled cuboid graph, its subgraphs and iso-points.\\

\noindent{\bf Partition of a domain into cuboids: }Let $\Omega\subset\mathbb{R}^3$ be a polygonal domain.
Here, polygonal domain means a domain with polygonal boundary. We denote by $\mathcal{T}$ the partition
of $\overline{\Omega}$ into cuboids such that two cuboids have either a common face or a common
vertex or a common edge or they are disjoint. We call such a partition $\mathcal{T}$ of $\overline{\Omega}$
{\it partition of} $\overline{\Omega}$ {\it into cuboids}.\\

\noindent{\bf Cuboid grid: }Let $\Omega\subset\mathbb{R}^3$ be a polygonal domain and let $\mathcal{T}$
be the partition of $\overline{\Omega}$ into cuboids. Then we call the vertices of all cuboids in
$\mathcal{T}$ a cuboid grid.\\

\noindent{\bf Parallel faces: } We say that two faces $F_1$ and $F_2$ of a cuboid $C$ are {\it parallel} if
both $F_1$ and $F_2$ have no common nodes; this is abbreviated by $F_1\parallel F_2$. In case $F_1$ and $F_2$
have common nodes we say that both faces are not parallel, symbolized as $F_1 \nparallel F_2$.\\

\noindent{\bf Face subgraph (a face, for short): }We say that a graph $H(V_h,E_h,\mathcal{F}_h)$ is a
{\it face subgraph} of a labeled cuboid graph $G(V,E,\mathcal{F})$, if $H$ contains a single face of $G$ and $V_h$
is the set of nodes of the face in $H$.\\

\noindent{\bf Edge subgraph (an edge, for short) or edge: }We say that a graph $H(V_h,E_h,\mathcal{F}_h)$ is an
{\it edge subgraph} of a labeled cuboid graph $G(V,E,\mathcal{F})$, if $H$ contains a single edge of $G$ and $V_h$
is the set of nodes of the edge in $H$.\\

\noindent{\bf Face and Edge neighbors: }We say that two labeled cuboid graphs $G_1(V_1,E_1,\mathcal{F}_1)$ and
$G_2(V_2,E_2,\mathcal{F}_2)$ are {\it face-neighbors} if, given $C_1$ and $C_2$ as the cuboids of $G_1$ and
$G_2$, respectively, the intersection $C_1\cap C_2$ is a face of both cuboids. We say that two cuboid graphs
$G_1$ and $G_2$ are {\it edge-neighbors} if the intersection $C_1\cap C_2$ is an edge of both cuboids.

\subsection{Interpolations and Iso-paths}
Suppose we have a labeled cuboid graph $G(V,E,\mathcal{F})$ and an iso-level $c\in (0,1)$. Then we
interpolate linearly between the end points of $e\in E$ and their weights to get a possible point on $e$
with the value $c$. Then, by connecting two distinct interpolated points of $G$ that lie on the same face
of $G$, we get a line. If, by joining each pair of these points that lie on the same face of $G$, we obtain
a closed path that does not cross itself, we call this path a simple closed path. Any such closed path
is the boundary of a continuous 2-dimensional manifold in $\mathbb{R}^3$, which will be defined later.

In this section we give notations and definitions which will be used throughout the text.
\begin{definition}(Iso-point). Let $G(V,E,\mathcal{F})$ be a labeled cuboid graph, $\xi=\cup_{e\in E}e$
the union of all edges of $G$ and $c\in (0,1)$ be an iso-level. We define a function
$f\,:\,\xi\longrightarrow [0,1]$ by piecewise definition:  on $e\in E$, let $f$ be defined by
\begin{equation}
f(x)=f_0+(f_1-f_0)\frac{(x-x_0)}{(x_1-x_0)}\qquad \mbox{ for } x\in e,
\label{eq:labeled-graph-11}
\end{equation}
where $x_0$ and $x_1$ are the nodes of $e$ and $f_0=\mathcal{F}(x_0)$, $f_1=\mathcal{F}(x_1)$.
For $x\in e$ we call the value $f(x)$ the f-value of $x$. If $f(x)=c$ such that $f_0\leq c<f_1$ or
$f_1\leq c<f_0$ then we call $x\in e$ an {\it iso-point} of $G$. For an iso-point $x\in e$ we have
\begin{equation}
x(c) = x_0+(x_1-x_0)\frac{c-f_0}{f_1-f_0}.
\label{eq:labeled-graph-12}
\end{equation}
\label{def:labeled-graph-5}
\end{definition}

\noindent{\bf Iso-nodes, disperse nodes and continuous nodes: }Let $G(V,E,\mathcal{F})$ be a labeled cuboid
graph and $c\in (0,1)$ an iso-level. Then we call all nodes of $G$ with label greater than $c$ {\it disperse nodes},
otherwise {\it continuous nodes}. All nodes of $G$ which are iso-points are also called {\it iso-nodes}. We denote
by $D(G)$ the total number of disperse nodes of $G$. \\

\noindent{\bf disperse/continuous graph, face and edge: }Let $G(V,E,\mathcal{F})$ be a labeled cuboid graph and
$c\in (0,1)$ an iso-level. We call $G$ a {\it disperse graph} if all nodes of $G$ are disperse nodes and in case
all nodes of $G$ are continuous we call $G$ a {\it continuous graph}. A disperse graph $G$ is symbolized as
$G_{disp}$ and a continuous graph $G$ is symbolized as $G_{cont}$. A face subgraph $H(V_h,E_h,\mathcal{F}_h)$ of
$G$ is a {\it disperse face} if all nodes of $H$ are disperse; we call it a {\it continuous face} if all nodes of
$H$ are continuous. We call an edge subgraph $H(V_h,E_h,\mathcal{F}_h)$ of $G$ a {\it disperse edge} if all nodes of
$H$ are disperse; we call it a {\it continuous edge} if all nodes of $H$ are continuous.\\

\noindent{\bf L-face and Trivial L-face: }Let $G(V,E,\mathcal{F})$ be a labeled cuboid graph and
$c\in (0,1)$ an iso-level. We call a face subgraph $H(V_h,E_h,\mathcal{F}_h)$ of $G$
an {\it L-face} if there exist two disperse and two continuous nodes of $H$ such that
the disperse nodes are not incident in $H$. An L-face $H(V_h,E_h,\mathcal{F}_h)$
is called a {\it trivial L-face}, if both continuous nodes are iso-nodes. All other L-faces are called
{\it non-trivial L-faces}. We denote by $L(G)$ the set of all L-faces of $G$. Figure~\ref{image_4} shows
all possible L-faces of $G$.
\begin{figure}[!ht]
\includegraphics[width=0.7\linewidth]{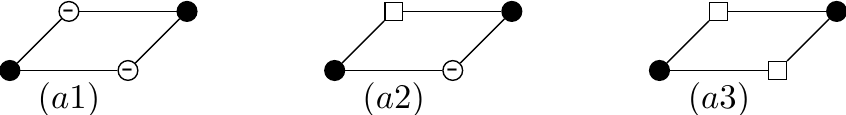}
\caption{Sketches $(a1)$ and $(a2)$ represent non-trivial L-faces and sketch $(a3)$ represents a trivial L-face.}
\label{image_4}
\end{figure}
\FloatBarrier

\noindent{\bf Singular/regular face: }Let $G(V,E,\mathcal{F})$ be a labeled cuboid graph and
$c\in (0,1)$ an iso-level. A face subgraph $H(V_h,E_h,\mathcal{F}_h)$ of
$G$ is called a {\it singular face} if three nodes of $H$ are disperse and the other node is
an iso-node. We say that a face subgraph $H(V_h,E_h,\mathcal{F}_h)$ of $G$ is a
{\it regular face} if  $H$ is not an L-face, not a disperse face, not a continuous face and not a
singular face. Figure~\ref{image_5} shows all possible regular faces and a singular face in
a labeled cuboid graph $G$.
\begin{figure}[!ht]
\includegraphics[width=0.9\linewidth]{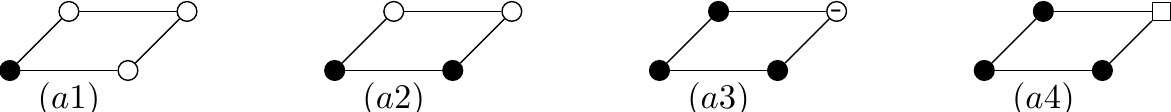}
\caption{Sketches $(a1)$, $(a2)$ and $(a3)$ represent regular faces and sketch $(a4)$ represents a singular face.}
\label{image_5}
\end{figure}
\FloatBarrier

\subsubsection{Iso-path and Iso-Elements}
The definitions of an iso-element and an iso-path are intertwined to each other. An iso-element is
a 2-dimensional manifold in $\mathbb{R}^3$ composed of flat triangular patches and computed
by using the iso-points of a labeled cuboid graph for a given iso-level $c\in (0,1)$. An iso-path is
the boundary of an iso-element.\\

\noindent{\bf Cyclically ordered points: }Let $P_1,\ldots,P_n$ be distinct points in $\mathbb{R}^3$, where
$n\geq 3$. Suppose that $J=\cup_{i=1}^ne_i$ with $e_i=\overline{P_iP_{i+1}}$ for $i=1,\ldots,n-1$ and
$e_n=\overline{P_nP_1}$ is a simply closed path. Then we call the points
$P_1,\ldots,P_n$ {\it cyclically ordered} and denote the simply closed path $J$ by $[P_1,\ldots,P_n]$.

We denote by $P_c:=\frac{1}{n}\sum_{i=1}^nP_i$ the center of $\{P_1,\ldots,P_n\}$ and define a surface with
boundary $[P_1,\ldots,P_n]$ by
\begin{equation}
[P_1,\ldots,P_n|P_c]:=\cup_{i=1}^{n-1}Tri(P_c,P_i,P_{i+1})\cup Tri(P_c,P_1,P_n),
\label{eq:labeled-graph-13}
\end{equation}
where $Tri(A,B,C)$ denotes the filled triangle spanned by the points $A,B,C$ of
$\mathbb{R}^3$. Note that the patches
which form such a surface do not lie in a common plane, in general, but for $n=3$ we have
\[
[P_1,P_2,P_3|P_c]=Tri(P_1,P_2,P_3).
\]
The surface $[P_1,\ldots,P_n|P_c]$ is an oriented and connected manifold in $\mathbb{R}^3$.\\

\noindent{\bf Iso-line: }Let $G(V,E,\mathcal{F})$ be a labeled cuboid graph and $c\in (0,1)$ be a
given iso-level. We say that a line segment $l$ is an {\it iso-line} of $G$ if its two end points
$P_1$ and $P_2$ are both iso-points, lying on the same face of the cuboid of $G$. We then call $P_1$
incident to $P_2$. This defines incidence between iso-points.

\begin{definition}(Iso-path and iso-element). Let $G(V,E,\mathcal{F})$ be a labeled
cuboid graph and let $c\in (0,1)$ be an iso-level. Let $G$ have $n\geq 3$ iso-points
$Q_1,\ldots,Q_n$. Let there be a subset $\{P_1,\ldots,P_m\}\subset\{Q_1,\ldots,Q_n\}$ with $3\leq m\leq n$
such that the set of line segments $\{\overline{P_1P_2},\ldots,\overline{P_{m-1}P_m}, \overline{P_mP_1}\}$
is a subset of the set of iso-lines of $G$. Furthermore, let the iso-points $P_1,\ldots,P_m$
define a simply closed path $[P_1,\ldots,P_m]$ with $P_c$ being its center. Then we call $[P_1,\ldots,P_m]$
an inner iso-path of $G$ if $[P_1,\ldots,P_m]$ does not lie on a single non-trivial L-face of $G$.

We call $[P_1,\ldots,P_m]$ an outer iso-path of $G$ if it lies on a single non-trivial L-face
$H(V_h,E_h,\mathcal{F}_h)$ of $G$ and satisfies
\[
[P_1,\ldots,P_m]=\bigcup_{k\in K}J_k\cap M_h,
\]
where $\{J_k\}_{k\in K}$ are the inner iso-paths of $G$ and its face-neighbor $G'(V',E',\mathcal{F}')$,
where $V\cap V'=V_h$ and $M_h$ is the common face of the cuboids of $G$ and $G'$.

We call $[P_1,\ldots,P_m]$ an iso-path if it is an inner or an outer iso-path, and we then call
$[P_1,\ldots,P_m|P_c]$ an iso-element of $G$.
\label{def:labeled-graph-6}
\end{definition}

\noindent{\bf Corresponding iso-element, iso-path and iso-line: }Let $G(V,E,\mathcal{F})$ be a labeled
cuboid graph and let $c\in (0,1)$ be an iso-level. Let us denote by $\omega$ one of the iso-paths
of $G$ and by $Z$ the iso-element which is bounded by all of the edges of $\omega$
as given by Definition~\ref{def:labeled-graph-6}. Then we say $\omega$ {\it corresponds to} $Z$ or $Z$
{\it corresponds to} $\omega$. Furthermore, if $l$ is an iso-line of $G$ which lies in $\omega$ then we say
that $l$ {\it corresponds to} $\omega$.\\

\noindent{\bf Iso-line neighbor: }Let $G_1(V_1,E_1,\mathcal{F}_1)$ and $G_2(V_2,E_2,\mathcal{F}_2)$
be labeled cuboid graphs and $c\in (0,1)$ a given iso-level. Assume $G_1=G_2$ or $G_1$ and $G_2$ are
face or edge-neighbors according to the given context. Let $\omega_1$ and $\omega_2$ be distinct
iso-paths of $G_1$ and $G_2$, respectively. Furthermore, let $l_1$ and $l_2$ be iso-lines
corresponding to $\omega_1$ and $\omega_2$, respectively such that $l_1$ and $l_2$ have the same end points.
Then we call the iso-lines $l_1$ and $l_2$ {\it neighbors}.\\

\noindent{\bf Connectedness of iso-path: }Let $G(V,E,\mathcal{F})$ be a labeled cuboid graph and
let $c\in (0,1)$ be an iso-level. Let $\omega$ be an iso-path of $G$. By definition of an iso-path,
$\omega$ contains $3\leq n\leq 6$ iso-lines $l_1,\ldots,l_n$ of $G$. We say that the iso-path $\omega$
of $G$ is {\it connected} if every $l_i$ lies on an iso-path $\omega_i$ of $G_i$, where $G_1,\ldots,G_n$
are labeled cuboid graphs with the same iso-level $c$ and $\omega\neq \omega_i$ for all $i=1,\ldots,n$.
This means, here {\it connectedness} refers to {\it connectedness to all sides of the iso-path}.\\

\noindent{\bf Iso-surfaces and connectedness of iso-surfaces: }Let $\Omega\subset\mathbb{R}^3$ be a
polygonal domain and let $\mathcal{T}$ be the partition of $\overline{\Omega}$ into cuboids.
Furthermore, let the vertices of the cuboids in $\mathcal{T}$ be labeled by real weights from $[0,1]$.
Let $c\in (0,1)$ be an iso-level. Then we call the surfaces obtained by joining all iso-elements
of the labeled cuboids in $\mathcal{T}$ {\it iso-surfaces}. We say that an iso-surface is {\it connected}
if each iso-path which does not have an edge that lies on the boundary of $\Omega$ is connected. That
means, connected iso-surfaces have {\it no holes} except, possibly, at the boundary $\partial\Omega$.
For further theoretical investigations we here assume that the iso-surfaces do not touch the boundary
of $\Omega$.

\subsection{Mapping between labeled graphs}
Let $C$ be a cuboid with set of vertices $V$ and set of edges $E$. Then we denote by $\mathbb{G}(V,E)$
the set of all labeled cuboid graphs $G(V,E,\mathcal{F})$ with $\mathcal{F}\,:\,V\longrightarrow [0,1]$.
Let $q\,:\,[0,1]\longrightarrow [0,1]$ be a given function. Then we define a graph operation
$\mathcal{Q}\,:\,\mathbb{G}\longrightarrow\mathbb{G}$ by
\begin{equation}
G(V,E,\mathcal{F}')=G(V,E,q\circ\mathcal{F}),
\label{eq:labeled-graph-3}
\end{equation}
where $G(V,E,\mathcal{F})\in \mathbb{G}(V,E)$. We denote by $I$ the identity
mapping on $\mathbb{G}(V,E)$.

\begin{definition}(Subgraph mapping). Let $V_h\subset V$ and $E_h\subset E$, where $V$ and $E$ are the set
of vertices and the set of edges of a cuboid $C$, respectively. Then we denote by $\mathbb{H}(V_h,E_h)$ the
set of all subgraphs with set of vertices $V_h$ and set of edges $E_h$ of labeled graphs
$G(V,E,\mathcal{F})\in\mathbb{G}(V,E)$. Let $q_h\,:\,[0,1]\longrightarrow [0,1]$ be a given function.
Then we define the graph operation $\mathcal{Q}_h\,:\,
\mathbb{H}(V_h,E_h)\longrightarrow\mathbb{H}(V_h,E_h)$ by
\begin{equation}
\mathcal{Q}_h(H(V_h,E_h,\mathcal{F}_h))=H(V_h,E_h,\mathcal{F}'_h),
\label{eq:labeled-graph-6}
\end{equation}
where $\mathcal{F}'_h=q_h\circ\mathcal{F}_h$.
\label{def:labeled-graph-2}
\end{definition}

\begin{definition}(Subgraph replacement). Let $H(V_h,E_h,\mathcal{F}_h)$ be a subgraph of the labeled cuboid
graph $G(V,E,\mathcal{F})$. Given $q_h\,\,:[0,1]\longrightarrow [0,1]$, we define a function $\mathcal{F}'$
on $V$ by
\begin{equation}
\mathcal{F}':=\left\{
   \begin{array}{ll}
     \mathcal{F} & \mbox{on }\;\; V\setminus V_h \\
     q_h\circ\mathcal{F}_h & \mbox{on }\;\; V_h
   \end{array}\right..
\label{eq:labeled-graph-7}
\end{equation}
We then set
\begin{equation}
G|_{H\rightarrow H'}:=G(V,E,\mathcal{F}'),
\label{eq:labeled-graph-8}
\end{equation}
and say that $G|_{H\rightarrow H'}$ is obtained from the graph $G(V,E,\mathcal{F})$ by replacing
the subgraph $H(V_h,E_h,\mathcal{F}_h)$ of $G(V,E,\mathcal{F})$ by $H(V_h,E_h,q_h\circ\mathcal{F}_h)$.
\label{def:labeled-graph-3}
\end{definition}

\begin{definition}(Subgraph replacement to a continuous graph). Let $H(V_h,E_h,\mathcal{F}_h)$ be a subgraph
of the labeled cuboid graph $G(V,E,\mathcal{F})$. Additionally, let $G'(V,E,\mathcal{F}')$ be a continuous graph
with $\mathcal{F}'\,:\,V\,\longrightarrow \{0\}$ and let $H'(V_h,E_h,\mathcal{F}_h')$ be the subgraph of $G'$
corresponding to $H$. The subgraphs $H$ and $H'$ have the same nodes and edges. Then we define a function
$\hat{\mathcal{F}}'$ on $V$ by
\begin{equation}
\hat{\mathcal{F}}':=\left\{
   \begin{array}{ll}
     \mathcal{F}' & \mbox{on }\;\; V\setminus V_h \\
     \mathcal{F}_h & \mbox{on }\;\; V_h
   \end{array}\right..
\label{eq:labeled-graph-7-1}
\end{equation}
We set
\begin{equation}
G'|_{H'\rightarrow H}:=G'(V,E,\hat{\mathcal{F}}'),
\label{eq:labeled-graph-8-1}
\end{equation}
and say that $G'|_{H'\rightarrow H}$ is obtained from the graph $G'(V,E,\mathcal{F}')$ by replacing
the subgraph $H'(V_h,E_h,\mathcal{F}_h')$ of $G'(V,E,\mathcal{F}')$ by $H(V_h,E_h,\mathcal{F}_h)$.
\label{def:labeled-graph-3-1}
\end{definition}

\begin{definition}(Subgraph replacement to a disperse graph). Let $H(V_h,E_h,\mathcal{F}_h)$ be a subgraph
of the labeled cuboid graph $G(V,E,\mathcal{F})$. Additionally, let $G'(V,E,\mathcal{F}')$ be a disperse graph
with $\mathcal{F}'\,:\,V\,\longrightarrow \{1\}$ and let $H'(V_h,E_h,\mathcal{F}_h')$ be the subgraph of $G'$
corresponding to $H$. The subgraphs $H$ and $H'$ have the same nodes and edges. Then we define a function
$\hat{\mathcal{F}}'$
on $V$ by
\begin{equation}
\hat{\mathcal{F}}':=\left\{
   \begin{array}{ll}
     \mathcal{F}' & \mbox{on }\;\; V\setminus V_h \\
     \mathcal{F}_h & \mbox{on }\;\; V_h
   \end{array}\right..
\label{eq:labeled-graph-7-2}
\end{equation}
We set
\begin{equation}
G'|_{H'\rightarrow H}:=G'(V,E,\hat{\mathcal{F}}'),
\label{eq:labeled-graph-8-2}
\end{equation}
and say that $G'|_{H'\rightarrow H}$ is obtained from the graph $G'(V,E,\mathcal{F}')$ by replacing
the subgraph $H'(V_h,E_h,\mathcal{F}_h')$ of $G'(V,E,\mathcal{F}')$ by $H(V_h,E_h,\mathcal{F}_h)$.
\label{def:labeled-graph-3-2}
\end{definition}

Now we define a so-called general labeled graph $G'(V',E',\mathcal{F}')$ which we get by substituting
part of a labeled cuboid graph $G(V,E,\mathcal{F})$ for a given iso-level $c\in (0,1)$ by a graph, where the
nodes of the graph are labeled with values in $[0,1]$. The graph that will be substituted consists of
one or more inner iso-paths of $G$. Therefore, such a general labeled cuboid graph contains at least one
inner iso-path of $G$.
\begin{definition}(General labeled graph). Let $G(V,E,\mathcal{F})$ be a labeled cuboid graph.
Let $V_e\subset\mathbb{R}^3$ be a set of $3\leq m\leq 4$ points such that each $P\in V_e$ lies on
an edge $e\in E$. Let $E_e$ be a given set of edges with end points in $V_e$. Assume that each
$P\in V_e$ is an end point of two edges in the set $E_e$.
Let $\mathcal{F}_e\,:\,V_e\longrightarrow [0,1]$ be a labeling on $V_e$.
Then we call $H_e(V_e,E_e,\mathcal{F}_e):=(V_e,E_e,\mathcal{F}_e)$ a labeled graph.
Assume that $\mathcal{F}_e=\mathcal{F}$ on $V\cap V_e$. Then we define a labeled graph
$\tilde{G}(\tilde{V},\tilde{E},\tilde{F}):=(\tilde{V},\tilde{E},\tilde{F})$ by
\begin{equation}
G(\tilde{V},\tilde{E},\tilde{F}):=(V\cup V_e,E\cup E_e,\tilde{F}),
\end{equation}
where
\begin{equation}
\tilde{F}:=\left\{
   \begin{array}{ll}
     \mathcal{F} & \mbox{on }\;\; V \\
     \mathcal{F}_e & \mbox{on }\;\; V_e
   \end{array}\right..
\label{eq:labeled-graph-9a}
\end{equation}
Then $\tilde{G}(\tilde{V},\tilde{E},\tilde{F})$ is called a general labeled graph, or labeled graph for short.
\label{def:labeled-graph-4a}
\end{definition}

\begin{definition}(General subgraph replacement). Let $H(V_h,E_h,\mathcal{F}_h)$ be a subgraph of the labeled
cuboid graph $G(V,E,\mathcal{F})$. Let $\tilde{H}(\tilde{V},\tilde{E},\tilde{\mathcal{F}})$ be a labeled graph
such that $V_h\subset \tilde{V}$, $E_h\subset \tilde{E}$, $\tilde{V}\subset E_h$ and  $\tilde{\mathcal{F}}=\mathcal{F}$
on $V\cap \tilde{V}$. We define a function $\mathcal{F}'$ on $V\cup V_h$ by
\begin{equation}
\mathcal{F}':=\left\{
   \begin{array}{ll}
     \mathcal{F} & \mbox{on }\;\; V\setminus V_h \\
     \mathcal{F}_h & \mbox{on }\;\; V_h\\
     \tilde{\mathcal{F}} & \mbox{on }\;\; \tilde{V}\setminus V_h
   \end{array}\right..
\label{eq:labeled-graph-9}
\end{equation}
Then we set
\begin{equation}
G|_{H\rightarrow \tilde{H}}:=G(\hat{V},\hat{E},\mathcal{F}'),
\label{eq:labeled-graph-10}
\end{equation}
where $\hat{V}=V\cup (\tilde{V}\backslash V_h)$ and $\hat{E}=E\cup (\tilde{E}\backslash E_h)$.
We say that $G|_{H\rightarrow \tilde{H}}$ is obtained from the graph $G(V,E,\mathcal{F})$ by replacing
the subgraph $H(V_h,E_h,\mathcal{F}_h)$ of $G(V,E,\mathcal{F})$ by the graph
$\tilde{H}(\tilde{V},\tilde{E},\tilde{\mathcal{F}})$. Note that the labeled graph $G|_{H\rightarrow \tilde{H}}$
may no longer be a labeled cuboid graph.
\label{def:labeled-graph-4}
\end{definition}

\noindent{\bf Note: }In this paper, any graph operation applied on a labeled graph with a given iso-level $c\in (0,1)$
does not change the iso-level $c$, i.e. the transformed labeled graph has the same iso-level
$c$.

\section{Equivalence Classes of Labeled Cuboid Graphs}
From here on, whenever we consider (one or several) labeled cuboid graphs, it is understood that also
an iso-level $c\in (0,1)$ has been chosen. We then speak of "a labeled cuboid graph $G(V,E,\mathcal{F})$ with
iso-level $c\in (0,1)$".

For the computation of iso-paths for a labeled cuboid graph $G(V,E,\mathcal{F})$ with iso-level $c\in (0,1)$
we compare the node-labels with $c$. The exact values of the nodes are not important for the
investigation of iso-paths of the graph $G$. Therefore, in the following two definitions we introduce
the important concept of equivalence classes of labeled cuboid graphs. The first definition considers
for each node of $G$ whether the node is disperse or not. In the second definition of an equivalence
class of labeled cuboid graphs, besides the disperse nodes of $G$, the differences between iso-nodes
and nodes with node value less than $c$ are accounted for.

\begin{definition}
Suppose $G_1(V_1,E_1,\mathcal{F}_1)$ and $G_2(V_2,E_2,\mathcal{F}_2)$ are labeled cuboid graphs with
iso-level $c\in (0,1)$. We call the graphs $G_1$ and $G_2$ $\circ$-equivalent if the following conditions
are satisfied:
\begin{enumerate}
\item[(i)] $D(G_1)=D(G_2)$,
\item[(ii)] both $G_1$ and $G_2$ have the same number of L-faces,
\item[(iii)] to each $Q\in V_1$ with $\mathcal{F}_1(Q)>c$ and $P_1,P_2,P_3\in V_1$ such that $P_1, P_2,P_3$ are
incident to $Q$, there exists $Q'\in V_2$ with $\mathcal{F}_2(Q')>c$ and $P_1',P_2',P_3'\in V_2$ such that
$P_1', P_2',P_3'$ are incident to $Q'$ and, to each $i=1,2,3$, one of the following holds:
  \begin{enumerate}
   \item[(a)] if $\mathcal{F}_1(P_i)>c$ then $\mathcal{F}_2(P_i')>c$,
   \item[(b)] if $\mathcal{F}_1(P_i)\leq c$ then $\mathcal{F}_2(P_i')\leq c$.
  \end{enumerate}
\end{enumerate}
The mapping from $Q$ to $Q'$ is required to be a bijection. We denote the $\circ$-equivalence between $G_1$
and $G_2$ by $G_1\Longleftrightarrow_{\circ}G_2$. Additionally, we denote by $[G_1(V_1,E_1,\mathcal{F}_1)]_{\circ}$
the $\circ$-equivalence class, defined by
\begin{equation}
[G_1(V_1,E_1,\mathcal{F}_1)]_{\circ}:=\{G(V,E,\mathcal{F})\,:\,G\Longleftrightarrow_{\circ}G_1\}.
\label{eq:equivalence-1}
\end{equation}
\label{def:equivalence-1}
\end{definition}

\begin{definition}
Suppose $G_1(V_1,E_1,\mathcal{F}_1)$ and $G_2(V_2,E_2,\mathcal{F}_2)$ are labeled cuboid graphs
with iso-level $c\in (0,1)$. We call the graphs $G_1$ and $G_2$ $\square$-equivalent if the following
conditions are satisfied:
\begin{enumerate}
\item[(i)] $D(G_1)=D(G_2)$,
\item[(ii)] both $G_1$ and $G_2$ have the same number of $L$-faces,
\item[(iii)] to each $Q\in V_1$ with $\mathcal{F}_1(Q)>c$ and $P_1,P_2,P_3\in V_1$ such that $P_1, P_2,P_3$ are
incident to $Q$, there exists $Q'\in V_2$ with $\mathcal{F}_2(Q')>c$ and $P_1',P_2',P_3'\in V_2$ such that
$P_1', P_2',P_3'$ are incident to $Q'$ and, to each $i=1,2,3$, one of the following holds:
  \begin{enumerate}
   \item[(a)] if $\mathcal{F}_1(P_i)>c$ then $\mathcal{F}_2(P_i')>c$,
   \item[(b)] if $\mathcal{F}_1(P_i)<c$ then $\mathcal{F}_2(P_i')<c$,
   \item[(c)] if $\mathcal{F}_1(P_i)=c$ then $\mathcal{F}_2(P_i')=c$.
  \end{enumerate}
\end{enumerate}
The mapping from $Q$ to $Q'$ is required to be a bijection. We denote the $\square$-equivalence between $G_1$
and $G_2$ by $G_1\Longleftrightarrow_{\square}G_2$. Additionally, we denote by $[G_1(V_1,E_1,\mathcal{F}_1)]_{\square}$
the $\square$-equivalence class, defined by
\begin{equation}
[G_1(V_1,E_1,\mathcal{F}_1)]_{\square}:=\{G(V,E,\mathcal{F})\,:\,G\Longleftrightarrow_{\square}G_1\}.
\label{eq:equivalence-2}
\end{equation}
\label{def:equivalence-2}
\end{definition}
\noindent{\bf Illustration of $\circ$-equivalence class: }Let $G(V,E,\mathcal{F})$ be a labeled cuboid
graph with iso-level $c\in (0,1)$. Let $V=\{P_1,\ldots,P_8\}$ be the nodes of $G$ and let  $C$ be the
cuboid of $G$. Consider a sketch that shows the edges and vertices of $C$, and such that the vertices
of $C$ are marked as follows:
\begin{enumerate}
\item[(i)] $P_i$ is marked by $\bullet$ if $\mathcal{F}(P_i)>c$,
\item[(ii)] $P_i$ is marked by $\circ$ if $\mathcal{F}(P_i)\leq c$.
\end{enumerate}
Then we say that the sketch represents $[G(V,E,\mathcal{F})]_{\circ}$.\\

\noindent{\bf Illustration of $\square$-equivalence class: }Let $G(V,E,\mathcal{F})$ be a labeled cuboid
graph with iso-level $c\in (0,1)$. Let $V=\{Q_1,\ldots,Q_m,P_{m+1},\ldots,P_8\}$ with $m=D(G)$
be the nodes of $G$, where $\mathcal{F}(Q_i)>c$ for $i=1,\ldots,m$
and $\mathcal{F}(P_j)\leq c$ for $j=m+1,\ldots,8$. Let $C$ be the cuboid of $G$. Consider a sketch
that shows the edges and vertices of $C$, and such that the vertices of $C$ are marked as follows:
\begin{enumerate}
\item[(i)] $Q_i$ is marked by $\bullet$ for all $i\in\{1,\ldots,m\}$,
\item[(ii)] $P_i$ is marked by $\circ$ for all $i\in\{m+1,\ldots,8\}$ with $\mathcal{F}(P_i)\leq c$
and $P_i$ is not incident to any one of the nodes in $\{Q_1,\ldots,Q_m\}$,
\item[(iii)] $P_i$ is marked by $\circleddash$ for all $i\in\{m+1,\ldots,8\}$ with $\mathcal{F}(P_i)< c$
and $P_i$ is incident to one of the nodes in $\{Q_1,\ldots,Q_m\}$,
\item[(iv)]  $P_i$ is marked by $\square$ for all $i\in\{m+1,\ldots,8\}$ with $\mathcal{F}(P_i)= c$
and $P_i$ is incident to any one of the nodes in $\{Q_1,\ldots,Q_m\}$.
\end{enumerate}
Then we say that the sketch represents $[G(V,E,\mathcal{F})]_{\square}$.\\

Sketches will be "numbered" by $(a),(b),(c),\ldots$ or, $(a1),(a2),\ldots$ throughout this paper. Examples
for $\circ$- and $\square$-equivalence classes are given in Figure~\ref{image_6}.
\begin{figure}[!ht]
\includegraphics[width=0.8\linewidth]{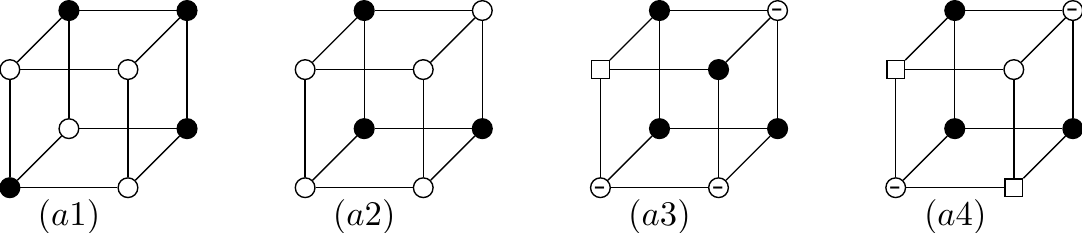}
\caption{Sketches $(a1),\ldots,(a4)$ represent equivalence classes $[(a1)]_{\circ}$,
$[(a2)]_{\circ}$, $[(a3)]_{\square}$ and $[(a4)]_{\square}$, respectively.}
\label{image_6}
\end{figure}
\FloatBarrier

\noindent{\bf Equivalence class of a subgraph: }Suppose $H(V_h,E_h,\mathcal{F}_h)$ is a subgraph of
a labeled cuboid graph $G(V,E,\mathcal{F})$ with iso-level $c\in (0,1)$. Then we denote by
$[H(V_h,E_h,\mathcal{F}_h)]_{\circ}$ the $\circ$-equivalence class of $H$ such that each element
in $[H(V_h,E_h,\mathcal{F}_h)]_{\circ}$ satisfies Definition~\ref{def:equivalence-1} restricted
to $H(V_h,E_h,\mathcal{F}_h)$. Analogously, we denote by $[H(V_h,E_h,\mathcal{F}_h)]_{\square}$ the
$\square$-equivalence class of $H$ such that each element in $[H(V_h,E_h,\mathcal{F}_h)]_{\square}$
satisfies Definition~\ref{def:equivalence-2} restricted to $H(V_h,E_h,\mathcal{F}_h)$.\\

\noindent{\bf Illustration of $\circ$- and $\square$-equivalence subclasses: }Analogously to
$\circ$- and $\square$-equivalence classes, we can represent both subclasses by sketches.
Given a sketch $(a)$, we denote by $[(a)]_{\circ}$ and $[(a)]_{\square}$ the equivalence
classes of the graph or subgraph represented by $(a)$.\\

\noindent Figure~\ref{image_7} illustrates some examples of $\circ$- and $\square$-equivalence
subclasses.
\begin{figure}[!ht]
\includegraphics[width=0.9\linewidth]{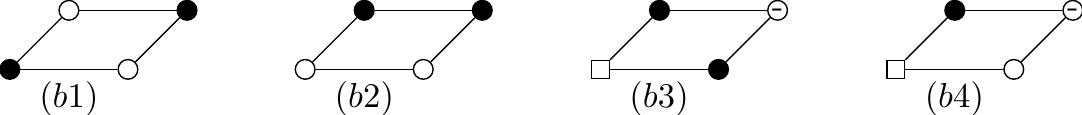}
\caption{Sketches $(b1),\ldots,(b4)$ are representations of equivalence subclasses
$[(b1)]_{\circ}$, $[(b2)]_{\circ}$, $[(b3)]_{\square}$ and $[(b4)]_{\square}$, respectively.}
\label{image_7}
\end{figure}
\FloatBarrier

Finally, we define an additional class of labeled cuboid graphs which we call $\circ\!-\!\star$-class and
$\square\!-\!\star$-class, where for certain nodes no restriction on the label is done.
\begin{definition}
Let $G(V,E,\mathcal{F})$ be a labeled cuboid graph with iso-level $c\in (0,1)$. Let
$\tilde{V}\subset V$. Then we define the $\circ\!-\!\star$-class of $G$ corresponding to $\tilde{V}$ by
\begin{equation}
[G(V,E,\mathcal{F};\tilde{V})]_{\circ}^{\star}:=\bigcup_{\tilde{\mathcal{F}}\in\mathbb{F}}
[G(V,E,\tilde{\mathcal{F}})]_{\circ},
\label{eq:equivalence-3}
\end{equation}
where
\begin{equation}
\mathbb{F}=\{\hat{\mathcal{F}}\,:\,V\longrightarrow [0,1]\,|\,\hat{\mathcal{F}}=\mathcal{F}\mbox{ on }
V\setminus\tilde{V}\mbox\}.
\label{eq:equivalence-4}
\end{equation}
\label{def:equivalence-3}
\end{definition}

\begin{definition}
Let $G(V,E,\mathcal{F})$ be a labeled cuboid graph with iso-level $c\in (0,1)$. Let $\tilde{V}\subset V$.
Then we define the $\square\!-\!\star$-class of $G$ corresponding to $\tilde{V}$ by
\begin{equation}
[G(V,E,\mathcal{F};\tilde{V})]_{\square}^{\star}:=\bigcup_{\tilde{\mathcal{F}}\in\mathbb{F}}[G(V,E,\tilde{\mathcal{F}})]_{\square},
\label{eq:equivalence-5}
\end{equation}
where
\begin{equation}
\mathbb{F}=\{\hat{\mathcal{F}}\,:\,V\longrightarrow [0,1]\,|\,\hat{\mathcal{F}}=\mathcal{F}\mbox{ on }
V\setminus\tilde{V}\mbox\}.
\label{eq:equivalence-6}
\end{equation}
\label{def:equivalence-4}
\end{definition}

\noindent{\bf Illustration of $\circ\!-\!\star$-class: }Suppose a sketch $(a)$ represents
$[G(V,E,\mathcal{F})]_{\circ}$, i.e. $[(a)]_{\circ}=[G(V,E,\mathcal{F})]_{\circ}$. Consider a sketch
$(\tilde{a})$ which is a copy of $(a)$ but those nodes in $\tilde{a}$ corresponding to $\tilde{V}\subset V$
are marked by the symbol $\diamondsuit\!\!\!\!\star$. We then say that $(\tilde{a})$ represents $[G(V,E,
\mathcal{F};\tilde{V})]_{\circ}^{\star}$ and write $[(\tilde{a})]_{\circ}^{\star}$ for this. \\

\noindent{\bf Illustration of $\square\!-\!\star$-equivalence class: }Suppose a sketch $(a)$ represents
$[G(V,E,\mathcal{F})]_{\square}$, i.e. $[(a)]_{\square}=[G(V,E,\mathcal{F})]_{\square}$. Consider a sketch
$(\tilde{a})$ which is a copy of $(a)$ but those nodes in $\tilde{a}$ corresponding to $\tilde{V}\subset V$
are marked by the symbol $\diamondsuit\!\!\!\!\star$. We then say that $(\tilde{a})$ represents
$[G(V,E,\mathcal{F};\tilde{V})]_{\square}^{\star}$ and write $[(\tilde{a})]_{\square}^{\star}$ for this. \\

\noindent{\bf $\circ\!-\!\star$-subclass and $\square\!-\!\star$-subclass: }Suppose $H(V_h,E_h,\mathcal{F}_h)$
is a subgraph of a labeled cuboid graph $G(V,E,\mathcal{F})$ with iso-level $c\in (0,1)$. Let $\tilde{V}\subset V$
and assume $\tilde{V}_h=V_h\cap \tilde{V}\neq\emptyset$. With an analogous definition as \eqref{eq:equivalence-3}
for $H(V_h,E_h,\mathcal{F}_h)$, we get the  $\circ\!-\!\star$-subclass of $H(V_h,E_h,\mathcal{F}_h)$ corresponding
to $\tilde{V}_h$ which is denoted by $[H(V_h,E_h,\mathcal{F}_h;\tilde{V}_h)]_{\circ}^{\star}$. With an analogous
definition as \eqref{eq:equivalence-5} for $H(V_h,E_h,\mathcal{F}_h)$, we get the  $\square\!-\!\star$-subclass of
$H(V_h,E_h,\mathcal{F}_h)$ corresponding to $\tilde{V}_h$ which is denoted by
$[H(V_h,E_h,\mathcal{F}_h;\tilde{V}_h)]_{\square}^{\star}$.\\

\noindent{\bf Illustration of $\circ\!-\!\star$-subclass and $\square\!-\!\star$-subclass: }Suppose
$[(a)]_{\circ}=[H(V_h,E_h,\mathcal{F}_h)]_{\circ}$, i.e. the sketch $(a)$ represents $[H(V_h,E_h,\mathcal{F}_h)]_{\circ}$.
Consider a sketch $(\tilde{a})$ which is a copy of $(a)$ but the nodes in $\tilde{a}$ corresponding to
$\tilde{V}_h\subset V_h$ are marked by the symbol $\diamondsuit\!\!\!\!\star$. We then say that $(\tilde{a})$ represents
$[H(V_h,E_h,\mathcal{F}_h;\tilde{V}_h)]_{\circ}^{\star}$ and write $[(\tilde{a})]_{\circ}^{\star}$ for this. If we start
with $[(a)]_{\square}=[H(V_h,E_h,\mathcal{F}_h)]_{\square}$ instead, we obtain a sketch $\tilde{a}$ that represents
the subclass $[H(V_h,E_h,\mathcal{F}_h;\tilde{V}_h)]_{\square}^{\star}$; we write $[(\tilde{a})]_{\square}^{\star}$ for this.\\

\noindent Figure~\ref{image_8_9} shows examples of $\circ\!-\!\star$- and
$\square\!-\!\star$-classes and subclasses with iso-level $c\in (0,1)$.
\begin{figure}[!ht]
\includegraphics[width=0.8\linewidth]{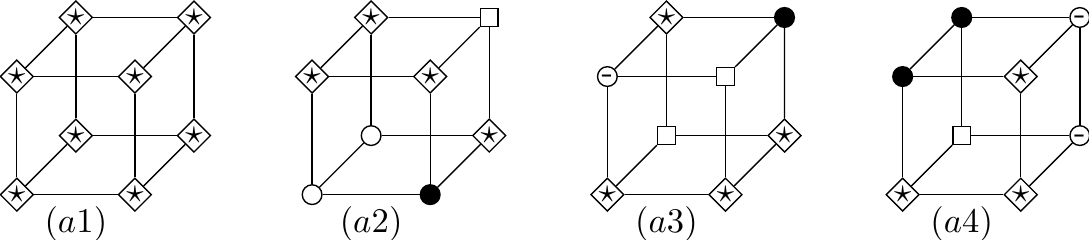}\\ \\
\includegraphics[width=0.8\linewidth]{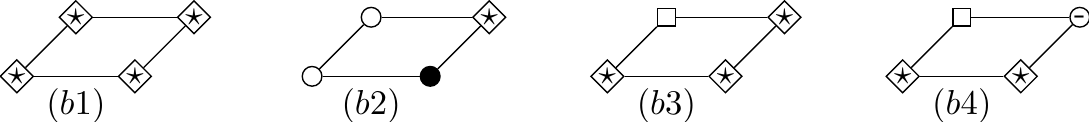}
\caption{Sketches $(a1)$ and $(a2)$ represent $\circ\!-\!\star$-classes and sketches $(a3)$ and $(a4)$ represent
$\square\!-\!\star$-classes. Sketches $(b1)$ and $(b2)$ represent $\circ\!-\!\star$-subclasses and
sketches $(b3)$ and $(b4)$ represent $\square\!-\!\star$-subclasses.}
\label{image_8_9}
\end{figure}
\FloatBarrier

\begin{notation}
Sometimes we use sketches which can have either all four symbols $\circ,\circleddash,\square,\bullet$ or
all three symbols $\circ,\circleddash,\bullet$ to represent a class of labeled cuboid graphs or subgraphs of it.
These special sketches that we use in the next sections are given in Figure~\ref{image_10}. We
call the equivalence classes represented by the sketches $(a1)$, $(a2)$ and $(a3)$ the $\square$-equivalence classes
and the equivalence class represented by the sketch $(a4)$ the $\square$-equivalence subclass.
\label{not:special}
\end{notation}
\begin{figure}[!ht]
\includegraphics[width=0.8\linewidth]{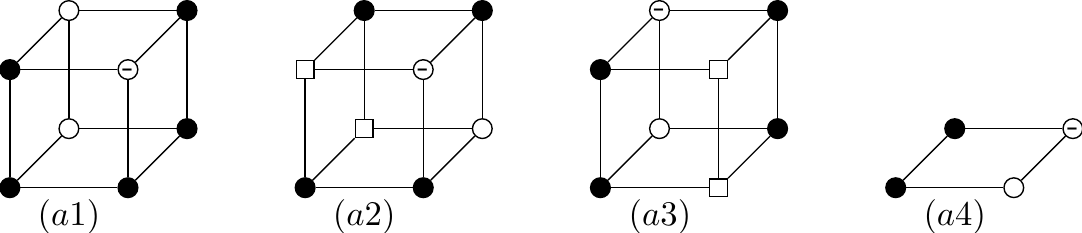}
\caption{Sketches $(a1)$, $(a2)$ and $(a3)$ represent $\square$-equivalence classes and sketch
$(a4)$ represents $\square$-equivalence subclass.}
\label{image_10}
\end{figure}

\section{Rules of Iso-path Computations}
In this section we define mappings that will be applied on $\mathbb{G}(V,E)$. They
will be used for the computation of iso-paths in labeled cuboid graphs with iso-level $c\in (0,1)$.
These mappings replace subgraphs of a graph by other graphs just as described by
Definitions~\ref{def:labeled-graph-3} and~\ref{def:labeled-graph-4}.

The distinction between different types of face subgraphs of a labeled cuboid graph $G(V,E,\mathcal{F})$
with iso-level $c\in (0,1)$ introduced in Section 2 is important
for the computation of iso-paths of $G$. We illustrate by Figure~\ref{image_11} the different types
of faces of $G$. Sketch $(a1)$ represents the $\circ\!-\!\star$-subclass $[(a1)]_{\circ}^{\star}$ of
regular faces and sketch $(a2)$ represents the $\square$-equivalence subclass $[(a2)]_{\square}$
of regular faces. Sketch $(a3)$ represents the $\square$-equivalence subclass of singular faces.
Sketches $(a4)$, $(a5)$ and $(a6)$ represent the possible $\circ$- and $\square$-equivalence
subclasses of L-faces. Here $[(a5)]_{\square}$ is a $\square$-equivalence subclass of trivial L-faces
and $[(a6)]_{\square}$ is a $\square$-equivalence subclass of non-trivial L-faces.
\begin{figure}[!ht]
\includegraphics[width=0.7\linewidth]{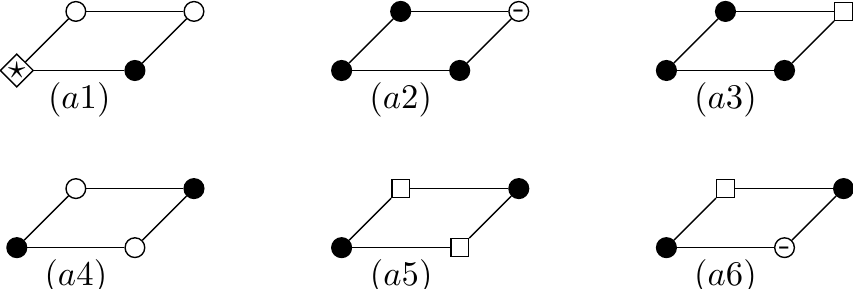}
\caption{Sketches $(a1),\ldots,(a6)$ illustrate the different types of face subgraphs.}
\label{image_11}
\end{figure}
\FloatBarrier

\subsection{Removing Singular and Isolated Iso-paths}
Let $G(V,E,\mathcal{F})$ be a labeled cuboid graph with iso-level $c\in (0,1)$. In this section
we introduce graph operations on $G$ which are called $T$-rules and $F$-rules.
The $T$-rules are denoted by $T_1$ and $T_2$ and remove {\it singular iso-paths} in $G$, where
a singular iso-path is a degenerate closed path with only one or two nodes. Singular
iso-paths correspond to iso-elements of surface measure zero. The $T_1$-rule removes singular
iso-paths with only one node and the $T_2$-rule removes singular iso-paths with two nodes.

The $F$-rules are graph operations denoted by $F_1$ and $F_2$. The $F_1$-rule removes
iso-paths in $G$ such that the iso-element computed from the iso-paths separates only
disperse nodes of $G$. The $F_2$-rule removes an iso-path of $G$ if $G$ has the
labeled cuboid graph $G'(V',E',\mathcal{F}')$ with the same iso-level $c$
as a face neighbor and both $G$ and $G'$ contain four iso-nodes and four disperse
nodes such that the iso-nodes lie on the common face. In this case, the
iso-element computed from the iso-path separates only the disperse nodes of both
face neighboring graphs. An iso-path of $G$ which is removed using the $F_1$- or $F_2$-rule is
called an {\it isolated iso-path}.\\

\noindent{\bf Isolated iso-element: }An iso-element computed from an isolated iso-path is called
an {\it isolated iso-element}.\\

\noindent{\bf Regular labeled cuboid graph: }Let $G(V,E,\mathcal{F})$ be a labeled cuboid graph
with iso-level $c\in (0,1)$. Assume $G$ is neither a disperse nor a continuous graph. Furthermore,
assume $G$ has neither singular iso-paths nor an isolated iso-path. Then we call $G$ a {\it regular
labeled cuboid graph}. Sometimes we use the abbreviation $G$ is {\it regular} instead of $G$ is a
regular labeled cuboid graph.

\subsubsection{T- and F-graphs}
In this section we introduce the T-subgraphs and F-graphs of a labeled cuboid graph $G(V,E,\mathcal{F})$
with iso-level $c\in (0,1)$. The existence of such graphs in $G$ is a possible indication of singular
iso-path or isolated iso-path existence in $G$. The detection of isolated iso-paths is necessary
for the computation of iso-paths of $G$, since isolated iso-paths may not be connected. Removing
isolated iso-paths guarantees the iso-path connectedness as will be shown in Section 6.

\begin{definition}($T_1$-subgraph). Let $G(V,E,\mathcal{F})$ be a labeled cuboid graph
with iso-level $c\in (0,1)$. Let $H(V_h,E_h,\mathcal{F}_h)$ be a subgraph of $G$ such that the following
conditions hold:
\begin{enumerate}
\item the number of points in $V_h$ is four ($|V_h|=4$),
\item there exist a point $P'\in V_h$ such that $P'$ is incident to all points $P\in V_h\setminus\{P'\}$ in $H$ and
\[
\mathcal{F}_h(P')=c\qquad\mbox{and } \qquad \mathcal{F}_h(P)>c \quad\;\forall P\in V_h\setminus\{P'\}.
\]
\end{enumerate}
Then $H$ is called a $T_1$-subgraph.
\label{def:iso-path-2}
\end{definition}

\begin{definition}($T_2$-subgraph). Let $G(V,E,\mathcal{F})$ be a labeled cuboid graph
with iso-level $c\in (0,1)$. Let $H(V_h,E_h,\mathcal{F}_h)$ be a subgraph of $G$ such that the following
conditions hold:
\begin{enumerate}
\item the number of points in $V_h$ is seven ($|V_h|=7$),
\item there exist two points $P_1,P_2\in V_h$ such that $P_1$ is incident to $P_2$ in $H$ and
\[
\mathcal{F}_h(P_1)=\mathcal{F}_h(P_2)=c\qquad\mbox{and }\qquad
\mathcal{F}_h(P)>c \quad\;\forall P\in V_h\setminus\{P_1,P_2\}.
\]
\end{enumerate}
Then $H$ is called a $T_2$-subgraph.
\label{def:iso-path-3}
\end{definition}
\noindent $T_1$- and $T_2$-subgraphs are subsumed as $T$-subgraphs.

\begin{definition}($F_1$-graph). Let $G(V,E,\mathcal{F})$ be a labeled cuboid graph
with iso-level $c\in (0,1)$. Let there be $\tilde{V}\subset V$ with $|\tilde{V}|=4$ and
\begin{enumerate}
\item each $P\in\tilde{V}$ is incident only to two points in $V\setminus \tilde{V}$,
\item $\mathcal{F}(P)=c \quad\;\forall P\in \tilde{V}$,
\item $\mathcal{F}(P)>c \quad\;\forall P\in V\setminus \tilde{V}$.
\end{enumerate}
Then $G$ is called an $F_1$-graph.
\label{def:iso-path-4}
\end{definition}

\begin{definition}($F_2$-graph). Suppose $G_1(V_1,E_1,\mathcal{F}_1)$ and $G_2(V_2,E_2,\mathcal{F}_2)$
are face-neighbored labeled cuboid graphs with iso-level $c\in (0,1)$. Let the following properties hold:
\begin{enumerate}
\item $\mathcal{F}_1(P)=\mathcal{F}_2(P)=c \quad\;\forall P\in V_1\cap V_2$,
\item $\mathcal{F}_1(P)>c \quad\;\forall P\in V_1\setminus (V_1\cap V_2)$,
\item $\mathcal{F}_2(P)>c \quad\;\forall P\in V_2\setminus (V_1\cap V_2)$.
\end{enumerate}
Then $G_1$ is called an $F_2$-graph.
\label{def:iso-path-5}
\end{definition}
\noindent $F_1$- and $F_2$-graphs are subsumed as $F$-graphs.

Figure~\ref{image_12} illustrates the $T$-subgraphs and $F$-graphs. The sketches $(a)$ and $(b)$
shown in Figure~\ref{image_12} represent the equivalence subclasses $[(a)]_{\square}$ and
$[(b)]_{\square}$. The $T_1$- and $T_2$-subgraphs lie in $[(a)]_{\square}$ and $[(b)]_{\square}$,
respectively. The sketches $(c),(d_1),(d_2)$ in Figure~\ref{image_12} represent $[(c)]_{\square}$,
$[(d_1)]_{\square}$ and $[(d_2)]_{\square}$, respectively. The $F_1$-graph lies in $[(c)]_{\square}$
and the $F_2$-graph lies in $[(d_1)]_{\square}$ as well as in $[(d_2)]_{\square}$.
\begin{figure}[!ht]
\includegraphics[width=0.9\linewidth]{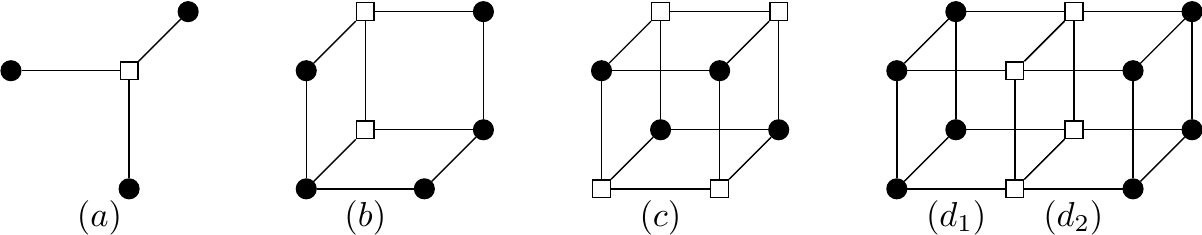}
\caption{Sketches $(a)$ and $(b)$ represent $T_1$- and $T_2$-subgraphs, respectively and $F_1$-
and $F_2$-graphs are represented by sketch $(c)$ and the sketches $(d1),(d2)$, respectively.}
\label{image_12}
\end{figure}

\subsubsection{T- and F-rules}
A labeled cuboid graph $G(V,E,\mathcal{F})\in\mathbb{G}(V,E)$ with iso-level $c\in (0,1)$ can have
one to four singular iso-paths or an isolated iso-path, but not both types of iso-paths. This section
is devoted to the indication and deletion of singular iso-paths or of an isolated iso-path.

Let $G(V,E,\mathcal{F})\in\mathbb{G}(V,E)$ and $c\in (0,1)$ be an iso-level. Let
$q_0, q_1\,:\,[0,1]\longrightarrow [0,1]$ be defined by
\begin{equation}
q_0(x):=\left\{
   \begin{array}{ll}
     0 & \mbox{if }\;\; x>c \\
     x & \mbox{else}
   \end{array}\right.
\label{eq:iso-path-1}
\end{equation}
and
\begin{equation}
q_1(x):=\left\{
   \begin{array}{ll}
     1 & \mbox{if }\;\; x\leq c \\
     x & \mbox{else}
   \end{array}\right..
\label{eq:iso-path-2}
\end{equation}
Let $W\subset V$. We define mappings $Q_0, Q_1\,:\,\mathbb{G}(V,E)\longrightarrow\mathbb{G}(V,E)$ by
$Q_0(G(V,E,\mathcal{F})) = G(V,E,\mathcal{R}_0)$ and $Q_1(G(V,E,\mathcal{F})) = G(V,E,\mathcal{R}_1)$, where
\begin{equation}
\mathcal{R}_0:=\left\{
   \begin{array}{ll}
     \mathcal{F} & \mbox{on }\;\; V\setminus W \\
     q_0\circ\mathcal{F} & \mbox{on }\;\; W
   \end{array}\right.
\label{eq:iso-path-3}
\end{equation}
and
\begin{equation}
\mathcal{R}_1:=\left\{
   \begin{array}{ll}
     \mathcal{F} & \mbox{on }\;\; V\setminus W \\
     q_1\circ\mathcal{F} & \mbox{on }\;\; W
   \end{array}\right..
\label{eq:iso-path-4}
\end{equation}

\begin{definition}($T_1$-rule). Let $G(V,E,\mathcal{F})$ be a labeled cuboid graph with iso-level
$c\in (0,1)$. Let $H(V_h,E_h,\mathcal{F}_h)$ be a $T_1$-subgraph of $G$. We call the mapping
$Q_1$, defined by setting $W:=V_h$ in \eqref{eq:iso-path-4}, a $T_1$-rule. If required for clarity,
we speak of the $T_1$-rule with respect to $H$.
\label{def:iso-path-6}
\end{definition}

\begin{definition}($T_2$-rule). Let $G(V,E,\mathcal{F})$ be a labeled cuboid graph with iso-level
$c\in (0,1)$. Let $H(V_h,E_h,\mathcal{F}_h)$ be a $T_2$-subgraph of $G$. We call the mapping
$Q_1$, defined by setting $W:=V_h$ in \eqref{eq:iso-path-4}, a $T_2$-rule. If required for clarity,
we speak of the $T_2$-rule with respect to $H$.
\label{def:iso-path-7}
\end{definition}
\noindent We subsume the $T_1$- and $T_2$-rules as $T$-rules.

\begin{definition}($T_1^*$-rule). Let $G(V,E,\mathcal{F})$ be a labeled cuboid graph with iso-level
$c\in (0,1)$. Let $G$ have $1\leq n\leq 4$ distinct $T_1$-subgraphs, denoted as $H_1(V_{h_1},E_{h_1},
\mathcal{F}_{h_1}),\ldots,H_n(V_{h_n},E_{h_n},\mathcal{F}_{h_n})$.  We consider four cases, where
in the case $i\in\{1,\ldots,n\}$, the $T_1$-rule changes the node values of $H_i$ according to:
\begin{enumerate}
\item Case $i=1$: $G_1(V,E,\mathcal{F}_1):=T_1(G(V,E,\mathcal{F}))$ ($T_1$-rule w.r. to $H_1$)
\item Case $i=2$: $G_2(V,E,\mathcal{F}_2):=T_1(G_1(V,E,\mathcal{F}_1))$ ($T_1$-rule w.r. to $H_2$)
\item Case $i=3$: $G_3(V,E,\mathcal{F}_3):=T_1(G_2(V,E,\mathcal{F}_2))$ ($T_1$-rule w.r. to $H_3$)
\item Case $i=4$: $G_4(V,E,\mathcal{F}_4):=T_1(G_3(V,E,\mathcal{F}_3))$ ($T_1$-rule w.r. to $H_4$).
\end{enumerate}
Then we write
\begin{equation}
T^*_1(G(V,E,\mathcal{F})):=G_n(V,E,\mathcal{F}_n).
\label{eq:iso-path-star-map}
\end{equation}
We call this $T_1^*$-rule. If we apply the $T_1^*$-rule to $G$ then $T_1^*(G)$
will have no more $T_1$-subgraphs.
\label{def:iso-path-star-map}
\end{definition}

\begin{definition}($F$-rules). Let $G(V,E,\mathcal{F})$ be a labeled cuboid graph with iso-level
$c\in (0,1)$. Let $G$ be an $F_1$- or an $F_2$-graph. Let $V_h=\{P_1,P_2,P_3,P_4\}$
$\subset V$ be such that $\mathcal{F}(P)=c$ for all $P\in V_h$. Then we call the mapping $Q_1$,
defined by setting $W:=V_h$ in \eqref{eq:iso-path-4}, an $F_1$-rule or an $F_2$-rule if $G$ is an
$F_1$- or $F_2$-graph, respectively. The $F_1$- and $F_2$-rules are denoted by $F_1$ and $F_2$,
respectively.
\label{def:iso-path-8}
\end{definition}
\noindent We subsume the $F_1$- and $F_2$-rules as $F$-rules.

The $T$- and $F$-rules are illustrated in Figure~\ref{image_13_14} and~\ref{image_15_16}, respectively.
The sequence $1$ and $2$ in Figure~\ref{image_13_14} represents $T_1$- and $T_2$-rules, respectively.
The sequence $1$ and $2$ in Figure~\ref{image_15_16} represents $F_1$- and $F_2$-rules, respectively.
\begin{figure}[!ht]
$1$.
\begin{tabular}[c]{l}
\includegraphics[width=0.4\linewidth]{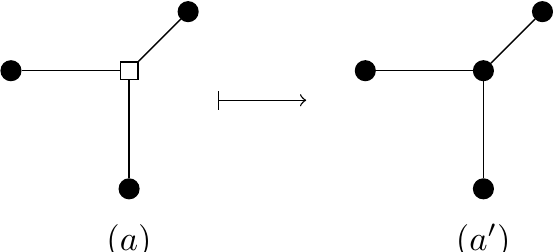}
\end{tabular}\\ \\

$2$.
\begin{tabular}[c]{l}
\includegraphics[width=0.4\linewidth]{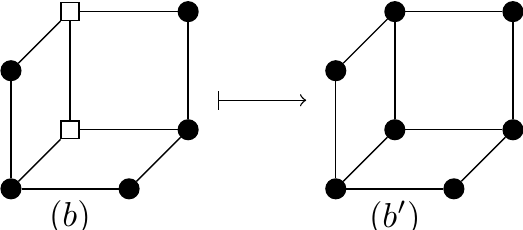}
\end{tabular}
\caption{The sequence $1$ and $2$ illustrate the $T_1$- and $T_2$-rules, respectively.}
\label{image_13_14}
\end{figure}
\begin{figure}[!ht]
$1$.
\begin{tabular}[c]{l}
\includegraphics[width=0.4\linewidth]{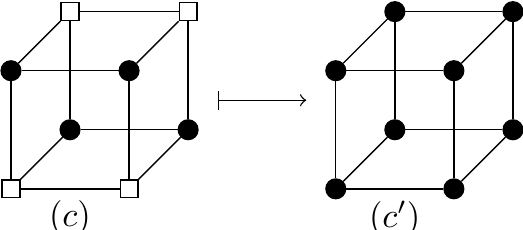}
\end{tabular}\\ \\

$2$.
\begin{tabular}[c]{l}
\includegraphics[width=0.6\linewidth]{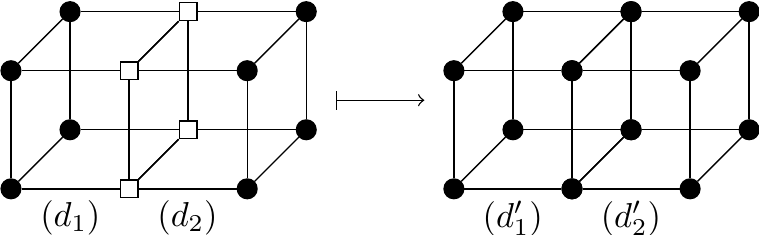}
\end{tabular}
\caption{The sequence $1$ and $2$ illustrate the $F_1$- and $F_2$-rules, respectively.}
\label{image_15_16}
\end{figure}
\FloatBarrier

The following result about singular faces will be proved using $T$-rules.
\begin{proposition}(Singular faces). Let $G(V,E,\mathcal{F})$ be a labeled cuboid graph with iso-level
$c\in (0,1)$. Let $G$ be regular. Suppose that $G$ has a singular face. Then the maximum number of
singular faces which $G$ can have is three. Furthermore, if $G$ has $n=2$ or $n=3$ singular faces then
there exist a total of $n$ iso-points which are iso-nodes in $G$ such that these iso-nodes lie on the
singular faces. Then each pair of these iso-nodes lies on a regular face of $G$ or on a space diagonal
of the cuboid of $G$. Moreover, an iso-node can never be on two distinct singular faces.
\label{prop:iso-path-1}
\end{proposition}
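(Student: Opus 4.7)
The plan is to derive every claim from a single mechanism: any configuration violating the proposition forces either a $T_1$- or a $T_2$-subgraph in $G$, contradicting the regularity hypothesis (which forbids singular iso-paths, i.e., $T$-subgraphs). I would proceed in three steps, handling the final clause first since the main statement relies on it.

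\textbf{Step 1 (uniqueness and edge-non-adjacency).} I would first show that an iso-node $P$ cannot lie on two distinct singular faces. Any two faces of a cuboid meeting at $P$ share an edge through $P$, so their union contains all three cuboid-neighbors of $P$; if both faces are singular these three neighbors are all disperse, and together with $\mathcal{F}(P)=c$ this yields a $T_1$-subgraph at $P$. Hence for $n$ singular faces the associated iso-nodes are distinct, which establishes the ``$n$ iso-points'' claim. Next I would rule out that two distinct iso-nodes $P,P'$ on distinct singular faces are edge-adjacent: the two faces through $\overline{PP'}$ each contain two iso-nodes and cannot be singular, so $F_P$ and $F_{P'}$ must be the remaining parallel faces at $P$ and $P'$; their six non-iso vertices are all disperse and pairwise distinct, and removing any one of them yields a 7-node subgraph with $P,P'$ iso and incident plus five disperse nodes, which is a $T_2$-subgraph.

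\textbf{Step 2 (face-diagonal pair lies on a regular face).} Suppose two iso-nodes $P,P'$ on distinct singular faces share a common face $F=\{P,Q_1,P',Q_2\}$ in cyclic order. By Step 1 they are face-diagonal on $F$. Let $R$ be the third neighbor of $P$, the one not on $F$. The two faces at $P$ different from $F$ both contain $R$ and contain $Q_1$ and $Q_2$ respectively, so whichever is $F_P$, singularity forces $R$ and that particular $Q_i$ to be disperse; analogously for $F_{P'}$ with a third neighbor $R'$. If $F_P$ and $F_{P'}$ select different $Q_i$'s, then both $Q_1$ and $Q_2$ become disperse, so all three neighbors of $P$ are disperse and $\{P,Q_1,Q_2,R\}$ is a $T_1$-subgraph, contradiction. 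So they select the same $Q_i$, say $Q_1$, which is disperse; and the other vertex $Q_2$ of $F$ must be non-disperse, for otherwise all three neighbors of $P$ would again be disperse. Hence $F$ has exactly one disperse node, two iso-nodes, and one non-disperse, non-iso node; going down the definition list in Section 2 this is neither an L-face (needs two disperse) nor singular (needs three) nor disperse/continuous (mixed), so $F$ is regular. If $P,P'$ share no face at all, they must be opposite cuboid corners, i.e., they lie on a space diagonal, completing the pair classification.

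\textbf{Step 3 (upper bound of three).} By Steps 1 and 2 the set $\mathcal{I}$ of iso-nodes lying on singular faces is an independent set in the cuboid graph. This graph is bipartite with the two color classes (parity of coordinate sum) being its only independent sets of size four, and every face of a cuboid contains exactly two vertices of each color class. Hence if $|\mathcal{I}|=4$, then $\mathcal{I}$ coincides with a color class, every face of the cuboid contains at least two iso-nodes, and no face can be singular, contradicting the assumption of four singular faces. Therefore $|\mathcal{I}|\le 3$, and since Step 1 gives a bijection between singular faces and their iso-nodes, $G$ has at most three singular faces; an explicit iso-node triple such as $(0,0,0),(1,1,0),(1,0,1)$ with the remaining labels chosen appropriately shows this bound is sharp. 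The main obstacle is the bookkeeping of Step 2: one must track which $Q_i$ is forced to be disperse by each of $F_P$ and $F_{P'}$ and verify that the one surviving sub-case pins down precisely the label pattern (one disperse, two iso, one strictly continuous) that makes $F$ regular rather than, say, a trivial L-face or a continuous face.
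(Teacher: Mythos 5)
Your proof is correct, but it takes a genuinely different route from the paper. The paper's proof is an exhaustive enumeration by sketches: it exhibits in Figure~\ref{image_17} the classes $[(a1)]_{\square}^{\star}$, $[(a2)]_{\square}$, $[(a3)]_{\circ}$ and $[(a4)]_{\square}$ as the only label patterns of a regular graph with one, two or three singular faces, reads the regular-face/space-diagonal dichotomy directly off the pictures, asserts that no further configurations exist, and invokes the $T_1$-rule only for the final claim. You instead derive everything from the single principle that regularity forbids $T$-subgraphs: your $T_1$ argument for injectivity of the face-to-iso-node map coincides with the paper's last-claim argument, but your $T_2$ argument for the two parallel singular faces additionally shows that the relevant iso-nodes form an \emph{independent set} in the cuboid graph --- a fact the paper never isolates --- and then the bipartite structure of the $3$-cube (its only independent sets of size four are the two parity classes, and every face meets each class exactly twice) gives the bound three with no enumeration at all, while your local analysis of which $Q_i$ each singular face forces to be disperse yields the dichotomy. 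Your version buys a picture-free, fully checkable argument and an explanation of \emph{why} four singular faces are impossible (parity); the paper's version buys brevity, but its completeness rests on the visual claim that the sketches exhaust all cases. Two small points to tighten. First, in Step 2 you assert that $Q_2$ is non-iso, which you have not actually shown; fortunately it is not needed, since a face with one disperse and three continuous nodes is regular in the paper's taxonomy (not an L-face, not singular, not disperse, not continuous) whether or not $Q_2$ happens to be an iso-node. Second, in your sharpness example the hedge ``remaining labels chosen appropriately'' is doing real work: if all five remaining vertices were labeled disperse, each of the three iso-nodes would have three disperse neighbors, producing $T_1$-subgraphs and destroying regularity; one must make the common neighbor $(1,0,0)$ of the three iso-nodes strictly continuous (label $<c$), after which the configuration is indeed regular with exactly three singular faces, matching the paper's class $[(a4)]_{\square}$.
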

\begin{proof}
We give the proof of Proposition~\ref{prop:iso-path-1} by using Figure~\ref{image_17}.
If $G\in [(a1)]_{\square}^{\star}$ then $G$ has at least one singular face. The other possibilities
for $G$ to have two singular faces occur only if $G\in [(a2)]_{\square}$ or $G\in [(a3)]_{\circ}$.
The only possibility for $G$ to have three singular faces is $G\in [(a4)]_{\square}$.
The singular faces are marked by bold lines as displayed in Figure~\ref{image_17}. There is no possibility
to get more than three singular faces of $G$. The pair of iso-nodes corresponding to singular faces
of $G$ in case $(a2)$ and $(a4)$ lies on regular face diagonals of $G$. But in case $(a3)$ the pair of
iso-nodes corresponding to singular faces of $G$ lies on a diagonal of the cuboid of $G$. Furthermore,
the rule $T_1$ forbids the possibility of an iso-node being on two distinct singular faces, which
proves the last claim.
\begin{figure}[!ht]
\includegraphics[width=0.8\linewidth]{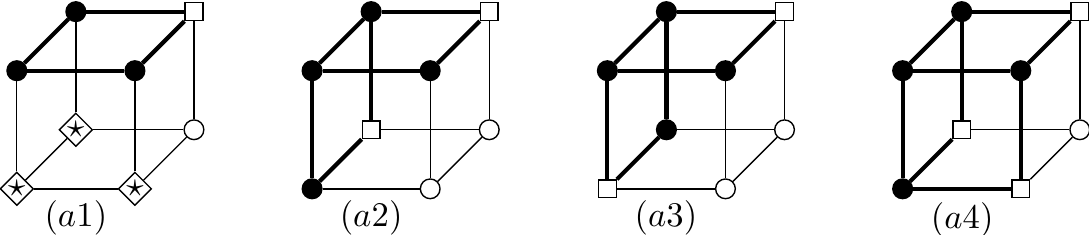}
\caption{Sketches $(a1)$, $(a2)$, $(a3)$ and $(a4)$ illustrate the possibilities of a labeled
cuboid graph to have singular faces.}
\label{image_17}
\end{figure}
\end{proof}
\FloatBarrier

\subsection{Iso-path Computation Rules}
In this section we give rules for iso-path computation for a labeled cuboid graph $G(V,E,\mathcal{F})$ with
iso-level $c\in (0,1)$. We consider two types of rules which are called $S$-rules and $C$-rules.
$S$-rules compute iso-paths in $G$, whereas $C$-rules compute iso-lines in $G$. By combining $C$- and
$S$-rules we get the iso-paths in $G$. Furthermore, consecutive application of $C$-rules to a labeled
cuboid graph $G$, on which no $S$-rules apply gives an additional iso-path in $G$.

In this section we consider not only labeled cuboid graphs but as well iso-points, iso-lines and iso-paths.
We also give graphical sketches to illustrate for a given labeled cuboid graph the corresponding iso-points,
iso-lines and iso-paths. Figure~\ref{image_18} illustrates that for $G(V,E,\mathcal{F})\in [(a1)]_{\square}$
the iso-points on the edges are marked by the symbol $\square$ as shown in sketch $(a2)$, and the iso-lines
that connect two iso-points on a face are marked by $\small{\boxempty}$\!\textminus\!\textminus\!$\small{\boxempty}$
as shown in sketch $(a3)$. The simple closed path shown in sketch $(a3)$ is an iso-path of $G$.
\begin{figure}[!ht]
\includegraphics[width=0.6\linewidth]{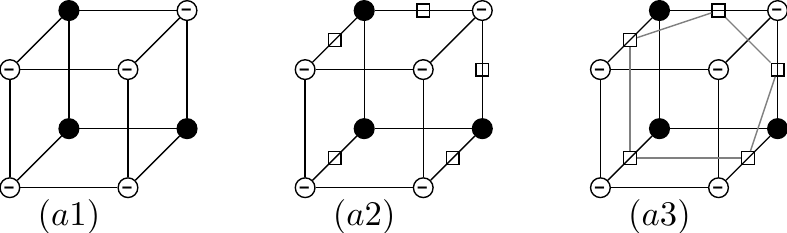}
\caption{Sketches $(a1)$, $(a2)$ and $(a3)$ illustrate the steps of iso-path computation.}
\label{image_18}
\end{figure}
\FloatBarrier
\subsubsection{$S$-subgraphs, $S$-cuboid graphs and $S$-rules}
Let $G(V,E,\mathcal{F})$ be a labeled cuboid graph with iso-level $c\in (0,1)$. The $S$-rules are denoted
by $S_1,S_2,S_3$ and will be applied on $G(V,E,\mathcal{F})$ for computing and deleting iso-paths of
$G$. They are graph operations which will be applied on so-called $S$-subgraphs of $G$ as described below.

\begin{definition}($S_1$-subgraph). Let $G(V,E,\mathcal{F})$ be a labeled cuboid graph
with iso-level $c\in (0,1)$. Let $H(V_h,E_h,\mathcal{F}_h)$ be a subgraph of $G$ such that the following
conditions hold:
\begin{enumerate}
\item the number of points in $V_h$ is four ($|V_h|=4$),
\item there exist a point $P'\in V_h$ such that $P'$ is incident to all points $P\in V_h\setminus\{P'\}$ in $H$ and
\[
\mathcal{F}_h(P')>c\qquad\mbox{and } \qquad \mathcal{F}_h(P)\leq c \quad\;\forall P\in V_h\setminus\{P'\}.
\]
\end{enumerate}
Then $H$ is called an $S_1$-subgraph.
\label{def:iso-path-10}
\end{definition}

\begin{definition}($S_2$-subgraph). Let $G(V,E,\mathcal{F})$ be a labeled cuboid graph
with iso-level $c\in (0,1)$. Let $H(V_h,E_h,\mathcal{F}_h)$ be a subgraph of $G$ such that the following
conditions hold:
\begin{enumerate}
\item the number of points in $V_h$ is six ($|V_h|=6$),
\item there exist two points $P_1,P_2\in V_h$ such that $P_1$ is incident to $P_2$ in $H$,
each $P_1$ and $P_2$ are incident in $H$ to three points in $V_h$ and
\[
\mathcal{F}_h(P_1)>c, \quad\mathcal{F}_h(P_2)>c\qquad\mbox{and }\qquad
\mathcal{F}_h(P)\leq c \quad\;\forall P\in V_h\setminus\{P_1,P_2\}.
\]
\end{enumerate}
Then $H$ is called an $S_2$-subgraph.
\label{def:iso-path-11}
\end{definition}

\begin{definition}($S_3$-subgraph). Let $G(V,E,\mathcal{F})$ be a labeled cuboid graph
with iso-level $c\in (0,1)$. Let $H(V_h,E_h,\mathcal{F}_h)$ be a subgraph of $G$ such that the following
conditions hold:
\begin{enumerate}
\item the number of points in $V_h$ is four ($|V_h|=4$),
\item there exist a point $P'\in V_h$ such that $P'$ is incident to all points $P\in V_h\setminus\{P'\}$ in $H$ and
\[
\mathcal{F}_h(P')<c\qquad\mbox{and } \qquad \mathcal{F}_h(P)> c \quad\;\forall P\in V_h\setminus\{P'\}.
\]
\end{enumerate}
Then $H$ is called an $S_3$-subgraph.
\label{def:iso-path-12}
\end{definition}
\noindent $S_1$-, $S_2$- and $S_3$-subgraphs are subsumed as $S$-subgraphs.

In Figure~\ref{image_19_hier}, sketches $(a)$, $(b)$ and $(c)$ represent $[(a)]_{\circ}$,
$[(b)]_{\circ}$ and $[(c)]_{\square}$, respectively. The $S_1$-, $S_2$- and $S_3$-subgraphs
are elements of $[(a)]_{\circ}$, $[(b)]_{\circ}$ and $[(c)]_{\square}$, respectively.
\begin{figure}[!ht]
\includegraphics[width=0.6\linewidth]{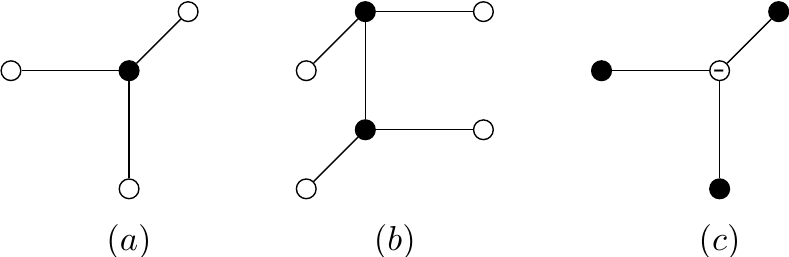}
\caption{The sketches $(a)$, $(b)$ and $(c)$ illustrate $S_1$-, $S_2$- and  $S_3$-subgraphs,
respectively.}
\label{image_19_hier}
\end{figure}
\FloatBarrier
\begin{definition}($S$-cuboid graphs). Let $G(V,E,\mathcal{F})$ be a labeled cuboid graph
with iso-level $c\in (0,1)$. Let $H(V_h,E_h,\mathcal{F}_h)$ be an $S_1$-subgraph of $G$. Then by
using Definition~\ref{def:labeled-graph-3-1} we get a labeled cuboid graph $G'|_{H'\rightarrow H}$,
where $H'$ and $G'$ are defined as given by the Definition~\ref{def:labeled-graph-3-1}. Then
we call $G'|_{H'\rightarrow H}$ the $S_1$-cuboid graph corresponding to $H$. If $H$ is an
$S_2$-subgraph of $G$ then we call $G'|_{H'\rightarrow H}$ the $S_2$-cuboid graph corresponding to $H$.
Analogously, in case $H$ is an $S_3$-subgraph of $G$ using Definition~\ref{def:labeled-graph-3-2}
we get a labeled cuboid graph $G'|_{H'\rightarrow H}$, where $H'$ and $G'$ are defined as given
by the Definition~\ref{def:labeled-graph-3-2}. Then we call $G'|_{H'\rightarrow H}$ the $S_3$-cuboid graph
corresponding to $H$.
\label{def:iso-path-10-11-12}
\end{definition}
\noindent $S_1$-, $S_2$- and $S_3$-cuboid graphs are subsumed as $S$-cuboid graphs.

In Figure~\ref{image_20}, sketches $(a)$, $(b)$ and $(c)$ represent $[(a)]_{\circ}$,
$[(b)]_{\circ}$ and $[(c)]_{\square}$, respectively. The $S_1$-, $S_2$- and $S_3$-cuboid graphs
are elements of $[(a)]_{\circ}$, $[(b)]_{\circ}$ and $[(c)]_{\square}$, respectively.
\begin{figure}[!ht]
\includegraphics[width=0.6\linewidth]{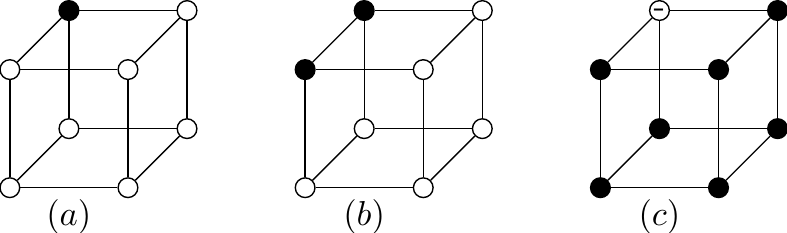}
\caption{The sketches $(a)$, $(b)$ and $(c)$ illustrate $S_1$-, $S_2$- and  $S_3$-cuboid graphs,
respectively.}
\label{image_20}
\end{figure}
\FloatBarrier
\begin{definition}(Subgraph with iso-path).
Let $H(V_h,E_h,\mathcal{F}_h)$ be an $S_1$- or $S_3$-subgraph and let $c\in (0,1)$ be an iso-level.
Let $V_{iso}=\{P\in e\,:\,e\in E_h,\; P\mbox{ an iso-point}\}$ be the set of iso-points
corresponding to $H$. Let $E_{iso}=\{e=\overline{P_iP_j}\,:\,P_i,P_j\in V_{iso}\mbox{ and } P_i\neq P_j\}$.
Then we define a labeled graph $\tilde{H}(\tilde{V}_h,\tilde{E}_h,\tilde{\mathcal{F}}_h)$, where
$\tilde{V}_h=V_h\cup V_{iso}$, $\tilde{E}_h=E_h\cup E_{iso}$ and
\begin{equation}
 \tilde{\mathcal{F}}_h=\left\{
   \begin{array}{ll}
     \mathcal{F}_h & \mbox{on }\;\; V_h, \\
     c & \mbox{on }\;\; V_{iso}.
   \end{array}\right.
\label{eq:iso-path-5}
\end{equation}
We call $\tilde{H}$ an $S_1$- or $S_3$-subgraph with iso-path if $H$ is an $S_1$- or $S_3$-subgraph,
respectively.
\label{def:iso-path-13}
\end{definition}

\begin{definition}(Subgraph with iso-path).
Let $H(V_h,E_h,\mathcal{F}_h)$ be an $S_2$-subgraph and let $c\in (0,1)$ be an iso-level.
Let $V_{iso}=\{P\in e\,:\,e\in E_h\; P\mbox{ an iso-point}\}$ be the set of iso-points
corresponding to $H$. Let $E_{iso}=\{e=\overline{P_iP_j}\,:\,P_i,P_j\in V_{iso}$ with $P_i\neq P_j$ and
$P_1,\ldots,P_4$ are cyclically ordered\}. Then we define a labeled graph
$\tilde{H}(\tilde{V}_h,\tilde{E}_h,\tilde{\mathcal{F}}_h)$, where $\tilde{V}_h=V_h\cup
V_{iso}$, $\tilde{E}_h=E_h\cup E_{iso}$ and
\begin{equation}
 \tilde{\mathcal{F}}_h=\left\{
   \begin{array}{ll}
     \mathcal{F}_h & \mbox{on }\;\; V_h, \\
     c & \mbox{on }\;\; V_{iso}.
   \end{array}\right.
\label{eq:iso-path-6}
\end{equation}
We call $\tilde{H}$ an $S_2$-subgraph with iso-path.
\label{def:iso-path-14}
\end{definition}
\noindent $S_1$-, $S_2$- and $S_3$-subgraphs with iso-path are subsumed as $S$-subgraphs with iso-path.
In Figure~\ref{image_21}, sketches $(a')$, $(b')$ and $(c')$ represent $S_1$-, $S_2$- and
$S_3$-subgraphs with iso-paths, respectively.
\begin{figure}[!ht]
\includegraphics[width=0.6\linewidth]{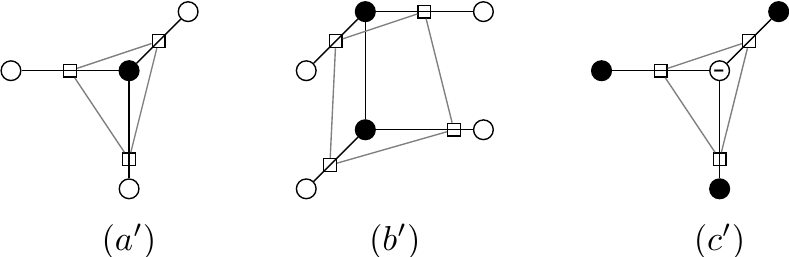}
\caption{Sketches $(a')$, $(b')$ and $(c')$ illustrate $S_1$-, $S_2$- and $S_3$-subgraphs with
iso-paths, respectively.}
\label{image_21}
\end{figure}
\FloatBarrier
In the following three definitions we use the functions $q_0$ and $q_1$ defined by \eqref{eq:iso-path-1} and
\eqref{eq:iso-path-2}.
\begin{definition}(Iso-path free subgraph).
Let $H(V_h,E_h,\mathcal{F}_h)$ be an $S_i$-subgraph $(i=1,2,3)$ and let $c\in (0,1)$ be an iso-level.
Then we call the labeled graph $H(V_h,E_h,\mathcal{R}_h)$ an iso-path free $S_i$-subgraph, where
$\mathcal{R}_h=q_0\circ\mathcal{F}_h$ in case $i=1,2$ and $\mathcal{R}_h=q_1\circ\mathcal{F}_h$ if
$i=3$. Iso-path free $S_1$-, $S_2$- and $S_3$-subgraphs are subsumed as iso-path free $S$-subgraphs.
\label{def:iso-path-16-17-18}
\end{definition}

In Figure~\ref{image_22}, sketches $(a'')$, $(b'')$ and $(c'')$ illustrate the iso-path
free \mbox{$S_1$-,} $S_2$- and $S_3$-subgraphs which are contained in $[(a'')]_{\circ}$, $[(b'')]_{\circ}$
and $[(c'')]_{\circ}$, respectively.
\begin{figure}[!ht]
\includegraphics[width=0.6\linewidth]{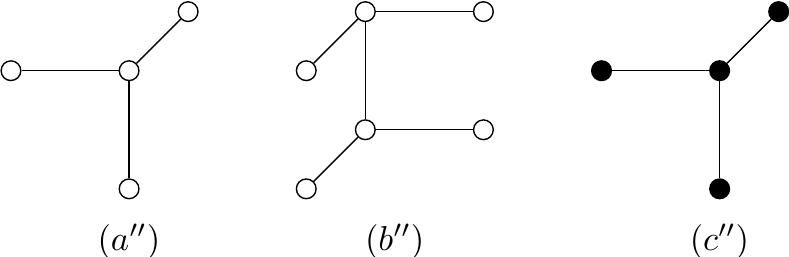}
\caption{Sketches $(a'')$, $(b'')$ and $(c'')$ illustrate iso-path free $S_1$-, $S_2$-
and $S_3$-subgraphs, respectively.}
\label{image_22}
\end{figure}
\FloatBarrier
Let $H(V_h,E_h,\mathcal{F}_h)$ be an $S$-subgraph of $G(V,E,\mathcal{F})$ and let $c\in (0,1)$ be
an iso-level. Then there exists an iso-path of $G$, since we have a corresponding $S$-subgraph with
iso-path. This iso-path is as well an iso-path of $G$. In the overall iso-surface construction,
the iso-paths that we get from $S$-graphs will be recorded in a list. Thereafter, the $S$-subgraph
is substituted in $G$ with an iso-path free $S$-subgraph. Hence we get from $G$ a new labeled cuboid
graph $G'(E,V,\mathcal{F}')$ with a new labeling and a reduced number of iso-paths. The complete
procedure of computing an iso-path of $G$ corresponding to an $S$-subgraph, recording the corresponding
iso-path and then substituting the $S$-subgraph in $G$ with an iso-path free subgraph is called
an $S$-rule. These $S$-rules are also called $S_1$-, $S_2$- and $S_3$-rules if they correspond to
the $S_1$-, $S_2$- and $S_3$-subgraphs, respectively.

The symbols used in the following two definitions have been introduced in
Definition~\ref{def:labeled-graph-4}.
\begin{definition}($S$-rule on a labeled cuboid graph). Let $G(V,E,\mathcal{F})$ be a labeled cuboid
graph with iso-level $c\in (0,1)$ and let $H(V_h,E_h,\mathcal{F}_h)$ be an $S_1$-subgraph of $G$.
Let $\tilde{H}(\tilde{V}_h,\tilde{E}_h,\tilde{\mathcal{F}}_h)$ be the $S_1$-subgraph with iso-path
corresponding to $H$. In addition, let  $\hat{H}(V_h,E_h,q_0\circ\mathcal{F}_h)$
be the iso-path free $S_1$-subgraph corresponding to $H$. Then we call the following sequence an
$S_1$-rule of $G$ corresponding to $H$:
\begin{equation}
G\longrightarrow G|_{H\rightarrow \tilde{H}}\longrightarrow G|_{H\rightarrow \hat{H}}.
\label{eq:iso-path-8}
\end{equation}
Likewise, we define for  $S_2$- and $S_3$-subgraphs the corresponding $S_2$- and $S_3$-rules on $G$,
using the labelings $q_0\circ\mathcal{F}_h$ and $q_1\circ\mathcal{F}_h$, respectively.
We subsume the $S_1$-, $S_2$- and $S_3$-rules on $G$ as $S$-rules.
\label{def:iso-path-20}
\end{definition}

\begin{definition}($S^n$-rules on a labeled cuboid graph). Let $G(V,E,\mathcal{F})$ be a labeled
cuboid graph with iso-level $c\in (0,1)$ and let $H_1(V_{h_1},E_{h_1},\mathcal{F}_{h_1})$,
$H_2(V_{h_2},E_{h_2},\mathcal{F}_{h_2})$ be $S_1$-subgraphs of $G$.
Let $\tilde{H}_1(\tilde{V}_{h_1},\tilde{E}_{h_1},\tilde{\mathcal{F}}_{h_1})$ and
$\tilde{H}_2(\tilde{V}_{h_2},\tilde{E}_{h_2},\tilde{\mathcal{F}}_{h_2})$ be $S_1$-subgraphs
with iso-paths corresponding to $H_1$ and $H_2$, respectively. In addition, let
$\hat{H}_1(V_{h_1},E_{h_1},q_0\circ\mathcal{F}_{h_1})$ and
$\hat{H}_2(V_{h_2},E_{h_2},q_0\circ\mathcal{F}_{h_2})$ be the iso-path free
$S_1$-subgraphs corresponding to $H_1$ and $H_2$, respectively. Then we call the following sequence an
$S_1^2$-rule of $G$, corresponding to $H_1$ and $H_2$:
\begin{eqnarray*}
&&G\longrightarrow G|_{H_1\rightarrow \tilde{H}_1}\longrightarrow G|_{H_1\rightarrow
\hat{H}_1}=:G'(V,E,\mathcal{F}')\\
&&G'\longrightarrow G'|_{H_2\rightarrow \tilde{H}_2}\longrightarrow G'|_{H_2\rightarrow \hat{H}_2}.
\end{eqnarray*}
Likewise, we define for $S_2$- and $S_3$-subgraphs the $S_2^2$- and $S_3^2$-rules of $G$. In analogy, we define
$S_1^3$-rules. All these $S$-rules will be subsumed as $S^n$-rules, where $n\in\{1,2,3\}$ if the type of $S$-rule
is $S_1$ or $n\in\{1,2\}$ if the type of $S$-rule is $S_2$ or $S_3$.
\label{def:iso-path-21}
\end{definition}
The $S$-rules will be written in a simplified {\it graph-theoretical rules} as displayed in
Figure~\ref{image_23_24_25}.
\begin{figure}[!ht]
$1$.
\begin{tabular}[c]{l}
\includegraphics[width=0.7\linewidth]{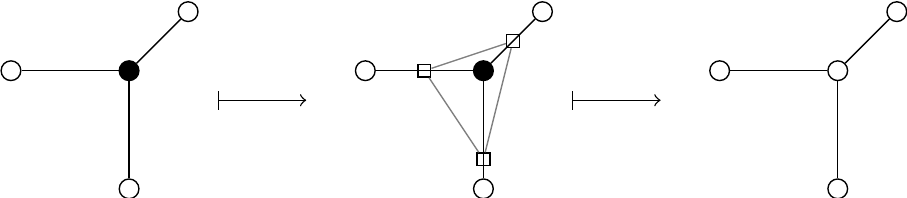}
\end{tabular}\\ \\

$2$.
\begin{tabular}[c]{l}
\includegraphics[width=0.7\linewidth]{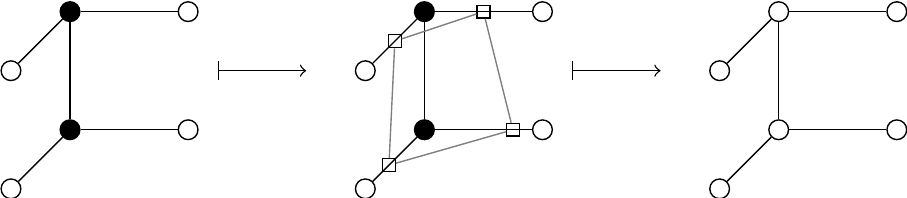}
\end{tabular}\\ \\

$3$.
\begin{tabular}[c]{l}
\includegraphics[width=0.7\linewidth]{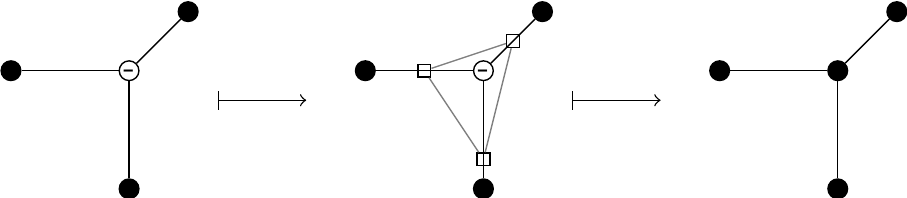}
\end{tabular}
\caption{The sequence $1$, $2$ and $3$ represent the $S_1$-, $S_2$- and $S_3$-rules, respectively.}
\label{image_23_24_25}
\end{figure}
\FloatBarrier

\subsubsection{$C$-subgraphs and $C$-rules}
In this section we describe the so-called $C$-rules. $C$-rules are graph operations which operate
on the regular faces of a labeled cuboid graph $G(V,E,\mathcal{F})$ with iso-level $c\in (0,1)$.
Applying a $C$-rule to a regular face of $G$ gives an iso-line that lies on this face.

\begin{definition}($C$-subgraphs). Let $G(V,E,\mathcal{F})$ be a labeled cuboid graph with
iso-level $c\in (0,1)$. Let $H(V_h,E_h,\mathcal{F}_h)$ be a regular face subgraph of $G$ such that
one of the following properties is satisfied:
\begin{enumerate}
\item $H$ contains one disperse node,
\item $H$ contains two disperse nodes,
\item $H$ contains three disperse nodes.
\end{enumerate}
In case $H$ satisfies 1 we call $H$ a $C_1$-subgraph. Likewise, we call $H$ a $C_2$- or a $C_3$-subgraph
if $H$ satisfies 2 or 3, respectively. We subsume the $C_1$-, $C_2$- and $C_3$-subgraphs as $C$-subgraphs.
$C$-subgraphs are regular faces and vice versa.
\label{def:iso-path-22}
\end{definition}

\begin{definition}(C-subgraphs with iso-lines and $C$-rules).
Let $H(V_h,E_h,\mathcal{F}_h)$ be a $C_1$-subgraph and let $c\in (0,1)$ be an iso-level.
Let $V_{iso}=\{P\in e\,:\,e\in E_h\mbox{ and } P\mbox{ is iso-point}\}$ be the set of iso-points
corresponding to $H$. Let $E_{iso}=\{e=\overline{P_iP_j}\,:\,P_i,P_j\in V_{iso}\mbox{ and } P_i\neq P_j\}$.
Then we define a labeled graph $\tilde{H}(\tilde{V}_h,\tilde{E}_h,\tilde{\mathcal{F}}_h)$, where
$\tilde{V}_h=V_h\cup V_{iso}$, $\tilde{E}_h=E_h\cup E_{iso}$ and
\begin{equation}
 \tilde{\mathcal{F}}_h=\left\{
   \begin{array}{ll}
     \mathcal{F}_h & \mbox{on }\;\; V_h, \\
     c & \mbox{on }\;\; V_{iso}.
   \end{array}\right.
\label{eq:iso-path-10}
\end{equation}
We call $\tilde{H}$ a $C_1$-subgraph with iso-line. Analogously, we define $C_2$- and
$C_3$-subgraphs with iso-lines in case $H$ is a $C_2$-subgraph or a $C_3$-subgraph, respectively.
We subsume these subgraphs as C-subgraphs with iso-lines. In addition, we call the transformation
process which takes $H$ to a $C_k$-subgraph with iso-line a $C_k$-rule $(k=1,2,3)$. The $C_1$-, $C_2$-
and $C_3$-rules are subsumed as $C$-rules.
\label{def:iso-path-23}
\end{definition}
\noindent The $C$-rules will be written in a simplified {\it graph-theoretical rules} as shown in
Figure~\ref{image_26_27_28}.
\begin{figure}[!ht]
$1$.
\begin{tabular}[c]{l}
\includegraphics[width=0.7\linewidth]{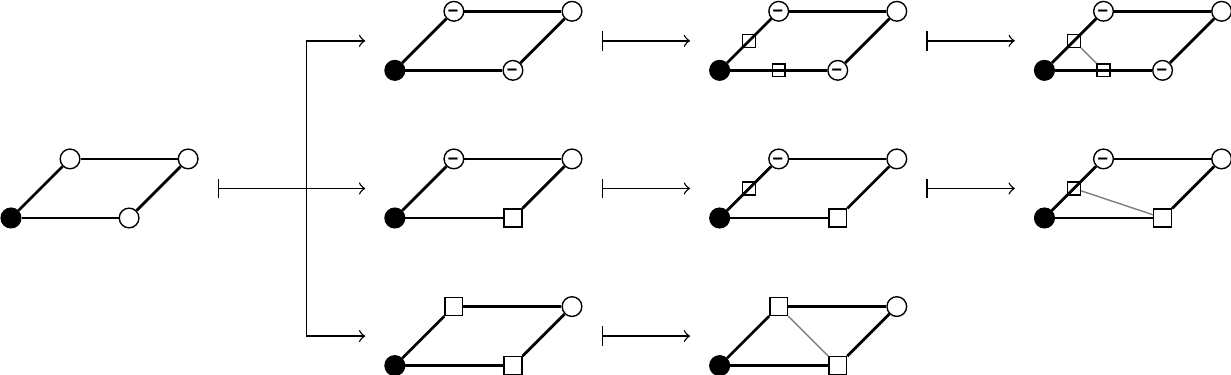}
\end{tabular}\\ \\ \\

$2$.
\begin{tabular}[c]{l}
\includegraphics[width=0.7\linewidth]{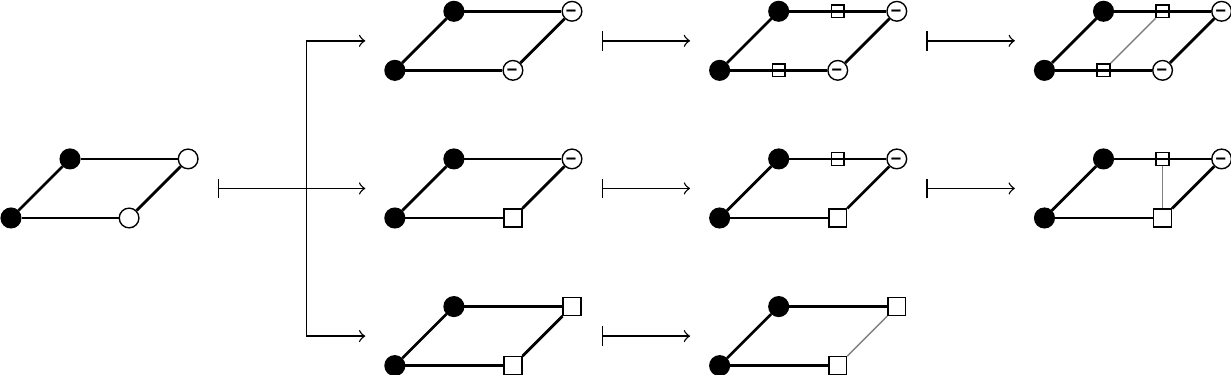}
\end{tabular}\\ \\ \\

$3$.
\begin{tabular}[c]{l}
\includegraphics[width=0.55\linewidth]{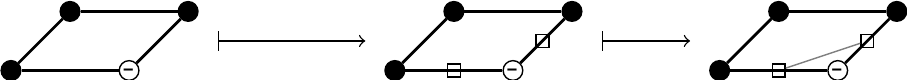}
\end{tabular}
\caption{The sequence $1$, $2$ and $3$ illustrate the $C_1$-, $C_2$- and $C_3$-rules, respectively.}
\label{image_26_27_28}
\end{figure}
\FloatBarrier
\noindent{\bf Note: }A singular face contains only one iso-point and hence it is not possible
to compute an iso-line on it. Therefore, $C$-rules will not be applied to singular faces.
\begin{proposition}
The application of any of the  $C$-rules as well as any of the $S$-rules to a labeled cuboid
graph $G(V,E,\mathcal{F})$ with iso-level $c\in (0,1)$ gives the same iso-lines on the
regular faces of $G$.
\label{prop:iso-path-2}
\end{proposition}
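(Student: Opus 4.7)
The plan is to argue that on any regular face the iso-line is in fact uniquely determined by the labels at the four corners, and then to check that both the $C$-rules and the $S$-rules place exactly this canonical iso-line on that face. The starting point is that, by Definition of a regular face, the number of disperse corners of the face is $1$, $2$ (in the non-L-face configuration), or $3$; in each of these cases the set of iso-points arising on the edges of the face consists of exactly two points, namely the two interpolated points on the edges joining a disperse to a continuous node. Since the face is not an L-face, there is no ambiguity as to which pair of iso-points should be connected, so the iso-line on this face is uniquely fixed by the node-labels of the face.

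Next I would treat the $C$-rules. By Definition of a $C_k$-subgraph with iso-line (for $k=1,2,3$), the $C_k$-rule takes the regular face subgraph and adds precisely the iso-line $\overline{P_iP_j}$ between the two iso-points on its edges. Thus the $C$-rule reproduces the canonical iso-line identified above. This step requires no further work beyond quoting the definitions.

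The substantive step is the $S$-rule direction, which I would handle by a short case distinction on $S_1$, $S_2$, $S_3$. The $S$-subgraph with iso-path introduced in Definitions~\ref{def:iso-path-13} and~\ref{def:iso-path-14} prescribes which iso-points are connected by iso-lines within the subgraph; these iso-lines are edges of the iso-path produced by the $S$-rule. For each type, I would examine which faces of the underlying cuboid carry nodes of the $S$-subgraph and check that whenever such a face is a regular face of $G$, the two iso-points of that face are connected by an edge of the $S$-path, and conversely every $S$-path edge lies on such a regular face. For $S_1$ the relevant faces are the three faces meeting at the disperse corner $P'$, each a $C_1$-regular face carrying exactly one iso-line; for $S_3$ the dual statement holds with the roles of disperse and continuous reversed. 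For $S_2$ the two adjacent disperse nodes $P_1,P_2$ determine four regular faces (the three pairs $(P_1,P_2$, one neighbour, one neighbour opposite) among the incident faces), and each of these carries exactly one of the four iso-lines of the $S_2$-path.

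The main obstacle is the $S_2$ case, since the iso-path has four segments spread over four faces and one has to verify by inspection of Figure~\ref{image_24} that the cyclic ordering of the iso-points imposed by Definition~\ref{def:iso-path-14} assigns each iso-line to the correct regular face (as opposed to, say, producing a face-diagonal on a single face). Once this finite check is done, uniqueness from the first paragraph implies that the $S$-iso-line on each regular face coincides with the $C$-iso-line, which concludes the proof.
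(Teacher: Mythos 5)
Your proposal is correct and rests on exactly the same observation as the paper's own (very short) proof: a regular face carries precisely two iso-points, and since both the $C$-rules and the $S$-rules form iso-lines by connecting iso-points lying on a common face, the resulting iso-line on any regular face is uniquely determined and hence identical for both rule families. Your additional case-by-case verification for $S_1$, $S_2$, $S_3$ (including the $S_2$ check across four faces) is a harmless elaboration of what the paper dispatches in two sentences, not a different method.
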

\begin{proof}
Both $C$- and $S$-rules compute iso-lines by connecting iso-points on the same
face of $G(V,E,\mathcal{F})$. Since a regular face of $G$ has only two iso-points
we only get one iso-line on the face. Therefore, the $C$- and $S$-rules give the same
iso-line on a regular face $G$.
\end{proof}

\begin{notation}(Corresponding node or nodes of iso-line and corresponding iso-path). Let $G(V,E,\mathcal{F})$
be a labeled cuboid graph with iso-level $c\in (0,1)$ and let $G$ be regular. Let $H(V_h,E_h,\mathcal{F}_h)$
be a regular face of $G$, having one or two or three disperse nodes (cf. sketches $(a1)$, $(a2)$ and $(a3)$ of
Figure~\ref{image_29}). Note that in case $H$ has three disperse nodes then the fourth (continuous) node
of $H$ is not allowed to be an iso-node, since otherwise $H$ is a singular face.
Let $l$ be the iso-line of $H$ which is computed by applying either the $C_1$-, or $C_2$- or the
$C_3$-rule to $H$. The types of the $C$-rules which will be applied to $H$ are chosen according to
the graph-theoretical rules as given in Figure~\ref{image_26_27_28}. Then we say that the disperse
node or nodes of $H$ correspond to the iso-line $l$.

Now, let $H'(V'_h,E'_h,\mathcal{F}_h')$ be a non-trivial L-face of $G$ (cf. sketches $(a4)$ and $(a5)$ of
Figure~\ref{image_29}). Let $P$ be one of the continuous nodes of $H'$ and let $P$ be not an iso-node.
Note that at least one continuous node of a non-trivial L-face is not an iso-node, since otherwise $H$ is a
trivial L-face. By joining the iso-points in $H'$ that are incident to $P$ in $H'$, we get an iso-line $l$ and we
say that the continuous node $P$ corresponds to the iso-line $l$. Moreover, if there exists an inner
iso-path $\omega$ of $G$ that passes through $l$ we say that $\omega$ corresponds to $P$. Additionally, let $Q$
be one of the disperse nodes of $H'$. By joining the iso-points in $H'$ that are incident to $Q$ in $H'$, we get
an iso-line $l$ and we say that the disperse node $Q$ corresponds to the iso-line $l$. Furthermore,
if there exists an inner iso-path $\omega$ of $G$ that passes through $l$ we say that $\omega$ corresponds to $Q$.
\label{note:corresponding-node}
\end{notation}
The sketches $(a1)$, $(a2)$ and $(a3)$ in Figure~\ref{image_29} show cases with an iso-line on a regular face. In
case $(a1)$ the iso-line on the face corresponds to the disperse node of the face and in the other two cases
each of the iso-lines on the faces corresponds to the disperse nodes of the face. The Sketches $(a4)$ and
$(a5)$ in Figure~\ref{image_29} show non-trivial L-faces with possible iso-lines. Iso-lines corresponding to
the continuous node (denoted by $\circleddash$) are drawn bold for $(a4)$ and $(a5)$, while iso-lines
corresponding to disperse nodes are drawn light.
\begin{figure}[!ht]
\includegraphics[width=0.99\linewidth]{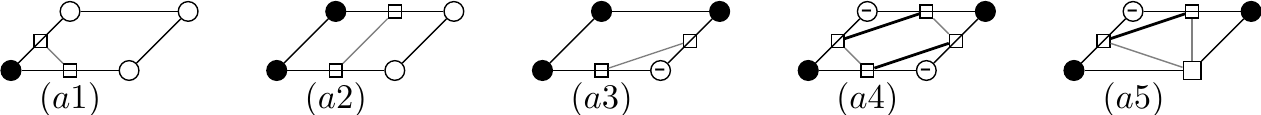}
\caption{Sketches $(a1)$, $(a2)$ and $(a3)$ show cases with an iso-line on a regular face.
Sketches $(a4)$ and $(a5)$ show cases with iso-lines on non-trivial L-faces.}
\label{image_29}
\end{figure}
\FloatBarrier

\section{Classification of Labeled Cuboid Graphs and\\ Computation of Iso-paths}
The primary objective of this section is to find a correspondence between iso-paths and subgraphs
of a labeled cuboid graph $G(V,E,\mathcal{F})$ with iso-level $c\in (0,1)$. The secondary objective
of this section is to find the set of subgraphs of $G$ that correspond to the complete set of iso-paths
of $G$, without knowing the iso-paths of $G$. Finally, we give an algorithm to find the set of subgraphs
of $G$ that correspond to the complete set of {\it inner iso-paths} of $G$, where an {\it inner iso-path}
is  an iso-path of $G$ which does not lie on a single non-trivial L-face of $G$.

For these purposes we define three different types of labeled subgraphs of $G$ as follows:
\begin{enumerate}
\item subgraphs of surface measure zero which do not lie on a face of the cuboid of $G$,
\item subgraphs of positive surface measure which do not lie on a face of the cuboid of $G$,
\item L-face subgraphs, lying on a face of the cuboid of $G$.
\end{enumerate}

Subgraphs of $G$ of surface measure zero which do not lie on a face of the cuboid of $G$ correspond to an
iso-element without surface area. The possible surface measure zero subgraphs of $G$
are illustrated by the sketches $(a)$ and $(b)$ in Figure~\ref{image_30}. We denote by $\hat{g}_1$ and
$\hat{g}_2$ arbitrary subgraphs contained in the $\square$-equivalence classes $[(a)]_{\square}$ and
$[(b)]_{\square}$, respectively. The subgraphs $\hat{g}_1$ and $\hat{g}_2$ are denoted {\it basic
zero subgraphs} of a labeled cuboid graph, since any surface measure zero subgraph of $G$ contains $\hat{g}_1$
or $\hat{g}_2$ as a subgraph.
\begin{figure}[!ht]
\includegraphics[width=0.37\linewidth]{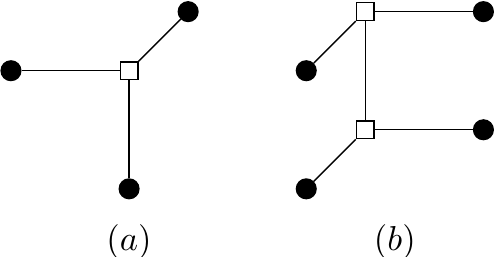}
\caption{Sketches $(a)$ and $(b)$ illustrate basic zero subgraphs of a labeled cuboid graph. The
labeled subgraphs corresponding to $(a)$ and $(b)$ are denoted by $\hat{g}_1$ and $\hat{g}_2$,
respectively.}
\label{image_30}
\end{figure}
\FloatBarrier
Subgraphs of $G$ of positive surface measure which do not lie on a face of the cuboid of $G$ correspond
to iso-elements with positive surface area. If we change the disperse nodes of $\hat{g}_1$ and
$\hat{g}_2$ to continuous nodes and the iso-node of $\hat{g}_1$ and the iso-nodes of $\hat{g}_2$ to disperse
nodes then we get labeled graphs denoted by $g_1$ and $g_2$, respectively. The labeled graphs $g_1$ and $g_2$
are contained in the $\circ$-equivalence classes $[(a)]_{\circ}$ and $[(b)]_{\circ}$, respectively,
where sketches $(a)$ and $(b)$ are shown in Figure~\ref{image_19}. If we change the iso-node of graph
$\hat{g}_1$ to a label less than the iso-value $c$ then we get a labeled graph denoted by $g_3$ which
is contained in the $\square$-equivalence class $[(c)]_{\square}$, where sketch $(c)$ is given in
Figure~\ref{image_19}.
\begin{figure}[!ht]
\includegraphics[width=0.6\linewidth]{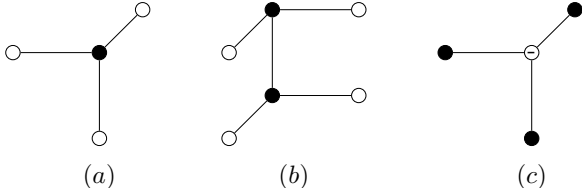}
\caption{The sketches $(a)$, $(b)$ and  $(c)$ illustrate basic positive subgraphs of a labeled cuboid graph.
The labeled subgraphs corresponding to $(a)$, $(b)$ and  $(c)$ are denoted by $g_1$, $g_2$ and $g_3$,
respectively.}
\label{image_19}
\end{figure}
\FloatBarrier
The labeled graphs $g_1$, $g_2$ and $g_3$ have positive surface measure and are called {\it basic positive subgraphs}
of a labeled cuboid graph. Here "{\it basic positive subgraphs}" means that the surface measure that we get
from the labeled graphs $g_1$, $g_2$ and $g_3$ is positive and they have the smallest number of edges compared to
other positive surface measure subgraphs which do not lie on a face of a labeled cuboid graph.

Now we give a definition which characterizes the positive surface measure subgraphs of a labeled cuboid graph
which contain a single iso-path of the graph which does not lie on a single non-trivial L-face of the graph.
\begin{definition}(Reduced positive surface measure subgraph). Let $G(V,E,\mathcal{F})$ be a regular
labeled cuboid graph with iso-level $c\in (0,1)$. Then we call the subgraph $H(V_h,E_h,\mathcal{F}_h)$
of $G$ a reduced positive surface measure subgraph if either $H$ is in $[g_1]_{\circ}$, or if $H$ has at
least two disperse nodes and satisfies the following conditions:
\begin{enumerate}
\item $H$ contains all incidence relations present in $G$ between the disperse nodes of
$G$,
\item for any two different disperse nodes of $H$ there exists a path which connects both disperse nodes such that
the path passes only through disperse edges of $H$,
\item all continuous nodes of $G$ which are incident to the disperse nodes of $H$ are in $H$ and these
are the only continuous nodes in $H$,
\item any edge $e\in E_h$ has at least one disperse node as an end point,
\item $H$ contains no L-faces,
\item $H$ does not contain two different subgraphs of $G$ which are in $[g_3]_{\square}$.
\end{enumerate}
\label{def:positive-surface-measure-graph}
\end{definition}
The labeled graphs contained in the $\circ$-equivalence classes $[(a1)]_{\circ},\ldots,[(a4)]_{\circ}$,
where sketches $(a1),\ldots,(a4)$ are given in Figure~\ref{image_31}, illustrate examples of reduced
positive surface measure subgraphs which are not basic positive subgraphs.
\begin{figure}[!ht]
\includegraphics[width=0.8\linewidth]{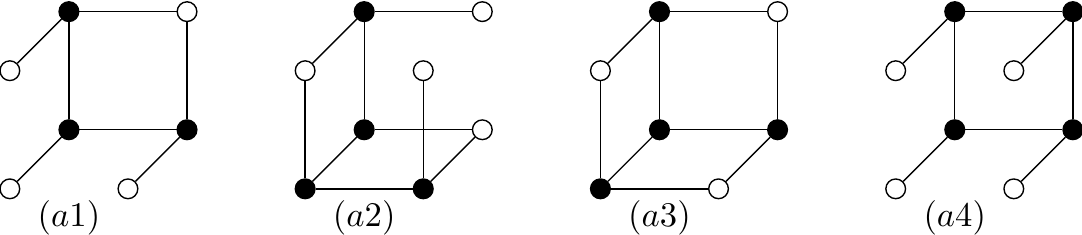}
\caption{Sketches $(a1)$ to $(a4)$ illustrate reduced positive surface measure subgraphs of a labeled
cuboid graph which are not basic and do not lie on a face.}
\label{image_31}
\end{figure}
\FloatBarrier
The existence of basic positive measure subgraphs in a labeled cuboid graph $G$ with iso-level $c\in (0,1)$
induces a classification of $G$ as reducible or irreducible as will be defined in the next subsection.

\subsection{Classification of a Labeled Cuboid Graph}
Let $G(V,E,\mathcal{F})$ be a regular labeled cuboid graph with iso-level $c\in (0,1)$. Then $G$ can
be reducible or irreducible. This classification of regular graphs is important for the computation of
iso-paths. Reducible labeled cuboid graphs will be transformed stepwise to irreducible labeled cuboid
graphs, using the $S$-rules. Each step of transformation from reducibility to irreducibility of a
labeled cuboid graph $G$ gives an iso-element of $G$ and, furthermore, an irreducible labeled cuboid
graph has a single iso-element.

Any regular labeled cuboid graph $G(V,E,\mathcal{F})$ with iso-level $c\in (0,1)$ has precisely
one of the following properties
\begin{enumerate}
\item[(a)] $G$ is L-face free,
\item[(b)] $G$ has only non-trivial L-faces,
\item[(c)] $G$ has one trivial L-face.
\end{enumerate}
If $G$ is regular and L-face free then it has one of the following forms:
\begin{enumerate}
\item there exist two disperse nodes such that each of them is incident only to continuous nodes,
\item there exist two continuous nodes such that each of them is incident only to disperse nodes,
\item all nodes satisfy none of the conditions given by 1 and 2 from above.
\end{enumerate}

\noindent Recall that $L(G)$ is the set of all L-faces of a labeled cuboid graph $G$ with iso-level
$c\in (0,1)$ and $D(G)$ denotes the number of disperse nodes of $G$.
\begin{definition}(Reducible/irreducible labeled cuboid graph). Let $G(V,E,\mathcal{F})$ be a labeled
cuboid graph with iso-level $c\in (0,1)$ and let $G$ be regular. Then we call $G$ reducible if one of
the following conditions holds:
\begin{enumerate}
\item $L(G)\neq\emptyset$,
\item $D(G)=2$ and there is $H(V_h,E_h,\mathcal{F}_h)\in [g_1]_{\circ}$ such that $H\subset G$,
\item $D(G)=6$ and there is $H(V_h,E_h,\mathcal{F}_h)\in [g_3]_{\square}$ such that $H\subset G$.
\end{enumerate}
We call $G$ irreducible if $G$ is not reducible.
\label{def:class-1}
\end{definition}
\noindent{\bf Note: }Reducible and irreducible labeled cuboid graphs are regular. A reducible
labeled cuboid graph contains at least two inner iso-paths, while an irreducible labeled cuboid graph
has one iso-path.\\

Reducible graphs will be decomposed with respect to inner iso-paths using the basic positive subgraphs.
Let $G(V,E,\mathcal{F})$ be a labeled cuboid graph with iso-level $c\in (0,1)$. Assume $G$ is reducible.
Then there are  $n\in\{1,2,3\}$ and $i\in\{1,2,3\}$ such that $S_i^n$-rule is applicable on $G$ and
\begin{equation}
\mbox{the set of inner iso-paths of }G \mbox{ is } L_1\cup L_2,
\label{eq:class-2}
\end{equation}
where $L_1$ is the set of all inner iso-paths that we get by applying the $S_i^n$-rule to $G$ and $L_2$
is the iso-path of a {\it reduced positive surface measure subgraph} $H'(V_h',E_h',\mathcal{F}_h')$
of a labeled cuboid graph $G'(V',E',\mathcal{F}')$ (with the same iso-level $c$), where $G'$ and $G$
have the same set of nodes and $H'$ contains all disperse nodes of $G'$. We call $G'$ a {\it rest graph}
of $G$. This means there exists a decomposition $\chi$ of $G$ with respect to inner iso-paths of $G$ as
\begin{equation}
\chi(G)=\left(S_{i,1},\ldots,S_{i,n},R\right),
\label{eq:class-3}
\end{equation}
where $S_{i,1},\ldots,S_{i,n}$ denote the $n$ distinct $S_i$-subgraphs of $G$ and $R$
is the rest graph of $G$ which we get after we apply the $S_i^n$-rule to $G$. The rest graph $R$ of $G$
is irreducible. In addition, we define a decomposition $\eta$ of $G$ into labeled cuboid graphs
(with the same iso-level $c$) by
\begin{equation}
\eta(G):=\left(G_1,\ldots,G_n, R\right),
\label{eq:class-3-1}
\end{equation}
where $G_l$ is the $l$-th $S_i$-cuboid graph of $G$ for $l=1,\ldots,n$ and $R$ is the rest graph of $G$.
Note that the $G_l$ are irreducible labeled cuboid graphs. By the definition of the $l$-th
$S_i$-cuboid graph of $G$, the following holds:
\begin{itemize}
\item for all $l=1,\ldots,n$, the iso-path of $G_l$ is the same as the iso-path that we get by
applying the $S_i^l$-rule to $G$.
\end{itemize}
Theoretical investigations and algorithmical computation of iso-surfaces and surface normals corresponding
inner iso-paths of $G$ are easier if we use the labeled cuboid graphs $G_l$ for $l=1,\ldots,n$ and the rest
graph $R$ of $G$ as given by the decomposition~\eqref{eq:class-3-1} instead of $G$.

\subsection{Inner Iso-paths of Labeled Cuboid Graphs}
In this section we compute inner iso-paths of a given labeled cuboid graph $G(V,E,\mathcal{F})$ with iso-level
$c\in (0,1)$. For the computation of the inner iso-paths we repeatedly refer to the labeled graphs $g_1$,
$g_2$ and $g_3$ as explained above in this section.

\begin{theorem}\label{thm:class-1}
Let $G(V,E,\mathcal{F})$ be a labeled cuboid graph with iso-level $c\in (0,1)$. Assume $G$ is regular and
has at least one L-face. Then $G$ contains a subgraph $H(V_h,E_h,\mathcal{F}_h)$ which is an element of
$[g_1]_{\circ}$ or $[g_2]_{\circ}$ or $[g_3]_{\square}$.
\end{theorem}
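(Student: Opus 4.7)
The plan is to fix an L-face $F$ of $G$, together with the parallel opposite face $F'$ of the cuboid, and to do a short case analysis on the labels at $F'$ to exhibit in each case an explicit $S_1$-, $S_2$-, or $S_3$-subgraph of the required kind. I label the nodes of $F$ as $Q_1, Q_2$ (the two disperse nodes, on one diagonal) and $P_1, P_2$ (the two continuous nodes, on the other diagonal), and write $Q_1', Q_2', P_1', P_2'$ for the corresponding vertices of $F'$, each connected to its unprimed counterpart by the perpendicular cuboid edge. Thus the three neighbours of $Q_i$ in $G$ are $P_1, P_2, Q_i'$, while the three neighbours of $P_i$ are $Q_1, Q_2, P_i'$.

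First, I would dispatch the case that some $Q_i'$ is continuous; by symmetry, say $Q_1'$. Then $Q_1$ is a disperse node whose three neighbours $P_1, P_2, Q_1'$ are all continuous, so the four-node subgraph on $\{Q_1, P_1, P_2, Q_1'\}$ is an $S_1$-subgraph and hence lies in $[g_1]_\circ$. So I may assume both $Q_1'$ and $Q_2'$ are disperse. Next, if both $P_1'$ and $P_2'$ are continuous, then the edge $Q_1 Q_1'$ joins two incident disperse nodes, and each endpoint has two further neighbours in $\{P_1, P_2, P_1', P_2'\}$ that are all continuous; the six-node induced subgraph is therefore an $S_2$-subgraph and lies in $[g_2]_\circ$.

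The remaining case, where at least one of $P_1', P_2'$ is disperse (say $P_1'$), is the only delicate one. Here the three neighbours $Q_1, Q_2, P_1'$ of $P_1$ are all disperse. If $\mathcal{F}(P_1) < c$, then the four-node subgraph $\{P_1, Q_1, Q_2, P_1'\}$ is an $S_3$-subgraph, yielding an element of $[g_3]_\square$. The remaining possibility $\mathcal{F}(P_1) = c$ is precisely the situation in which $\{P_1, Q_1, Q_2, P_1'\}$ would be a $T_1$-subgraph, which cannot occur because $G$ is regular and therefore contains no $T$-subgraphs. This use of regularity to upgrade $P_1$ from weakly to strictly continuous is the main obstacle of the argument; without it, a trivial L-face combined with a disperse $P_1'$ could in principle escape all three target classes.
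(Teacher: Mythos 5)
Your proof is correct, and it takes a genuinely different route from the paper's. The paper proves Theorem~\ref{thm:class-1} by enumerating, for each value $D(G)=2,\ldots,6$ compatible with $L(G)\neq\emptyset$, the possible $\circ$- and $\square$-equivalence classes of $G$ and reading off a subgraph in $[g_1]_{\circ}$, $[g_2]_{\circ}$ or $[g_3]_{\square}$ from a figure in each case; exhaustiveness of the sketches is asserted rather than argued. You instead argue locally: fixing one L-face with diagonal disperse pair $Q_1,Q_2$ and diagonal continuous pair $P_1,P_2$, and casing only on the labels of the four vertices of the opposite face, you exhibit an explicit $S_1$-, $S_2$- or $S_3$-subgraph (lying in $[g_1]_{\circ}$, $[g_2]_{\circ}$, $[g_3]_{\square}$, respectively) in each of three manifestly exhaustive cases, independently of $D(G)$. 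The one step going beyond label bookkeeping --- invoking regularity to rule out $\mathcal{F}(P_1)=c$ when $P_1$ has three disperse neighbours, since $\{P_1,Q_1,Q_2,P_1'\}$ would then be a $T_1$-subgraph --- is legitimate and consistent with the paper's own usage (the proof of Proposition~\ref{prop:iso-path-1} likewise argues that ``the rule $T_1$ forbids'' such configurations in regular graphs, a $T_1$-subgraph being exactly a one-node singular iso-path), and you are right that it is genuinely needed, because membership in $[g_3]_{\square}$ requires the centre label to be \emph{strictly} below $c$, the $\square$-class distinguishing iso-nodes from continuous non-iso-nodes. What your shorter, figure-free argument does not replace is the by-product of the paper's heavier enumeration: the counts of L-faces and the assignment of $S^n$-rules in Table~\ref{table_1} (Propositions~\ref{prop:class-1} and~\ref{prop:class-2}) are later justified ``by the same arguments as used to prove Theorem~\ref{thm:class-1}'', so the case list by $D(G)$ does additional work elsewhere in the paper that a purely local proof leaves open.
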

\begin{proof}
We consider each case from $D(G)=2$ to $D(G)=6$, separately. In the following, $G$ is an element of a
$\circ$- or $\square$-equivalence class represented by the sketch of a labeled cuboid graph as shown on
the left side of the figures below. We use as well special cases of $\square$-equivalence for $G$ as given
in Notation~\ref{not:special}.
\begin{enumerate}
\item $D(G)=2$: see Figure~\ref{image_32}. Evidently, there is $H\in [g_1]_{\circ}$.
\begin{figure}[!ht]
\includegraphics[width=0.3\linewidth]{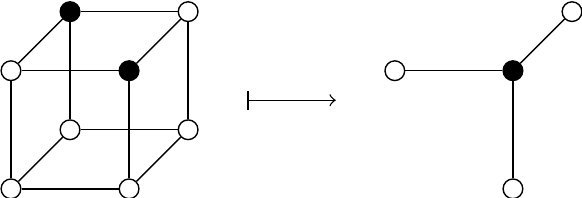}
\caption{The left part represents $[G]_{\circ}$ such that $D(G)=2$ and $L(G)\neq\emptyset$.}
\label{image_32}
\end{figure}
\FloatBarrier
\item $D(G)=3$: see Figure~\ref{image_33_34}. In case of sketch $(a)$ we have $H\in [g_1]_{\circ}$ and in case
of sketch $(b)$ we have $H\in [g_1]_{\circ}$ and $H\in [g_2]_{\circ}$.
\begin{figure}[!ht]
$(a)$
\begin{tabular}[c]{l}
\includegraphics[width=0.3\linewidth]{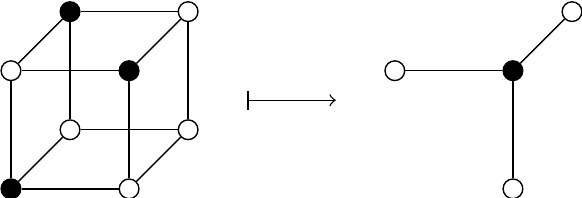}
\end{tabular}
\hspace{1cm}$(b)$
\begin{tabular}[c]{l}
\includegraphics[width=0.35\linewidth]{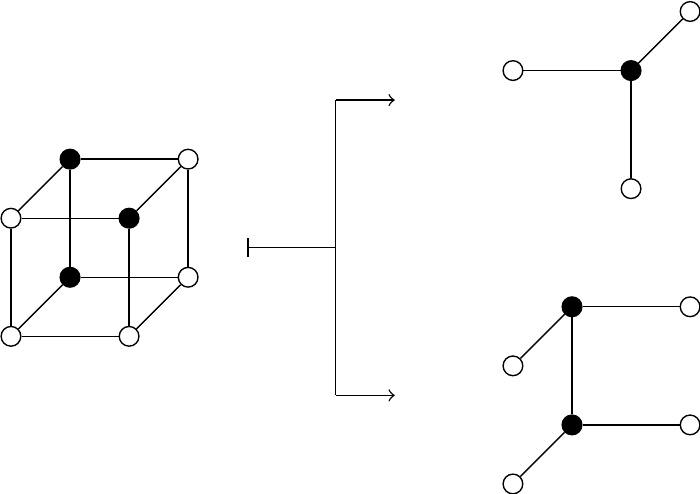}
\end{tabular}
\caption{The left side of the sketches $(a)$ and $(b)$ represent the two possibilities of $G$ such that $D(G)=3$ and
$L(G)\neq\emptyset$.}
\label{image_33_34}
\end{figure}
\FloatBarrier
\item $D(G)=4$: see Figure~\ref{image_35_36}. In case of sketch $(a)$ we have $H\in [g_1]_{\circ}$ and in case
of sketch $(b)$ we have $H\in [g_2]_{\circ}$.
\begin{figure}[!ht]
$(a)$
\begin{tabular}[c]{l}
\includegraphics[width=0.37\linewidth]{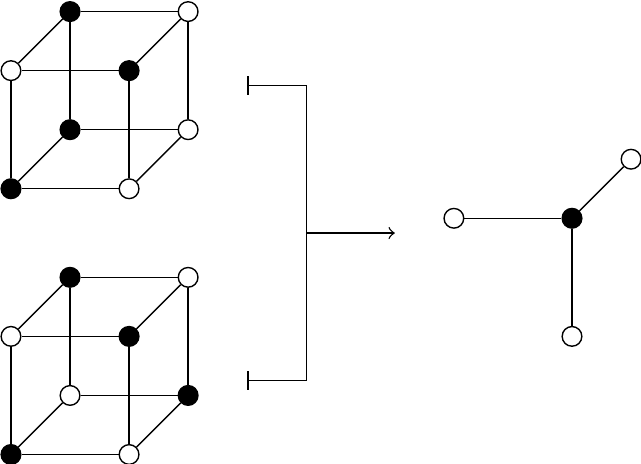}
\end{tabular}
\hspace{0.8cm}$(b)$
\begin{tabular}[c]{l}
\includegraphics[width=0.35\linewidth]{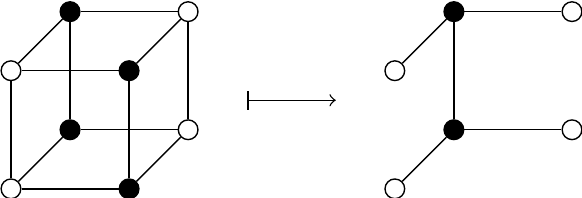}
\end{tabular}
\caption{The left side of the sketches $(a)$ and $(b)$ represent the three possibilities of $G$ such that $D(G)=4$
and $L(G)\neq\emptyset$.}
\label{image_35_36}
\end{figure}
\FloatBarrier
\item $D(G)=5$: see Figure~\ref{image_37_38}. In case of sketch $(a)$ we have $H\in [g_1]_{\circ}$ and
$H\in [g_3]_{\square}$ and in case of sketch $(b)$ we have $H\in [g_3]_{\square}$.
\begin{figure}[!ht]
$(a)$
\begin{tabular}[c]{l}
\includegraphics[width=0.35\linewidth]{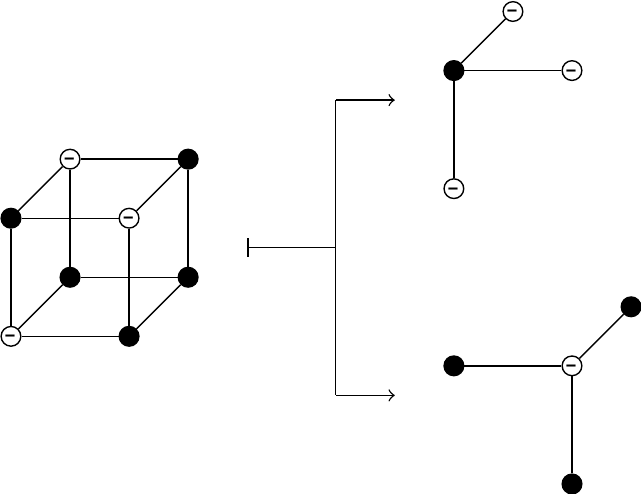}
\end{tabular}
\hspace{0.8cm}$(b)$
\begin{tabular}[c]{l}
\includegraphics[width=0.33\linewidth]{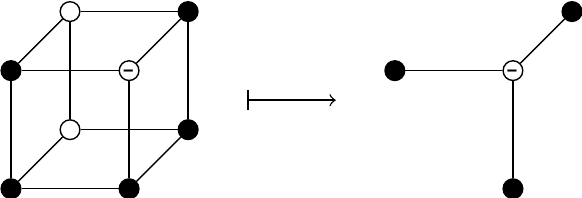}
\end{tabular}
\caption{The left side of the sketches $(a)$ and $(b)$ represent the two possibilities of $G$ such that $D(G)=5$
and $L(G)\neq\emptyset$.}
\label{image_37_38}
\end{figure}
\FloatBarrier
\item $D(G)=6$: see Figure~\ref{image_39}. It holds that $H\in [g_3]_{\square}$.
\begin{figure}[!ht]
\includegraphics[width=0.3\linewidth]{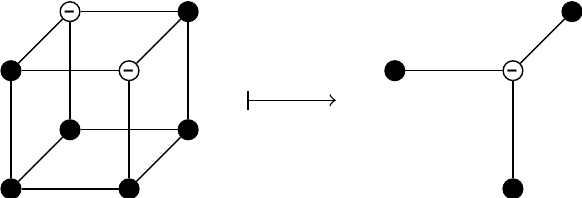}
\caption{The sketch on the left represents $[G]_{\square}$ such that $D(G)=6$ and $L(G)\neq\emptyset$.}
\label{image_39}
\end{figure}
\end{enumerate}
\vspace{-0.9cm}
\end{proof}
\FloatBarrier
\noindent {\bf Note: }A labeled cuboid graph $G(V,E,\mathcal{F})$ with iso-level $c\in (0,1)$ and $D(G)=1$
or $D(G)=7$ has no L-faces, since on any L-face there exists two disperse and two continuous nodes.\\

Propositions \ref{prop:class-1} and \ref{prop:class-2} draw consequences of Theorem~\ref{thm:class-1}.
\begin{proposition}Let $G(V,E,\mathcal{F})$ be a labeled cuboid graph with iso-level $c\in (0,1)$
and let $G$ be regular. Then, if $G$ has L-faces at all, the numbers of possible L-faces in dependence
of $D(G)$ is given in Table~\ref{table_1} .
\label{prop:class-1}
\end{proposition}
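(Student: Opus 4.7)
The proof is a combinatorial enumeration: for each fixed value of $D(G)$, I list the $\circ$- and $\square$-equivalence classes of $G$ that contain at least one L-face and count the number of L-faces in each. Two observations substantially reduce the bookkeeping. First, the defining condition of an L-face is symmetric in the disperse/continuous roles (both supply two diagonally-opposite nodes of the face), so the multiset of possible L-face counts is invariant under the complementation $D(G) \leftrightarrow 8 - D(G)$; hence only $D(G) \in \{2,3,4\}$ needs to be treated explicitly. Second, by the Note preceding the proposition, $D(G) \in \{0,1,7,8\}$ yields no L-faces, so these rows of the table are trivial.

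The plan is to reuse the figure-based case split already carried out in the proof of Theorem~\ref{thm:class-1}. For each value of $D(G)$ with L-faces present, I would identify, up to the isometries of the cuboid, every admissible disperse-set configuration, and then, for each such configuration, count the number of faces that contain exactly two disperse nodes placed diagonally. The key combinatorial reduction is that an L-face is produced by a face-diagonal pair of disperse nodes whose two companions on the common face are both continuous; so the count reduces to counting such pairs in the chosen disperse set. For $D(G)=2$ this is immediate: among the three arrangements (edge-incident, face-diagonal, space-diagonal) only the face-diagonal one yields an L-face, and it gives exactly one. For $D(G)=3$ and $D(G)=4$ I would follow Figures~\ref{image_33_34} and~\ref{image_35_36} to exhaust the geometrically distinct placements, recording for each the number of face-diagonal pairs of disperse nodes whose face is not further occupied by a disperse node. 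The regularity hypothesis on $G$ is used to rule out the degenerate placements that would introduce singular or isolated iso-paths.

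The main obstacle is organizational rather than conceptual: ensuring that the enumeration is exhaustive (modulo the symmetries of the cube) and that for each symmetry class the L-face count is computed without double-counting, since two different face-diagonal pairs can share a disperse node. Once the counts for $D(G) \in \{2,3,4\}$ have been tabulated in this way, the entries for $D(G) \in \{5,6\}$ follow at once from the complementation symmetry, completing Table~\ref{table_1}.
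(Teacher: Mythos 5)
Your proposal is correct and takes essentially the same approach as the paper: the paper gives no separate proof, stating that Table~\ref{table_1} follows by the same case-by-case enumeration over $D(G)=2,\ldots,6$ used to prove Theorem~\ref{thm:class-1}, which is exactly the figure-based case split you reuse, together with the (correct) criterion that an L-face is precisely a face carrying a face-diagonal pair of disperse nodes whose two remaining nodes are continuous. Your complementation symmetry $D(G)\leftrightarrow 8-D(G)$ is a small refinement the paper does not use (it treats $D(G)=5$ and $D(G)=6$ directly), and it is legitimate here because $|L(G)|$ depends only on the $\circ$-equivalence class, and every disperse/continuous partition with $2\leq D(G)\leq 6$ admits a regular labeling (choose labels with no iso-nodes, so no $T$-subgraphs or $F$-graphs arise), so the achievable counts transfer between $D(G)=k$ and $D(G)=8-k$ despite the regularity conditions themselves not being symmetric.
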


\begin{proposition}(Removing L-faces). Let $G(V,E,\mathcal{F})$ be a labeled cuboid graph with
iso-level $c\in (0,1)$. Suppose that $G$ is regular and has at least one L-face. If an $S^n$-rule is
chosen according to row three of Table~\ref{table_1} and if we apply this $S^n$-rule to $G$, we get an
L-face free labeled cuboid graph $G(V,E,\mathcal{F}')$. Here, $n\in\{1,2,3\}$ has to be chosen according
to row three of Table~\ref{table_1}. The computation of $\mathcal{F}'$ is described in
Definition~\ref{def:iso-path-21}. Here we denote by $|L(G)|$ the total number of L-faces in $G$
and, as before, $D(G)$ denotes the number of disperse nodes of $G$.
\begin{table}[h]
\begin{center}
\begin{tabular}{|l||c|c|c|p{3.5cm}|c|c|c|c|}
\hline
$D(G)$&{\bf 2}&\multicolumn{2}{|c|}{\bf 3}&\multicolumn{2}{|c|}{\bf 4}
&\multicolumn{2}{|c|}{\bf 5}&{\bf 6}\\ \hline\hline
$|L(G)|$&1&1&3&\hspace{1.6cm}2&6&1&3&1\\ \hline
$S^n$-rule&$S_1$&$S_1$&$S_1^2$&\begin{tabular}{c|c}L-faces $\nparallel$
& L-faces $\parallel$ \\
$S_1$& $S_2$\end{tabular}&$S_1^3$&$S_3$&$S_1$&$S_3$\\ \hline
$S$-rule&$S_1$&\multicolumn{2}{|c|}{$S_1$}&\begin{tabular}{c|c}$\hspace{0.6cm}S_1
\hspace{0.5cm}$&$\hspace{0.5cm}S_2$
\end{tabular}&$S_1$&$S_3$&$S_1$&$S_3$\\ \hline
\end{tabular}
\end{center}
\caption{Rules for removing L-faces. The signs $\parallel$ and $\nparallel$ correspond to
parallel and non-parallelity of the L-faces in case of two L-faces.}
\label{table_1}
\end{table}
\label{prop:class-2}
\end{proposition}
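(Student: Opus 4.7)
The plan is to proceed by a case analysis on $D(G)\in\{2,3,4,5,6\}$, in direct parallel to the case analysis of Theorem~\ref{thm:class-1}. For each value of $D(G)$ with $L(G)\neq\emptyset$, Theorem~\ref{thm:class-1} guarantees that $G$ contains a subgraph $H\in[g_1]_\circ\cup[g_2]_\circ\cup[g_3]_\square$, and the figures used in its proof (Figures~\ref{image_32}--\ref{image_39}) exhaust all $\circ$- and $\square$-equivalence classes of regular $G$ with L-faces. I would treat each of these equivalence class representatives in turn. For each, first count the number $k=|L(G)|$ of L-faces to verify row~2 of Table~\ref{table_1}; then identify the $S_i$-subgraphs of $G$ prescribed by row~3 (so $i$ and $n$ match Definitions~\ref{def:iso-path-20}--\ref{def:iso-path-21}); and finally apply the $S_i^n$-rule and check that the resulting graph $G(V,E,\mathcal{F}')$ is L-face free.

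The key point is to recall the effect of the $S_i$-rules on labels: an $S_1$- or $S_2$-rule replaces $\mathcal{F}_h$ by $q_0\circ\mathcal{F}_h$ on $V_h$ (so the three/four continuous nodes of the $S_1$- or $S_2$-subgraph become labeled $0$), while an $S_3$-rule replaces $\mathcal{F}_h$ by $q_1\circ\mathcal{F}_h$ (the single ``$<c$'' node surrounded by three disperse nodes becomes labeled $1$, i.e.\ disperse). Since an L-face is characterized by having exactly two non-incident disperse nodes and two continuous nodes (cf. the definition preceding Figure~\ref{image_4}), it suffices to check that after the $S^n$-rule is applied, no face of $C$ still exhibits such a split of labels. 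In the cases $D(G)\in\{2,3,6\}$ this is essentially immediate from Figures~\ref{image_32}, \ref{image_33_34} and~\ref{image_39}, since the $S$-rule either flattens the two isolated continuous nodes to $0$ (so every face becomes disperse, continuous, or singular) or lifts the lone ``$<c$'' node to disperse (so no face can carry two non-adjacent disperse nodes without becoming fully disperse).

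The main obstacle is the case $D(G)=4$ together with the bifurcation of the $D(G)=5$ case. For $D(G)=4$ I would use the parallelity distinction in row~3: if the two L-faces are non-parallel, they share a common edge carrying two disperse nodes; one identifies the unique $S_1$-subgraph centered at that shared disperse corner (or, in the six L-face configuration, iteratively three such $S_1$-subgraphs) and checks that applying $S_1$ respectively $S_1^3$ turns enough continuous nodes into $0$-labeled nodes so that every remaining face is either fully continuous/disperse or regular (two adjacent disperse nodes versus two continuous nodes, so not an L-face). If the two L-faces are parallel then they share no node; here exactly one $S_2$-subgraph fits (the two disperse nodes sit on a common edge perpendicular to both parallel L-faces), and after the $S_2$-rule all four continuous nodes of that subgraph become $0$, so both parallel L-faces become disperse faces and the remaining four faces are regular. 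For $D(G)=5$ with $|L(G)|=1$ the same argument using the $S_1$-subgraph centered at the unique continuous node of the L-face works, and for $|L(G)|=3$ one uses the $S_3$-subgraph supplied by Theorem~\ref{thm:class-1} and verifies that lifting the lone $<c$-node to disperse destroys all three L-faces simultaneously.

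In each subcase the verification reduces to a small finite check on the labels of the eight vertices of $C$, which can be carried out systematically by inspecting the sketches in Figures~\ref{image_32}--\ref{image_39}. The hardest bookkeeping step is the $D(G)=4$, $|L(G)|=6$ configuration, where three successive $S_1$-rules must be tracked because intermediate graphs still possess L-faces; one needs to argue that at each intermediate step a fresh $S_1$-subgraph is present (guaranteed by reapplying Theorem~\ref{thm:class-1} to the intermediate labeling, whose disperse-node count has dropped by one), and that after three applications no $S_1$-subgraph remains and simultaneously no L-face either. Once this bookkeeping is in place for all nine cells of Table~\ref{table_1}, the proposition follows.
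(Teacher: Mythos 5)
Your overall strategy---re-running the case analysis of Theorem~\ref{thm:class-1} equivalence class by equivalence class---is exactly the route the paper takes: it explicitly omits the details, stating that Propositions~\ref{prop:class-1} and~\ref{prop:class-2} follow ``by the same arguments as used to prove Theorem~\ref{thm:class-1}''. But your execution contains a genuine error at its technical core: you have the action of the $S_1$- and $S_2$-rules inverted. By Definition~\ref{def:iso-path-16-17-18} together with \eqref{eq:iso-path-1}, $q_0$ sends labels $>c$ to $0$ and fixes labels $\leq c$; hence the $S_1$-rule (resp.\ $S_2$-rule) turns the one (resp.\ two) \emph{disperse} node(s) of the $S$-subgraph into continuous nodes and leaves the continuous nodes untouched---it does not set ``the three/four continuous nodes'' to $0$. (Zeroing continuous nodes would in fact change nothing in the disperse/continuous classification and could never destroy an L-face.) Note that your own bookkeeping in the $|L(G)|=6$ case relies on the correct reading, since you argue that the disperse-node count drops by one per $S_1$-application; so the proposal is internally inconsistent. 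The inversion falsifies several of your verifications as stated: in the $D(G)=4$ parallel case the $S_2$-rule zeroes the two disperse nodes on the edge perpendicular to the two L-faces, which thereby become \emph{regular} faces with one disperse node each (certainly not ``disperse faces''), and in $D(G)\in\{2,3\}$ the $S_1$-rule removes a disperse node rather than ``flattening the two isolated continuous nodes to $0$''.

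Second, you have swapped the two $D(G)=5$ columns of Table~\ref{table_1}: the table prescribes $S_3$ for $|L(G)|=1$ and $S_1$ for $|L(G)|=3$, the opposite of what you wrote, and this is not a labeling slip. In the $|L(G)|=1$ configuration every disperse node is incident to another disperse node, so no $S_1$-subgraph exists at all and your prescription cannot even be executed (also, an ``$S_1$-subgraph centered at a continuous node'' is a contradiction in terms: its center is disperse by Definition~\ref{def:iso-path-10}). Conversely, in the $|L(G)|=3$ configuration---where the three continuous nodes are exactly the neighbors of one disperse node $Q$, and the three L-faces are the faces through $Q$---an $S_3$-subgraph does exist, but applying the $S_3$-rule promotes a continuous node lying on only two of the three L-faces, so one L-face survives and the claimed conclusion fails; only the $S_1$-rule centered at $Q$ clears all three simultaneously. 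A smaller slip of the same kind: for $D(G)=4$ with non-parallel L-faces, the common edge cannot ``carry two disperse nodes'', since the two disperse nodes of an L-face are non-incident; the two L-faces share a single disperse corner, and that corner (whose three neighbors are continuous) is the center of the $S_1$-subgraph, as you correctly identify afterwards. With the rule actions and the $D(G)=5$ assignments corrected, your finite per-class verification does go through and coincides with the argument the paper intends.
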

\FloatBarrier
We get the results of Proposition~\ref{prop:class-1} and~\ref{prop:class-2} given in Table~\ref{table_1}
by the same arguments as used to prove Theorem~\ref{thm:class-1} for the cases $D(G)=2$ to $D(G)=6$.
Therefore, a detailed proof is omitted.\\


\noindent {\bf Note: }From here on, when we say that an $S$-rule applies on a labeled cuboid graph
$G(V,E,\mathcal{F})$ with iso-level $c\in (0,1)$, where $G$ has at least one L-face, it is understood
that the corresponding $S$-rule is chosen according to Table~\ref{table_1}. Moreover, if we say that
we apply the $S_i$-rule $(i\in\{1,2,3\})$ to $G$, it is understood that the $S_i$-rule is chosen
according to Table~\ref{table_1}.

\begin{proposition}
Let $G(V,E,\mathcal{F})$ be a labeled cuboid graph with iso-level $c\in (0,1)$. Assume $G$ is regular with
at least one L-face and $2\leq D(G)\leq 6$. Suppose the subgraph $H(V_h,E_h,\mathcal{F}_h)$ of $G$
is an L-face. Let the $S$-rule that will be applied on $G$ be chosen according to Table~\ref{table_1}. If
the $S_1$- or $S_2$-rule is applied on $G$ then one of the two disperse nodes of $H$ is the disperse node
of an $S_1$- or $S_2$-subgraph of $G$, respectively. If the $S_3$-rule is applied on $G$ then one of the
two continuous nodes of $H$ which is not an iso-node is the continuous node of an $S_3$-subgraph of $G$.
\label{prop:class-3}
\end{proposition}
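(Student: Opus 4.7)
The plan is a case-by-case verification over $D(G)\in\{2,3,4,5,6\}$, reusing the enumeration of configurations already displayed in Figures~\ref{image_32}--\ref{image_39} in support of Theorem~\ref{thm:class-1}, together with the choice of rule dictated by Table~\ref{table_1}. In each case I fix an L-face $H$ of $G$, write $\{Q_1,Q_2\}$ for its two face-diagonal disperse nodes and $\{P_1,P_2\}$ for its two face-diagonal continuous nodes, and exhibit the central node of the $S$-subgraph prescribed by the table among the four vertices of $H$.

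For the $S_1$-cases the key structural observation is that the two cube-neighbors of $Q_i$ lying inside $H$ are automatically continuous, since $Q_1$ and $Q_2$ are non-incident. It therefore suffices to show that at least one $Q_i$ has its unique off-$H$ cube neighbor continuous. Counting gives this directly in every displayed configuration: the off-$H$ neighbors of $Q_1$ and $Q_2$ are distinct vertices of the face opposite to $H$, and in each of Figures~\ref{image_32}, \ref{image_33_34}, \ref{image_35_36} (sketch $(a)$) and \ref{image_37_38} the prescribed value of $D(G)$ together with the prescribed $|L(G)|$ leaves at most one of these two off-$H$ vertices disperse, forcing the other $Q_i$ to have all three cube neighbors continuous. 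That $Q_i$ together with these three neighbors then forms an $S_1$-subgraph in the sense of Definition~\ref{def:iso-path-10}.

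For the $S_2$-case, corresponding to $D(G)=4$ with two parallel L-faces (Figure~\ref{image_35_36}, sketch $(b)$), I would first argue that the two L-face diagonals must be vertically aligned, meaning each disperse vertex on one L-face lies on a cube edge with a disperse vertex of the parallel L-face. If the diagonals were oppositely oriented, each of the four side faces would inherit two non-incident disperse nodes and would itself be an L-face, contradicting $|L(G)|=2$. Given the alignment, each $Q_i$ of $H$ is cube-adjacent to a disperse vertex of the parallel L-face, and the two adjacent disperse nodes together with their four common continuous neighbors form an $S_2$-subgraph in the sense of Definition~\ref{def:iso-path-11}, with $Q_i$ as one of its two adjacent disperse centers.

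The $S_3$-cases ($D(G)=5$ with one L-face and $D(G)=6$) follow by the symmetric argument after interchanging the roles of disperse and continuous throughout. For $D(G)=6$ both continuous nodes of $G$ already lie on $H$, and the applicability of the $S_3$-rule guarantees that at least one of them has value strictly less than $c$ and all three cube neighbors disperse, so it is the required $S_3$-center which is not an iso-node. The main obstacle is the $S_2$-case, because the vertical-alignment argument for the two parallel L-face diagonals is not purely local to $H$ and is cleanest to present with a labeled figure of the four disperse corners, together with the observation that any misalignment immediately produces four additional L-faces on the side faces.
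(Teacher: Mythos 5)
Your proposal is correct and takes essentially the same route as the paper, whose one-line proof likewise just combines the rule choice of Table~\ref{table_1} with the case enumeration from the proof of Theorem~\ref{thm:class-1}; your off-face-neighbor counting for the $S_1$-cases and the diagonal-alignment argument for the parallel $S_2$-case merely make that enumeration explicit (and check out in every configuration). One small repair: the strict inequality $\mathcal{F}(P)<c$ for the $S_3$-center follows from the regularity hypothesis --- a continuous node labeled exactly $c$ with three disperse neighbors would form a $T_1$-subgraph, i.e.\ a singular iso-path --- not from the ``applicability of the $S_3$-rule'', which would be circular since the existence of an $S_3$-subgraph is what is being proved.
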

\begin{proof}
Choosing the $S$-rule for $G$ according to Table~\ref{table_1} and using the arguments used to prove
Theorem~\ref{thm:class-1} proofs the claim.
\end{proof}

\begin{theorem}\label{thm:class-2}
Let $G(V,E,\mathcal{F})$ be a labeled cuboid graph with iso-level $c\in (0,1)$. Assume $G$ is regular
and $D(G)\in \{2,6\}$. Suppose that $G$ has no L-face and let one of the following hold:
\begin{enumerate}
\item[$(a)$] for $D(G)=2$, the two disperse nodes are on the diagonal of the cuboid of $G$,
\item[$(b)$] for $D(G)=6$, the two continuous nodes are on the diagonal of the cuboid of $G$.
\end{enumerate}
Then $G$ contains two subgraphs $H(V_h,E_h,\mathcal{F}_h)$ and $H'(V_h',E_h',\mathcal{F}_h')$
which are in $[g_1]_{\circ}$ for the case $(a)$ and in $[g_3]_{\square}$ for the case $(b)$, respectively.
\end{theorem}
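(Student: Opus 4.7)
The plan is to identify, in each of the two cases, the two required subgraphs as the natural ``corner'' subgraphs centered at the two special vertices lying on the space diagonal, and then to verify the appropriate label conditions using regularity.

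First I would exploit the key geometric fact that two vertices on a space diagonal of a cuboid have disjoint neighborhoods in the cuboid graph: neither is a neighbor of the other, and the three edges incident to a corner go to three pairwise non-adjacent vertices. So in case $(a)$ the two disperse nodes $Q_1,Q_2$ are non-adjacent, each $Q_i$ has three neighbors in $V$, and since $D(G)=2$ every neighbor of $Q_i$ must be continuous. Taking $V_{h_i}:=\{Q_i\}\cup N_G(Q_i)$ and $E_{h_i}$ as the three edges of the cuboid from $Q_i$ to its neighbors gives a subgraph with exactly four vertices, in which $Q_i$ is incident to the three continuous nodes and no further edges appear (the three neighbors are pairwise non-adjacent in the cuboid). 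Comparing with the definition of an $S_1$-subgraph and the definition of $g_1$ in Figure~\ref{image_19}, this places $H_i\in[g_1]_{\circ}$, yielding the two required subgraphs $H,H'$.

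For case $(b)$, I would run the exactly analogous construction around the two continuous nodes $P_1,P_2$ on the diagonal. Each $P_i$ is incident to three vertices, all of which are disperse since $D(G)=6$ forces all nodes outside $\{P_1,P_2\}$ to be disperse. The candidate subgraph $H_i$ with $V_{h_i}:=\{P_i\}\cup N_G(P_i)$ and the three corresponding cuboid edges then has one non-disperse node $P_i$ incident to three disperse ones. To conclude $H_i\in[g_3]_{\square}$ I must rule out $\mathcal{F}(P_i)=c$, because $g_3$ requires the center node to be strictly below $c$ (symbol $\circleddash$). Here I would invoke the regularity assumption: if some $P_i$ were an iso-node, then $P_i$ together with its three disperse neighbors would be a $T_1$-subgraph of $G$, whose existence produces a singular iso-path (a single point), contradicting that $G$ is regular. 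Hence $\mathcal{F}(P_i)<c$ strictly, and both $H_1,H_2$ lie in $[g_3]_{\square}$.

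The main obstacle is precisely this exclusion of the borderline case $\mathcal{F}(P_i)=c$ in part $(b)$: the hypothesis that $G$ has no L-face only constrains two-disperse--two-continuous faces, so one cannot argue directly from L-face freeness, and the step where regularity forbids $T_1$-subgraphs is the essential input. Once that point is handled, the rest reduces to matching the subgraph structure against the defining sketches of $g_1$ and $g_3$ in Figures~\ref{image_19} and the $S_1$-, $S_3$-subgraph definitions, which is routine given the geometry of the space diagonal.
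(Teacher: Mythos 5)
Your proposal is correct and takes essentially the same route as the paper, whose own proof merely exhibits the two antipodal corner subgraphs via the sketches of Figures~\ref{image_40} and~\ref{image_41} and declares membership in $[g_1]_{\circ}$ resp.\ $[g_3]_{\square}$ ``evident''. Your explicit exclusion of the borderline case $\mathcal{F}(P_i)=c$ in part $(b)$ --- noting that an iso-node there would create a $T_1$-subgraph and hence a one-point singular iso-path, contradicting regularity --- is precisely the step the paper leaves implicit in its $\circleddash$-marked sketch, so your write-up is, if anything, more complete than the original.
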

\begin{proof}In the following, $G$ is an element of the $\circ$-equivalence class represented by the
sketch of a labeled cuboid graph as shown on the left side of the figures below.
\begin{enumerate}
\item Case $D(G)=2$: see Figure~\ref{image_40}. Evidently, there is $H,H'\in [g_1]_{\circ}$.
\FloatBarrier
\begin{figure}[!ht]
\includegraphics[width=0.35\linewidth]{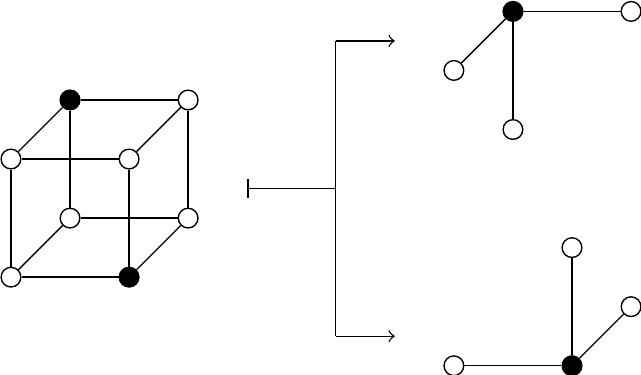}
\caption{The sketch on the left represents $[G]_{\circ}$ such that $D(G)=2$ and the two disperse nodes
are on the space diagonal of the cuboid of $G$.}
\label{image_40}
\end{figure}
\FloatBarrier
\item Case $D(G)=6$: see Figure~\ref{image_41}. Evidently, there is $H,H'\in [g_3]_{\square}$.
\begin{figure}[!ht]
\includegraphics[width=0.35\linewidth]{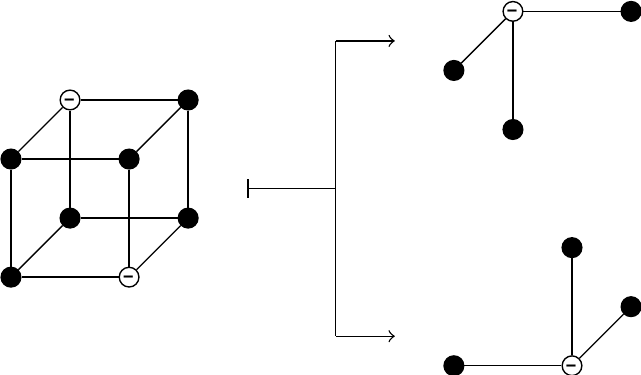}
\caption{The sketch on the left represents $[G]_{\square}$ such that $D(G)=6$ and the two
continuous nodes are on the space diagonal of the cuboid of $G$.}
\label{image_41}
\end{figure}
\FloatBarrier
\end{enumerate}
\vspace{-0.9cm}
\end{proof}
\FloatBarrier

\begin{proposition}
Let $G(V,E,\mathcal{F})$ be a labeled cuboid graph with iso-level $c\in (0,1)$ and let $G$ be irreducible.
Then the following holds:
\begin{enumerate}
\item $1\leq D(G)\leq 7$,
\item $G$ has no L-face,
\item if $D(G)=2$ or $D(G)=6$ then $G$ satisfies neither the assumption $(a)$ nor $(b)$ of Theorem~\ref{thm:class-2}.
\end{enumerate}
\label{prop:class-4}
\end{proposition}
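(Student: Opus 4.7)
The plan is to invert each criterion of Definition~\ref{def:class-1} using the hypothesis that $G$ is irreducible. All three items will then follow as contrapositives, leveraging Theorems~\ref{thm:class-1} and~\ref{thm:class-2} rather than requiring new case analyses.

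For items 1 and 2, I would first invoke the note following Definition~\ref{def:class-1} which states that every irreducible labeled cuboid graph is regular. By the definition of a regular labeled cuboid graph, $G$ is neither a disperse graph (which would require $D(G)=8$) nor a continuous graph (which would require $D(G)=0$), giving item 1. For item 2, the first condition of Definition~\ref{def:class-1} says that $L(G)\neq\emptyset$ forces reducibility, so its contrapositive yields $L(G)=\emptyset$.

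For item 3, I would argue by contradiction. Suppose $D(G)=2$ with the two disperse nodes on a space diagonal of the cuboid of $G$, i.e.\ hypothesis $(a)$ of Theorem~\ref{thm:class-2}. Item 2 has just established that $G$ has no L-face, and $G$ is regular by hypothesis, so all assumptions of Theorem~\ref{thm:class-2} are met and it produces a subgraph $H\subset G$ with $H\in [g_1]_{\circ}$. But condition 2 of Definition~\ref{def:class-1} then forces $G$ to be reducible, contradicting the assumption. The case $D(G)=6$ with the two continuous nodes on a space diagonal (hypothesis $(b)$) is strictly analogous: Theorem~\ref{thm:class-2} gives a subgraph $H\in [g_3]_{\square}$, and condition 3 of Definition~\ref{def:class-1} again yields reducibility, a contradiction.

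No serious obstacle arises here; the substantive work has already been done in Theorems~\ref{thm:class-1} and~\ref{thm:class-2}. The only point requiring care is the dependency order: item 3 uses the "no L-face" hypothesis of Theorem~\ref{thm:class-2}, so items 1 and 2 must be verified first. Aside from this, the proof is essentially a bookkeeping exercise in negating the three defining clauses of reducibility.
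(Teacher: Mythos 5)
Your proof is correct and takes essentially the same route as the paper's (one-line) proof: both unwind the definitional clauses of Definition~\ref{def:class-1} --- regularity of irreducible graphs excludes $D(G)\in\{0,8\}$, clause 1 excludes L-faces --- and invoke Theorem~\ref{thm:class-2} by contradiction to rule out the space-diagonal configurations when $D(G)\in\{2,6\}$. Your write-up is merely more explicit than the paper's terse citation of Table~\ref{table_1} and Theorem~\ref{thm:class-2}, and your observation about the dependency order (items 1 and 2 must precede item 3 so that the hypotheses of Theorem~\ref{thm:class-2} are available) is the only point of genuine care; the appeal to Theorem~\ref{thm:class-1} in your plan is superfluous, since only Theorem~\ref{thm:class-2} is actually used.
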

\begin{proof}
The results follow using Table~\ref{table_1} and Theorem~\ref{thm:class-2} and the fact that $G$ is as well
irreducible in case $D(G)=1$ or $D(G)=7$.
\end{proof}

\begin{theorem}\label{thm:class-3}
Let $G(V,E,\mathcal{F})$ be a labeled cuboid graph with iso-level $c\in (0,1)$ and let $G$ contain
only one reduced positive surface measure subgraph $H(V_h,E_h,\mathcal{F}_h)$ and let $H$ contain
all disperse nodes of $G$. Then $G$ is irreducible and hence contains only one iso-path. Vice versa,
if $G$ is irreducible then there exists a subgraph of $G$ which satisfies the above stated properties
of $H$.
\end{theorem}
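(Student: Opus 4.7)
The plan is to prove both directions by reducing to a case analysis over $D(G)\in\{1,\ldots,7\}$, leveraging the classification results already established: Definition~\ref{def:class-1} characterizes reducibility, Theorems~\ref{thm:class-1} and~\ref{thm:class-2} produce basic positive subgraphs in every reducible configuration, and Proposition~\ref{prop:class-4} pins down the structural constraints that any irreducible graph must satisfy. The bridge between the two formulations is the observation that each basic positive subgraph $g_1,g_2,g_3$ is itself a reduced positive surface measure subgraph in the sense of Definition~\ref{def:positive-surface-measure-graph}.

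For the forward direction, I would argue by contrapositive: assuming $G$ is reducible, show that $G$ cannot contain a \emph{unique} reduced positive surface measure subgraph $H$ that moreover contains \emph{all} disperse nodes. Definition~\ref{def:class-1} offers three sub-cases. In sub-cases (ii) and (iii) ($D(G)\in\{2,6\}$ with a basic $g_1$ or $g_3$), the two distinguished nodes sit at opposite ends of a space diagonal, so Theorem~\ref{thm:class-2} produces \emph{two} distinct basic subgraphs in $[g_1]_\circ$ or $[g_3]_\square$; simultaneously, the two disperse nodes are not joined by any disperse edge, so condition~2 of Definition~\ref{def:positive-surface-measure-graph} prevents the existence of any reduced subgraph containing all disperse nodes. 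In sub-case (i) ($L(G)\neq\emptyset$), Theorem~\ref{thm:class-1} furnishes a basic positive subgraph $H'$ with at most three disperse nodes; whenever $D(G)>3$ this $H'$ is automatically distinct from any $H$ spanning all disperse nodes, while the small-$D(G)$ L-face configurations enumerated in Figures~\ref{image_32}--\ref{image_39} together with Proposition~\ref{prop:class-1} exhibit either multiple basic subgraphs or a violation of the reduced-subgraph conditions for a would-be global $H$ (again typically through failure of condition~2).

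For the converse direction, assume $G$ is irreducible. Proposition~\ref{prop:class-4} guarantees $1\leq D(G)\leq 7$, no L-faces, and, when $D(G)\in\{2,6\}$, neither diagonal configuration of Theorem~\ref{thm:class-2}. I then construct $H$ canonically as the subgraph whose node set is the union of the disperse nodes of $G$ with all continuous nodes incident to them, and whose edge set consists of all edges of $G$ with at least one disperse endpoint. Conditions~1, 3, and 4 of Definition~\ref{def:positive-surface-measure-graph} are immediate; condition~5 (no L-faces) is inherited from Proposition~\ref{prop:class-4}; condition~6 (no two $[g_3]_\square$ subgraphs) follows from clause~3 of Definition~\ref{def:class-1}; and condition~2 (connectedness via disperse edges) follows from the absence of the space-diagonal configuration in the remaining cases. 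Uniqueness of $H$ is then forced because any other reduced positive surface measure subgraph $H''$ must, by conditions~1--3 of Definition~\ref{def:positive-surface-measure-graph}, contain every disperse node that is connected through disperse edges to any of its disperse nodes, and irreducibility ensures the disperse nodes form a single such component, so $H''=H$.

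The hard part will be the forward direction in the low-$D(G)$ subcases of (i), particularly $D(G)\in\{2,3\}$ with one or three L-faces, where a candidate global $H$ and the basic subgraph produced by Theorem~\ref{thm:class-1} live on nearly the same node set; here one cannot rely on a simple size comparison and must instead exploit the precise face-diagonal arrangement of disperse nodes forced by each L-face, using Table~\ref{table_1} and Figures~\ref{image_32}--\ref{image_39} to exhibit either a second reduced positive surface measure subgraph or the failure of a Definition~\ref{def:positive-surface-measure-graph} condition for any $H$ spanning all disperse nodes. A secondary technical point is verifying, in the converse direction at $D(G)=6$ without the space diagonal, that the constructed $H$ admits no second $[g_3]_\square$ subgraph, which requires direct inspection of the finitely many remaining $D(G)=6$ configurations permitted by Proposition~\ref{prop:class-4}.
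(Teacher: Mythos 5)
Your route is genuinely different from the paper's: the paper proves Theorem~\ref{thm:class-3} by exhaustively enumerating the irreducible configurations for $D(G)=1,\ldots,7$ and explicitly constructing the unique iso-path in each case via the $C$-rules (Figures~\ref{image_42}--\ref{image_50}), whereas you argue structurally from Definitions~\ref{def:class-1} and~\ref{def:positive-surface-measure-graph}. However, your proposal contains a genuine gap. The bridge claim that each basic positive subgraph $g_1,g_2,g_3$ is itself a reduced positive surface measure subgraph is false except for $g_1$: condition~1 of Definition~\ref{def:positive-surface-measure-graph} is \emph{global}, requiring $H$ to contain all incidence relations between the disperse nodes of $G$, so a $g_2$ or $g_3$ sitting inside a graph with further disperse--disperse edges fails it. Concretely, in your sub-case (iii) ($D(G)=6$, continuous nodes on the space diagonal), the two $[g_3]_{\square}$ subgraphs supplied by Theorem~\ref{thm:class-2} are not reduced (the six disperse nodes carry many mutual edges lying outside each $g_3$), so they do not contradict uniqueness of reduced subgraphs; moreover your stated obstruction there --- that the distinguished nodes are not joined by a disperse edge, so condition~2 fails --- is simply wrong: the six disperse nodes are richly disperse-connected, and the actual obstruction to a spanning reduced $H$ is condition~6 (any such $H$ would contain both $g_3$'s). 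The same conflation of ``basic'' with ``reduced'' infects sub-case (i): exhibiting a second \emph{basic} subgraph via Theorem~\ref{thm:class-1} does not yield a second \emph{reduced} subgraph, so the non-uniqueness half of your contrapositive must be re-argued branch by branch (in several L-face configurations there are in fact \emph{no} reduced subgraphs at all, which still defeats the hypothesis, but by a different mechanism than the one you state).

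The second, larger gap is that your proof never touches the conclusion ``and hence contains only one iso-path.'' In the paper this is not a free consequence of irreducibility: the note after Definition~\ref{def:class-1} asserting that an irreducible graph has a single iso-path is exactly what the proof of Theorem~\ref{thm:class-3} establishes, by computing the iso-path with $C$-rules in every irreducible class; citing that note inside your proof would be circular. An argument along your lines must still verify, configuration by configuration, that joining the iso-points produces one and only one simple closed path satisfying Definition~\ref{def:labeled-graph-6} --- precisely the content of Figures~\ref{image_42}--\ref{image_50}. Your converse construction of the canonical spanning $H$ is sound in outline, and your flagged check of condition~6 at $D(G)=6$ can indeed be completed (two distinct $g_3$ centers cannot be adjacent, a face-diagonal placement would force an L-face, and the space-diagonal placement is excluded by Proposition~\ref{prop:class-4}); but the disperse connectedness you invoke for condition~2 and for uniqueness also requires the same finite inspection of configurations that the paper performs with its figures, so the case analysis cannot be avoided, only relocated.
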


\begin{proof} We give the prove by computing the iso-paths of $G$ by joining the iso-points with iso-lines.
The steps of the iso-path computation are given in each case by figures with a sequence of sketches from
left to right. The symbols on the rightmost side with only disperse or only continuous nodes are used
to characterize the resulting type of inner iso-path. The sketch on the left shows the respective
labeled cuboid graph, in the second sketch the iso-points are marked and in the third sketch the iso-lines
are inserted by applying the $C$-rules.
\begin{enumerate}
\item $D(G)\neq 2$ and $D(G)\neq 6$.
\begin{itemize}
\item[$(a)$] $D(G)=1$: see Figure~\ref{image_42}.
\begin{figure}[!ht]
\includegraphics[width=0.9\linewidth]{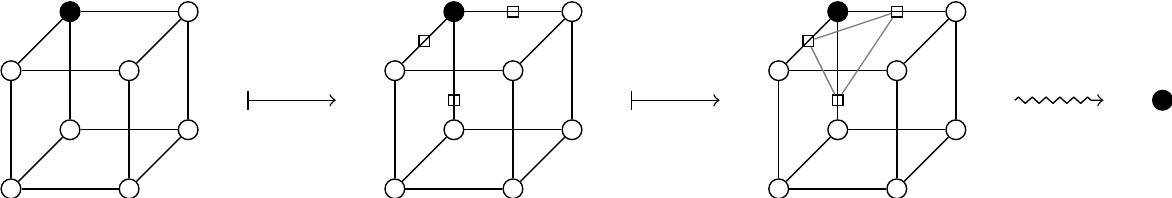}
\caption{Computation of the iso-path if $D(G)=1$.}
\label{image_42}
\end{figure}
\FloatBarrier
\item[$(b)$] $D(G)=3$: see Figure~\ref{image_43}.
\begin{figure}[!ht]
\includegraphics[width=0.9\linewidth]{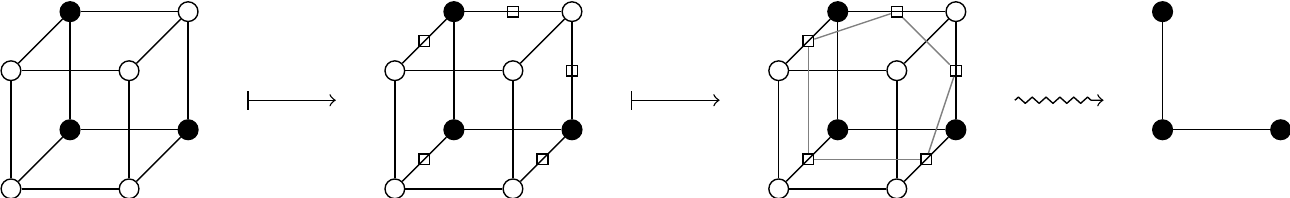}
\caption{Computation of the iso-path if $D(G)=3$ and $L(G)=\emptyset$.}
\label{image_43}
\end{figure}
\FloatBarrier
\item[$(c)$] $D(G)=4$: see Figure~\ref{image_44_45_46}.
\begin{figure}[!ht]
$(c1)$
\begin{tabular}[c]{l}
\includegraphics[width=0.8\linewidth]{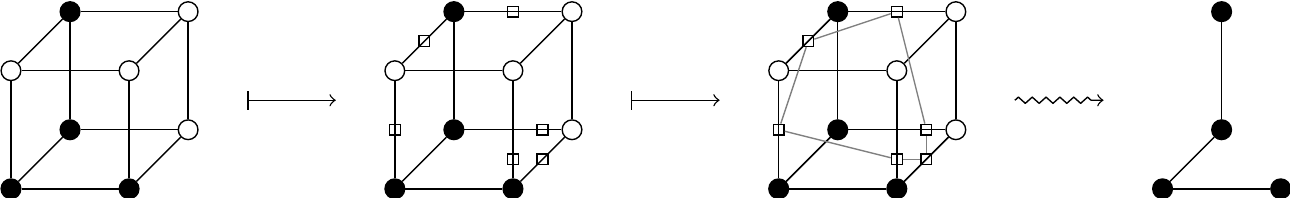}
\end{tabular}

\vspace{0.2cm}
$(c2)$
\begin{tabular}[c]{l}
\includegraphics[width=0.8\linewidth]{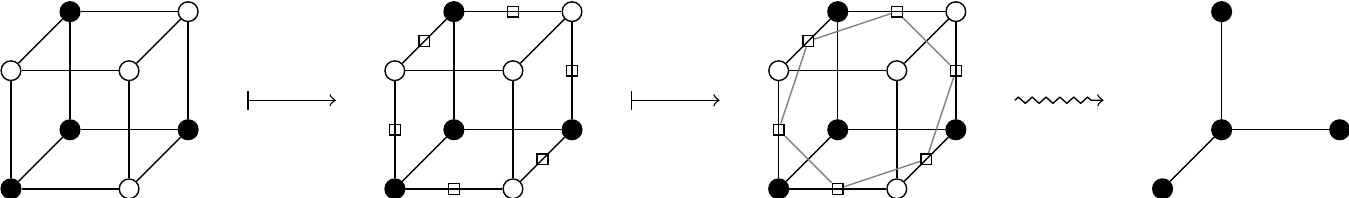}
\end{tabular}

\vspace{0.2cm}
$(c3)$
\begin{tabular}[c]{l}
\includegraphics[width=0.8\linewidth]{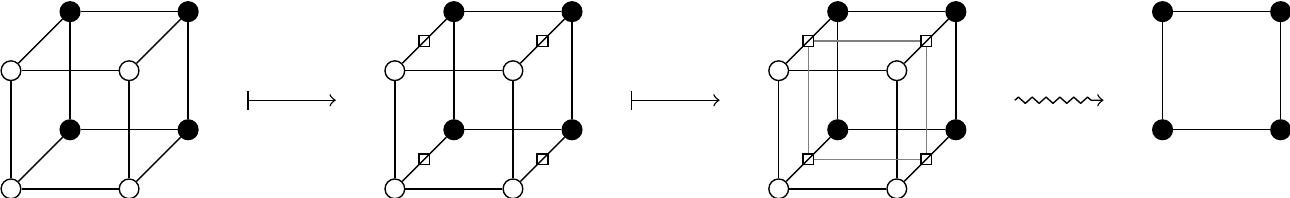}
\end{tabular}
\caption{Computation of the iso-path if $D(G)=4$ and $L(G)=\emptyset$ in three different cases as shown by
the sketches (c1), (c2) and (c3).}
\label{image_44_45_46}
\end{figure}
\FloatBarrier
\item[$(d)$] $D(G)=5$: see Figure~\ref{image_47}.
\begin{figure}[!ht]
\includegraphics[width=0.8\linewidth]{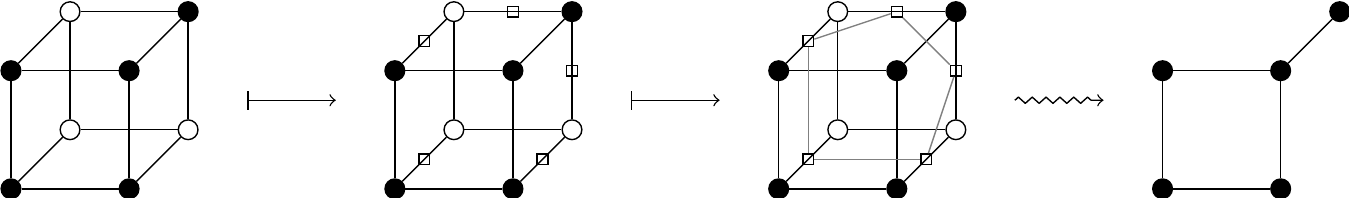}
\caption{Computation of the iso-path if $D(G)=5$
and $L(G)=\emptyset$.}
\label{image_47}
\end{figure}
\FloatBarrier
\item[$(e)$] $D(G)=7$: see Figure~\ref{image_48}.
\begin{figure}[!ht]
\includegraphics[width=0.8\linewidth]{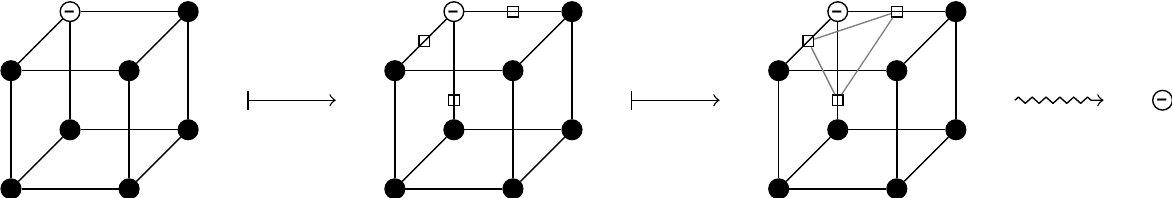}
\caption{Computation of the iso-path if $D(G)=7$.}
\label{image_48}
\end{figure}
\FloatBarrier
\end{itemize}
\item $D(G)=2$ or $D(G)=6$.
\begin{itemize}
\item[$(a)$] $D(G)=2$: see Figure~\ref{image_49}.
\begin{figure}[!ht]
\includegraphics[width=0.8\linewidth]{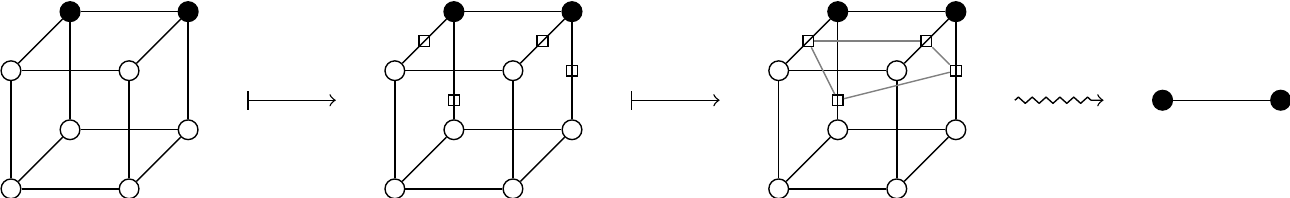}
\caption{Computation of the iso-path if $D(G)=2$ and both disperse nodes lie on the same edge of the
cuboid of $G$.}
\label{image_49}
\end{figure}
\FloatBarrier
\item[$(b)$] $D(G)=6$: see Figure~\ref{image_50}.
\begin{figure}[!ht]
\includegraphics[width=0.8\linewidth]{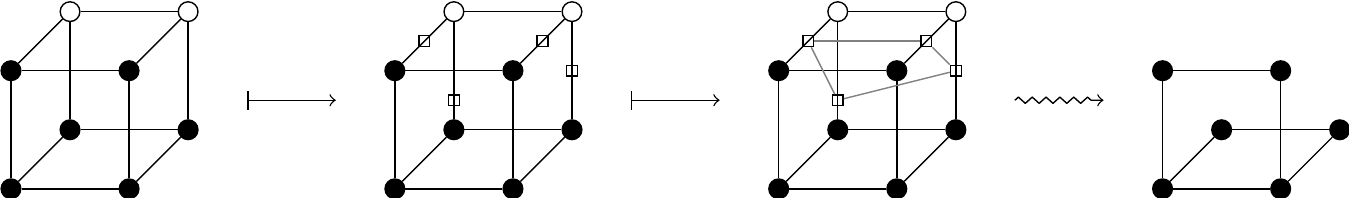}
\caption{Computation of the iso-path if $D(G)=6$ and the two continuous nodes lie on the same edge of
the cuboid of $G$.}
\label{image_50}
\end{figure}
\FloatBarrier
\end{itemize}
\end{enumerate}
The simple paths constructed above are iso-paths, since they satisfy the criteria for an
inner iso-path given in Definition~\ref{def:labeled-graph-6}. Furthermore, in all cases,
$G$ contains only one reduced positive surface measure subgraph containing all disperse nodes of
$G$. Therefore, $G$ is irreducible.

To show the last statement, observe that the graphs given in all cases above are the only irreducible
graphs of $G$ which proves the last claim.
\end{proof}

Proposition \ref{prop:class-5} draws a consequence of Theorem~\ref{thm:class-1} and~\ref{thm:class-3}.
\begin{proposition}Let $G(V,E,\mathcal{F})$ be a labeled cuboid graph with iso-level $c\in (0,1)$.
Assume $G$ is regular and has at least one L-face. Let us apply the corresponding $S^n$-rules to $G$ given
in Table~\ref{table_1} to obtain a labeled cuboid graph $G'(V,E,\mathcal{F}')$. Then $G'$ contains only one
iso-path.
\label{prop:class-5}
\end{proposition}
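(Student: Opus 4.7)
The plan is to combine Proposition~\ref{prop:class-2} with Proposition~\ref{prop:class-4} and Theorem~\ref{thm:class-3}. Applying to $G$ the $S^n$-rule prescribed in Table~\ref{table_1} produces a labeled cuboid graph $G'(V,E,\mathcal{F}')$ that, by Proposition~\ref{prop:class-2}, is L-face free. Hence if I can show that $G'$ is regular and irreducible, Theorem~\ref{thm:class-3} immediately yields that $G'$ has exactly one iso-path.

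For irreducibility, I would invoke the characterization in Proposition~\ref{prop:class-4}: with $L(G')=\emptyset$ already in hand, it remains to verify that $1\le D(G')\le 7$ and that, whenever $D(G')\in\{2,6\}$, the two distinguished nodes (disperse if $D(G')=2$, continuous if $D(G')=6$) do not sit on a space diagonal of the cuboid. The count $D(G')$ follows directly from the mechanics of the $S^n$-rules: applying $S_1$, $S_1^2$ or $S_1^3$ turns the distinguished disperse node of each $S_1$-subgraph into a continuous node (decreasing $D$ by $1,2,3$, respectively), $S_2$ decreases $D$ by $2$, and $S_3$ increases $D$ by $1$. Combined with $2\le D(G)\le 6$ (the only range in which L-faces can occur, by the Note following Theorem~\ref{thm:class-1}), this always yields $1\le D(G')\le 7$.

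The substantive work is the diagonal check, which only matters in the three entries of Table~\ref{table_1} that land in $D(G')\in\{2,6\}$, namely $(D(G),S^n\text{-rule})\in\{(3,S_1),(4,S_2),(5,S_3)\}$. In each of these cases I would argue as in the proof of Theorem~\ref{thm:class-1}: the figures enumerate the $\circ$- or $\square$-equivalence classes of $G$, Proposition~\ref{prop:class-3} identifies exactly which disperse node (or which non-iso continuous node, for $S_3$) is converted by the $S^n$-rule, and direct inspection of the resulting configurations shows that the two leftover distinguished nodes always lie on an edge or a face diagonal of the cuboid, never on its space diagonal. Thus the hypotheses $(a)$ and $(b)$ of Theorem~\ref{thm:class-2} fail for $G'$, so $G'$ is irreducible.

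The main obstacle I foresee is this case-by-case bookkeeping, together with the parallel verification that $G'$ remains regular, i.e.\ that no singular iso-paths or isolated iso-paths are introduced by the $S^n$-rule. Both tasks are purely combinatorial and can be discharged by re-examining the same sketches already used to establish Theorems~\ref{thm:class-1} and~\ref{thm:class-3}; in particular, since the $S^n$-rules simply reset labels to $0$ or $1$ on selected cliques of $G$, any potential new $T$- or $F$-subgraph of $G'$ would force a configuration in $G$ incompatible with the regularity assumption on $G$. Once both checks are in place, Theorem~\ref{thm:class-3} closes the argument.
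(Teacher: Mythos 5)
Your proposal is correct and follows essentially the same route as the paper: the paper's proof simply asserts that the $S^n$-rule chosen from Table~\ref{table_1} yields an irreducible graph $G'$ (relying on the remark in Section~5.1 that the rest graph is irreducible) and then cites Theorem~\ref{thm:class-3} for the single iso-path. Your additional bookkeeping---the $D(G')$ count, the space-diagonal check via Theorem~\ref{thm:class-2} in the cases $D(G')\in\{2,6\}$, and the regularity preservation---merely makes explicit what the paper leaves implicit (note only that Proposition~\ref{prop:class-4} is stated as a necessary condition, so the converse you need comes from Definition~\ref{def:class-1} together with Theorem~\ref{thm:class-2}, exactly as your diagonal argument supplies).
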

\begin{proof}
The application of $S^n$-rule to $G$, where the $S$-rule is chosen according to Table~\ref{table_1},
gives an irreducible graph $G'$. Hence, the irreducible graph $G'$ has a single iso-path as proven in
Theorem~\ref{thm:class-3}. The iso-path of $G'$ is as well one of the inner iso-paths of $G$.
\end{proof}

Proposition \ref{prop:class-6} draws a consequence of Theorem~\ref{thm:class-1}, \ref{thm:class-2}
and~\ref{thm:class-3}.
\begin{proposition}Let $G(V,E,\mathcal{F})$ be a labeled cuboid graph with iso-level $c\in (0,1)$ and let
$G$ be reducible. Then the successive application of one of the $S$-rules to $G$ transforms $G$
to an irreducible graph. Here the $S$-rule is chosen according to Table~\ref{table_1} if $G$ has at
least one L-face and, otherwise, the $S$-rule is chosen using the results given in Theorem~\ref{thm:class-2}.
\label{prop:class-6}
\end{proposition}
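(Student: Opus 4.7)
The plan is to split on whether $L(G)=\emptyset$ or not, interpreting ``successive application'' as one $S^n$-rule followed, if necessary, by no further applications until irreducibility is reached. The key observation is that the chosen $S$-rule either kills all L-faces in bulk (through an $S^n$-rule dictated by Table~\ref{table_1}) or steps $D(G)$ toward the irreducible endpoints $D(G)\in\{1,7\}$ by a single $S_1$ or $S_3$ application. Throughout I would invoke Proposition~\ref{prop:class-5} in the first case and Theorem~\ref{thm:class-2} together with Proposition~\ref{prop:class-4} in the second.

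For the case $L(G)\neq\emptyset$, the argument is a direct citation: applying the $S^n$-rule prescribed by Table~\ref{table_1} yields a labeled cuboid graph $G'=G(V,E,\mathcal{F}')$ with exactly one iso-path by Proposition~\ref{prop:class-5}, and hence $G'$ is irreducible by Theorem~\ref{thm:class-3}. Here ``successive'' is built into the definition of $S^n$ (namely $n$ iterations of the same $S_i$-rule, as in Definition~\ref{def:iso-path-21}), so no further work is required.

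For the case $L(G)=\emptyset$, I would first observe that since $G$ is reducible, Definition~\ref{def:class-1} forces either $D(G)=2$ with a subgraph in $[g_1]_\circ$ or $D(G)=6$ with a subgraph in $[g_3]_\square$. In the former, I would enumerate the three possible geometric positions of the two disperse nodes in the cuboid of $G$: (i) they share a cuboid edge, in which case neither is incident to three continuous nodes and so no $[g_1]_\circ$ subgraph exists; (ii) they lie on a face diagonal, which produces an L-face, contradicting $L(G)=\emptyset$; (iii) they lie on a space diagonal of the cuboid of $G$. Only (iii) is consistent with the hypothesis, and then Theorem~\ref{thm:class-2}(a) provides two $[g_1]_\circ$ subgraphs $H,H'$ of $G$. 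One application of the $S_1$-rule to $H$ converts its unique disperse node into a continuous one (via $q_0$), producing $G''$ with $D(G'')=1$. By the note following Theorem~\ref{thm:class-1}, $G''$ has no L-faces, and by Proposition~\ref{prop:class-4} it is irreducible; a single step suffices. The case $D(G)=6$ is treated symmetrically with the $S_3$-rule and ends at $D=7$.

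The main obstacle is the geometric enumeration in the L-face-free case: one must rule out the edge and face-diagonal configurations of the two exceptional nodes and confirm that only the space-diagonal configuration is compatible with reducibility, so that Theorem~\ref{thm:class-2} indeed applies and the single $S$-rule application lands in the irreducible range $D\in\{1,7\}$. Everything else reduces to citing the classification already established in Theorems~\ref{thm:class-1}--\ref{thm:class-3} and Propositions~\ref{prop:class-4}--\ref{prop:class-5}.
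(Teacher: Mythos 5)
Your proposal is correct and follows essentially the same route as the paper: split on $L(G)\neq\emptyset$ versus $L(G)=\emptyset$, cite Proposition~\ref{prop:class-5} (hence Theorem~\ref{thm:class-3}) in the first case, and in the second apply a single $S_1$- or $S_3$-rule to land at $D(G')\in\{1,7\}$, which is irreducible. Your explicit enumeration of the edge, face-diagonal, and space-diagonal configurations merely fills in what the paper compresses into ``the arguments used to prove Theorem~\ref{thm:class-2}''; the only slight inaccuracy is citing Proposition~\ref{prop:class-4} for irreducibility (it states necessary conditions), though the needed fact --- that $D\in\{1,7\}$ forces irreducibility --- follows directly from Definition~\ref{def:class-1} and is noted in the paper's proof of that proposition.
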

\begin{proof} First, if $G$ has at least one L-face we apply
Proposition~\ref{prop:class-5}. Second, if $G$ is L-face free, then if $D(G)=2$ then apply once the
$S_1$-rule to $G$ to get an irreducible graph $G'$ with only one disperse node. In case $D(G)=6$
apply once the $S_3$-rule to $G$ to get an irreducible graph $G'$ with seven disperse nodes.

The second result follows by the arguments used to prove Theorem~\ref{thm:class-2}.
\end{proof}

\begin{proposition}
Let $G(V,E,\mathcal{F})$ be a labeled cuboid graph with iso-level $c\in (0,1)$ and let $G$ be regular.
Suppose that the subgraph $H(V_h,E_h,\mathcal{F}_h)$ of $G$ is a regular face. Then there exists only one
iso-path of $G$ that passes through the iso-line on the regular face $H$ if one of the following
conditions is satisfied:
\begin{itemize}
\item[(a)] $G$ is irreducible,
\item[(b)] $G$ has no L-face,
\item[(c)] $G$ has one or more L-faces, but there is no subgraph $H(V_h,E_h,\mathcal{F}_h)$ of
$G$ such that $H\in [\hat{g}_2]_{\circ}$.
\end{itemize}
\label{prop:class-7}
\end{proposition}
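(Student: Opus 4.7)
The plan is to dispatch the three conditions in order, reducing each as much as possible to results already established in this section.

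Under condition (a), Theorem~\ref{thm:class-3} asserts that an irreducible labeled cuboid graph contains exactly one iso-path. Since the iso-line on the regular face $H$ must be an iso-line of some iso-path of $G$, this unique iso-path is the only iso-path passing through it. Under condition (b), $G$ has no L-face. If $G$ is also irreducible we are back in case (a); otherwise Definition~\ref{def:class-1} forces $G$ to contain a subgraph in $[g_1]_{\circ}$ with $D(G)=2$, or in $[g_3]_{\square}$ with $D(G)=6$. In both cases Theorem~\ref{thm:class-2} applies: the two exceptional (disperse or continuous) nodes lie on a space diagonal of the cuboid, and $G$ decomposes into precisely two triangular iso-paths, each surrounding one exceptional node. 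A direct inspection of Figures~\ref{image_40} and~\ref{image_41} shows that every regular face of $G$ contains exactly one of the exceptional nodes, and its iso-line is an edge of the triangular iso-path around that node only; no regular iso-line lies on both triangles.

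Under condition (c), $G$ is reducible with at least one L-face but without any subgraph in $[\hat{g}_2]_{\circ}$. My approach is to exploit the decomposition~\eqref{eq:class-3-1}: apply the $S^n$-rule chosen from Table~\ref{table_1} to decompose $G$ into the $S$-cuboid graphs $G_1,\ldots,G_n$ and an L-face-free rest graph $R$. Each $G_l$ is irreducible by construction and $R$ is irreducible by Proposition~\ref{prop:class-5}, so each contributes exactly one iso-path to the set of inner iso-paths of $G$. It then suffices to verify that the iso-line on the regular face $H$ belongs to the iso-path of exactly one of $G_1,\ldots,G_n,R$. The $\hat{g}_2$-exclusion is what rules out the ambiguous situation: two of the extracted iso-paths could share an iso-line on a regular face only if $G$ contained two adjacent disperse nodes whose incident continuous nodes produce a pair of iso-points connected through the characteristic $\hat{g}_2$-pattern, and this is forbidden by hypothesis.

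The main obstacle is this last claim in case (c). Carrying it out requires a case-by-case inspection organized by the pairs $(D(G),|L(G)|)$ of Table~\ref{table_1}: for each entry one tracks how the chosen $S$-rule acts on the labels near the face $H$, and checks that the two iso-points on $H$ survive as an iso-line in exactly one of the $S$-cuboid graphs or in the rest graph. The $\hat{g}_2$-exclusion is precisely what removes the only configuration in which two successive $S$-subgraph extractions can both produce the iso-line on $H$, completing the uniqueness statement.
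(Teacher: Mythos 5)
Your cases (a) and (b) are correct and in substance agree with the paper. For (a) both you and the paper invoke Theorem~\ref{thm:class-3}. For (b) the paper argues slightly differently — it notes that applying the $S_1$-rule (for $D(G)=2$) or the $S_3$-rule (for $D(G)=6$) turns the regular face into a continuous, respectively disperse, face, on which no iso-line can lie — whereas you inspect the two antipodal triangular iso-paths of Theorem~\ref{thm:class-2} directly; these are equivalent, and your observation that each face of the cuboid contains exactly one of the two space-diagonal nodes is sound.

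Case (c), however, contains a genuine gap, and you say so yourself: the decisive claim — that the iso-line on $H$ belongs to the iso-path of exactly one member of the decomposition~\eqref{eq:class-3-1} — is asserted, with the verification deferred to an unexecuted case inspection over Table~\ref{table_1}. Moreover, your description of the obstructing configuration is misstated: the pattern in $[\hat{g}_2]_{\circ}$ consists of two \emph{incident nodes with labels $\leq c$} (the endpoints of a cuboid edge, e.g.\ two iso-nodes) whose four remaining neighbors are all disperse — this is precisely the situation of Lemma~\ref{lemma:connect-2} and Figure~\ref{image_74}, in which the edge itself is simultaneously the iso-line of two distinct regular faces and two distinct inner iso-paths pass through it — not ``two adjacent disperse nodes whose incident continuous nodes produce a pair of iso-points.'' The paper avoids your enumeration entirely with a local invalidation argument: after an iso-path through the iso-line on $H$ is computed, the applied $S$-rule raises or lowers the number of disperse nodes of $H$, so $H$ becomes a disperse, singular or continuous face, and no iso-line lies on faces of these types; in addition, $S$-rules chosen by Table~\ref{table_1} only remove L-faces and never create new ones. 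To close your plan you would either have to carry out the postponed case check or import this face-transition argument — and in the latter case you must still observe that a face-local argument does not by itself exclude one and the same segment being the iso-line of \emph{two different} regular faces, which is exactly the degeneracy that the $[\hat{g}_2]_{\circ}$-exclusion in hypothesis (c) is there to rule out.
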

\begin{proof} We consider two cases.\\

\noindent{\bf Case 1: }If $G$ is an irreducible graph then there exists exactly one iso-path in $G$ and the
iso-path passes through all iso-lines of regular faces of $G$ as proven by Theorem~\ref{thm:class-3}. \\

\noindent{\bf Case 2: }If $G$ is reducible then we consider two subcases.

First, if $G$ has no L-face then application of the $S_1$-rule to $G$ in case $D(G)=2$ changes the disperse
node on the face to a continuous node such that the resulting face is a continuous face. In case $D(G)=6$
application of the $S_3$-rule to $G$ changes the continuous node on the face to a disperse node such that
the resulting face is a disperse face. These results follow from the arguments used to prove
Theorem~\ref{thm:class-2} for the cases $D(G)=2$ and $D(G)=6$. Now, note that no iso-lines of $G$ lie on
disperse or continuous faces of $G$. Hence in both cases only one iso-path of $G$ passes through the iso-line
on the regular face $H$.

Second, if $G$ has at least one L-face then application of one of the $S$-rules chosen according to
Table~\ref{table_1} will decrease the number of L-faces of $G$ and never create new L-faces. Hence, there
exists only one iso-path of $G$ that passes through the iso-line on the regular face $H$. This holds,
because after computing an iso-path of $G$ that passes through an iso-line on a regular face, then the
number of disperse nodes of $G$ that lie on the regular face increase or decrease. In case the number of
disperse nodes on the regular face increases, then the resulting face is a disperse face or a singular face.
But in case the number of disperse nodes of the regular face decreases, then the resulting face is a
continuous face. Note that no iso-lines of $G$ lie on singular faces of $G$. The arguments given in the
second subcase follow from the arguments used to prove Theorem~\ref{thm:class-1} for the cases $D(G=2$
to $D(G)=6$. Hence, through the iso-line of a regular face $H$ passes only one iso-path of $G$.
\end{proof}

\vspace{0.2cm}

\begin{proposition}
Let $G(V,E,\mathcal{F})$ be a labeled cuboid graph with iso-level $c\in (0,1)$. Assume $G$ is regular and
has at least one non-trivial L-face $H(V_h,E_h,\mathcal{F}_h)$. Let the $S$-rule that will be applied on $G$
be chosen according to Table~\ref{table_1}. If the $S$-rule that will be applied on $G$ is either the
$S_1$- or $S_2$-rule then there exist two iso-lines on $H$ and to each iso-line there exists an inner
iso-path of $G$ that passes through it. If the $S_3$-rule will be applied on $G$ then one of the following
holds:
\begin{itemize}
\item[(a)] in case $H$ has no iso-nodes then there exist two iso-lines on $H$ and to each iso-line there
exists an iso-path of $G$ that passes through it such that none of the iso-paths lies on the L-face.
\item[(b)] in case one continuous node of $H$ is an iso-node then there exists one iso-line on $H$ and to the
iso-line there exists an iso-path of $G$ that passes through it such that the iso-path does not lie on the
L-face.
\end{itemize}
\label{prop:class-8}
\end{proposition}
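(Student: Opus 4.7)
The plan is to reduce the proposition to the structural analysis already carried out for Theorem~5.6 and Proposition~5.3, by dissecting what a non-trivial L-face contributes in terms of iso-points and candidate iso-lines. First I would fix notation on $H$: a non-trivial L-face has two disperse nodes $Q_1,Q_2$ (diagonally placed, since they are not incident in $H$) and two continuous nodes $P_1,P_2$ with at most one of $P_1,P_2$ being an iso-node. Each of the four edges of $H$ joins a disperse to a continuous node, and on each such edge the labels strictly straddle $c$, so each edge carries exactly one iso-point. Hence $H$ has exactly four iso-points arranged cyclically around its boundary, and joining the two iso-points adjacent to a single node (either $Q_i$ or $P_i$, provided $P_i$ is not itself an iso-node) produces one candidate iso-line on $H$.

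Next, for the $S_1$- or $S_2$-case, I would appeal directly to Proposition~5.3: the table prescribes either the $S_1$- or $S_2$-rule, and Proposition~5.3 asserts that each of the two disperse nodes $Q_1,Q_2$ of $H$ is the distinguished disperse node of an $S_1$- or $S_2$-subgraph of $G$. Apply the corresponding $S_1^2$- or $S_2^2$-rule at $Q_1$ then $Q_2$ (Definition~4.18). By construction of the $S_i$-subgraph with iso-path (Definitions~4.14--4.15), the inner iso-path extracted at $Q_i$ contains precisely the iso-line on $H$ running between the two iso-points adjacent to $Q_i$. Since $Q_1\neq Q_2$ these two iso-lines are geometrically distinct, and the two iso-paths are inner iso-paths (they do not lie on the single L-face $H$ because each $S_i$-subgraph with iso-path involves iso-points on edges outside $H$, namely the other faces incident to $Q_i$).

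For the $S_3$-case I would argue symmetrically, but using continuous non-iso nodes in place of disperse ones. In subcase (a), neither $P_1$ nor $P_2$ is an iso-node; Proposition~5.3 then furnishes, for each $P_j$, an $S_3$-subgraph of $G$ whose distinguished continuous node is $P_j$. Each application of the $S_3$-rule produces an inner iso-path passing through the iso-line on $H$ that encloses $P_j$, giving two distinct iso-lines on $H$ and two distinct iso-paths. The non-containment in $H$ follows because the $S_3$-subgraph at $P_j$ contains the three disperse nodes of $G$ incident to $P_j$, at least one of which lies off $H$ (since $P_j$ has a third neighbor in $G$ outside $H$), so the iso-path picks up an iso-line on a face other than $H$. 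In subcase (b), exactly one continuous node, say $P_1$, is not an iso-node; only $P_1$ supplies an $S_3$-subgraph by Proposition~5.3, which produces the unique iso-line on $H$ and the unique iso-path through it, again leaving $H$ by the same off-face argument.

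The main obstacle is the non-containment claim in the $S_3$-case: one must verify that the newly created $S_3$-subgraph genuinely reaches outside $H$, so that the resulting iso-path is an inner iso-path rather than an outer one living on $H$ itself. I expect this to be handled cleanly by enumerating the admissible configurations in Theorem~5.1 for which Table~1 prescribes an $S_3$-rule (namely $D(G)\in\{5,6\}$ subcases), and in each case pointing to the third disperse neighbor of $P_j$ lying off $H$; all other ingredients (existence of the $S_i$-subgraphs, the fact that each iso-line on $H$ is swept out by exactly one iso-path) are immediate from Proposition~5.3 and Proposition~5.7.
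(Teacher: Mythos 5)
Your proof has a genuine gap in its central mechanism: you read Proposition~\ref{prop:class-3} as asserting that \emph{each} of the two disperse nodes $Q_1,Q_2$ of the non-trivial L-face is the distinguished node of an $S_1$- or $S_2$-subgraph (and, in case (a) of the $S_3$-branch, that \emph{each} continuous non-iso node is an $S_3$-apex), and you then apply an $S_i^2$-rule at both nodes to manufacture the two iso-paths. Proposition~\ref{prop:class-3} only guarantees this for \emph{one} of the two nodes, and the stronger claim is false. Concretely, take $D(G)=4$ with disperse nodes $A,C,C',B'$ on a cuboid $ABCDA'B'C'D'$: the top face ($A,C$ disperse on a diagonal) and the front face ($A,B'$) are the two non-parallel L-faces, Table~\ref{table_1} prescribes a single $S_1$-rule, and its apex is $A$; the node $C$ belongs to no $S_1$-subgraph (its neighbor $C'$ is disperse) and the pair $(C,C')$ is no $S_2$-subgraph either (its neighbor $B'$ is disperse). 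An analogous failure occurs in the $S_3$-branch with $D(G)=5$, continuous nodes $A,C,A'$: only $C$ is an $S_3$-apex, not $A$. So the second iso-line on $H$ is simply not reachable by a second $S$-rule application, and your construction of the second inner iso-path collapses in exactly these configurations. (A smaller slip: your opening claim that $H$ always carries four distinct iso-points fails in case (b), where two of them coincide with the iso-node; you do treat (b) separately, but the setup as stated is wrong.)

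The paper's proof avoids this by getting the second iso-path from the \emph{transformed} graph rather than from a second $S$-subgraph: one application of the rule prescribed by Table~\ref{table_1} (at the node furnished by Proposition~\ref{prop:class-3}) converts the L-face $H$ into a regular face — with one disperse node in the $S_1/S_2$-case, with three disperse nodes and a continuous non-iso node in case (a) of the $S_3$-case — and then Proposition~\ref{prop:class-7} yields that exactly one iso-path of the resulting graph passes through the single remaining iso-line on that regular face; this iso-path of the rest graph is an inner iso-path of the original $G$ (in my counterexample above, it is the iso-path of the irreducible rest graph with disperse set $\{C,C',B'\}$). In case (b), the single $S_3$-application turns $H$ into a singular face, which carries no iso-line at all, and this is precisely why only one iso-path passes through the iso-line of $H$ there. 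To repair your argument you would have to replace the ``apply the rule at both nodes'' step by this rest-graph/Proposition~\ref{prop:class-7} argument; the off-face reasoning you use for inner-ness of the first iso-path can be retained.
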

\begin{proof} We consider two cases. \\

\noindent{\bf Case 1: }Either the $S_1$- or $S_2$-rule will be applied on $G$. Then, applying
Proposition~\ref{prop:class-3} we find that through one of the iso-lines passes an iso-path of $G$.
The $S_1$- or $S_2$-rule then changes the disperse node corresponding the iso-line to a continuous node.
Then there remain only one disperse node on the face. Hence the L-face will be transformed to a regular face
with only one disperse node. On a regular face lies only a single iso-line and through this iso-line
passes only one iso-path of $G$ according to Proposition~\ref{prop:class-7}. But $H$ is a non-trivial L-face
and, hence, both iso-paths passes through two different iso-lines on $H$ and both iso-paths are different.\\

\noindent{\bf Case 2: }The $S_3$-rule will be applied on $G$. Then, applying Proposition~\ref{prop:class-3}
we find that through one of the iso-lines passes an iso-path of $G$. The $S_3$-rule then changes one of the
continuous nodes of $H$ which is not an iso-node of $H$ to a disperse node. Then there remains only one
continuous node on the face. Hence, the L-face will be transformed to a regular face if $H$ has no
iso-node. On a regular face lies only a single iso-line and through this iso-line passes only one iso-path
of $G$ according to Proposition~\ref{prop:class-7}. In this case, $H$ is a non-trivial L-face
and, hence, both iso-paths passes through two different iso-lines on $H$ and both iso-paths are different.
But in case, $H$ has iso-node then the L-face will be transformed to a singular face. But on a singular
face lies no iso-line. Therefore, in this case, only one iso-path passes through an iso-line of the L-face $H$.
\end{proof}

\subsection{Outer Iso-path of a labeled cuboid graph}
Until now we have not considered iso-paths on an L-face of a labeled cuboid graph $G(V,E,\mathcal{F})$ with
iso-level $c\in (0,1)$ which is regular and has at least one L-face. This section is devoted to compute
outer iso-paths of $G$. As we know on each regular and on each trivial L-face of $G$ lies only one iso-line
and on a singular face of $G$ lies no iso-line. Consequently, there is no iso-path on each of these faces
of $G$. But on a non-trivial L-face of $G$ we can have an iso-path if there exists a face-neighbored
labeled cuboid graph $G'(V',E',\mathcal{F}')$ with iso-level $c$ which is regular.

Let $G_1(V_1,E_1,\mathcal{F}_1)$ and $G_2(V_2,E_2,\mathcal{F}_2)$ be labeled cuboid graphs with iso-level
$c\in (0,1)$. Suppose that $G_1$ and $G_2$ are regular and face neighbors. Let $H_1(V_{h_1},E_{h_1},
\mathcal{F}_{h_1})\subset G_1$ be a face of $G_1$ and $H_2(V_{h_2},E_{h_2},\mathcal{F}_{h_2})\subset G_2$ be
a face of $G_2$ such that $V_{h_1}=V_{h_2}$. Note that, if $H_1$ is an L-face of $G_1$ this does not mean
that the face $H_2$ is an L-face of $G_2$. Hence, if we have an iso-path in $H_1$, this does not mean that
we have an iso-path in $H_2$. Hence, the node values of the common face of $G_1$ and $G_2$ can be different
in case both $H_1$ and $H_2$ are not regular faces. All these claims will be proved in this section.

If $H_1$ is an L-face then we give a {\it graph-theoretical rule} to indicate if an iso-path on it exists.
This graph-theoretical rule is used for the computation of iso-paths on L-faces in Section 5.4. The
graph-theoretical rules depend on the type of the L-face of $G_1$, on the type of the $S$-rule
which will be applied on $G_1$ and is chosen according to Table~\ref{table_1}, and on the type of the face
$H_2$ of $G_2$, and in case $G_2$ has at least one L-face then, in addition, from the type of the $S$-rule
which will be applied on $G_2$ and is chosen according to Table~\ref{table_1}.\\

\noindent{\bf Note: }When needed we use in special cases $\square$-equivalence for a labeled cuboid graph
with iso-level $c\in (0,1)$ as given in Notation~\ref{not:special}.

\begin{theorem}\label{thm:class-4}
Let $G_1(V_1,E_1,\mathcal{F}_1)$, $G_2(V_2,E_2,\mathcal{F}_2)$ be labeled cuboid graphs with iso-level
$c\in (0,1)$. We assume that both $G_1$ and $G_2$ are face neighbors such that the common face
$H(V_h,E_h,\mathcal{F}_h)$ is a non-trivial L-face. Furthermore, suppose that both $G_1$ and $G_2$ are
regular. Let the $S$-rules that will be applied on $G_1$ and on $G_2$ be chosen according to
Table~\ref{table_1}. Assume that on $G_1$ the rules $S_1$ or $S_2$ apply and on $G_2$ the rule $S_3$
applies. Then there exists an iso-path in $H$. Furthermore, to each iso-line on $H$ there exists only one
inner iso-path that passes through the iso-line.
\end{theorem}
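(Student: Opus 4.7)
The plan is to reduce the theorem to two applications of Proposition~\ref{prop:class-8} and then to assemble the resulting iso-lines on $H$ into a single simple closed path. Since $H$ is a non-trivial L-face, its two disperse nodes $D_1,D_2$ lie on one face-diagonal of $H$ and its two continuous nodes $C_1,C_2$ on the other, with at most one of $C_1,C_2$ being an iso-node (otherwise $H$ would be trivial). I would label the iso-point on edge $\overline{D_iC_j}$ by $p_{ij}$, with the convention that $p_{ij}=C_j$ whenever $C_j$ happens to be an iso-node; these (possibly partially coinciding) points are the only iso-points on $H$.

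First I would apply Proposition~\ref{prop:class-8} to $G_1$: since $G_1$ is regular and by hypothesis an $S_1$- or $S_2$-rule applies, the non-trivial L-face $H$ carries the two iso-lines $\overline{p_{11}p_{12}}$ and $\overline{p_{21}p_{22}}$, corresponding to $D_1$ and $D_2$, and each lies on exactly one inner iso-path of $G_1$. Then I would apply the same proposition to $G_2$ under the $S_3$-rule. If neither $C_1$ nor $C_2$ is an iso-node, part~(a) produces the two further iso-lines $\overline{p_{11}p_{21}}$ and $\overline{p_{12}p_{22}}$, each on a unique inner iso-path of $G_2$ not lying in $H$. If exactly one continuous node is an iso-node, say $C_1$, then part~(b) produces the single iso-line $\overline{p_{12}p_{22}}$ on a unique inner iso-path of $G_2$ (in this case $p_{11}=p_{21}=C_1$).

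Next I would check that the collection of iso-lines obtained from $G_1$ and $G_2$ forms a simple closed path on $H$: in the generic subcase, the quadrilateral $[p_{11},p_{12},p_{22},p_{21}]$, and in the degenerate subcase (where $p_{11}=p_{21}=C_1$), the triangle $[C_1,p_{12},p_{22}]$. Either way, this closed path coincides with $\bigcup_{k\in K}J_k\cap M_h$, where the $J_k$ are the inner iso-paths of $G_1$ and $G_2$ that cross the common face $M_h=H$, so by Definition~\ref{def:labeled-graph-6} it is an outer iso-path lying in $H$. The uniqueness claim then follows by direct inspection: each iso-line on $H$ is either one of the two disperse-side iso-lines (contributed by $G_1$ alone, with a unique inner iso-path through it by Proposition~\ref{prop:class-8}) or one of the continuous-side iso-lines (contributed by $G_2$ alone, again with a unique inner iso-path); since the disperse-side and continuous-side iso-lines are geometrically distinct segments on $H$, no inner iso-path from the opposite graph can traverse them.

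The main obstacle will be keeping the bookkeeping of iso-point incidences coherent in the iso-node subcase: one must verify carefully that when a continuous node of $H$ is already an iso-node, the corresponding iso-points collapse onto that node and that the remaining iso-lines still assemble into a legitimate simply closed path without spurious self-intersections or duplicated edges. Once this case analysis is settled, the assembly into an outer iso-path and the uniqueness assertion reduce to purely combinatorial checks on the fixed face $H$.
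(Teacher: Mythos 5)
Your proposal is correct and takes essentially the same route as the paper: the paper's proof also splits into the two types of non-trivial L-face (no iso-node versus one continuous iso-node, its classes $[(a)]_{\square}$ and $[(b)]_{\square}$ in Figure~\ref{image_51}), obtains the disperse-side iso-lines from the $S_1$/$S_2$-rule on $G_1$ and the continuous-side ones from the $S_3$-rule on $G_2$, assembles them into the outer iso-path on $H$, and proves uniqueness by the same node--iso-path correspondence (Notation~\ref{note:corresponding-node}) that you access through Proposition~\ref{prop:class-8}. The only difference is presentational: the paper argues via the figure sequences $(A1)$, $(A2)$, $(B1)$, $(B2)$, whereas you make the bookkeeping explicit with the iso-point labels $p_{ij}$ and the quadrilateral/triangle case distinction.
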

\begin{proof} Let $G_1\in [(1)]_{\square}^{\star}$ and $G_2\in [(2)]_{\square}^{\star}$ with
the sketches $(1)$ and $(2)$ as shown in Figure~\ref{image_51}. Then the L-face $H$ is
in $[(a)]_{\square}$ or in $[(b)]_{\square}$ as shown in Figure~\ref{image_51}. In the case
$H\in [(a)]_{\square}$, application of the $S_1$- or $S_2$-rule to $G_1$ gives a graph represented
by the sketch $(a1)$ of Figure~\ref{image_52_53}. For the same case, application of the $S_3$-rule
to $G_2$ gives a graph represented by sketch $(a2)$ of Figure~\ref{image_52_53}. These graph
operations transform $H$ to a graph represented by the sketch $(a')$ of Figure~\ref{image_52_53}.
This is shown by the graph-theoretical rule $(A1)$ and by the sequence $(A2)$ in Figure~\ref{image_52_53}.
Hence we get an iso-path on $H$. For the case $H\in [(b)]_{\square}$ we apply the same procedure to
transform $H$ to a graph represented by the sketch $(b')$ of Figure~\ref{image_54_55}. These procedures
are shown by the graph-theoretical rule $(B1)$ and by the sequence $(B2)$ in Figure~\ref{image_54_55}.
Hence we get an iso-path on $H$.

The last claim holds true because to each disperse node in $H$ there is a {\it corresponding} iso-path
in $G_1$ (we get this using the $S_1$- or $S_2$-rule on $G_1$), and to the continuous node of $H$ which
is not an iso-node, there exists a corresponding iso-path in $G_2$ (we get this using the $S_3$-rule
on $G_2$). Here, the word "{\it corresponding}" is to be understood in the sense of
Notation~\ref{note:corresponding-node}.
\end{proof}
\begin{figure}[!ht]
\includegraphics[width=0.7\linewidth]{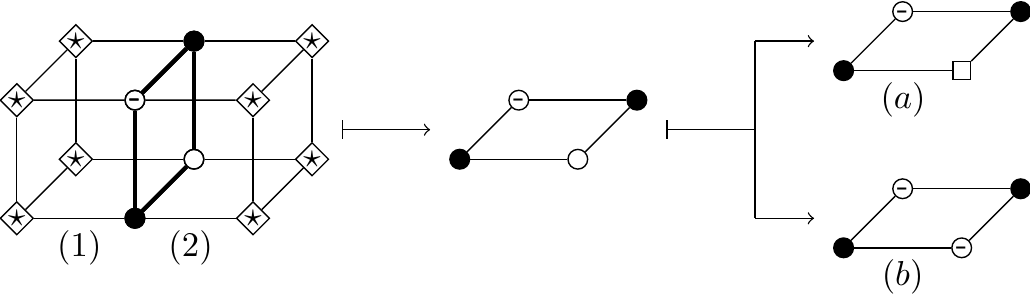}
\caption{Sketches $(1)$ and $(2)$ represent $[(1)]_{\square}^{\star}$ and $[(2)]_{\square}^{\star}$.
The common face of $G_1\in [(1)]_{\square}^{\star}$ and $G_2\in [(2)]_{\square}^{\star}$ is a non-trivial
L-face. The sketches $(a)$ and $(b)$ represent $[(a)]_{\square}$ and $[(b)]_{\square}$ in which the two
different types of non-trivial L-faces are obtained.}
\label{image_51}
\end{figure}
\FloatBarrier
\begin{figure}[!ht]
$A1.$
\begin{tabular}[c]{l}
\includegraphics[width=0.6\linewidth]{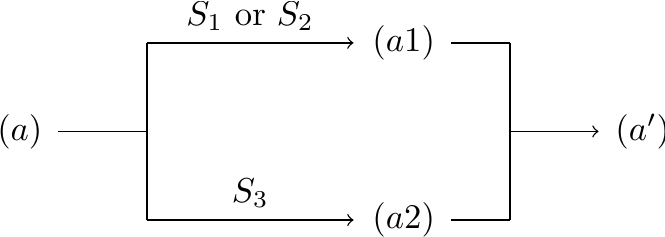}
\end{tabular}\\ \\
$A2.$
\begin{tabular}[c]{l}
\includegraphics[width=0.89\linewidth]{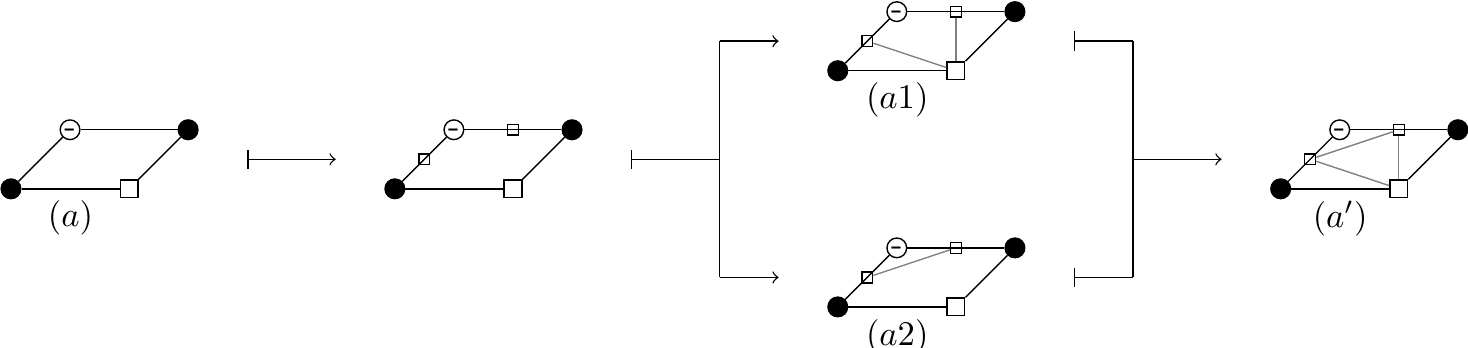}
\end{tabular}
\caption{The graph-theoretical rule given by $A1$ and the sequence $A2$ illustrate the procedure
how to compute an iso-path that lies on a non-trivial L-face which is in $[(a)]_{\square}$.}
\label{image_52_53}
\end{figure}
\FloatBarrier

\begin{figure}[!ht]
$B1.$
\begin{tabular}[c]{l}
\includegraphics[width=0.6\linewidth]{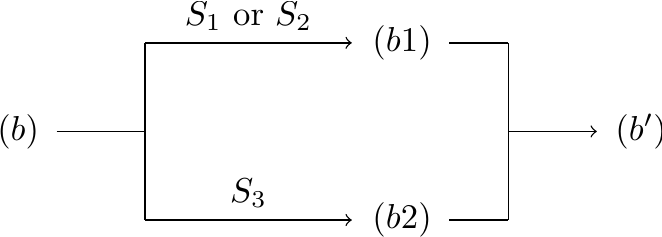}
\end{tabular}
\\ \\
$B2.$
\begin{tabular}[c]{l}
\includegraphics[width=0.89\linewidth]{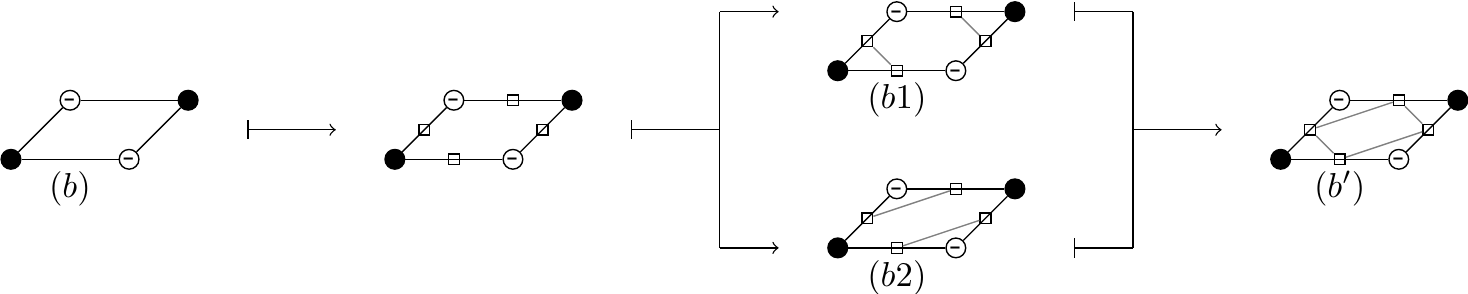}
\end{tabular}
\caption{The graph-theoretical rule given by $B1$ and the sequence $B2$ illustrate the procedure
to compute an iso-path that lies on a non-trivial L-face which is in $[(b)]_{\square}$.}
\label{image_54_55}
\end{figure}
\FloatBarrier
\begin{theorem}\label{thm:class-5}
Let $G_1(V_1,E_1,\mathcal{F}_1)$ and $G_2(V_2,E_2,\mathcal{F}_2)$ be labeled cuboid
graphs with iso-level $c\in (0,1)$. Assume that both $G_1$ and $G_2$ are face neighbors
with a common L-face $H(V_h,E_h,\mathcal{F}_h)$. Furthermore, suppose that $G_1$ and $G_2$ are
isolated iso-path free. Let $G_1\in [(1)]_{\square}^{\star}$ and $G_2\in [(2)]_{\square}^{\star}$,
where the sketches $(1)$ and $(2)$ are as shown in Figure \ref{image_56}. Let the $S$-rules that will
be applied on $G_2$ be chosen according to Table~\ref{table_1}. Let one of the rules $S_1$ or $S_2$ be
applicable on $G_2$. Then there exists an iso-path in $H$. But if rule $S_3$ is applicable on $G_2$ then
there exists no iso-path in $H$. Furthermore, to each iso-line on $H$ there exists only one inner iso-path
that passes through the iso-line.
\end{theorem}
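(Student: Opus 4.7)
The plan is to mimic the structure of the proof of Theorem~\ref{thm:class-4}, but now exploiting the fact that the equivalence class $[(1)]_{\square}^{\star}$ forces a particular disperse/continuous pattern at $G_1$ that already determines, via Table~\ref{table_1}, which $S$-rule must be applied to $G_1$. First, I would read off from sketch $(1)$ the possible $\square$-equivalence subclasses of the common L-face $H$; as in Theorem~\ref{thm:class-4}, there are only finitely many (essentially two) labellings of $H$ compatible with $G_1\in[(1)]_{\square}^{\star}$ and with $H$ being an L-face. This reduces the argument to a case-by-case analysis driven by sketch $(2)$ and the type of $S$-rule applicable to $G_2$.

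Next, I would split into the two claims. For the positive claim, assume $S_1$ or $S_2$ is applicable to $G_2$. By Proposition~\ref{prop:class-3}, applied to the L-face $H$ viewed inside $G_2$, one of the disperse nodes of $H$ is the disperse node of an $S_1$- or $S_2$-subgraph of $G_2$; applying the corresponding rule yields an iso-line on $H$ that corresponds (in the sense of Notation~\ref{note:corresponding-node}) to that disperse node. The rule prescribed for $G_1$ by Table~\ref{table_1} (read off from $[(1)]_{\square}^{\star}$) produces a second iso-line on $H$, corresponding to the other disperse/continuous node of $H$. Joining the two iso-lines in the appropriate cyclic order gives a simply closed path on $H$, which by Definition~\ref{def:labeled-graph-6} is an outer iso-path on $H$. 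I would display the resulting graph-theoretical rewrite exactly as in the $A1/A2$ and $B1/B2$ sequences of Theorem~\ref{thm:class-4}, so each subclass of $H$ is illustrated.

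For the negative claim, assume $S_3$ is applicable to $G_2$. By Proposition~\ref{prop:class-3}, $S_3$ turns one of the continuous non-iso nodes of $H$ into a disperse node. Combined with the rule forced on $G_1$, this changes the $\square$-equivalence class of $H$ into one that is no longer a non-trivial L-face: either a singular face, a disperse face, or a regular face carrying only one iso-line. In none of these cases does the common face support a simply closed path with iso-points as vertices, so no outer iso-path on $H$ is produced, and by the definition of outer iso-path (which requires matching iso-paths from both face-neighbors on $H$) none can exist.

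Finally, the uniqueness statement follows from Proposition~\ref{prop:class-7}: after the $S$-rules have been applied to $G_1$ and $G_2$ in the positive case, each iso-line of $H$ lies on a regular face of the transformed graphs, and through an iso-line of a regular face passes exactly one inner iso-path. The main obstacle I anticipate is the bookkeeping: one must verify for \emph{every} admissible labelling of $(1)$, $(2)$, and $H$ that the rules from Table~\ref{table_1} chosen independently for $G_1$ and $G_2$ are consistent on the shared nodes of $H$ (so that the node values on $V_h$ after applying the $G_1$-rule and after applying the $G_2$-rule both yield the same graph on $H$), and that the hypothesis of being isolated-iso-path free rules out the pathological configurations in which Proposition~\ref{prop:class-3} would otherwise fail. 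This is best organized by a figure analogous to Figure~\ref{image_51}, enumerating the $\square$-equivalence subclasses of $H$ compatible with $(1)$ and $(2)$, followed by one $A$-type/$B$-type sequence per subclass.
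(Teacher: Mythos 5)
There is a genuine gap, and it is precisely the ingredient that distinguishes Theorem~\ref{thm:class-5} from Theorem~\ref{thm:class-4}: the graph $G_1$ of sketch $(1)$ is \emph{not} regular. The hypothesis here is deliberately weakened to ``isolated iso-path free'' (rather than ``regular'') because $G_1$ contains a $T_1$-subgraph, i.e.\ an iso-node incident to disperse nodes, hence a singular iso-path. Every tool you invoke on the $G_1$ side --- the case scheme of Theorem~\ref{thm:class-4}, Proposition~\ref{prop:class-3}, and the selection of an $S$-rule for $G_1$ from Table~\ref{table_1} --- presupposes regularity, so none of it applies to $G_1$ as it stands, and your assertion that ``the rule prescribed for $G_1$ by Table~\ref{table_1} produces a second iso-line on $H$'' has no basis. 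The paper's proof instead begins by applying the $T_1$-rule to both graphs (only $G_1$ actually changes, $G_2'=G_2$); this turns the common face, as labeled inside $G_1'$, into a regular face with three disperse nodes, and the iso-line contributed from the $G_1$ side is then obtained by the $C_3$-rule, not by any $S$-rule. Only the $G_2$ side is handled as you describe ($S_1$/$S_2$ versus $S_3$), and the dichotomy $(b)\rightarrow(b')$ versus $(b)\rightarrow(b'')$ then yields existence, respectively non-existence, of the outer iso-path on $H$.

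Relatedly, the ``consistency'' obstacle you anticipate --- verifying that the rules applied to $G_1$ and $G_2$ yield the same node values on the shared nodes $V_h$ --- is not merely unnecessary but contrary to the construction: after the $T_1$-rule the faces $(a)$ of $G_1'$ and $(b)$ of $G_2$ have the same nodes but \emph{different} node weights (this is exactly the point of the caption of Figure~\ref{image_56} and of the introduction to Section~5.3). The outer iso-path on $H$ is assembled, per Definition~\ref{def:labeled-graph-6}, from intersections of inner iso-paths of the two face-neighbors with the common face; no agreement of the transformed labelings is required or even possible here. Finally, your uniqueness argument via Proposition~\ref{prop:class-7} is inaccurate on the $G_2$ side, where $H$ remains a non-trivial L-face rather than a regular face (Proposition~\ref{prop:class-8} is the pertinent statement there); the paper instead argues via corresponding nodes in the sense of Notation~\ref{note:corresponding-node}: each disperse node of $H$ corresponds to an inner iso-path of $G_2$ obtained from the $S_1$-/$S_2$-rule, while the continuous non-iso-node would correspond to an inner iso-path of $G_1$ only if the $S_3$-rule were applied there. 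None of this can be repaired without first inserting the $T_1$-step, since before it the face is not regular inside $G_1$.
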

\begin{proof} First, transform $G_1$ and $G_2$ to $G'_1(V_1,E_1,\mathcal{F}_1')$ and
$G'_2(V_2,E_2,\mathcal{F}_2')$, respectively, by applying the $T_1$-rule to each of
them. Then we have $G_1'\in [(1')]_{\square}^{\star}$ and $G_2'=G_2$, where sketch $(1')$ is as
shown in Figure~\ref{image_56}. Then $H\subset G_1$ will be transformed to a graph in
$[(a)]_{\square}$ and $H\subset G_2$ will be transformed to a graph in $[(b)]_{\square}$,
with sketches $(a)$ and $(b)$ from Figure~\ref{image_56}. We then apply the $C_3$-rule
to $G_1'$ and one of the rules $S_1$ or $S_2$ to $G_2$ in order to compute the iso-path on $H$
as shown in sketch $(c)$ of the sequence $A2$ of Figure~\ref{image_57_58}. This procedure is
illustrated by the graph-theoretical rule given by $A1$ in Figure~\ref{image_57_58}.

The last claim holds true because to each disperse node in $H$ there is a {\it corresponding}
iso-path in $G_2$ (we get this using the $S_1$- or $S_2$-rule on $G_2$), and to the continuous
node of $H$ which is not an iso-node, there exists a corresponding iso-path in $G_1$ only if the
$S_3$-rule is applied on $G_1$. Here, the word "{\it corresponding}" is to be understood in the
sense of Notation~\ref{note:corresponding-node}.
\end{proof}
\begin{figure}[!ht]
\includegraphics[width=0.7\linewidth]{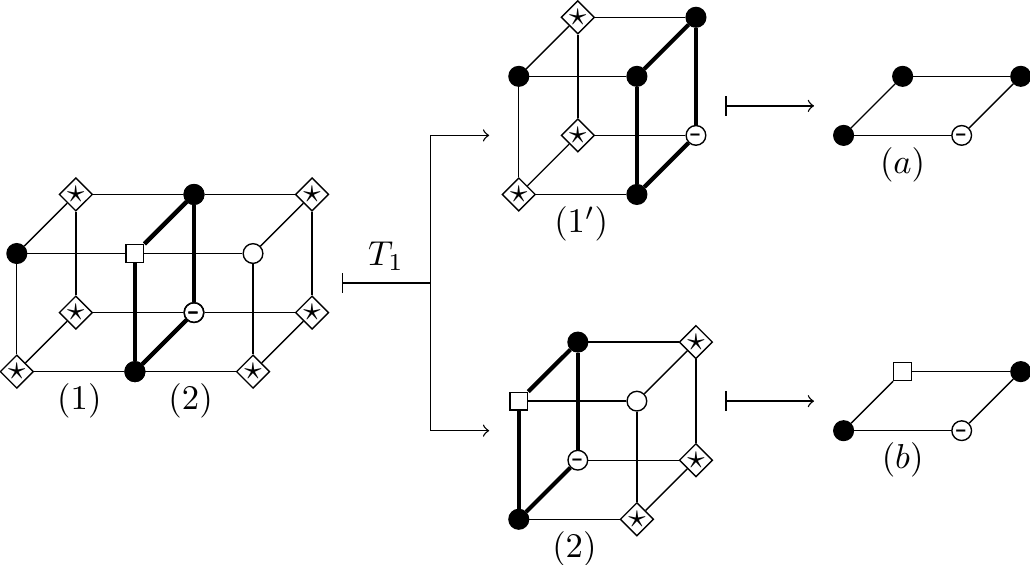}
\caption{The sketches $(1)$ and $(2)$ represent $[(1)]_{\square}^{\star}$ and $[(2)]_{\square}^{\star}$,
respectively. Sketches $(a)$ and $(b)$ represent faces of $G_1'\in [(1')]_{\square}^{\star}$ and
$G_2\in [(2)]_{\square}^{\star}$, respectively. Both faces $(a)$ and $(b)$ have the same nodes, but
different node weights.}
\label{image_56}
\end{figure}
\FloatBarrier
\begin{figure}[!ht]
$A1.$
\begin{tabular}[c]{l}
\includegraphics[width=0.89\linewidth]{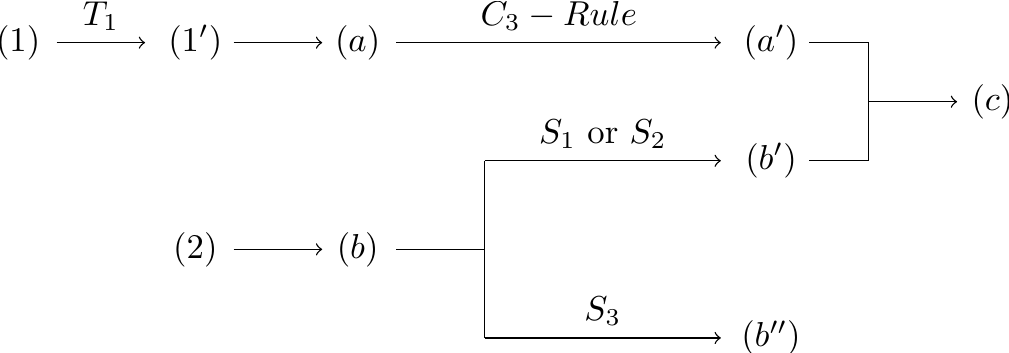}
\end{tabular}\\ \\
\mbox{}\\ \\
$A2.$
\begin{tabular}[c]{l}
\includegraphics[width=0.89\linewidth]{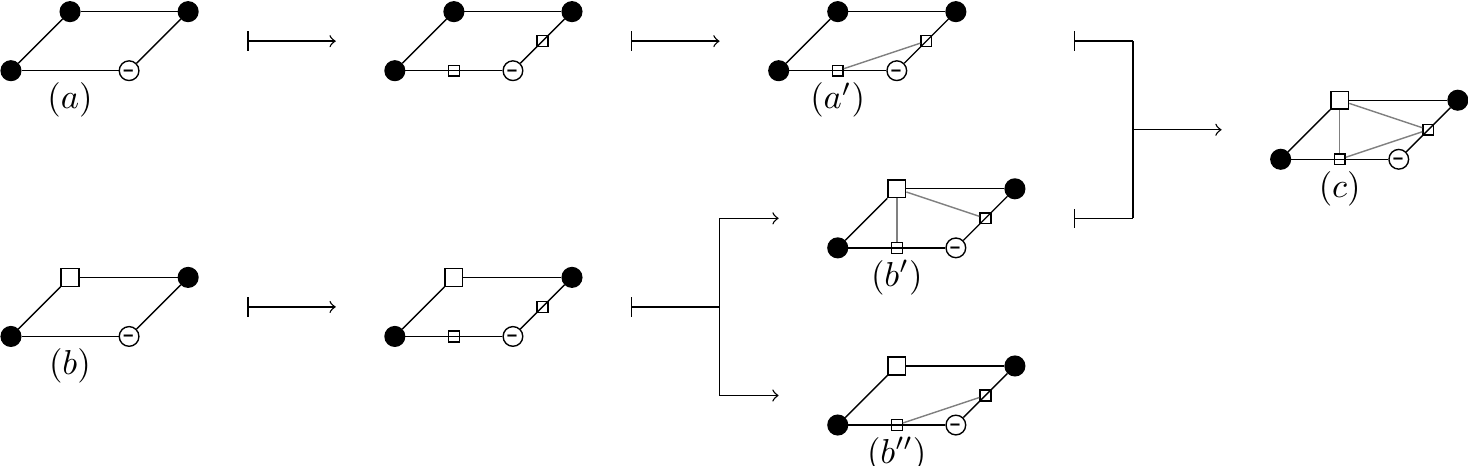}
\end{tabular}
\caption{The graph-theoretical rule given by $A1$ and the sequence $A2$ illustrate the procedure
to compute an iso-path that lies on a non-trivial L-face which is in $[(b)]_{\square}$.}
\label{image_57_58}
\end{figure}
\FloatBarrier
\noindent{\bf Explanation of Figure~\ref{image_57_58} : }The graph-theoretical rule $A1$ together with
the sequence of the sketches $A2$ means that we first apply the $T_1$-rule to $G_1$, obtaining $G_1'$.
Then we apply to the common face of $G_1'$ and $G_2$ the $C_3$-rule to get $(a')$. By applying the $S_1$- or
$S_2$-rule to $G_2$, we get $(b)\longrightarrow (b')$. But if we apply the $S_3$-rule to $G_2$ we get
$(b)\longrightarrow (b'')$. From $(b')$ we get an iso-path on the common face of $G_1'$ and $G_2$. The
sequence $A2$ in Figure~\ref{image_57_58} shows that the regular face obtained in $[(a)]_{\square}$ and the
L-face obtained in $[(b)]_{\square}$ have iso-lines as shown by $(a')$ and $(b')$ or $(b'')$, respectively.
In case $(b')$, the common face of $G_1'$ and $G_2$ has an iso-path as shown by $(c)$. In case $(b'')$, there
is no iso-path on the common face of $G_1'$ and $G_2$.
\FloatBarrier

\subsection{Algorithm for Complete Iso-path Computation}
In this section we will give an algorithm for the complete iso-paths computation of a labeled cuboid graph
$G(V,E,\mathcal{F})$ with iso-level $c\in (0,1)$ which is neither disperse nor continuous.

The complete iso-paths of $G$ will be computed in three steps. In the first step we delete all singular
iso-paths or an isolated iso-path of $G$. These deletions will be carried out by transforming the graph
$G$ to a graph $G'$. If $G'$ is regular then the second and the third step will be
to compute the iso-paths of $G'$ which are as well iso-paths of $G$. The only difference between the
iso-paths of $G$ and $G'$ is that $G'$ contains no singular iso-paths and no isolated iso-path. Here,
if $D(G')\leq 7$ then $G'$ is regular but if $D(G')=8$ then $G$ contains only singular or isolated iso-paths.
In this case $G'$ is a disperse labeled cuboid graph and has no iso-path and hence $G$ has as well no iso-path.
This means, in case $G'$ is a disperse graph we consider as well $G$ as a disperse graph.\\

Let $\Omega\subset\mathbb{R}^3$ be a polygonal domain allowing a partition into cuboids,
i.e. $\overline{\Omega}=\cup_{i=1}^NC_i$ for the partition $\mathcal{T}=\{C_i\}_{i=1}^N$.  Let
$\mathcal{G}=\{G_1,\ldots,G_N\}$ be a set of labeled cuboid graphs
with common iso-level $c\in (0,1)$, and let $C_i$ be the cuboid of $G_i$ for $i=1,\ldots,N$. The following
algorithm computes the complete iso-paths of $G_i$ for $i=1,\ldots,N$.

\subsection*{Algorithm for Iso-paths Computation}
\noindent{\bf Notations}: Let $G(V,E,\mathcal{F})$ be a labeled cuboid graph with iso-level $c\in (0,1)$.
Denote by $D(G)$ the total number of disperse nodes of $G$ and by $L(G)$ and $L_{NT}(G)$ the set
of L-faces and non-trivial L-faces of $G$, respectively. Recall the $T_1^*$-mapping as defined in
Definition~\ref{def:iso-path-star-map}.\\

\noindent {\bf For } $i=1,\ldots,N$ {\bf do:}\\
\mbox{}\hspace{0.3cm}    {\bf Step 1}: {\bf removing singular iso-paths or isolated iso-path}\\
\mbox{}\hspace{0.6cm}      i) removing singular iso-paths if $G_i=F_2(F_1(G_i))$\,:\\
\mbox{}\hspace{1.1cm}         a) $G_i':=T_2(T_1^*(G_i))$\\
\mbox{}\hspace{1.1cm}         b) if $(D(G_i')=8)$ then no iso-path, $i:=i+1$, {\bf go to} Step 1\\
\mbox{}\hspace{0.6cm}      ii) removing isolated iso-path if $G_i=T_2(T_1(G_i))$\,:\\
\mbox{}\hspace{1.1cm}         a) $G_i':=F_2(F_1(G_i))$\\
\mbox{}\hspace{1.1cm}         b) if $(D(G_i')=8)$ then no iso-path, $i:=i+1$, {\bf go to} Step 1\\

\noindent\mbox{}\hspace{0.3cm}    {\bf Step 2}: {\bf iso-path computation of} $G_i'$
{\bf for the case} $L(G_i')\neq\emptyset$\\
\mbox{}\hspace{0.6cm}      (i) register all L-faces: \\
\mbox{}\hspace{1.2cm}          let $L(G_i')=\{\mathcal{L}_1,\ldots,\mathcal{L}_l\}$ and $l=|L(G_i')|$\\
\mbox{}\hspace{0.6cm}      (ii) register all non-trivial L-faces of $G_i'$ if $L_{NT}(G_i')\neq\emptyset$\,:\\
\mbox{}\hspace{1.2cm}           let $L_{NT}(G_i')=\{\mathcal{M}_1,\ldots,\mathcal{M}_m\}$ and $m=|L_{NT}(G_i')|$\\
\mbox{}\hspace{0.6cm}      (iii) let $\mathcal{N}_j\subset\mathcal{G}$ for $j=1,\ldots,m$ be all face neighbors of $G_i'$ such that \\
\mbox{}\hspace{1.4cm}            the common nodes of $G_i'$ and $\mathcal{N}_j$ are the nodes of $\mathcal{M}_j$\\
\mbox{}\hspace{0.6cm}      (iv) use Table~\ref{table_1} to get the type of the $S^n$-rule corresponding to $G_i'$\\
\mbox{}\hspace{1.3cm}            a) use $S^n$-rule to compute iso-paths of $G_i'$ \\
\mbox{} \hspace{1.7cm}              $G_i'':=S^n(G_i')$\\
\mbox{} \hspace{1.7cm}              {\bf Note: }$G_i''$ is the rest graph of $G_i'$ and hence irreducible\\
\mbox{}\hspace{1.3cm}            b) use $C$-rules to compute the iso-path of $G_i''$ (see Theorem~\ref{thm:class-1})\\
\mbox{}\hspace{0.6cm}     (v) {\bf if} $m>0$ {\bf then} compute iso-path on non-trivial L-faces of $G_i'$\,:\\
\mbox{}\hspace{1.2cm}         {\bf for} $j=1,\ldots,m$\\
\mbox{}\hspace{1.7cm}             a) use Table~\ref{table_1} to get the type of the $S$-rule corresponding to $\mathcal{N}_j$\\
\mbox{}\hspace{1.7cm}             b) use Theorems~\ref{thm:class-4} and~\ref{thm:class-5} to compute a possible iso-path on\\
\mbox{}\hspace{2.27cm}                $\mathcal{M}_j$\\

\noindent\mbox{}\hspace{0.3cm}   {\bf Step 3}: {\bf iso-path computation of} $G_i'$ {\bf for the case} $L(G_i')=\emptyset$\\
\mbox{}\hspace{0.6cm}        {\bf if} $D(G_i')\in\{2,6\}$ and {\bf if} there exists $H'$ in $[g_1]_{\circ}$ or in
$[g_3]_{\square}$, {\bf where}\\
\mbox{}\hspace{1.1cm}              $H'\subset G_i'$, use Theorem~\ref{thm:class-2} to compute iso-path\\
\mbox{}\hspace{0.6cm}        {\bf else}\\
\mbox{}\hspace{1.1cm}           apply $C$-rules to compute iso-path \\
\mbox{}\hspace{0.6cm}        {\bf endif}\\

Note that the complexity of this algorithm is $O(N)$.

\subsection{Application}
Tracking or capturing interfaces of two-phase systems is an important issue for instance in computational
fluid dynamics simulations~\cite{opac-b1133222}. The interfaces are used not only to track the phases but
there can be adsorbed quantities on them like surfactants as well, which affect the hydrodynamics of the
system~\cite{Alke_and_Bothe}. Two well-known volume tracking methods are the Volume of Fluid (VOF)
method~\cite{Nich81} and the Level Set method~\cite{Stan02}. The Level Set method uses a signed distance
function which implicitly determines the interface as the zero level set. While level set methods are
advantageous concerning discrete mean curvature computation, they suffer from volume loss of the disperse phase.
The latter requires reinitialization of the level set function which introduces non-physical changes.
The VOF-method conserves the phase volumes, but the standard interface reconstruction using a piecewise
planar approximation (PLIC,~\cite{Rider97reconstructingvolume}) leads to disconnected interface representations.
The present iso-surface algorithm can be employed to obtain a connected interface approximation instead.

Consider a polygonal domain $\Omega\subset\mathbb{R}^3$ and a domain partition
$\mathcal{T}=\{C_i\}_{i=1}^N$ of $\overline{\Omega}$ into cuboids. Given a function $v\,:\,\mathcal{T}
\longrightarrow [0,1]$ which gives the volume fractions of one of the phases (say the disperse phase),
we define a labeling function $\mathcal{F}\,:\mathcal{P}\longrightarrow [0,1]$, where $\mathcal{P}$
is the set of vertices of all cuboids in $\mathcal{T}$, by
\begin{equation}
\mathcal{F}(P)=\frac{1}{|I_P|}\sum_{i\in I_p}^nv(C_i),
\label{eq:application-1}
\end{equation}
where $i\in I_P$ if and only if $P\in C_i$. Using the labeling function $\mathcal{F}$, we get from
$\mathcal{T}=\{C_i\}_{i=1}^N$ labeled cuboid graphs $G_1(V_1,E_1,\mathcal{F}_1),\ldots,G_N(V_N,E_N,\mathcal{F}_N)$.
Next, we interpolate the node values onto the edges according to \mbox{Definition}~\ref{def:labeled-graph-5}. Then, for
a given $\epsilon>0$, we determine for all graphs $G_1,\ldots,G_N$ a common iso-level $c\in (0,1)$ by solving the
inequality
\begin{equation}
\left|1-\frac{\Omega_{D}(c)}{\sum_{i=1}^Nv(C_i)|C_i|}\right|<\epsilon,
\label{eq:application-2}
\end{equation}
where the domain $\Omega_{D}(c)\subset\mathbb{R}^3$ is the union of all bounded volumes enclosed by
the iso-surface for the given iso-level $c$. Note that $\Omega_{D}(c)$ contains all disperse nodes
of $G_1,\ldots,G_N$.

The function $\gamma\,:\,[0,1]\longrightarrow (-\infty,\infty)$ given by
\begin{equation}
\gamma(c)=1-\frac{\Omega_{D}(c)}{\sum_{i=1}^Nv(C_i)|C_i|}
\label{eq:application-3}
\end{equation}
measures the deviation between the total volume of the given disperse phase and that of the computed
enclosed volume. The function $\gamma$ is decreasing, but not necessarily strictly decreasing.
Furthermore, $\gamma$ can have (small) jumps. The latter can appear at an iso-level $c$ if
$\mathcal{F}$ attains the value $c$ on several, complete edges. In such cases, the error bound $\epsilon$
in~\eqref{eq:application-2} cannot be chosen arbitrarily small.

Using the iso-surface algorithm from Section 5.4 we first computed iso-surfaces for snap shots of the
simulated collision of two liquid droplets. The snap shots are taken at different time steps and the
resulting iso-surfaces are shown in Figure~\ref{image_59_60_61_62}. We have also applied the
iso-surface algorithm to the outcome of a crown splash, where one typical snap shot is shown in
Figure~\ref{image_63}. For the iso-surface computation of the binary droplet collision and
the splash we computed the iso-level $c$ using $\epsilon=10^{-9}$ in~\eqref{eq:application-2}.
\begin{figure}[!ht]
\includegraphics[width=0.3\linewidth]{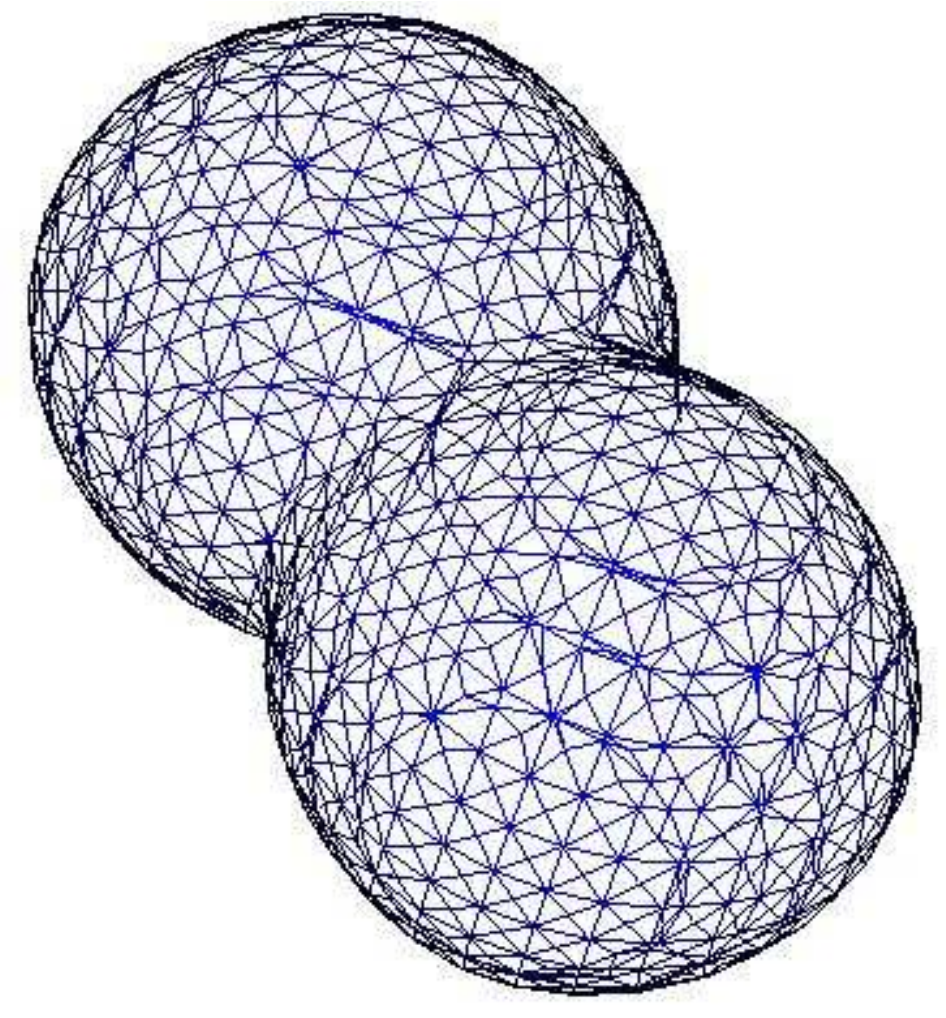}\hspace{1cm}
\includegraphics[width=0.4\linewidth]{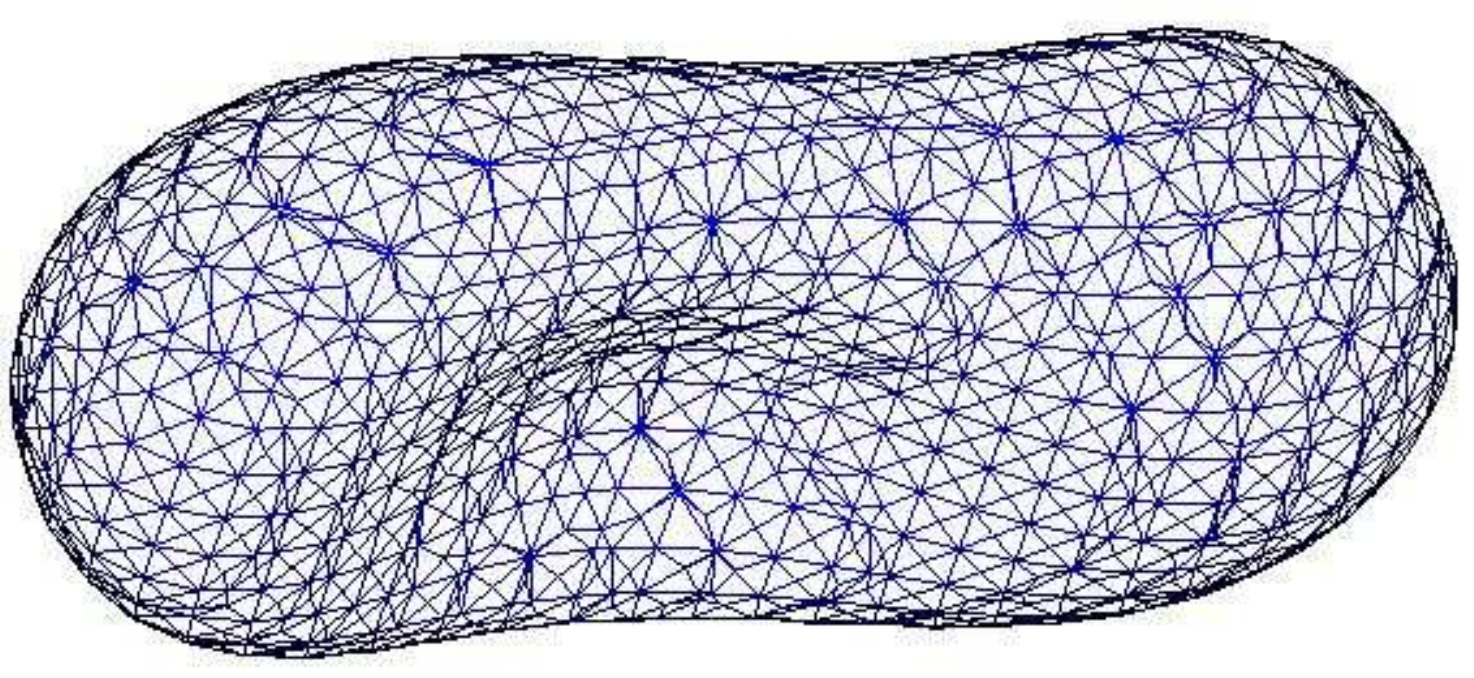}\\ \\
\includegraphics[width=0.4\linewidth]{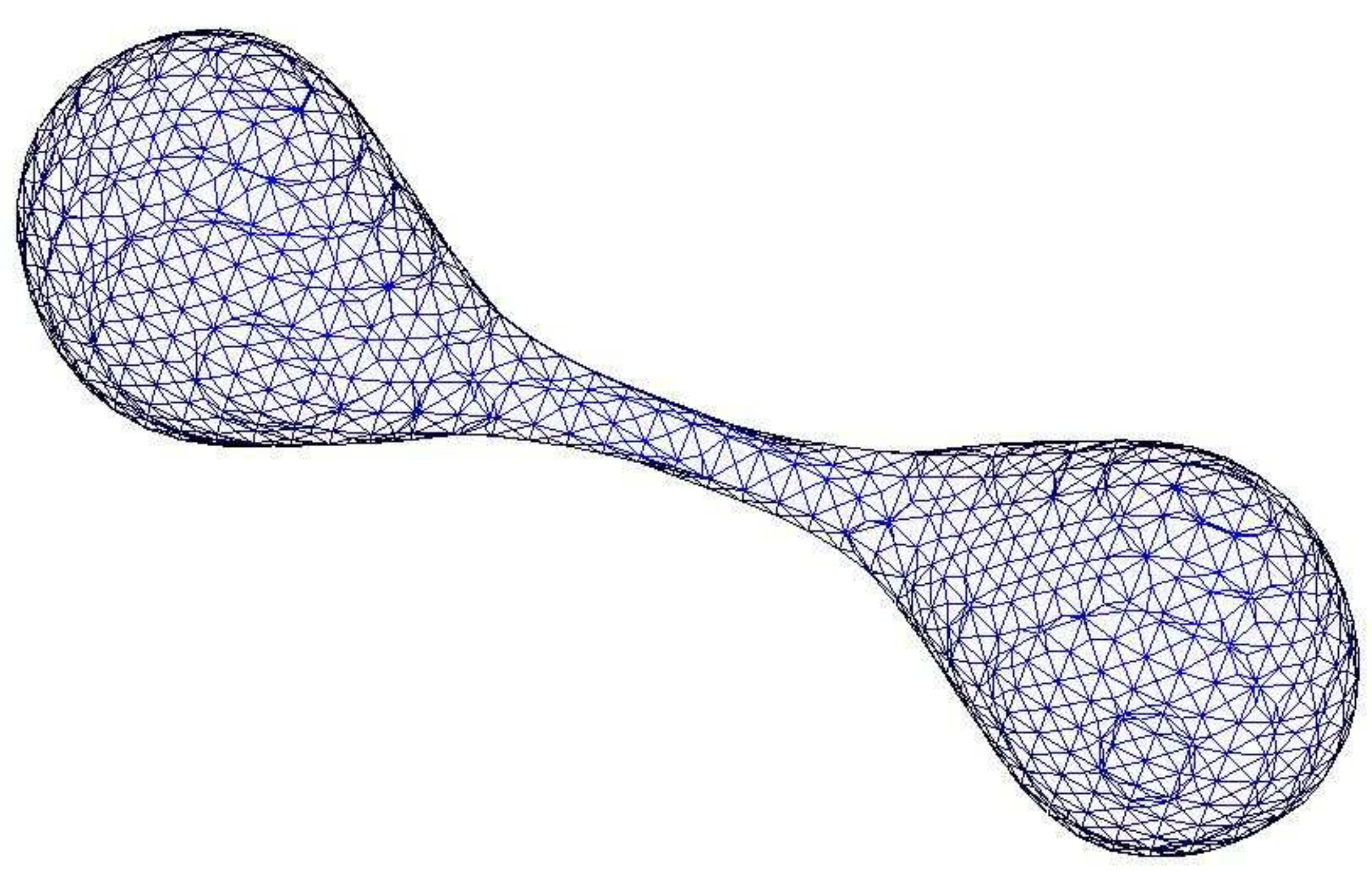}\hspace{1cm}
\includegraphics[width=0.4\linewidth]{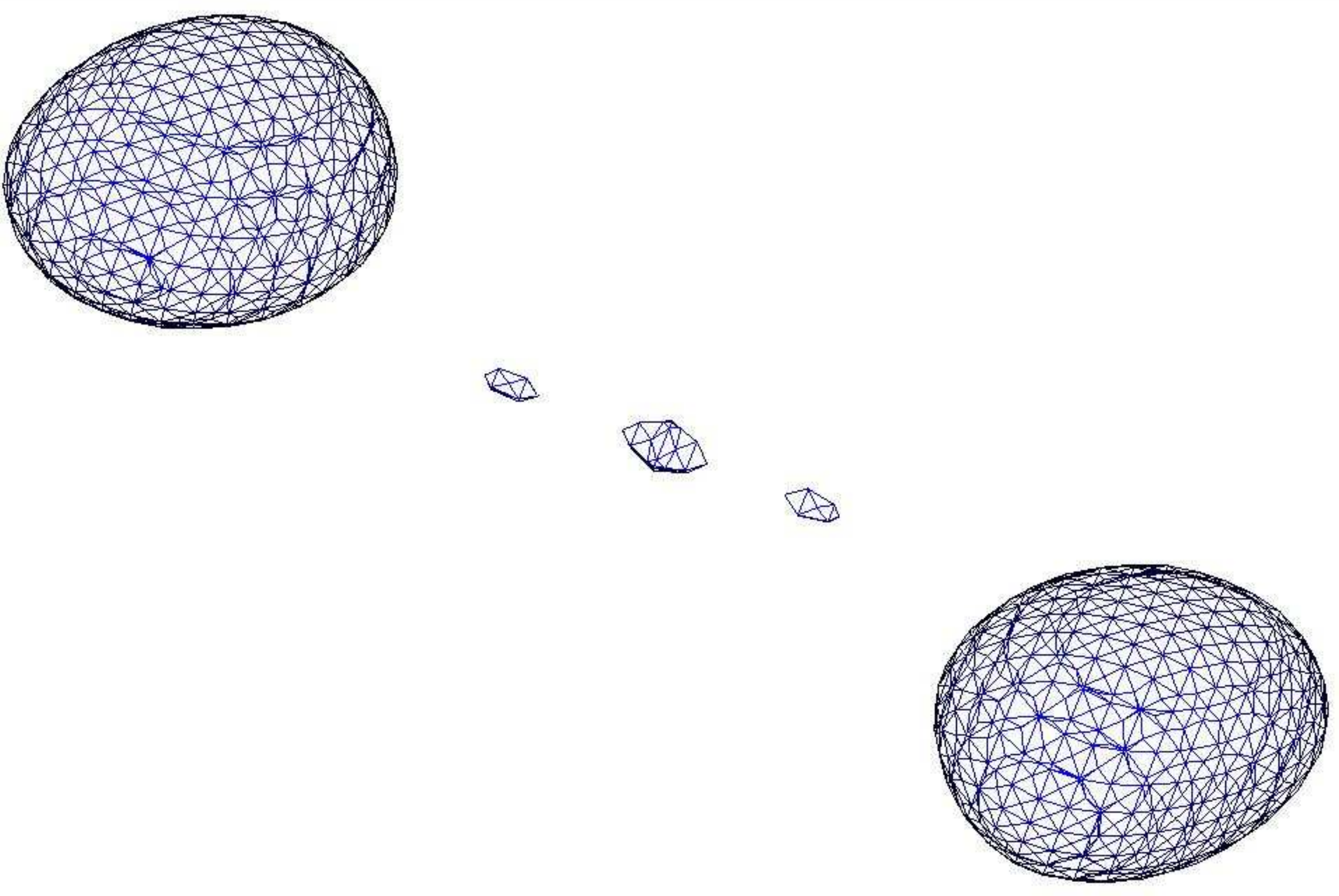}
\caption{The sequence of sketches show iso-surface meshes for snap shots of a boundary droplet collisions taken at different
time steps.}
\label{image_59_60_61_62}
\end{figure}
\begin{figure}[!ht]
\includegraphics[width=0.7\linewidth]{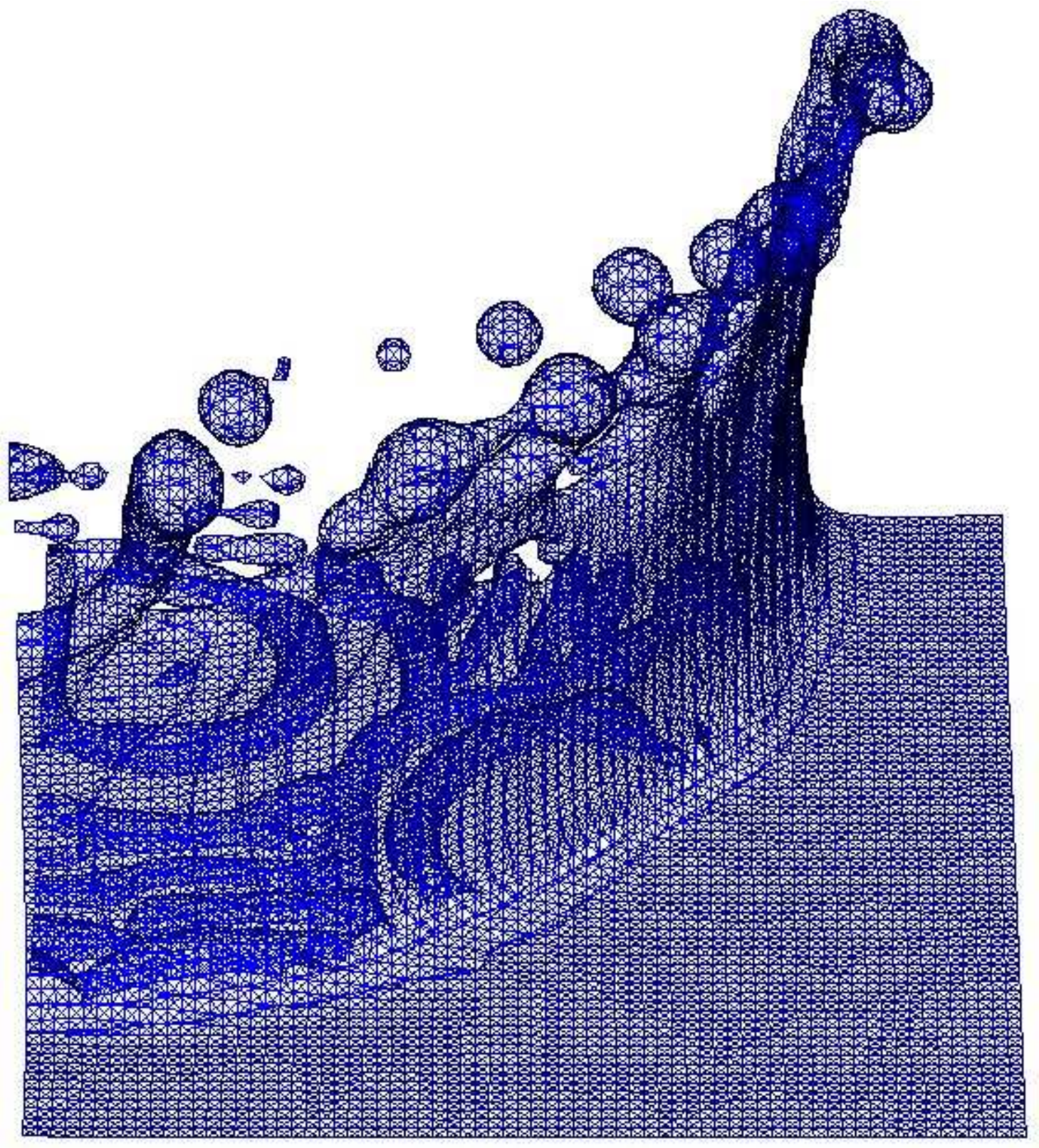}
\caption{The figure shows the iso-surface mesh for a snap shot of a crown splash.}
\label{image_63}
\end{figure}
\FloatBarrier

The highly dynamic impact of a droplet into a liquid layer which produces the crown-splash is a good
validation example, since during the splash several special cases occur in the iso-surface computation.
Indeed, there appear labeled cuboid graphs $G(V,E,\mathcal{F})$ containing
\begin{enumerate}
\item singular faces,
\item a trivial L-face,
\item non-trivial L-faces,
\item edges with iso-node end points.
\end{enumerate}

Sketches $(a1)$, $(a2)$ and $(a3)$ of Figure~\ref{image_64_65_66} illustrate some of the special
cases which appear. Sketch $(a1)$ illustrates a labeled cuboid graph and its iso-path, where the graph
contains a singular face. Sketch $(a2)$ illustrates a labeled cuboid graph and its iso-path, where the
graph contains a trivial L-face. Sketch $(a3)$ illustrates a labeled cuboid graph and its iso-path, where
the graph contains two non-trivial L-faces and an edge with iso-nodes as end points of it.
\begin{figure}[!ht]
\includegraphics[width=0.16\linewidth]{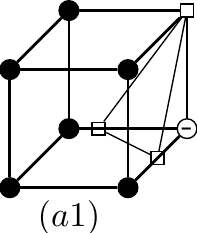}\hspace{2cm}
\includegraphics[width=0.16\linewidth]{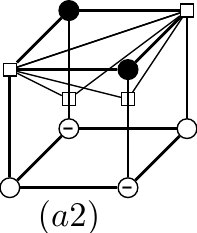}\hspace{2cm}
\includegraphics[width=0.16\linewidth]{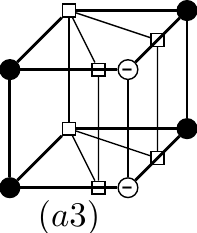}
\caption{Sketches $(a1)$, $(a2)$ and $(a3)$ illustrate labeled cuboid graphs and their iso-paths, where each of the
cuboids has at least one iso-node.}
\label{image_64_65_66}
\end{figure}
\FloatBarrier

\section{Connectedness of Iso-paths}
Let $G(V,E,\mathcal{F})$ be a labeled cuboid graph with iso-level $c\in (0,1)$. Then an iso-line in $G$
can be common for two distinct iso-paths that lie in $G$. In all other cases an iso-line in $G$
can be common for at least two iso-paths, one corresponding to $G$ and one or more corresponding to another
labeled cuboid graph $G'(V',E',\mathcal{F}')$, where $G'$ is a face or edge neighbor of $G$. Iso-lines which
are common for at least two iso-paths are as well common for the corresponding iso-elements. Therefore,
a common iso-line of at least two iso-paths is common for at least two distinct iso-elements. This means
that iso-elements are connected via such iso-lines. We also say that the iso-paths are connected at
the common iso-line. If each edge of an iso-path is common for at least two distinct iso-paths then
we say that the iso-path is connected. Connected iso-paths give rise to connected iso-surfaces. Therefore,
connectedness of iso-paths is a very important property which will be investigated in this section.

Consider a polygonal domain $\Omega\subset\mathbb{R}^3$ having a cuboid partition $\mathcal{T}=\{C_i\}_{i=1}^N$
such that $\overline{\Omega}=\cup_{i=1}^NC_i$. Let $\mathcal{G}=\{G_1,\ldots,G_N\}$ be a set of labeled cuboid
graphs with common iso-level $c\in (0,1)$ and $C_i$ be the cuboid of $G_i$ for $i=1,\ldots,N$. We compute the
iso-paths in each $G_i$ by applying the algorithm given in Section 5.4. Then the following questions arise:
\begin{enumerate}
\item is it possible to show iso-surface connectivity?
\item is it possible to decompose iso-surfaces into components such that each edge of an iso-element
in a component is a common to only two distinct iso-elements in the same component?
\item how to compute discrete mean curvature at iso-points?
\item is it possible to get all local topological information of an iso-surface in a simple way?
\end{enumerate}
The first problem will be answered affirmative in this section and the remaining questions will
receive a positive answer in Sections 7 and 8.\\

\noindent{\bf Note: }When needed we use in special cases $\square$-equivalence for a labeled cuboid graph
with iso-level $c\in (0,1)$ as given in Notation~\ref{not:special}.

\begin{theorem}\label{thm:connect-1}
Let $G(V,E,\mathcal{F})$ be a labeled cuboid graph with iso-level $c\in (0,1)$. Suppose $G$ is regular
and has a trivial L-face. Then $G$ contains exactly two distinct inner iso-paths. Furthermore, $G$ has
only one trivial L-face.
\end{theorem}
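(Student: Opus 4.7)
\emph{Proof plan.} Fix the trivial L-face $F$ and label its vertices cyclically $R, P, S, Q$, so that $R, S$ are the disperse nodes on one face-diagonal and $P, Q$ are the iso-nodes with $\mathcal{F}(P) = \mathcal{F}(Q) = c$ on the other. Let $P'$ and $Q'$ denote the remaining cuboid-neighbors of $P$ and $Q$, i.e.\ the two vertices of the face parallel to $F$ that are joined to $P, Q$ by cuboid edges. The pivotal observation, which drives both claims of the theorem, is that $P'$ and $Q'$ cannot be disperse: if, say, $P'$ were disperse, then $P$ together with its three disperse neighbors $R, S, P'$ would constitute a $T_1$-subgraph in the sense of Definition~\ref{def:iso-path-2}, contradicting the regularity of $G$. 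Hence $P', Q'$ are continuous, which at once caps $D(G)$ at $4$ (only the two top-face vertices distinct from $P', Q'$ may still be disperse) and severely constrains any further L-face structure of $G$.

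To prove that $F$ is the unique trivial L-face, I would rule out every other face individually. Any face sharing an edge with $F$ contains one of the disperse vertices $R, S$; for such a face to be an L-face the diagonally opposite vertex would also have to be disperse, but that vertex is necessarily $P'$ or $Q'$, which is continuous. No edge-adjacent face of $F$ is therefore even an L-face. The only remaining candidate is the face $F^{\mathrm{op}}$ parallel to $F$. A brief case analysis on the orientation of its iso-node diagonal shows that declaring $F^{\mathrm{op}}$ trivial either forces one of $P', Q'$ to become disperse (reproducing the forbidden $T_1$-subgraph) or else realizes $G$ as an $F_1$-graph in the sense of Definition~\ref{def:iso-path-4}, whose isolated iso-path again contradicts regularity.

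For the count of inner iso-paths I would then enumerate the three configurations $D(G) \in \{2, 3, 4\}$ that survive the constraints above. In each case the only L-face of $G$ is $F$ itself, except when $D(G) = 4$, where exactly one additional non-trivial L-face parallel to $F$ appears. Table~\ref{table_1} together with Proposition~\ref{prop:class-2} then prescribes a single $S$-rule application, namely $S_1$ for $D(G) \in \{2, 3\}$ and $S_2$ for $D(G) = 4$. The corresponding $S$-cuboid graph is irreducible (Proposition~\ref{prop:class-5}), so by Theorem~\ref{thm:class-3} it carries a unique inner iso-path; the accompanying rest graph is also irreducible by Proposition~\ref{prop:class-6} and contributes a second inner iso-path. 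Both iso-paths share the diagonal $\overline{PQ}$ of $F$ as a common iso-line, and since this decomposition exhausts the inner iso-paths of $G$ while outer iso-paths by Definition~\ref{def:labeled-graph-6} live only on non-trivial L-faces, there are exactly two inner iso-paths.

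The main obstacle is the case $D(G) = 4$: one has to verify simultaneously that the face parallel to $F$ must be a non-trivial L-face (so that Table~\ref{table_1} selects $S_2$ rather than $S_1$), that the orientation of its iso-node diagonal is not the one already excluded by the uniqueness argument above, and that the iso-paths produced by the $S_2$-rule and the rest graph are genuinely two distinct \emph{inner} iso-paths rather than one inner and one outer on the parallel L-face. This is where the earlier $T_1$- and $F_1$-exclusions pay off, forcing the labels of $P'$ and $Q'$ into a narrow regime compatible with only one admissible diagonal orientation of the parallel face.
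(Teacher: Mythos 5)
Your proposal is correct and takes essentially the same route as the paper: the paper's proof enumerates precisely your three surviving configurations $D(G)\in\{2,3,4\}$ (the classes $[(a)]_{\square}$, $[(b)]_{\square}$, $[(c)]_{\square}$ of Figure~\ref{image_67}, obtained by filling in the labels of your $R',S'$ while your $T_1$-exclusion forces $P',Q'$ continuous) and applies the $S_1^2$-, the $S_1$- and $S_2$-, and the $S_2^2$-rules, respectively, and it rules out your fourth configuration---the $F_1$-graph in which the parallel face is a second trivial L-face---by regularity, exactly as you do. The only cosmetic difference is that you obtain the second inner iso-path from the irreducible rest graph after a single $S$-rule chosen via Table~\ref{table_1}, whereas the paper performs two $S$-rule applications directly; both produce the same pair of iso-paths through the common iso-line $\overline{PQ}$.
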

\begin{proof} It holds that $G\in [(a)]_{\square}$ or $G\in [(b)]_{\square}$ or $G\in [(c)]_{\circ}$,
where the sketches $(a),(b),(c)$ are as given in Figure~\ref{image_67}, since precisely in these cases
$G$ is isolated iso-path free. We then apply the $S_1^2$-rule to $G$ if $G\in [(a)]_{\square}$,
or apply $S_1$- and $S_2$-rules to $G$ if $G\in [(b)]_{\square}$, or apply the $S_2^2$-rule to $G$ if
$G\in [(c)]_{\square}$. We get in all cases two distinct inner iso-paths in $G$. Furthermore, if
$G\in [(d)]_{\square}$ then application of the $F_1$-rule to $G$ transforms $G$ to a labeled cuboid
graph $G'(V,E,\mathcal{F}')\in [(e)]_{\square}$ where the sketches $(d)$ and $(e)$ are as given in
Figure~\ref{image_67}. But $G'$ is a disperse graph and has no L-faces. Hence, $G$ has only one trivial
L-face.
\end{proof}
\begin{figure}[!ht]
\includegraphics[width=0.7\linewidth]{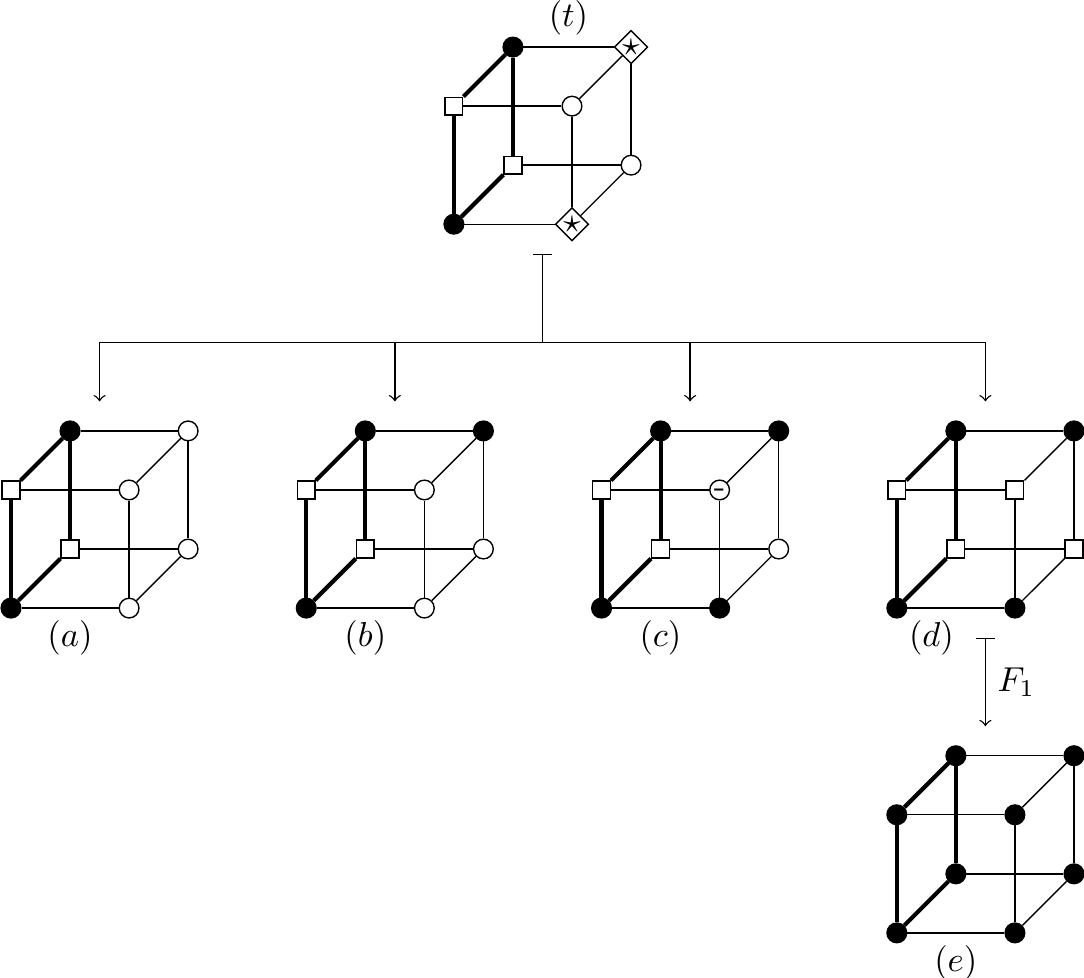}
\caption{Sketch $(t)$ represents $[(t)]_{\square}^{\star}$. Then, by substituting in $(t)$ for the
unknowns of the two nodes marked by the symbol $\diamondsuit\!\!\!\!\star$ the symbols
$\circ$ or $\bullet$, we get four different types of labeled cuboid graphs that lie in
$[(a)]_{\square}$, $[(b)]_{\square}$, $[(c)]_{\square}$ or $[(d)]_{\square}$, respectively.}
\label{image_67}
\end{figure}
\FloatBarrier
\begin{lemma}\label{lemma:connect-1}
Let $G_1(V_1,E_1,\mathcal{F}_1)$ and $G_2(V_2,E_2,\mathcal{F}_2)$ be labeled cuboid graphs with
iso-level $c\in (0,1)$. We assume that both $G_1$ and $G_2$ are face neighbors with a trivial
L-face $H(V_h,E_h,\mathcal{F}_h)$. Furthermore, suppose that $G_1$ and $G_2$ are isolated iso-path
free. Then there exist exactly two distinct inner iso-paths in $G_2$ that pass through the iso-line
in $H$.
\end{lemma}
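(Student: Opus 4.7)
The plan is to reduce to Theorem~\ref{thm:connect-1} applied to $G_2$ and then to trace which iso-lines the two resulting iso-paths actually traverse. First I will observe that the trivial L-face $H$ pins down the relevant iso-points: $H$ has two disperse nodes at one pair of opposite corners of $H$ and two iso-nodes (the continuous nodes of $H$, of value $c$) at the other pair of opposite corners, the disperse pair and the iso-pair each being non-incident in $H$. Every edge of $H$ therefore joins a disperse node to an iso-node, so by formula~\eqref{eq:labeled-graph-12} the iso-point on each such edge coincides with the iso-node endpoint. Hence $H$ carries exactly two iso-points, namely the two iso-nodes, and thus exactly one iso-line of $G_2$ on $H$, namely the diagonal $l$ joining the two iso-nodes.

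Next I will argue that $G_2$ falls into exactly one of the configurations $[(a)]_{\square}$, $[(b)]_{\square}$, $[(c)]_{\square}$ of Figure~\ref{image_67}. Configuration $[(d)]_{\square}$ is ruled out by the isolated-iso-path-free assumption on $G_2$, since, as in the proof of Theorem~\ref{thm:connect-1}, it forces $G_2$ to be (or contain after a local relabelling) an $F_1$-graph on the face opposite to $H$; any remaining singular iso-paths that could obstruct regularity may be eliminated beforehand by $T_1$- and $T_2$-rules, whose defining subgraphs (Definitions~\ref{def:iso-path-2}--\ref{def:iso-path-3}) only modify labels at isolated iso-nodes and hence do not alter the iso-nodes of $H$, those being anchored by the two disperse neighbours within $H$. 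With $G_2$ now regular, Theorem~\ref{thm:connect-1} delivers exactly two distinct inner iso-paths $\omega_1, \omega_2$ of $G_2$.

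The final step is to show that both $\omega_1$ and $\omega_2$ contain the iso-line $l$. In each of the three surviving configurations the proof of Theorem~\ref{thm:connect-1} realises $\omega_1, \omega_2$ by applying $S_1^2$, $S_1{+}S_2$, or $S_2^2$, respectively, and Proposition~\ref{prop:class-3} forces the corresponding $S$-subgraphs to be centred at the two disperse nodes of $H$. By Definitions~\ref{def:iso-path-10}--\ref{def:iso-path-11}, an $S_1$- or $S_2$-subgraph whose disperse node $Q$ sits on $H$ contributes to its iso-path an iso-line on the face $H$, namely the segment joining the two iso-points of $H$ that are incident to $Q$ in $H$. On the trivial L-face $H$ these are precisely the two iso-nodes of $H$, so the generated iso-line is $l$. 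Consequently both $\omega_1$ and $\omega_2$ pass through $l$, and together with Theorem~\ref{thm:connect-1} this gives exactly two such inner iso-paths.

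I expect the main obstacle to be the verification in the mixed case $[(b)]_{\square}$, where an $S_1$- and an $S_2$-rule have to be combined and one must check that the chosen disperse nodes of these $S$-subgraphs really lie on $H$ rather than on the opposite face of the cuboid. Proposition~\ref{prop:class-3} together with the rule selection prescribed by Table~\ref{table_1} should settle this, but it is the least automatic point of the case analysis; once it is in hand, the three steps above combine into a complete proof.
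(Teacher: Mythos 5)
Your overall route coincides with the paper's: reduce to Theorem~\ref{thm:connect-1}, classify $G_2$ among the configurations $[(a)]_{\square}$, $[(b)]_{\square}$, $[(c)]_{\square}$ of Figure~\ref{image_67} with $[(d)]_{\square}$ excluded by isolated-iso-path-freeness, and then trace the iso-paths produced by the $S_1^2$-, $S_1{+}S_2$- or $S_2^2$-rules through the diagonal iso-line $l$. Your first step (the trivial L-face carries exactly two iso-points, namely its two iso-nodes, hence exactly one iso-line) is sound, and your third step is actually more explicit than the paper, which only asserts via the $S$-rule applications that the two inner iso-paths pass through the iso-line of $H$: your observation that the relevant $S$-subgraphs sit at the two disperse nodes of $H$, and that each contributes to its iso-path the segment joining the two iso-points of $H$ incident to its disperse node --- which on a trivial L-face are precisely the two iso-nodes --- correctly fills a detail the paper leaves to its figures.

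The genuine gap is in your preprocessing claim that the $T_1$- and $T_2$-rules ``do not alter the iso-nodes of $H$, those being anchored by the two disperse neighbours within $H$''. This is backwards. A $T_1$-subgraph (Definition~\ref{def:iso-path-2}) is centred at an iso-node whose three incident nodes are all disperse; the iso-nodes of $H$ already have two disperse neighbours inside $H$, so they become $T_1$-centres exactly when their third, off-face neighbour in $G_2$ is disperse --- and ``isolated iso-path free'', being an $F$-graph condition, does not exclude this (nor does it exclude a $T_2$-pair formed by an iso-node of $H$ together with an adjacent off-face iso-node). If it occurs, the $T$-rule relabels that iso-node to a disperse node, $H$ degenerates into a singular face of the transformed $G_2$, no iso-line remains on $H$, and the conclusion fails; so this case cannot be argued away --- it has to be excluded. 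The paper's own proof exhibits the phenomenon on the other side of the face: applying the $T_1$-rule to $G_1$ does convert the iso-nodes of $H$, as labeled in $G_1$, into disperse nodes, transforming the face into $[(a)]_{\square}$ of Figure~\ref{image_68_69}; the lemma survives only because the paper restricts $G_2$ to the class $[(2)]_{\square}^{\star}$, for which $T_1$ leaves $G_2$ unchanged ($G_2'=G_2$), i.e.\ it tacitly assumes that $G_2$ is regular with $H$ surviving as a trivial L-face. Your proof is repaired by replacing the invariance claim with that hypothesis (or with the corresponding restriction to the classes of Figure~\ref{image_68_69}). A smaller inaccuracy in the same step: the four iso-nodes of an $F_1$-graph do not lie on ``the face opposite to $H$'' --- by Definition~\ref{def:iso-path-4} each is incident to exactly one other iso-node, so they form two disjoint edges of the cuboid, two of these nodes being the diagonal iso-nodes of $H$ itself; the exclusion of $[(d)]_{\square}$ still goes through, but not for the reason you give.
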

\begin{proof} We consider two different labelings of $G_1\in [(1)]_{\square}^{\star}$ and
$G_2\in [(2)]_{\square}^{\star}$ as given by the graphical sequences in $A$ and $B$ of
Figure~\ref{image_68_69}. Transform first $G_1$ and $G_2$ to $G'_1(V_1,E_1,\mathcal{F}_1')$ and
$G'_2(V_2,E_2,\mathcal{F}_2')$, respectively, by applying the $T_1$-rule to each of
them. Then we have $G_1'\in [(1')]_{\square}^{\star}$ and $G_2'=G_2$, where sketch $(1')$ is as
shown in Figure~\ref{image_68_69}. Furthermore, the common trivial L-face of graphs $G_1$ and
$G_2$ will be transformed to $H_1(V_h,E_h,\mathcal{F}_h)\in [(a)]_{\square}$ and
to $H_2(V_h,E_h,\mathcal{F}_h)\in [(b)]_{\square}$, respectively. Sketches
$(a)$ and $(b)$ are as shown in Figure~\ref{image_68_69}. Both $G_1'$ and $G_2$ are isolated iso-path
free and, hence, there exists an iso-path in $G_2$. But since we have a trivial L-face in $G_2$,
we have at least two distinct inner iso-paths in $G_2$ that pass through the iso-line in $H$ according
to Theorem~\ref{thm:connect-1}. Application of $S_1^2$- or $S_2^2$-rule or $S_1$- and $S_2$-rules
to $G_2\in [(2)]_{\square}$, where $G_2$ is isolated iso-path free, we get exactly two distinct inner
iso-paths of $G_2$.
\end{proof}
\begin{figure}[!ht]
\noindent $A.$
\begin{tabular}[c]{l}
\includegraphics[width=0.7\linewidth]{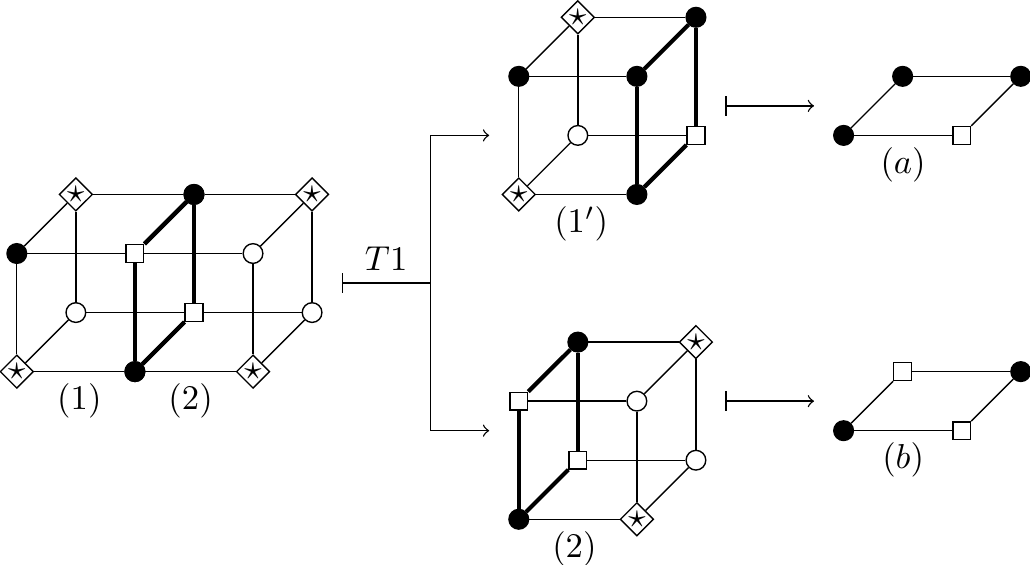}
\end{tabular}
\\ \\

\noindent $B.$
\begin{tabular}[c]{l}
\includegraphics[width=0.7\linewidth]{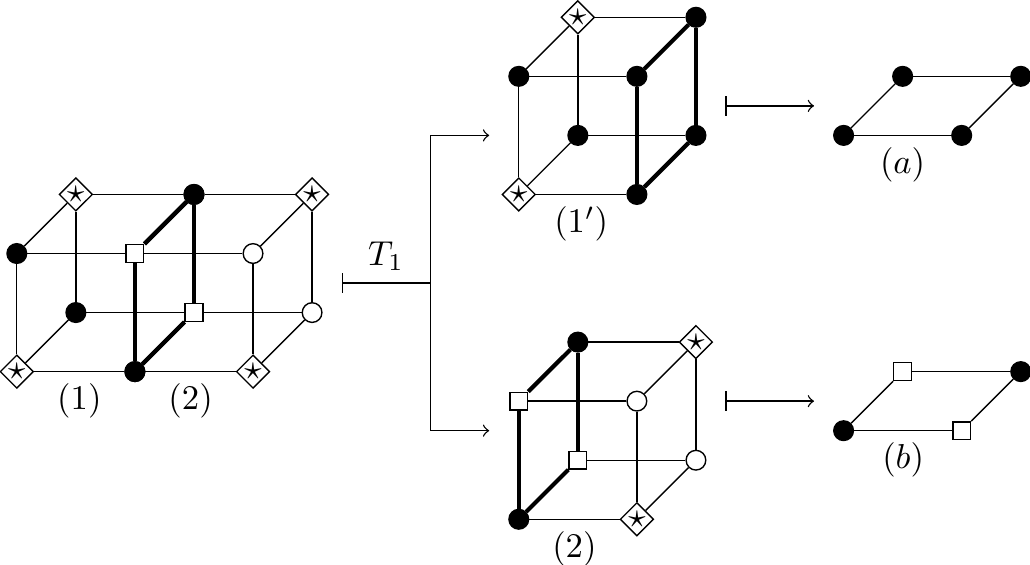}
\end{tabular}
\caption{Application of $T_1$-rule to $G_1\in [(1)]_{\square}^{\star}$ and to $G_2\in [(2)]_{\square}^{\star}$
transforms the common trivial L-face of both graphs to graphs in $[(a)]_{\square}$ and in $[(b)]_{\square}$,
respectively. The sequences of sketches $A$ and $B$ show the complete possible results.}
\label{image_68_69}
\end{figure}
\FloatBarrier
\begin{theorem}\label{thm:connect-2}
Let $G_1(V_1,E_1,\mathcal{F}_1)\in [(1)]_{\square}^{\star}$ and $G_2(V_2,E_2,\mathcal{F}_2)
\in [(2)]_{\square}^{\star}$ with iso-level $c\in (0,1)$ where sketches $(1)$ and $(2)$ are as
shown in Figure~\ref{image_70_71_72}. Both $G_1$ and $G_2$ are face neighbors with a regular
face $H(V_h,E_h,\mathcal{F}_h)$ which is in $[(a)]_{\circ}$ or $[(b)]_{\square}$ or
$[(c)]_{\square}$, where sketches $(a),(b),(c)$ are as shown in Figure~\ref{image_73}.
Then there is only one iso-line in $H$.
\end{theorem}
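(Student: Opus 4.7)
The plan is to reduce the claim to a combinatorial enumeration of the possible regular-face patterns and then to count iso-points on the boundary of $H$. Since, by the definitions in Section~2, a regular face $H$ is neither disperse, nor continuous, nor singular, nor an L-face, and since its four vertices form a $4$-cycle, only three label patterns can occur: (i) one disperse node opposite three continuous nodes, corresponding to $[(a)]_{\circ}$; (ii) two disperse nodes that are incident in $H$, together with two continuous nodes, corresponding to $[(b)]_{\square}$; and (iii) three disperse nodes with a single continuous node which is \emph{not} an iso-node, since otherwise $H$ would be singular, corresponding to $[(c)]_{\square}$. So it suffices to examine each of these three cases.

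First I would recall from Definition~\ref{def:labeled-graph-5} that an iso-point occurs on an edge $e$ only when its endpoint labels $f_0,f_1$ strictly straddle $c$. In particular, an edge whose endpoints are both disperse, or both continuous, carries no iso-point at all---even if a continuous endpoint happens to be an iso-node, because the strict inequality $c<f_1$ (or $c<f_0$) required by Definition~\ref{def:labeled-graph-5} then fails. Hence any iso-point on $\partial H$ must sit on an edge joining a disperse endpoint to a continuous endpoint, and each such edge carries precisely one iso-point (possibly coinciding with an iso-node located at its continuous end).

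The core step is then to count disperse-continuous edges in each of the three cases. In case (i) the lone disperse node has two continuous neighbours in the $4$-cycle; in case (ii) the two incident disperse nodes together bound exactly two disperse-continuous edges across the face; in case (iii) the lone continuous node has exactly two disperse neighbours in the $4$-cycle. In all three cases $H$ therefore carries exactly two iso-points $P_1,P_2$, and by the definition of an iso-line the unique segment $\overline{P_1P_2}$ is the only iso-line on $H$.

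The main obstacle I anticipate is the iso-node bookkeeping in the $\square$-equivalence classes $[(b)]_{\square}$ and $[(c)]_{\square}$: one must verify that a continuous iso-node cannot give rise to a third iso-point and that a disperse-to-iso-node edge still contributes exactly one iso-point (located at the iso-node itself). Both facts follow directly from the strict-inequality clause in Definition~\ref{def:labeled-graph-5}, so once this is made explicit the case analysis goes through without further difficulty.
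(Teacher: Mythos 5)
Your proposal is correct, and it reaches the conclusion by a genuinely more elementary route than the paper. The paper's proof stays inside its graph-operation formalism: it applies the $C$-rules to the specific labelings of $G_1$ and $G_2$ depicted in the graphical sequences $A$, $B$, $C$ of Figure~\ref{image_70_71_72} and reads off, case by case from the pictures, that a single iso-line results on the common face. You instead ignore the ambient graphs $G_1$, $G_2$ altogether and argue directly from Definition~\ref{def:labeled-graph-5}: the clause $f_0\leq c<f_1$ (resp.\ $f_1\leq c<f_0$) forces every iso-point to lie on a disperse--continuous edge and gives exactly one iso-point per such edge, while disperse--disperse and continuous--continuous edges (even with an iso-node endpoint) carry none; your enumeration of the three regular-face patterns (one disperse node; two incident disperse nodes; three disperse nodes with a non-iso continuous node, singularity being excluded) then yields exactly two disperse--continuous edges on the $4$-cycle in every case, hence exactly two iso-points and a unique iso-line. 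What your approach buys is self-containedness and slightly greater generality: the conclusion is shown to depend only on $H$ being a regular face, not on the particular $\square\!-\!\star$-classes of the neighbors, and it even covers the sub-case of two incident disperse nodes with both continuous nodes iso-nodes, which Figure~\ref{image_73} excludes (that configuration is the one treated separately in Lemma~\ref{lemma:connect-2}, where the single iso-line carries two iso-paths --- consistent with, not contrary to, your count). What the paper's route buys is integration with its rule machinery: the same figures document how the iso-line is actually computed algorithmically, and your key counting fact is in effect already asserted inside the paper's proof of Proposition~\ref{prop:iso-path-2} ("a regular face has only two iso-points"), which your argument can be read as finally proving in full from the interpolation definition. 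Your handling of the distinctness of the two iso-points is implicitly sound (the two disperse--continuous edges meet, if at all, in a disperse node in case (i) and in a non-iso continuous node in case (iii), at which no iso-point can sit), though stating that one line explicitly would make the uniqueness of the segment $\overline{P_1P_2}$ airtight.
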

\begin{proof} Apply the $C$-rules to $H$. We consider in Figure \ref{image_70_71_72} different
labelings of $G_1$ and $G_2$ which give different types of regular faces as a face neighbor
of both graphs. In all these cases as shown by the graphical sequences $A$, $B$ and $C$ in
Figure~\ref{image_70_71_72} we get, by applying the $C$-rules to each of the regular faces,
a single iso-line.
\end{proof}
\begin{figure}[!ht]
\noindent $A.$
\begin{tabular}[c]{l}
\includegraphics[width=0.8\linewidth]{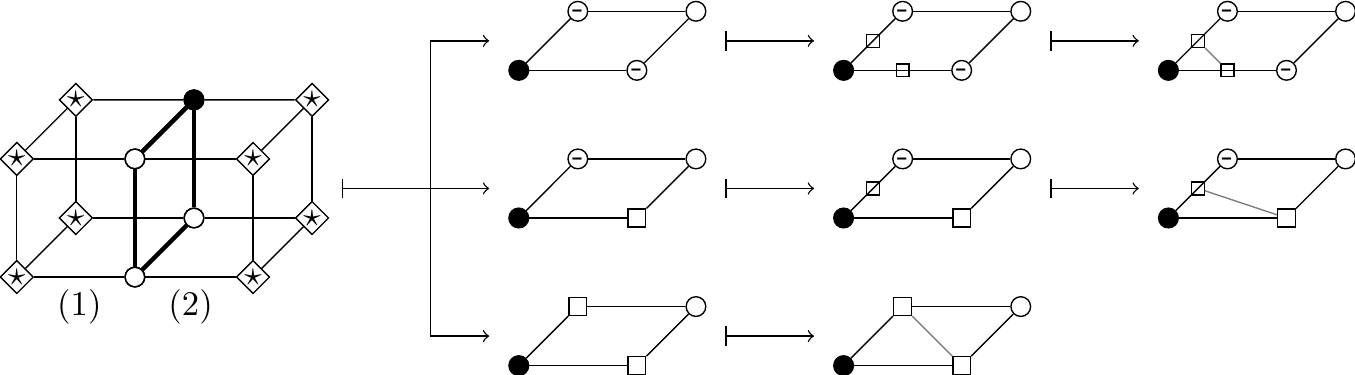}
\end{tabular}
\\ \\

\noindent $B.$
\begin{tabular}[c]{l}
\includegraphics[width=0.8\linewidth]{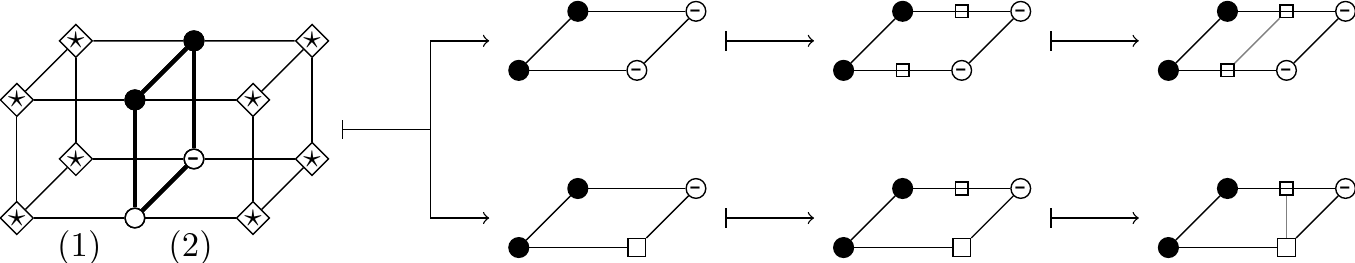}
\end{tabular}
\\ \\

\noindent $C.$
\begin{tabular}[c]{l}
\includegraphics[width=0.8\linewidth]{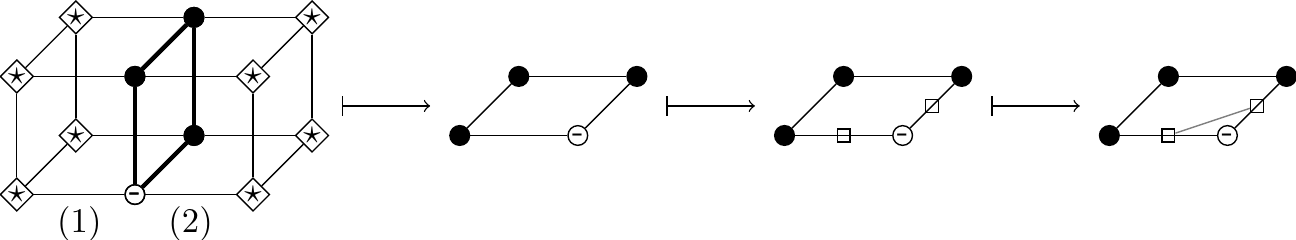}
\end{tabular}
\caption{The sequences $A$, $B$, and $C$ show that for a common regular face of
$G_1\in [(1)]_{\square}^{\star}$ and $G_2\in [(2)]_{\square}^{\star}$ we get an iso-line
on the common face on $G_1$ as well as on $G_2$ as shown in the last sketch of each sequence.}
\label{image_70_71_72}
\end{figure}
\begin{figure}[!ht]
\includegraphics[width=0.7\linewidth]{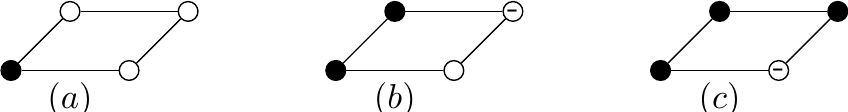}
\caption{Regular faces of one, two and three disperse nodes. Each of the regular faces of
two and three disperse nodes has at least one non-iso-node.}
\label{image_73}
\end{figure}
\FloatBarrier

\begin{lemma}\label{lemma:connect-2}
Let $G(V,E,\mathcal{F})$ be a labeled cuboid graph with iso-level $c\in (0,1)$. Let $G\in [(a)]_{\square}$,
where the sketch $(a)$ is as shown in Figure~\ref{image_74}. Then $G$ contains exactly two distinct
inner iso-paths. Both iso-paths pass through the edge $e$ of the graph $G$.
\begin{figure}[!ht]
\includegraphics[width=0.15\linewidth]{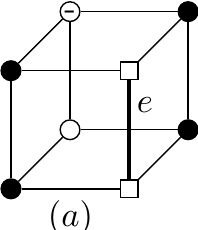}
\caption{Sketch $(a)$ illustrates $G(V,E,\mathcal{F})\in [(a)]_{\square}$. The edge $e$ of $G$ has
two iso-node end points.}
\label{image_74}
\end{figure}
\FloatBarrier
\end{lemma}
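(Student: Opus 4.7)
The plan is to exploit the fact that both endpoints of the edge $e$ in Figure~\ref{image_74} are iso-nodes: $e$ itself is therefore automatically an iso-line of $G$, and it lies on the two faces of the cuboid adjacent to $e$. The labeling pattern of $[(a)]_{\square}$ forces both of these adjacent faces to be non-trivial L-faces sharing $e$ as their common edge, which is the configuration that will produce two inner iso-paths meeting at $e$.

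First I would verify that $G$ is regular by ruling out $T_1$-, $T_2$-subgraphs and $F_1$-, $F_2$-graphs, so that no singular iso-path or isolated iso-path is present and no $T$- or $F$-rule applies. Next I would exhibit a basic positive subgraph of $G$ belonging to $[g_1]_{\circ}$, $[g_2]_{\circ}$ or $[g_3]_{\square}$ (depending on whether $D(G)\in\{3,4,5\}$ for the admissible labelings consistent with sketch $(a)$) by invoking Theorem~\ref{thm:class-1}; together with the presence of two L-faces this places $G$ in the reducible class.

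Then I would read off from Table~\ref{table_1} the appropriate $S$-rule for the values of $D(G)$ and $|L(G)|$ occurring in sketch $(a)$ and apply it once. This produces one inner iso-path of $G$ and leaves an irreducible rest graph $G'$ whose unique iso-path is guaranteed by Theorem~\ref{thm:class-3}, so $G$ has exactly two inner iso-paths in total. By Proposition~\ref{prop:class-8}, the first iso-path passes through an iso-line on one of the two L-faces adjacent to $e$; because on each such L-face the only iso-line connecting the two iso-nodes is $e$, the first iso-path must contain $e$. The unique iso-path of $G'$, obtained by the analogous analysis on the second L-face (which is still present in $G'$ since the $S$-rule only altered nodes not on $e$), must likewise cross $e$.

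The main obstacle will be the case distinction imposed by the admissible arrangements of the six node labels outside $e$ that are consistent with $[(a)]_{\square}$: in each arrangement one has to confirm that the two L-faces adjacent to $e$ force the $S$-rule selected from Table~\ref{table_1} to produce $e$ itself as the chosen iso-line (rather than, say, a face diagonal), and to verify that the two iso-paths so obtained are truly distinct simple closed paths rather than coinciding or merging into a single larger one.
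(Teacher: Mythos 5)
Your proposal goes wrong at its central structural claim, and the error is not repairable as stated. You assert that the labeling pattern of $[(a)]_{\square}$ forces the two faces of the cuboid containing $e$ to be non-trivial L-faces sharing $e$. This is impossible whenever both endpoints of $e$ are iso-nodes: those two endpoints are then the two continuous nodes of any face containing $e$, so the remaining two nodes of such a face form the edge opposite to $e$ and are incident --- but an L-face requires its two disperse nodes to be \emph{non}-incident. In the class $[(a)]_{\square}$ the situation is exactly the other way around: the applicability of the $S_2^2$-rule (which is the paper's entire proof) forces $D(G)=4$, with the four disperse nodes forming the two edges parallel to $e$ that each share a face with $e$, while the edge sharing no face with $e$ carries the two continuous non-iso-nodes. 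The two faces containing $e$ are then \emph{regular} faces (two incident disperse nodes, two iso-nodes) whose single iso-line happens to be $e$ itself, and the L-faces of $G$ are the two parallel faces perpendicular to $e$, each containing only \emph{one} endpoint of $e$. Consequently your appeal to Proposition~\ref{prop:class-8} on "the two L-faces adjacent to $e$", and the claim that "on each such L-face the only iso-line connecting the two iso-nodes is $e$", are vacuous: $e$ lies on no L-face, and no L-face contains both iso-nodes. Your hedge $D(G)\in\{3,4,5\}$ is also off, since a $\square$-equivalence class fixes the number of disperse nodes; here $D(G)=4$.

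The paper's proof is one line: apply the $S_2^2$-rule to $G$. Each of the two $S_2$-subgraphs --- one around each of the two disperse edges --- yields a quadrilateral inner iso-path whose four iso-points are the two iso-nodes (the endpoints of $e$) together with two interior iso-points, the two iso-nodes being consecutive and joined precisely by the iso-line $e$; the two paths are distinct because their interior iso-points differ, and after both applications of the $S_2$-rule the graph is continuous, so no further inner iso-path exists. Your surrounding skeleton (regularity check, reducibility via $L(G)\neq\emptyset$, rule selection from Table~\ref{table_1}, one $S$-rule followed by the unique iso-path of the irreducible rest graph via Theorem~\ref{thm:class-3}) is compatible with this machinery and would deliver the count "exactly two", but the mechanism you give for why both iso-paths pass through $e$ fails and must be replaced by the observation that both iso-nodes belong to \emph{each} $S_2$-subgraph and appear as consecutive vertices of each resulting iso-path.
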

\begin{proof}
Apply the $S_2^2$-rule to $G$.
\end{proof}

\begin{proposition}
Let $G(V,E,\mathcal{F})$ be a labeled cuboid graph with iso-level $c\in (0,1)$. Suppose $G$ is regular.
Let $G$ have two iso-points which are iso-nodes that lie on the same edge of a cuboid of $G$. Then the
maximum number of distinct iso-paths that pass through both iso-nodes is two.
\label{prop:connect-1}
\end{proposition}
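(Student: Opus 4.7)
The plan is to prove the bound by a case analysis on the $\square$-equivalence class of $G$, exploiting the strong restrictions imposed by the simultaneous presence of two adjacent iso-nodes $P_1,P_2$ (joined by the cuboid edge $e$) and the regularity of $G$. Let $F_1,F_2$ denote the two cuboid faces containing $e$, and $F_3$ (resp.\ $F_3'$) denote the third face containing $P_1$ (resp.\ $P_2$). Note that $e$ itself is an iso-line, lying simultaneously on $F_1$ and on $F_2$.

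First I would show that every iso-path $\omega$ through both $P_1$ and $P_2$ contains $e$ as one of its iso-lines. Suppose not: then in the cyclic order of $\omega$ the iso-nodes $P_1,P_2$ are non-adjacent, so $\omega$ uses four distinct iso-lines $l_1,l_1'$ at $P_1$ and $l_2,l_2'$ at $P_2$, none equal to $e$. Each of $l_1,l_1'$ lies on one of the three faces containing $P_1$, and the only iso-points on those faces distinct from $P_1$ that can serve as other endpoints of such iso-lines are (i) $P_2$ itself on $F_1$ or $F_2$ (excluded) or (ii) iso-points on the remaining edges/faces. A short inspection using Definition~\ref{def:labeled-graph-5}, together with the absence of $T$- and $F$-subgraphs (regularity of $G$), shows that the pairings required at $P_1$ and at $P_2$ force the closed polygon $\omega$ either to cross itself or to enclose an iso-element with no consistent centre, contradicting its being a simple iso-path. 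Hence $P_1,P_2$ must be adjacent in $\omega$, i.e.\ $e\in\omega$.

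Second, I would bound the number of iso-paths containing $e$ by $2$. The iso-line $e$ lies on $F_1$ and on $F_2$, and any iso-path containing $e$ traverses $e$ as an iso-line of one of these two faces. By Proposition~\ref{prop:class-7}, if $F_1$ is a regular face then only one iso-path of $G$ passes through its iso-line; the analogous statement holds for $F_2$. When $F_1$ or $F_2$ is instead an L-face, a singular face, or a trivial L-face, Theorems~\ref{thm:class-4},~\ref{thm:class-5} and Lemma~\ref{lemma:connect-1} still yield at most one inner iso-path through the iso-line on that face. Summing over the two faces $F_1,F_2$ sharing $e$, one obtains at most two iso-paths passing through $e$, and therefore (by the first step) at most two iso-paths through both $P_1$ and $P_2$.

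The main obstacle is the first step: ruling out an iso-path that visits $P_1$ and $P_2$ as non-adjacent vertices. This requires a careful enumeration, up to $\square$-equivalence, of all regular configurations in which $P_1,P_2$ are iso-nodes on a common edge — using the restrictions imposed by the absence of $T_1$-, $T_2$-, $F_1$-, and $F_2$-subgraphs — and verifying in each admissible class (of which the extremal representative is the class $[(a)]_{\square}$ of Figure~\ref{image_74} treated in Lemma~\ref{lemma:connect-2}) that the iso-paths produced by the algorithm of Section~5.4 satisfy the adjacency property. With this case analysis in hand, the upper bound of two is sharp and is attained precisely in the configuration of Lemma~\ref{lemma:connect-2}.
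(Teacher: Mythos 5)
Your step~2 fails precisely in the extremal configuration, and this breaks the proof. Take $G\in[(a)]_{\square}$ from Figure~\ref{image_74}, the class treated in Lemma~\ref{lemma:connect-2}: there the two faces $F_1,F_2$ containing $e$ each consist of the two adjacent iso-nodes plus a pair of adjacent disperse nodes, so both are \emph{regular} faces whose unique iso-line is $e$ itself. If Proposition~\ref{prop:class-7} were applicable to $F_1$ in the way you use it, it would say that only \emph{one} iso-path of $G$ passes through $e$ --- flatly contradicting the two iso-paths guaranteed by Lemma~\ref{lemma:connect-2}, i.e.\ contradicting the sharpness you yourself assert in your last sentence. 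The resolution is that none of the hypotheses (a)--(c) of Proposition~\ref{prop:class-7} holds in this case: $G$ is reducible, it has non-trivial L-faces, and it contains a subgraph in $[\hat{g}_2]_{\circ}$ (the two adjacent iso-nodes together with their four disperse neighbours). So the per-face bound is unavailable exactly where it matters. Moreover, since the single iso-line $e$ lies on $F_1$ and $F_2$ simultaneously, there is no well-defined assignment of an iso-path through $e$ to ``the face on which it traverses $e$'', so summing a bound over the two faces is not a legitimate count even where Proposition~\ref{prop:class-7} does apply. Your taxonomy of $F_1,F_2$ is also off: a face containing two \emph{adjacent} iso-nodes can never be an L-face (the two continuous nodes of an L-face occupy a face diagonal), nor a singular face, nor a trivial L-face; hence Theorems~\ref{thm:class-4},~\ref{thm:class-5} and Lemma~\ref{lemma:connect-1} do not pertain to these faces at all.

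Step~1 is, in addition, only asserted rather than proved: the claim that a non-adjacent visit of $P_1,P_2$ ``forces the closed polygon to cross itself or to enclose an iso-element with no consistent centre'' is not an argument expressible in the paper's framework, and the promised enumeration of admissible $\square$-classes is exactly the part that carries the mathematical content --- it is deferred, not done. Note that the paper's own proof avoids both of your steps: it observes that among regular graphs with two iso-nodes on a common edge the maximum is attained only for the class $[(a)]_{\square}$ of Figure~\ref{image_74}, where the $S_2^2$-rule produces exactly two inner iso-paths (Lemma~\ref{lemma:connect-2}); no adjacency lemma and no per-face counting are needed. To repair your route you would have to actually carry out the enumeration you defer in step~1 and replace step~2 by a bound that survives the $[\hat{g}_2]_{\circ}$ configuration --- at which point you would essentially have reconstructed the paper's classification argument.
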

\begin{proof}
According to Lemma~\ref{lemma:connect-2} any graph in $[(a)]_{\square}$, where sketch $(a)$ is as shown
in Figure~\ref{image_74}, has two distinct iso-paths that pass through both iso-nodes. The maximum
number of inner iso-paths that passes through two iso-points which are iso-nodes and are end points of
an edge of $G$ is attained only if $G\in[(a)]_{\square}$.
\end{proof}
\FloatBarrier

\begin{definition}(System of cuboids and system of labeled cuboid graphs at an edge). Let
$C_1$ be a cuboid with vertices $P_1,\ldots,P_8$. Let $e=\overline{P_1P_2}$ be an edge of $C_1$.
Then let $C_2,C_3,C_4$ be distinct cuboids such that  $C_i$ for $i=1,\ldots,4$ have the edge $e$
in common and if $C_i\cap C_j\nsubseteq e$ for $i\neq j$ then $C_i$ and $C_j$ have a common face.
Then we say $C_1,\ldots,C_4$ are the system of cuboids with common edge $e$. Furthermore, let
$G_i(V_i,E_i,\mathcal{F}_i)$ for $i=1,\ldots,4$ be labeled cuboid graphs with a common iso-level
$c\in (0,1)$ and be singular iso-path free and isolated iso-path free. In addition, let $C_i$ be
the cuboid of $G_i$ for $i=1,\ldots,4$. Then we say $G_1,\ldots,G_4$ are the system of labeled
cuboid graphs with common edge $e$.
\label{def:system-cuboid-1}
\end{definition}

The following lemma is a direct geometrical consequence of Definition~\ref{def:system-cuboid-1}.
\begin{lemma}
Let $C_i$ and $G_i(V_i,E_i,\mathcal{F}_i)$ for $i=1,\ldots,4$ be a system of cuboids and a system
of labeled cuboid graphs, respectively, corresponding to the common edge $e$ such that $C_i$ is a
cuboid of $G_i$. Then to each cuboid $C\in\{C_1,\ldots,C_4\}$ there exist two distinct cuboids
$C_i$ and $C_j$ in $\{C_1,\ldots,C_4\}$ such that $C$ and $C_i$ as well as $C$ and $C_j$ have a
common face. Analogously, to each labeled cuboid graph $G\in\{G_1,\ldots,G_4\}$ there exist two
distinct graphs $G_i$ and $G_j$ in $\{G_1,\ldots,G_4\}$ such that $G$ and $G_i$ as well as $G$
and $G_j$ are face neighbored.
\label{lem:system-cuboid-1}
\end{lemma}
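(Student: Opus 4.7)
The plan is to prove the statement by a direct geometric argument based on how four cuboids can fit around a common edge, then transfer the conclusion to the labeled cuboid graphs via their underlying cuboids.

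First I would set up coordinates adapted to the common edge $e=\overline{P_1P_2}$. Without loss of generality, I would align $e$ with the $z$-axis, so that the plane perpendicular to $e$ at an interior point decomposes $\mathbb{R}^3\setminus e$ locally into four open quadrants (north-east, north-west, south-west, south-east). Since each $C_k$ is an axis-aligned cuboid containing $e$ as one of its edges, a small neighborhood of $e$ inside $C_k$ lies entirely in exactly one of these four quadrants. The four cuboids $C_1,\dots,C_4$ are distinct and share the edge $e$, so they must occupy four distinct quadrants; by pigeonhole they occupy all of them.

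Next I would analyze, for a fixed $C\in\{C_1,\dots,C_4\}$, the intersection $C\cap C_k$ for each of the three remaining cuboids. If $C'$ occupies a quadrant sharing a half-plane bounded by $e$ with the quadrant of $C$ (i.e.\ an \emph{adjacent} quadrant), then $C\cap C'$ contains that entire rectangular strip bounded by $e$ and, being a common face of two axis-aligned cuboids, is an honest face. By Definition~\ref{def:system-cuboid-1}, whenever $C\cap C'\nsubseteq e$ the intersection is a common face, and indeed in this adjacent case $C\cap C'$ strictly contains $e$. On the other hand, if $C'$ occupies the \emph{diagonally opposite} quadrant, its interior is disjoint from $C$'s quadrant, so $C\cap C'=e$. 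Hence exactly two of the three other cuboids share a face with $C$ and the fourth shares only the edge $e$. This gives the $C_i, C_j$ required by the lemma.

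For the labeled graph statement, I would simply invoke the definition of face-neighbored labeled cuboid graphs (Section~2.1): two labeled cuboid graphs $G(V,E,\mathcal{F})$ and $G'(V',E',\mathcal{F}')$ are face-neighbors iff the underlying cuboids intersect in a common face. Since the $G_i$ in Definition~\ref{def:system-cuboid-1} are declared to have $C_i$ as their cuboid, the face-neighbor relation among the $G_i$ is exactly the common-face relation among the $C_i$, so the claim for graphs follows immediately from the claim for cuboids.

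The only potential obstacle is a rigorous handling of the quadrant decomposition around $e$, in particular justifying that each $C_k$ really lies in a unique quadrant and that adjacent-quadrant cuboids meet in a nondegenerate rectangle rather than a lower-dimensional set; this is routine once coordinates are fixed, so I would keep that verification short, possibly illustrated by a small sketch of the cross-section perpendicular to $e$.
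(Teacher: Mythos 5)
Your proof is correct in substance, and it is worth noting that the paper offers no proof at all: Lemma~\ref{lem:system-cuboid-1} appears there with only the remark that it is ``a direct geometrical consequence of Definition~\ref{def:system-cuboid-1}''. Your quadrant decomposition perpendicular to $e$ is exactly the right formalization of that remark. Since each $C_k$ is axis-aligned and has $e$ as a \emph{full} edge (so all four share the same extent along $e$), each one occupies precisely one of the four quadrants around $e$; an adjacent-quadrant pair intersects in a two-dimensional strip strictly containing $e$, so the condition of Definition~\ref{def:system-cuboid-1} (intersection $\nsubseteq e$ implies common face) forces a common face, while the diagonally opposite pair meets exactly in $e$. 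The transfer to the graphs via the face-neighbor definition of Section~2.1 is immediate, as you say. One small repair is needed in your first paragraph: you claim that distinctness of the $C_k$ alone forces them into four distinct quadrants, which is not true as stated --- two distinct cuboids of different lateral extents can occupy the \emph{same} quadrant while both having $e$ as an edge. What actually rules this out is, once more, the intersection condition of Definition~\ref{def:system-cuboid-1}: a same-quadrant pair would intersect in a nondegenerate three-dimensional cuboid, which is neither contained in $e$ nor a common face of the two. With that one line inserted, the pigeonhole step is sound, each $C$ then has exactly two adjacent-quadrant (face-)neighbors and one opposite-quadrant (edge-only) neighbor, and the rest of your argument goes through.
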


\begin{theorem}
Let $C_i$ and $G_i(V_i,E_i,\mathcal{F}_i)$ for $i=1,\ldots,4$ be a system of cuboids and a system of
labeled cuboid graphs, respectively, corresponding to the common edge $e$ such that $C_i$ is a
cuboid of $G_i$. Let $e=\overline{P_1P_2}$ and assume that the nodes of $G_1$ corresponding to
the end points of $e$ are iso-nodes ($\mathcal{F}_1(P_1)=\mathcal{F}_1(P_2)=c$), where $c\in (0,1)$
is the common iso-level of $G_i$ for $i=1,\ldots,4$. Let the total number of iso-paths
that pass through $e$ be denoted by $N(e)$. Then $N(e)\in\{0,2m\}$, where
$m\leq 4$. If $m \geq 1$, denote the iso-paths by $\omega_1,\ldots,\omega_{2m}$. Additionally,
denote by $G_{\omega_i}$ the labeled cuboid graph corresponding to $\omega_i$ and by $C_{\omega_i}$
the cuboid of $G_{\omega_i}$ for $i=1,\ldots,2m$. If $m \geq 1$ then to any $\omega\in\{\omega_1,
\ldots,\omega_{2m}\}$ there exists $\omega'\in\{\omega_1,\ldots,\omega_{2m}\}$,
$\omega'\neq\omega$, such that one of the following holds:
\begin{itemize}
\item[(i)] $G_{\omega}$ and $G_{\omega'}$ are face neighbored and have two common
disperse nodes,
\item[(ii)] if (i) does not hold then $m\leq 3$ and there exists a unique $l\in\{1,\ldots,4-m\}$ which
depends on $\omega$ and $\omega'$ such that $G_{\omega}$ is face neighbored to $G_{r_1}$ and
$G_{r_l}$ is face neighbored to $G_{\omega'}$, where $\{G_{r_1},G_{r_l}\}\subset\{G_1,
\ldots,G_4\}$. Furthermore, if $l>1$ then $G_{r_i}$ is face neighbored to $G_{r_{i+1}}$
for $i=1,\ldots,l-1$. In addition, each $G_{r_i}$ for $i=1,\ldots,l$ has at least seven disperse
nodes and, hence, they have no iso-path that passes through $e$. Note that for $l=1$ it holds
$G_{r_1}=G_{r_l}$.
\end{itemize}
Then we say the pair $\omega$ and $\omega'$ are disperse connected with respect to the common
iso-line $e$. Furthermore, this property is unique and, hence, there is no other iso-paths which
are disperse connected either to $\omega$ or $\omega'$ with respect to $e$.\\

\noindent Note that all graphs and cuboids stated above are in the system of labeled cuboid
graphs and in the system of cuboids corresponding to the edge $e$, respectively.
\label{thm:connect-3}
\end{theorem}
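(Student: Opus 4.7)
\emph{Setup.} Because the four cuboids $C_1,\ldots,C_4$ all contain $e=\overline{P_1P_2}$, they share the vertices $P_1,P_2$, and the consistency of the labeling on common vertices forces $\mathcal{F}_i(P_1)=\mathcal{F}_i(P_2)=c$ for every $i=1,\ldots,4$. Thus $P_1,P_2$ are iso-nodes of every $G_i$, and $e$ is an iso-line of every $G_i$. By Proposition~\ref{prop:connect-1} each $G_i$ contributes at most two iso-paths through $e$, so $N(e)\le 8$; the maximum contribution of two is attained precisely when $G_i$ lies in the class $[(a)]_{\square}$ of Figure~\ref{image_74} and Lemma~\ref{lemma:connect-2}. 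The plan is to exhibit an involution $\omega\mapsto\omega'$ on the iso-paths through $e$ that always falls under case~(i) or case~(ii), which will immediately give $N(e)=2m$ with $m\le 4$.

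\emph{Defining the pair.} Given an iso-path $\omega$ in some $G_\omega$ that uses $e$, the two iso-lines of $\omega$ incident to $e$ lie on the two faces of $C_\omega$ containing $e$. I would pick one of these two iso-lines, say $l$, on the face $F$ common to $G_\omega$ and a unique face-neighbor $G_j$ in the system. Since the node labels along $F$ agree in $G_\omega$ and $G_j$, the iso-line $l$ is an iso-line of $G_j$ as well, and by Propositions~\ref{prop:class-7} and~\ref{prop:class-8} it lies on a well-defined iso-path $\omega^*$ of $G_j$. If $\omega^*$ also passes through $e$, I would set $\omega':=\omega^*$; then $G_\omega$ and $G_{\omega'}$ are face-neighbored and, by Notation~\ref{note:corresponding-node} together with the disperse-node pattern forced on the common face by the presence of the iso-line, the two non-$P_1,P_2$ vertices of $F$ are disperse in both graphs, which is exactly case~(i).

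\emph{Chain argument.} If $\omega^*$ does not pass through $e$, then $G_j$ contains no iso-path through $e$. An exhaustive check using Theorem~\ref{thm:class-1} and Table~\ref{table_1} shows that, given that $P_1,P_2$ are iso-nodes of $G_j$, this can happen only when $G_j$ is strongly disperse; specifically the structure around $e$ in $G_j$ must fit a $T_2$-reducible configuration, which forces almost all of the remaining vertices to be disperse. I would then iterate the construction across the other face of $G_j$ containing $e$, building a chain $G_\omega,G_{r_1},\ldots,G_{r_l},G_{\omega'}$ whose successive graphs are face-neighbored along the cyclic arrangement of the four cuboids around $e$. Since there are only four cuboids in the system and at most $4-m$ of them can fail to contribute iso-paths through $e$, the chain must terminate at an iso-path-carrying graph $G_{\omega'}$ with $l\le 4-m$; because $l\ge 1$ in this case, one also obtains $m\le 3$, which is case~(ii).

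\emph{Involution and main obstacle.} Starting the same construction from $\omega'$, across the face between $G_{\omega'}$ and $G_{r_l}$, retraces the chain in reverse and recovers $\omega$, so the pairing is a well-defined, fixed-point-free involution on the iso-paths through $e$; hence $N(e)=2m$ with $m\le 4$. The principal obstacle is the labeling enumeration: one has to classify, for each admissible distribution of disperse, continuous and iso-node labels on the six vertices of $C_i$ not belonging to $e$, how many iso-paths through $e$ the graph $G_i$ carries and how those iso-paths transition across a shared face. The most delicate point is showing that whenever a face-neighbor $G_j$ fails to carry an iso-path through $e$, it must fit the disperse-heavy configuration whose $T_2$-transform leaves $e$ unused; this is exactly what is needed to continue the chain in case~(ii) and to ensure that the involution closes without producing a fixed point or an unpaired iso-path.
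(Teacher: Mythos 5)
Your overall skeleton --- bounding $N(e)\le 8$ via Proposition~\ref{prop:connect-1}, pairing iso-paths at $e$ to obtain evenness, and walking around the four cuboids with $T_2$-type arguments to bridge graphs that carry no iso-path through $e$ --- matches the paper's proof in spirit, but two of your key steps fail. First, your setup claim that label consistency forces $\mathcal{F}_i(P_1)=\mathcal{F}_i(P_2)=c$ for \emph{all} four graphs is wrong in this framework: by Definition~\ref{def:system-cuboid-1} each graph of the system is already singular iso-path free and isolated iso-path free, i.e.\ the $T$- and $F$-rules have been applied graph by graph, so two graphs of the system may disagree on the labels of shared vertices (the paper notes this explicitly in Section~5.3); that is exactly why the theorem assumes iso-node end points of $e$ only for $G_1$. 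Under your consistency assumption, case~(ii) becomes impossible --- a graph in which $P_1,P_2$ are iso-nodes has at most six disperse nodes, never seven --- and your chain step is self-contradictory: a neighbor with $P_1,P_2$ iso-nodes and the disperse-heavy remainder you invoke contains a $T_2$-subgraph, hence a singular iso-path, contradicting the standing freeness assumption rather than merely ``having no iso-path through $e$''. The paper's Step~2 avoids this precisely because the intermediate graphs $G_{r_i}$ carry labelings in which $P_1,P_2$ are already disperse ($G=T_2(G)$ there).

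Second, your pairing mechanism is not the relation the theorem asserts. Disperse connectivity at $e$ is governed by the disperse nodes corresponding to $e$ in the sense of Notation~\ref{note:corresponding-node}: the partner of $\omega$ sits across the face of $C_\omega$ that contains $e$ and whose other two vertices are disperse, and on such a face ($P_1,P_2$ iso-nodes plus two disperse nodes) the \emph{only} iso-line is $e$ itself, so no iso-line of $\omega$ adjacent to $e$ can lie on it. Your construction crosses via an adjacent iso-line $l\neq e$, i.e.\ through the wrong face, so your case-(i) conclusion that the crossing face supplies ``two common disperse nodes'' cannot hold for that face. Moreover, your involution depends on an arbitrary choice between the two iso-lines of $\omega$ incident to $e$, so it is not well defined as stated, and it cannot produce the configuration treated in the paper's proof in which a single regular graph with $D(G_\omega)=4$ carries \emph{both} iso-paths through $e$ (Lemma~\ref{lemma:connect-2}, Proposition~\ref{prop:connect-1}) and these two are disperse connected to each other through a chain $G_p,G_s,G_q$ of graphs each having at least seven disperse nodes. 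Finally, the theorem's uniqueness claim --- that no third iso-path is disperse connected to $\omega$ or $\omega'$ with respect to $e$ --- is nowhere argued in your proposal; the paper devotes a separate contradiction argument (Part~1 of its proof, again via the $T_2$-rule) to it, and ``retracing the chain in reverse'' does not substitute for that.
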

\begin{proof}
The proof will be given using the following two parts.\\

\noindent{\bf 1.} Uniqueness of disperse connectivity of $\omega$ and $\omega'$ with respect to $e$.\\

If this is not the case, then either $G_{\omega}$ or $G_{\omega'}$ has an additional disperse node
and there exists a graph $G\in\{G_1,\ldots,G_4\}$, where $G\nsubseteq\{G_{r_1},G_{r_l}\}$ such that
$G$ is either face neighbored to $G_{\omega}$ or to $G_{\omega'}$ and $G$ is disperse or $G$ contains
no iso-path that passes through $e$. If $G$ is not disperse but has no iso-path that passes through $e$
then $G$ contains seven disperse nodes (this follows from the application of $T_2$-rule to $G$). But
in this case the face neighbored graph to $G$, which is $G_{\omega}$ or $G_{\omega'}$, has
two additional disperse nodes in common with $G$. Then either $G_{\omega}$ or $G_{\omega'}$ has at
least five disperse nodes. But then application of $T_2$-rule to the graph with at least five disperse
nodes leads to a labeled cuboid graph with seven disperse nodes. Then the graph will have no iso-path that
passes through $e$. But this is a contradiction to the regularity of $G_{\omega}$ and $G_{\omega'}$.
This proves the claim for the unique disperse connectivity of $\omega$ and $\omega'$.\\

\noindent{\bf 2.} Proof validity of cases (i) and (ii).\\

Let us assume that an iso-path $\omega$ exists in $G_{\omega}$ such that the iso-path passes
through $e$. Then $G_{\omega}$ is regular and, hence, $G_{\omega}$ has at least two disperse nodes.
From Lemma~\ref{lem:system-cuboid-1} we know that it is possible to rename the graphs $G_1,\ldots,G_4$
such that $G_i$ is face neighbored to $G_{i+1}$ for $i=1,\ldots,3$ and $G_4$ is face neighbored
to $G_1$. Now the proof follows using the following two steps:\\

\noindent{\bf Step 1.} Since $G_{\omega}$ is regular there exists a face neighbor
$G\in \{G_1,\ldots,G_4\}$ to $G_{\omega}$ such that $G$ and $G_{\omega}$ have two common disperse
nodes corresponding to $e$. If $G$ is regular and has an iso-path that passes through $e$ then case (i)
is satisfied.\\

\noindent{\bf Step 2.} But if $G$ given in Step 1 has no iso-path that passes through $e$ then from the
application of $T_2$-rule to $G$ we get $\tilde{G}$, where $\tilde{G}$ has at least seven disperse nodes.
Then $\tilde{G}$ has no iso-path that passes through $e$. But since all graphs $G_1,\ldots,G_4$ are
singular iso-path free and isolated iso-path free we have $G=\tilde{G}$. Again $G$ is face neighbored
with $G'\in\{G_1,\ldots,G_4\}$ such that $G'\notin\{G,G_{\omega}\}$. Hence, $G'$ and $G$ have at least
two disperse nodes in common. If $G'$ is regular and has an iso-path that passes through $e$ then case
(ii) is satisfied by choosing $G_{r_1}=G$. But if $G'$ has no iso-path that passes through $e$ then
$G'$ has at least seven disperse nodes (again after application of $T_2$-rule to $G'$). Hence, in
case $G'$ has at least seven disperse nodes set $G_{r_2}:=G'$, and $G:=G'$ and then apply again
Step 2 until case (ii) is satisfied.

Note that there exists a
regular graph $G_{\omega'}$, disperse connected to $G_{\omega}$ with respect to edge $e$. If this is
not the case, then each face neighbored graphs $G_p$ and $G_q$ of $G_{\omega}$, where
$\{G_p,G_q\}\subset\{G_1,\ldots,G_4\}$ and $G_p\neq G_q$, have at least seven disperse nodes.
Then $G_q$ and $G_{\omega}$ have two common disperse nodes. But since $G_p$ is face neighbored
to $G_{\omega}$ and not face neighbored to $G_q$, $G_p$ and $G_{\omega}$ have another two common
disperse nodes. But then $G_{\omega}$ has at least four disperse nodes. Furthermore, there exist
four disperse nodes of $G_{\omega}$ denoted by $Q_1,Q_2,Q_3,Q_4$ such that each $P_1,P_2,Q_1,Q_2$
and $P_1,P_2,Q_3,Q_4$ are face vertices of the cuboid of $G_{\omega}$, where $P_1,P_2$ are the end points
of $e$. Furthermore, there exists a graph $G_s$ which is face neighbored with $G_q$ and $G_p$. In
addition,  $G_s$ has at least seven disperse nodes (this follows from the assumption that there exists
no $G_{\omega'}$ which is disperse connected with $G_{\omega}$). If $G_{\omega}$ is regular and has
only four disperse nodes then application of Proposition~\ref{prop:connect-1} gives that $G_{\omega}$
has two iso-paths and both iso-paths passes through $e$ and hence both iso-paths are disperse connected
with respect to $e$; therefore, case (ii) is satisfied. In this case,
we use $G_{r_1}=G_p$, $G_{r_2}=G_s$,  $G_{r_3}=G_q$ in (ii) of Theorem~\ref{thm:connect-3}. But then
in $G_{\omega}$ there exist two iso-paths $\omega$ and $\omega'$ which are disperse connected
with respect to $e$. This is a contradiction to the assumption that there exists no
$\omega'$ which is disperse connected to $\omega$ with respect to $e$. In case $G_{\omega}$ has
five disperse nodes then application of $T_2$-rule to $G_{\omega}$ gives that $G_{\omega}$ has at
least seven disperse nodes and, hence, $G_{\omega}$ has no iso-path that passes through $e$. This is
a contradiction to the assumption that $G_{\omega}$ has an iso-path that passes through
$e$, hence this case does not occur. Consequently, case (ii) holds.\\

To conclude, the set of all iso-paths that pass through $e$ can be uniquely decomposed into pairs
such that each pair is disperse connected with respect to $e$. Hence, the total number of iso-paths in
the system of graphs which pass through $e$ is an even number. The maximum number is 8 as follows
from Proposition~\ref{prop:connect-1}).
\end{proof}

\begin{theorem}\label{thm:connect-4}(Connectedness of iso-paths).
Let the polygonal domain $\Omega\subset\mathbb{R}^3$ have a domain partition $\mathcal{T}=\{C_i\}_{i=1}^N$
into cuboids. Let $\mathcal{G}=\{G_1,\ldots,G_N\}$ be a set of labeled
cuboid graphs with common iso-level $c\in (0,1)$, and $C_i$ be the cuboid of $G_i$ for $i=1,\ldots,N$.
Assume that all $G_i$, $i=1,\ldots,N$, singular iso-path free and isolated iso-path free.
Compute the complete iso-paths in each $G_i$ by applying the algorithm given in Section 5.4.
Then each iso-line of an arbitrary $G\in \mathcal{G}$ is common for at least two distinct iso-paths, where
the iso-paths can be in $G$ or in face- or in edge-neighbors of $G$.
\end{theorem}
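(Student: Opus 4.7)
The plan is to fix an arbitrary iso-line $l$ of an iso-path $\omega$ in some $G\in\mathcal{G}$ and to produce at least one further iso-path $\omega'\neq\omega$ (living in $G$, in a face-neighbor of $G$, or in an edge-neighbor of $G$) that also contains $l$. Since by definition an iso-line lies in a single face $F$ of the cuboid of $G$, the case distinction is driven by the type of $F$ and by whether the endpoints of $l$ are proper iso-points interior to cuboid edges or iso-nodes coinciding with cuboid vertices.

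First I would treat the case where $F$ is a regular face of $G$. By Theorem~\ref{thm:connect-2} there is exactly one iso-line on $F$, namely $l$, and this iso-line depends only on the node labels of $F$. These labels are shared with the face-neighbor $G'$ of $G$ across $F$, so $F$ is also a regular face of $G'$ with the same iso-line $l$. Because $F$ contains both disperse and continuous nodes, $G'$ cannot be disperse or continuous and is therefore regular; Proposition~\ref{prop:class-7} then yields a unique inner iso-path of $G'$ through $l$, which supplies $\omega'$.

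Next I would treat the case where $F$ is an L-face of $G$. If $F$ is trivial, then Theorem~\ref{thm:connect-1} gives that $G$ already possesses exactly two distinct inner iso-paths both passing through the iso-line of $F$, so $\omega'$ can be chosen inside $G$ itself; Lemma~\ref{lemma:connect-1} confirms that the face-neighboring graph $G'$ also contributes two inner iso-paths through this iso-line when relevant. If $F$ is non-trivial, then Theorems~\ref{thm:class-4} and~\ref{thm:class-5} show that either the face-neighbor of $G$ contributes a second inner iso-path through $l$, or the non-trivial L-face carries an outer iso-path that has $l$ as one of its edges; in either subcase the required $\omega'$ is produced.

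Finally, the genuinely delicate situation is the one in which both endpoints of $l$ are iso-nodes sitting on a common cuboid edge $e$. Here $e$ is shared with up to three edge-neighbors, giving the system of Definition~\ref{def:system-cuboid-1}, and the iso-paths through $e$ (hence through $l$) are governed by Theorem~\ref{thm:connect-3}, which asserts that the total number $N(e)$ is even and that every iso-path through $e$ comes paired with another via disperse connectivity. Since $\omega$ itself is one such iso-path, its disperse-connected partner supplies $\omega'$. I expect this last case to be the main obstacle: one must verify that once singular and isolated iso-paths have been removed by the $T$- and $F$-rules the disperse-connected pairing of Theorem~\ref{thm:connect-3} cannot degenerate into an unpaired iso-path, so that every iso-line sitting on an iso-node edge is genuinely shared by at least two iso-paths from the system of graphs at $e$. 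This is precisely where the hypothesis that every $G_i$ is singular iso-path free and isolated iso-path free is indispensable.
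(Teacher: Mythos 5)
Your proposal is correct and follows essentially the same route as the paper's proof, which distinguishes exactly your four cases: the iso-node edge case via the disperse-connected pairing of Theorem~\ref{thm:connect-3}, the trivial L-face case via Theorem~\ref{thm:connect-1} (and Lemma~\ref{lemma:connect-1}), the regular face case via Proposition~\ref{prop:class-7}, and the non-trivial L-face case via the outer-iso-path results (the paper cites Proposition~\ref{prop:class-8}, which rests on your Theorems~\ref{thm:class-4} and~\ref{thm:class-5}). The degeneration you flag as the remaining obstacle in the edge case is already excluded by Theorem~\ref{thm:connect-3} itself, whose hypotheses (singular and isolated iso-path freeness) guarantee $N(e)\in\{0,2m\}$ with a unique disperse-connected partner for every iso-path through $e$, so nothing is left open.
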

\begin{proof} We prove the claim by distinguishing different cases:\\

\noindent {\bf Case 1:} In case an iso-line $l$ is an edge of $G\in \mathcal{G}$ then
Theorem~\ref{thm:connect-3} says that we have at least two iso-paths in $\mathcal{G}$ which have $l$
as a common iso-line.\\

\noindent {\bf Case 2:} In case an iso-line $l$ lies on a trivial L-face of $G\in \mathcal{G}$,
Theorem~\ref{thm:connect-1} says that we have two iso-paths in $G$ which have $l$ as a common iso-line.
We even get four iso-paths with $l$ as a common iso-line if the trivial L-face is common for $G$ and
$G'$, where $G'\in \mathcal{G}$ is a face neighbor of $G$.\\

\noindent {\bf Case 3:} Let $H\subset G$ be a regular face of $G$. Assume that $H$ is as well a face
of $G'\in \mathcal{G}$, where $G'$ is a face neighbor of $G$. Then Proposition~\ref{prop:class-7}
implies that each of the graphs $G$ and $G'$ contains exactly one inner iso-path running through $l$.\\

\noindent {\bf Case 4:} Let an iso-line $l$ lie on a non-trivial L-face $H(V_h,E_h,\mathcal{F}_h)$ of
$G\in \mathcal{G}$. Then we consider two subcases:
\begin{itemize}
\item[(a)] Suppose $H$ is a common non-trivial L-face of $G$ and $G'(V',E',\mathcal{F}')\in\mathcal{G}$. Then
one of the following holds:
\begin{itemize}
\item[(i)] if there exists an iso-path on each of the L-faces then each iso-line on the L-face is part of an
iso-path that does not lie on the L-face as shown in Proposition~\ref{prop:class-8}. Therefore, to each iso-line
there exist two iso-paths, where the first iso-path is an inner iso-path and the second iso-path is the
iso-path on the L-face.
\item[(ii)] if there exists no iso-path on each of the L-faces then Proposition~\ref{prop:class-8} says that
to each iso-line on the L-face there exists an iso-path in $G$ and in the face neighboring graph. Hence to
each iso-line on the L-face there exist two iso-paths.
\end{itemize}
\item[(b)] Suppose $G'(V',E',\mathcal{F}')\in \mathcal{G}$ is a face neighbor of $G$, where
$H'(V'_h,E'_h,\mathcal{F}'_h)$ is a regular face of $G'$ and $V_h=V'_h$. Then one of the following
holds:
\begin{itemize}
\item[(i)] if there exists an iso-path on the L-face then we have the same conclusion as in (i) of (a).
\item[(ii)] if there exists no iso-path on the L-face then we have the same conclusion as in (ii) of (a).
\end{itemize}
\end{itemize}
\vspace{-0.7cm}
\end{proof}

\section{Components of iso-surfaces}
In this section we show how to compute separate components of connected iso-surfaces such that
on each component normals and discrete mean curvature can be calculated. We give definitions required to define
iso-path connectivity such that an iso-surface can be decomposed into its components. These components
are orientable and connected.

We have proved in Section 6 using Theorem~\ref{thm:connect-4} that the iso-surfaces computed by
applying the algorithm given in Section 5.4 are connected. This means to each iso-line $l$ of
an iso-path there exist at least two iso-paths such that $l$ is a common edge to them.
Theorem~\ref{thm:connect-3} and Theorem~\ref{thm:connect-1} show that an iso-line $l$ can be
common for up to eight or four iso-paths, respectively. It is clear that discrete mean curvature
computation at the end points of $l$, where $l$ is common to more than two iso-paths, is not defined.
Hence, we will give a definition which allows to decompose the iso-surfaces into connected components
such that at each iso-point of the connected components discrete mean curvature computation is possible.

To define components of closed iso-surfaces we need the notion of a {\it disperse path} which is a simple
path but not a loop such that it consists solely of edges of cuboids which only have disperse nodes as
end points. Such a path runs within the {\it system of graphs}, which is defined next.

\begin{definition}(System of cuboids and system of graphs).  Let $C$ be a cuboid with vertices
$P_1,\ldots,P_8$ and edges $e_1,\ldots,e_{12}$. Then let $\mathcal{C}_1,\ldots,\mathcal{C}_{12}$
be the system of cuboids (see Definition \ref{def:system-cuboid-1}) with respect to edges
$e_1,\ldots,e_{12}$, respectively. For each $i=1,\ldots,12$ and $j=1,\ldots,4$ we denote by
$C_{ij}$ the cuboids such that
\[
\mathcal{C}_i=\{C_{ij}\,:\,j=1,\ldots,4\} \mbox{ for } i=1,\ldots,12,
\]
where $C=C_{i1}$ for all $i=1,\ldots,12$, and the following holds:
\begin{itemize}
\item for all $x\in \{P_1,\ldots,P_8\}$ there exists a neighborhood $U\subset\mathbb{R}^3$ of $x$
      such that $U\subset \cup_{i=1}^{12}\cup_{j=1}^{4}C_{ij}$ and any two distinct cuboids
      in the set $\{C_{ij}\,:\,i=1,\ldots,12\; \mbox{ and }\; j=1,\ldots,4\}$ have a common edge
      or a common face.
\end{itemize}
Then we say $C=C_{i1}$ and $C_{ij}$ for $i=1,\ldots,12$, $j=2,\ldots,4$ are the system of cuboids
of $C$. Furthermore, let $G_{i1}(V_{i1},E_{i1},\mathcal{F}_{i1})$ and $G_{ij}(V_{ij},E_{ij},
\mathcal{F}_{ij})$ for $i=1,\ldots,12$, $j=2,\ldots,4$ be labeled cuboid graphs with a common
iso-level $c\in (0,1)$ and singular iso-path free and isolated iso-path free. In addition, let
$C_{i1}$ and $C_{ij}$ be the cuboids of $G_{i1}$ and $G_{ij}$ for $i=1,\ldots,12$, $j=2,\ldots,4$,
respectively. Then we say that $G_{i1}$ and $G_{ij}$ for $i=1,\ldots,12$, $j=2,\ldots,4$ are the
system of labeled cuboid graphs of $G(V,E,\mathcal{F})$, where $G=G_{i1}$ for all $i=1,\ldots,12$.
Additionally, we call $N(G)$ the number of system of graphs of $G$ which is given by
\begin{equation}
N(G)=\sum_{i=1}^{12}n_i-\sum_{l=1}^6(|E_{F_l}|-1)-(|E|-1),
\label{eq:system-graph-1}
\end{equation}
where $n_i=|\mathcal{C}_i|=4$ for $i=1,\ldots,12$, $F=\{F_1,\ldots,F_6\}$ is the set of faces of
$C$, $E_{F_l}$ is the set of edges of face $F_l$ of $C$ for $l=1,\ldots,6$, and $|.|$ denotes
the number of elements in a set. It holds $|E_{F_l}|=4$ for $l=1,\ldots,6$ and $|E|=12$. Hence,
in the present case of cuboids, $N(G)=19$.
\label{def:curve-1}
\end{definition}
The derivation of formula~\eqref{eq:system-graph-1} is easy and is therefore left to the reader.
Equation~\eqref{eq:system-graph-1} can be even applied to arbitrary partition
of a polygonal domain which has the form as given in Definition~\ref{def:curve-1}.

\begin{definition}(Disperse path). Let $\zeta$ be a simple, but not closed path in the system of graphs
of a given labeled cuboid graph $G(V,E,\mathcal{F})$. If there is $r\geq 2$ and to each $m=1,\ldots,r$
there exists $G'_m(V'_m,E'_m,\mathcal{F}'_m)$ in the system of graphs of $G$ and a disperse edge $l_m$
of $G_m'$ such that $\zeta$ is of the form $\zeta=\cup_{m=1}^rl_m$, we call $\zeta$ a disperse path in
the system of graphs of $G$.
\label{def:disperse-path}
\end{definition}

\begin{remark}
In the following, when we say "corresponding nodes of an iso-line" or
"corresponding iso-line of nodes", the word "corresponding" is to be understood in the
sense of Notation~\ref{note:corresponding-node}. Furthermore, when we say "disperse
node or nodes corresponding to iso-line $l$ with respect to an iso-path", then it means that
$l$ is an edge of the iso-path and $l$ corresponds to the disperse node or nodes.
\label{rem:corresponding-node-line}
\end{remark}

\begin{definition}(Disperse connectedness of two iso-paths at a common edge). Let $G(V,E,\mathcal{F})$ be
a labeled cuboid graph with iso-level $c\in (0,1)$. Let $G$ be regular and $l$ be an iso-line of $G$.
Let $\omega_1$ and $\omega_2$ be two distinct iso-paths with $l=\omega_1\cap\omega_2$, where $\omega_1$
is one of the iso-paths of $G$ and $\omega_2$ is an iso-path in the system of graphs of $G$. We say that
$\omega_1$ and $\omega_2$ are disperse connected with respect to the common edge $l$ if one of the
following holds:
\begin{enumerate}
\item $\omega_1$ and $\omega_2$ are the only iso-paths in the system of graphs of $G$ such that
$l=\omega_1\cap\omega_2$,
\item $\omega_1$ is an inner iso-path of $G$, $\omega_2$ is an inner iso-path for one labeled cuboid
graph in the system of graphs of $G$, and one of the following holds:
\begin{enumerate}
\item at least one of the disperse nodes corresponding to $l$ is the same for the iso-line $l$ with
respect to $\omega_1$ and $\omega_2$,
\item there exist two distinct disperse nodes $P_1$ and $P_2$ in the system of graphs of $G$ such that
$P_1$ is a node of $G$ and $P_2$ may not a node of $G$. The disperse nodes $P_1$ and $P_2$ correspond
to $l$ with respect to $\omega_1$ and $\omega_2$, respectively, and there exists a disperse path in the
system of graphs of $G$ which connects $P_1$ and $P_2$.
\end{enumerate}
\end{enumerate}
We say that two distinct iso-elements $Z_1$ and $Z_2$ in the system of
graphs of $G$ are {\it neighbored} with respect to the iso-line $l=Z_1\cap Z_2$ of $G$ if the corresponding
iso-paths $\omega_1$ and $\omega_2$ of $Z_1$ and $Z_2$, respectively, are disperse connected with respect
to $l$.
\label{def:component-1}
\end{definition}

\noindent{\bf Convention of iso-elements disjointness: }Let $G(V,E,\mathcal{F})$ be  a labeled cuboid graph
with iso-level $c\in (0,1)$. Let $G$ be regular and $l$ be an iso-line of $G$. Let $Z_1$ and $Z_2$ be two
distinct iso-elements in the system of graphs of $G$ which have $l$ as a common iso-line ($l=Z_1\cap Z_2$),
but are {\it not} neighbored with respect to $l$. Then, concerning their connectivity, we consider both
iso-elements $Z_1$ and $Z_2$ as disjoint. This means, given arbitrary points $P_1\in Z_1$ and $P_2\in Z_2$,
there exists no path $\omega\subset Z_1\cup Z_2$ that joins $P_1$ and $P_2$. This convention helps to
decompose iso-surfaces into connected components such that on each component computation of surface PDEs
and discrete mean curvature is possible.\\

\noindent The next theorem will be used to decompose iso-surfaces into components.
\begin{theorem}Let $G(V,E,\mathcal{F})$ be a labeled cuboid graph with iso-level $c\in (0,1)$ and let
$G$ be regular. Let $\omega$ be an iso-path of $G$, given by $\omega=[Q_1,Q_2,\ldots,Q_n]$ with
$n\geq 3$. We set $l_i=\overline{Q_iQ_{i+1}}$ for $i=1,\ldots,n-1$ and $l_n=\overline{Q_nQ_1}$ which
are iso-lines of $G$ and as well edges of $\omega$. Then there exist iso-paths $\omega_1,\ldots,\omega_n$ in
the system of labeled cuboid graphs of $G$ such that $l_i=\omega\cap \omega_i \mbox{ for }i=1,\ldots,n$ and
each pair $(\omega,\omega_i)$ is disperse connected with respect to $l_i$ for $i=1,\ldots,n$.
\label{thm:iso-path-connectedness-disperse-paths}
\end{theorem}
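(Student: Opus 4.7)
The plan is to fix an arbitrary edge $l_i=\overline{Q_iQ_{i+1}}$ of $\omega$ and produce the required iso-path $\omega_i$ by case analysis on the geometric location of $l_i$. In every case we already know from Theorem~\ref{thm:connect-4} that at least one iso-path distinct from $\omega$ shares the edge $l_i$; what remains is to choose such an $\omega_i$ so that the disperse-connectedness criterion of Definition~\ref{def:component-1} is verified. I would distinguish four cases according to whether $l_i$ sits on an edge of the cuboid of $G$, on a regular face of $G$, on a trivial L-face of $G$, or on a non-trivial L-face of $G$.

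First I would dispatch the easiest case: $l_i$ lies on a regular face $H$ shared with a face-neighbor $G'$ of $G$. By Proposition~\ref{prop:class-7}, $\omega$ is the unique iso-path of $G$ through $l_i$ and $G'$ likewise contains exactly one iso-path $\omega_i$ through $l_i$. Since no further iso-paths pass through $l_i$ in the system of graphs, condition 1 of Definition~\ref{def:component-1} is satisfied automatically. The trivial L-face case is handled analogously using Theorem~\ref{thm:connect-1} and Lemma~\ref{lemma:connect-1}: there are exactly two or four iso-paths through $l_i$, and pairing them off by the disperse node that each of them corresponds to (in the sense of Notation~\ref{note:corresponding-node}) yields the unique partner $\omega_i$ that shares a disperse node with $\omega$ at $l_i$, hitting condition 2(a).

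Next I would treat the non-trivial L-face case. Here Proposition~\ref{prop:class-8} tells us that on a non-trivial L-face either two iso-lines are present, each lying on a distinct inner iso-path of $G$, or the iso-line belongs to an outer iso-path on the L-face itself. In either subcase the L-face carries a well-defined continuous node (not an iso-node) or disperse node to which $l_i$ corresponds. I would take $\omega_i$ to be the iso-path (in $G$ itself, in the face-neighbor across the L-face, or the outer iso-path, depending on the subcase of Proposition~\ref{prop:class-8}) sharing that corresponding node with $\omega$; this again places us in condition 2(a) of Definition~\ref{def:component-1}.

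The hardest case, and the one I expect to be the main obstacle, is when $l_i$ coincides with an edge $e$ of the cuboid of $G$ whose endpoints are both iso-nodes. The total count $N(e)$ of iso-paths through $e$ can then be as large as eight, spread across up to four face-sharing cuboids around $e$. The key is Theorem~\ref{thm:connect-3}, which already expresses exactly the needed property: the iso-paths through $e$ decompose uniquely into disperse-connected pairs, and for each such pair either both graphs involved share two disperse nodes corresponding to $e$ (condition 2(a) of Definition~\ref{def:component-1}) or the connecting graphs $G_{r_1},\ldots,G_{r_l}$ in the chain furnish a disperse path between the two disperse nodes corresponding to $l_i$ on the two sides (condition 2(b)). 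Thus I would take $\omega_i$ to be the unique partner of $\omega$ guaranteed by Theorem~\ref{thm:connect-3}. Combining the four cases over all $i=1,\ldots,n$ yields the claim; the only delicate bookkeeping is to ensure that in case 2(b) the disperse path is realized inside the system of graphs of $G$, which follows from the construction in the proof of Theorem~\ref{thm:connect-3}.
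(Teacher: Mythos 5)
Your overall route coincides with the paper's: the same case split by the geometric location of $l_i$ (cuboid edge, regular face, trivial L-face, non-trivial L-face) and the same ingredients (Proposition~\ref{prop:class-7}, Proposition~\ref{prop:class-8}, Theorem~\ref{thm:connect-1}, Theorem~\ref{thm:connect-3}). However, there is one concrete gap in your regular-face case. You argue that if $l_i$ lies on a regular face $H$ of $G$, then by Proposition~\ref{prop:class-7} the face-neighbor $G'$ ``likewise contains exactly one iso-path through $l_i$'' and condition 1 of Definition~\ref{def:component-1} holds ``automatically.'' This tacitly assumes that the common face has the same type as seen from $G'$, i.e.\ that $H$ is also a regular face of $G'$. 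That can fail: the graphs in the system are only assumed singular- and isolated-iso-path free, and the $T$- and $F$-rules used to achieve this modify node labels in each graph separately, so the labels on the shared face can differ between $G$ and $G'$ (the paper warns of exactly this in Section 5.3, and Theorem~\ref{thm:class-5} with Figure~\ref{image_56} exhibits the configuration: the face is regular from one side and a non-trivial L-face from the other). In that asymmetric situation Proposition~\ref{prop:class-7} is not applicable to $G'$, an outer iso-path on the L-face may exist, and the count of iso-paths through $l_i$ must be analyzed via Proposition~\ref{prop:class-8}; the paper devotes a separate subcase (Case 1, subcase 3 of its proof) to this, concluding disperse connectedness via condition 1 only after identifying which iso-paths actually run through $l_i$. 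Your case analysis, organized purely by the face type relative to $G$, never reaches this configuration, so the claimed ``automatic'' verification is unjustified as stated.

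A second, more minor point: for iso-lines on non-trivial L-faces carrying an outer iso-path you invoke condition 2(a) of Definition~\ref{def:component-1}, but condition 2 presupposes that both iso-paths are \emph{inner} iso-paths, which an outer iso-path is not; the correct conclusion there (as in the paper's Case 3) is that exactly two iso-paths pass through the iso-line, so condition 1 applies. This is a mislabeling rather than a failure of the construction, since your choice of partner $\omega_i$ is the right one. The remaining cases (cuboid edge via Theorem~\ref{thm:connect-3}, trivial L-face via Theorem~\ref{thm:connect-1} and the pairing by corresponding disperse nodes) match the paper's argument, including the uniqueness of the pairing, which the paper substantiates with the explicit five-disperse-node contradiction that your ``yields the unique partner'' glosses over but does not misstate.
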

\begin{proof}
We prove the claim by considering three cases. In all these cases, we let $G'(V',E',\mathcal{F}')$ be a
graph in the system of graphs of $G(V,E,\mathcal{F})$ such that $G$ and $G'$ are face-neighbored and
the nodes of the face $H(V_h,E_h,\mathcal{F}_h)$ of $G$ are common to both graphs $G$ and $G'$.
Suppose $H'(V_h',E_h',\mathcal{F}_h')\subset G'$ is the face of $G'$ such that $V_h'=V_h$.\\

\noindent{\bf Case 1: }An iso-line $l\in\{l_1,\ldots,l_n\}$ lies on a regular face $H(V_h,E_h,\mathcal{F}_h)$
of $G$. \\

We consider three subcases.
\begin{enumerate}
\item Assume that $H$ contains only two disperse nodes and at least one of the continuous nodes is not
an iso-node. Then the following holds:
\begin{enumerate}
\item $H=H'$.
\item An iso-line on $H$ is common only for two iso-paths. The first iso-path lies in $G$ and the second in
$G'$. This follows from Proposition~\ref{prop:class-7}. The common iso-line of both iso-paths corresponds to
the disperse nodes of $H$ (cf. Remark~\ref{rem:corresponding-node-line}).
\end{enumerate}
Hence, condition $1$ of Definition~\ref{def:component-1} is satisfied.

\item $H$ has two disperse nodes and two iso-nodes. Then the following holds:
\begin{enumerate}
\item[(a)] If $l$ is common for $G$ and $G'$ and if there exists an iso-path $\omega'$ in $G'$ that passes
through $l$ such that the corresponding disperse nodes of $\omega$ and $\omega'$ with respect to $l$ are
the same, then both iso-paths are disperse connected.

\item[(b)] The other cases are shown in Theorem~\ref{thm:connect-3}.
\end{enumerate}

Hence, condition $1$ or condition $2$ of Definition~\ref{def:component-1} are satisfied.
\item The face $H'$ of $G'$ is a non-trivial L-face. Then one of the following holds:
\begin{enumerate}
\item Let $l$ be the only iso-line on $H'$. Then the following holds:
\begin{enumerate}
\item $H'$ contains exactly one iso-node and $H$ contains three disperse nodes and one continuous node
which is not an iso-node.
\item The iso-line $l$ is common to a pair of iso-paths such that the iso-paths are disperse
connected with respect to the common iso-line.
\end{enumerate}
These results follow from Proposition~\ref{prop:class-7} and \ref{prop:class-8}. Hence, condition $1$ of
Definition~\ref{def:component-1} is satisfied.
\item Let there be an iso-path on $H'$. Then the following holds:
\begin{enumerate}
\item There exist three iso-lines on $H'$.
\item There is one iso-line on $H$.
\item To each iso-line in $H'$ there exists an iso-path that does not lie on $H'$.
\item Each iso-line on $H'$ is common to the iso-path on $H'$ and another iso-path that does not lie on
$H'$. Both iso-paths are disperse connected with respect to their common iso-line $l$.
\end{enumerate}
These results follow from Proposition~\ref{prop:class-7} and \ref{prop:class-8}. Hence, condition $1$ of
Definition~\ref{def:component-1} is satisfied.
\end{enumerate}
\end{enumerate}
\noindent{\bf Case 2: }An iso-line $l\in\{l_1,\ldots,l_n\}$ lies on a trivial L-face
$H(V_h,E_h,\mathcal{F}_h)$ of $G$. \\

We consider two subcases.
\begin{enumerate}
\item Let $H'$ be a singular or a disperse face of $G'$.

Then, according to Theorem~\ref{thm:connect-1} there are two iso-paths in $G$ having $l$ as a common
iso-line. These iso-paths are disperse connected with respect to $l$.

Hence, condition $1$ of Definition~\ref{def:component-1} is satisfied.
\item Let $H'$ be a trivial L-face of $G'$. Then the following holds:
\begin{enumerate}
\item $H=H'$.
\item According to Theorem~\ref{thm:connect-1} we get two iso-paths in $G$ and two iso-paths in $G'$ and
all four iso-paths have the common iso-line $l$. Then there exists a unique pairing of iso-paths such that
the common iso-line $l$ of each pair of iso-paths corresponds to a disperse node of $H$. Both pairs of
iso-paths are then disperse connected with respect to $l$. But it is not possible that three of them
are disperse connected. If this is the case, then there is a disperse path which connects the disperse
nodes on $H$. But for this to happen we need at least five disperse nodes, say in $G$. Then application of
the $T_1$-rule to $G$ will change one of the iso-nodes of $G$ on $H$ to a disperse node. But then
$G$ has no iso-path that passes through $l$. Therefore, there cannot exist three iso-paths that pass
through $l$ and which are disperse connected.
\end{enumerate}
Hence, condition $2$ of Definition~\ref{def:component-1} is satisfied.
\end{enumerate}
\noindent{\bf Case 3: }An iso-line $l\in\{l_1,\ldots,l_n\}$ lies on a non-trivial L-face
$H(V_h,E_h,\mathcal{F}_h)$ of $G$.
\begin{enumerate}
\item Let there be no iso-path on the L-faces $H$ and $H'$. Then the following holds:
\begin{enumerate}
\item $H'=H$.
\item There exist two iso-lines on $H$.
\item Each iso-line from (b) is common to a pair of iso-paths which are disperse connected with respect to
the common iso-line on $H$. This follows from Proposition~\ref{prop:class-8}.
\end{enumerate}
Hence, condition $1$ of Definition~\ref{def:component-1} is satisfied.
\item Let there be an iso-path on the L-faces. Then the following holds:
\begin{enumerate}
\item $H'=H$.
\item There exist three iso-lines on $H$ if $H$ has an iso-node; otherwise, there exist four iso-lines on $H$.
\item To each iso-line from (b) there exists an iso-path that does not lie on $H$.
\item Each iso-line from (b) is common to the iso-path on $H$ and another iso-path which does not lie on
$H$. Both iso-paths are disperse connected with respect to the common iso-line $l$.
\end{enumerate}
These results follow from Proposition~\ref{prop:class-7} and \ref{prop:class-8}. Hence, condition $1$ of
Definition~\ref{def:component-1} is satisfied.
\end{enumerate}
\end{proof}

\noindent Based on Theorem~\ref{thm:iso-path-connectedness-disperse-paths} we give the following definition.
\begin{definition}(Disperse connected iso-elements and iso-surface component).
Let the polygonal domain $\Omega\subset\mathbb{R}^3$ have a domain partition $\mathcal{T}=\{C_i\}_{i=1}^N$ into cuboids. Let $\mathcal{G}=\{G_1,\ldots,G_N\}$ be a set of labeled
cuboid graphs with common iso-level $c\in (0,1)$, and $C_i$ be the cuboid of $G_i$ for $i=1,\ldots,N$.
Assume that all $G_i$ are singular iso-path free and isolated iso-path free.
Compute the complete iso-paths in each $G_i$ by applying the algorithm given in Section 5.4.
Let the union of all iso-surfaces in $\mathcal{G}$ be denoted by $\Gamma$ and let
$Z$ and $\tilde{Z}$ be two iso-elements in $\Gamma$. If there is $r\geq 1$ and
if there exist iso-elements $Z_0,\ldots,Z_r$ of $\Gamma$ such that
$Z_0=Z$, $Z_r=\tilde{Z}$ and all pairs $(Z_{i-1},Z_i)$, $i=1,\ldots,r$ are neighbored with respect
to their common iso-line, then we say $Z$ and $\tilde{Z}$ are disperse connected iso-elements.

We define a component of $\Gamma$ with respect to an iso-element $Z\subset\Gamma$, denoted by $\Gamma(Z)$, by
\begin{itemize}
\item $\Gamma(Z)$ consists of $Z$ and of all iso-elements of $\Gamma$ which are disperse connected to $Z$.
\end{itemize}
\label{def:component-2}
\end{definition}

\begin{theorem}(Connectedness and uniqueness of a component).
An iso-surface component defined as in Definition~\ref{def:component-2} is connected. Furthermore,
for any two distinct iso-elements $Z_1$ and $Z_2$ of $\Gamma$ it holds that either
$\Gamma(Z_1)=\Gamma(Z_2)$ or $\Gamma(Z_1)\cap\Gamma(Z_2)=\emptyset$ in the sense of
{\it convention of iso-elements disjointness}. Consequently, the components of $\Gamma$ are uniquely
determined.
\label{thm:iso-element-connected}
\end{theorem}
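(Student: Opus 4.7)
The plan is to show first that \emph{disperse connectedness} of iso-elements is an equivalence relation on the set of iso-elements of $\Gamma$, and then deduce both the connectedness of each $\Gamma(Z)$ and the uniqueness of the decomposition as immediate corollaries.

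First I would verify the equivalence-relation properties. Reflexivity holds trivially, since $Z\in\Gamma(Z)$ by the definition of $\Gamma(Z)$. Symmetry follows from the observation that the neighborhood relation of Definition~\ref{def:component-1} is symmetric: both conditions (1) and (2) are stated symmetrically in the two iso-paths $\omega_1,\omega_2$, and a disperse path connecting corresponding disperse nodes can be traversed in either direction. Hence the chain $Z_0=Z,Z_1,\ldots,Z_r=\tilde Z$ of pairwise neighbored iso-elements witnessing $Z\sim\tilde Z$ can be reversed to give $\tilde Z\sim Z$. Transitivity is obtained by concatenating two such chains: if $Z\sim\tilde Z$ via $Z_0,\ldots,Z_r$ and $\tilde Z\sim\hat Z$ via $\tilde Z_0,\ldots,\tilde Z_s$, then $Z_0,\ldots,Z_r=\tilde Z_0,\ldots,\tilde Z_s$ is a chain from $Z$ to $\hat Z$ whose consecutive pairs are neighbored.

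Next, I would use the equivalence-relation property to obtain the second claim. If $Z_1$ and $Z_2$ are two iso-elements of $\Gamma$ with $\Gamma(Z_1)\cap\Gamma(Z_2)\neq\emptyset$, in the sense not ruled out by the convention of iso-elements disjointness, then there is an iso-element $W\in\Gamma(Z_1)\cap\Gamma(Z_2)$ that is disperse connected to both $Z_1$ and $Z_2$; by transitivity and symmetry, $Z_1\sim Z_2$, and hence by transitivity again $\Gamma(Z_1)=\Gamma(Z_2)$. Contrapositively, if $\Gamma(Z_1)\neq\Gamma(Z_2)$ then $\Gamma(Z_1)\cap\Gamma(Z_2)=\emptyset$ in the sense of the convention. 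This also establishes uniqueness: every iso-element of $\Gamma$ lies in exactly one component, namely its equivalence class, so the decomposition of $\Gamma$ into components is determined independently of the representative chosen.

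Finally, for the connectedness of $\Gamma(Z)$ I would argue as follows. Given any $\tilde Z\in\Gamma(Z)$, by definition there is a chain $Z=Z_0,Z_1,\ldots,Z_r=\tilde Z$ of iso-elements in $\Gamma(Z)$ such that each consecutive pair $(Z_{i-1},Z_i)$ shares an iso-line $l_i$ and is neighbored with respect to $l_i$. Each individual iso-element $Z_i=[P_1,\ldots,P_m|P_c]$ is, by its construction via \eqref{eq:labeled-graph-13}, a connected (indeed path-connected) set in $\mathbb{R}^3$. The shared iso-lines $l_i\subset Z_{i-1}\cap Z_i$ allow the concatenation $Z_0\cup Z_1\cup\cdots\cup Z_r$ to be traversed by a continuous path from any point of $Z$ to any point of $\tilde Z$. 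Thus every iso-element of $\Gamma(Z)$ is path-connected to $Z$ within $\Gamma(Z)$, and hence $\Gamma(Z)$ is connected. The main subtlety here, which I would highlight, is that the convention of iso-elements disjointness is exactly what justifies stopping at neighbored pairs only: iso-elements that geometrically share an iso-line but are not neighbored are treated as disjoint, so no unwanted identifications inflate a single component, and the equivalence class structure transfers faithfully to the topological decomposition of $\Gamma$.
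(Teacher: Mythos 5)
Your proposal is correct and follows essentially the same route as the paper: the paper's proof also observes that disperse connectedness is reflexive, symmetric and transitive by definition, concludes that the iso-elements decompose into disjoint equivalence classes (which gives uniqueness), and derives connectedness of $\Gamma(Z)$ from chains of neighbored iso-elements. You merely spell out what the paper leaves implicit --- the explicit verification of the equivalence-relation axioms and the topological path-connectedness of $\bigcup_i Z_i$ via the shared iso-lines --- which is a faithful elaboration rather than a different argument.
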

\begin{proof}We give the proof in two steps. \\

\noindent{\bf 1. Uniqueness of components: }Note that the relation between iso-elements to be disperse
connected is reflexive, symmetric and transitive by its definition. Hence, the set of all iso-elements
decomposes into disjoint (in the sense of iso-elements disjointness) equivalence classes, which are
precisely the components of $\Gamma$. This proves the uniqueness.\\

\noindent{\bf 2. Connectedness of a component: }Let $\Gamma(Z)$ be a component of $\Gamma$ and let
$Z_1,Z_2\in\Gamma(Z)$ be iso-elements. Then $Z_1$ is connected to $Z$ and $Z$ is connected to $Z_2$,
hence each two iso-elements in $\Gamma(Z)$ are disperse connected.
\end{proof}

The intersection of two different components is $\emptyset$ in the sense that the intersection does not
contain an iso-element, but it may contain discrete points or line segments (cf. {\it Convention of
iso-elements disjointness}). The following remark provides additional information on the intersection
of two distinct components of $\Gamma$.

\begin{remark}(Relations of two distinct components).
We denote from here on the components of $\Gamma$ by $\Gamma_1,\ldots,\Gamma_n$ if $\Gamma$ has $n\geq 2$
components, where $\Gamma$ and $\Gamma_i$ are as in Definition~\ref{def:component-2}. Note that the only
possibility for two distinct components to have an iso-element in common is if both have a common trivial L-face
such that on the L-face there is an iso-element. But such an L-face belongs to only one component, since if there
exists an iso-path on the L-face then each iso-line on the L-face is common to only two distinct iso-paths as
shown in Proposition~\ref{prop:class-8}. The first iso-path is the iso-path on the L-face and the second iso-path
is an inner iso-path. Therefore, according to Definition~\ref{def:component-2}, the iso-path on the L-face belongs
to only one component and, hence, all other inner iso-paths which have a common iso-line with the iso-path on the
L-face belong to the same component as well. This follows from condition $1$ of Definition~\ref{def:component-1}.

If two different components have an iso-point or iso-points in common, where none of them are end points of
an iso-line for both components, then the common points are vertices of cuboids in the partition $\mathcal{T}$.
In addition, if two different components have an iso-line or iso-lines in common, then the common line segments
are edges (cf. Theorem~\ref{thm:connect-3}) or face diagonals (cf. Lemma~\ref{lemma:connect-1}) of cuboids in the
partition $\mathcal{T}$. This follows from Definition~\ref{def:component-2} and
Theorem~\ref{thm:iso-element-connected}.
\end{remark}

\section{Surface Normal Vectors and discrete Curvature}
For iso-surfaces $\Gamma$ computed according to the algorithm given in Section 5.4 there exists
a decomposition of $\Gamma$ into components according to Definition~\ref{def:component-2} and
Theorem~\ref{thm:iso-element-connected}. By construction, a component is oriented and connected.
This section is devoted to solve the following problems:
\begin{enumerate}
\item how to compute surface normal vectors of a component,
\item how many iso-points of a component are connected to an iso-point $P$ of the component,
\item constructing an appropriate region in one component around an iso-point $P$ of the component
on which discrete mean curvature computation for $P$ is possible.
\end{enumerate}
In this paper we are not giving details on how to compute discrete mean curvature at discrete points of
a component since this can be found in the literature (see e.g. \cite{Meyer02Vismath}), but we
show how to identify the required surface region inside a component.

\subsection{Surface Normal Vectors}
The computation of surface normal vectors of an iso-surface is straight forward, except for
determining the same orientation for all iso-surface normal vectors. For all triangles of the triangulated
iso-elements, let $\pmb{N}$ be an arbitrarily oriented surface normal. Then the oriented normal field
$\pmb{n}$ is obtained by choosing $+\pmb{N}$ or $-\pmb{N}$, locally. For this purpose we first
need to know a direction in which the desired surface normal shall be approximately pointing. We call
a vector which approximates the surface normal vector, while giving the same orientation (i.e. angle to
$\pmb{n}$ is below $90^{\circ}$), a {\it surface pseudo-normal}. Usually, for the computation of such a
surface pseudo-normal one needs to know the gradient of the nodal function $\mathcal{F}$ of a labeled
cuboid graph $G(V,E,\mathcal{F})$. We give here an alternative method which does not use the gradient
of the function $\mathcal{F}$ but instead uses all node positions of $G$, distinguishing between continuous
and disperse nodes.

Let $G(V,E,\mathcal{F})$ be an irreducible labeled cuboid graph with iso-level $c\in (0,1)$. Let there be
$n\geq 3$ iso-points $P_i$ of the iso-path, denoted by $\omega$, in $G$ and let $P_c\in\mathbb{R}^3$ be the
center of the iso-element $[P_1,\ldots,P_n|P_c]$ corresponding $\omega$. By definition of an iso-surface
component as given in Definition~\ref{def:component-2}, the orientation of the normal field on an
iso-element of $G$ should be pointing towards the continuous nodes of $G$. Therefore, a surface pseudo-normal
for the iso-element, denoted as $\pmb{p}$, has to be determined using the position of
disperse and continuous nodes of $G$ such that it points from the disperse to the continuous phase. Let
Let $\pmb{N}$ be a normal of the triangle $Tri(P_1,P_2,P_c)$, say. Then the surface normal $\pmb{n}$ on
$Tri(P_1,P_2,P_c)$ is
\begin{equation}
\pmb{n}=\left\{\begin{tabular}{ll}
$\pmb{N}$ &\mbox{if } $\pmb{N}\cdot \pmb{p}>0$\\
-$\pmb{N}$ &\mbox{else.}
\end{tabular}
\right.
\label{eq:curve-1}
\end{equation}

\begin{remark} For iso-surface normal computation, the consideration of irreducible labeled cuboid graphs,
is sufficient. First, computation of surface normals of iso-paths which lie on L-faces of a labeled cuboid graph
is straight forward. Second, a labeled cuboid graph $G$ can be decomposed into irreducible graphs with respect
to inner iso-paths of $G$ as given in~\eqref{eq:class-3-1}.
\label{rem:second-remark}
\end{remark}

\noindent{\bf Surface pseudo-normal: }Let $G(V,E,\mathcal{F})$ be an irreducible labeled cuboid graph
with iso-level $c\in (0,1)$. Let $V=\{P_1,\ldots,P_8\}, N=\{1,2,\ldots,8\}$ and let $N_c$ and $N_d$
correspond to the set of continuous and disperse node indices of $G$, respectively.
Then $N=N_c\cup N_d$. We compute a surface pseudo-normal for $G$ by adding all vectors
$\pmb{v}_j$ for $j = 1,\ldots,|N_d|$ with $\pmb{v}_j$ defined by
\begin{equation}
\pmb{v}_j=\sum_{i\in N_c}(C_i-D_j), \qquad  j\in N_d,
\label{eq:curve-2}
\end{equation}
where $C_i$ and $D_j$ are continuous and disperse nodes of $G$, respectively.
Then we get
\begin{equation}
\pmb{p}=\sum_{j\in N_d}\sum_{i\in N_c}(C_i-D_j)
\label{eq:curve-3}
\end{equation}
\begin{proposition}The surface pseudo-normal vector given in \eqref{eq:curve-3} satisfies
\begin{equation*}
\pmb{p}=|N_d|\sum_{i\in N}P_i-8\sum_{j\in N_d}D_j,
\end{equation*}
where $V=\{P_1,\ldots,P_8\}$.
\label{prop:curve-2}
\end{proposition}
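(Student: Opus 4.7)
The plan is to prove the identity by straightforward algebraic manipulation of the double sum defining $\pmb{p}$. The two key relations I will use are $|N_c|+|N_d|=8$ (since $G$ has eight vertices and $N=N_c\cup N_d$ is a disjoint union) and the decomposition $\sum_{i\in N}P_i=\sum_{i\in N_c}C_i+\sum_{j\in N_d}D_j$, where the $C_i$ are the continuous and the $D_j$ the disperse nodes.

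First, I would split the inner sum in the definition
\[
\pmb{p}=\sum_{j\in N_d}\sum_{i\in N_c}(C_i-D_j)
\]
into the two pieces $\sum_{i\in N_c}C_i$ and $|N_c|\,D_j$, obtaining the intermediate form
\[
\pmb{p}=|N_d|\sum_{i\in N_c}C_i-|N_c|\sum_{j\in N_d}D_j.
\]
Next, I would rewrite the right-hand side of the claimed identity by splitting $\sum_{i\in N}P_i$ into its continuous and disperse parts:
\[
|N_d|\sum_{i\in N}P_i-8\sum_{j\in N_d}D_j=|N_d|\sum_{i\in N_c}C_i+(|N_d|-8)\sum_{j\in N_d}D_j.
\]
Finally, the substitution $|N_d|-8=-|N_c|$ matches this with the intermediate form for $\pmb{p}$, completing the proof.

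There is no substantive obstacle here; the statement is essentially a bookkeeping rearrangement, and the only thing to keep track of is the number of times each node appears in each sum. The main point of formulating the identity in this way is computational: the second form $|N_d|\sum_{i\in N}P_i-8\sum_{j\in N_d}D_j$ requires only the sum of all eight vertex positions (a quantity that is fixed per cuboid and can be stored once) and the sum over the disperse vertices, rather than the full double loop of \eqref{eq:curve-3}.
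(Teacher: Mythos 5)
Your proof is correct and follows essentially the same route as the paper's: both are elementary rearrangements resting on the disjoint split $N=N_c\cup N_d$ with $|N_c|+|N_d|=8$ and the decomposition $\sum_{i\in N}P_i=\sum_{i\in N_c}C_i+\sum_{j\in N_d}D_j$; the paper merely adds and subtracts $\sum_{k\in N_d}D_k$ inside the inner sum before summing over $j$, whereas you transform both sides to the common intermediate form $|N_d|\sum_{i\in N_c}C_i-|N_c|\sum_{j\in N_d}D_j$ --- a cosmetic difference only.
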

\begin{proof}
We get the following by using $N=N_c\cup N_d$ and $|N|=8$ :
\begin{eqnarray*}
\sum_{i\in N_c}(C_i-D_j)&=&\sum_{i\in N_c}C_i-|N_c|D_j\\
&=&\sum_{i\in N_c}C_i+\sum_{k\in N_d}D_k-|N_c|D_j-\sum_{k\in N_d}D_k\\
&=&\sum_{i\in N}P_i-|N_c|D_j-\sum_{k\in N_d}D_k.
\end{eqnarray*}
By using the above relation we get
\begin{eqnarray*}
\sum_{j\in N_d}\sum_{i\in N_c}(C_i-D_j)&=&\sum_{j\in N_d}
\left[\sum_{i\in N}P_i-|N_c|D_j-\sum_{k\in N_d}D_k\right]\\
&=&|N_d|\sum_{i\in N}P_i-|N_c|\sum_{j\in N_d}D_j-|N_d|\sum_{k\in N_d}D_k\\
&=&|N_d|\sum_{i\in N}P_i-(|N_c|+|N_d|)\sum_{j\in N_d}D_j\\
&=&|N_d|\sum_{i\in N}P_i-8\sum_{j\in N_d}D_j.
\end{eqnarray*}
\end{proof}

\subsection{Discrete Curvature}
An iso-surface $\Gamma$ which is computed according to the algorithm given in Section 5.4 is polygonal
(piecewise planar). Hence, on a component $\Gamma_0$ of $\Gamma$ only discrete mean curvature computation
has a meaning. The discrete mean curvature can be computed on points in $\Gamma_0$ which are vertices of the
triangulation of $\Gamma_0$. Any iso-element of $\Gamma$ is triangulated using a center $P_c$
of the iso-element and its iso-points. Generally, the discrete mean curvature at $P_c$ is nearly zero. Hence,
discrete mean curvature computation is mainly important at the iso-points of $\Gamma_0$.

Discrete mean curvature computation methods (see e.g. \cite{Meyer02Vismath}) use for the computation of
discrete mean curvature at an iso-point $P$ of $\Gamma_0$ a piece of triangulated surface region around $P$ in
which $P$ is contained. This surface region is contained in $\Gamma_0$. Hence, the aim of this section is to
compute such a surface region.

The following theorem will give us the minimum and maximum number of neighbors of an iso-point $P$ of
$\Gamma_0$ such that the neighbors are iso-points in $\Gamma_0$ and each of them is incident to $P$.

\begin{theorem}Let $G(V,E,\mathcal{F})$ be a labeled cuboid graph with iso-level $c\in (0,1)$ and let
$G$ be regular. Let $Z$ be an iso-element of $G$ and let the iso-path $\omega$ corresponding to $Z$ be
given by $\omega=[Q_1,Q_2,\ldots,Q_m]$ with $m\geq 3$. We set $r_i=\overline{Q_iQ_{i+1}}$ for
$i=1,\ldots,m-1$ and $r_m=\overline{Q_mQ_1}$, which are iso-lines and edges of $\omega$. Let us fix
one of the $Q_i$, say $P:=Q_1$. Then there exist $4\leq n\leq 8$ iso-paths $\omega_1,\ldots,\omega_n$
in the system of labeled cuboid graphs of $G$, where $\omega_1:=\omega$ such that
\begin{enumerate}
\item $l_i=\omega_i\cap \omega_{i+1} \mbox{ for }i=1,\ldots,n-1$,
\item $l_n=\omega_n\cap \omega_1$,
\end{enumerate}
with $l_1=r_1, l_n=r_m$ and
\begin{itemize}
\item[(i)] $\omega_i$ and $\omega_{i+1}$ are disperse connected with respect to $l_i$ for $i=1,\ldots,n-1$,
\item[(ii)] $\omega_n$ and $\omega_1$ are disperse connected with respect to $l_n$.
\end{itemize}
This means, if $\,\Gamma$ is the iso-surface computed according to the algorithm given in Section 5.4 then all iso-paths
$\omega_1,\ldots,\omega_n$ belong to the same component $\Gamma(Z)$ of $\Gamma$ and all these iso-paths have in
common the iso-point $P$ and these iso-paths are the only iso-paths in $\Gamma(Z)$ with this property.
\label{thm:iso-path-region-curvature}
\end{theorem}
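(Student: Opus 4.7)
The plan is to build the sequence $\omega_1,\ldots,\omega_n$ iteratively around $P$ using Theorem~\ref{thm:iso-path-connectedness-disperse-paths}, and then verify that the sequence closes into a single cycle ending with $l_n=r_m$. The size bounds will follow from the combinatorics at $P$ together with Theorem~\ref{thm:connect-3} and Proposition~\ref{prop:connect-1}.

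First I would set $\omega_1:=\omega$ and $l_1:=r_1$. Theorem~\ref{thm:iso-path-connectedness-disperse-paths} applied to $\omega_1$ with the edge $l_1$ yields an iso-path $\omega_2$ in the system of labeled cuboid graphs of $G$ such that $l_1=\omega_1\cap\omega_2$ and the pair $(\omega_1,\omega_2)$ is disperse connected with respect to $l_1$. Because $l_1$ has $P$ as an endpoint, $P$ is a vertex of $\omega_2$; hence $\omega_2$ contains exactly two iso-lines incident to $P$, namely $l_1$ and another iso-line which we designate $l_2$. Inductively, having constructed $\omega_i$ together with the iso-line $l_i$ at $P$ distinct from the iso-line by which $\omega_i$ was joined to $\omega_{i-1}$, another application of Theorem~\ref{thm:iso-path-connectedness-disperse-paths} produces $\omega_{i+1}$ disperse connected to $\omega_i$ with respect to $l_i$.

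Second, to show the iteration terminates in a single cycle ending at $l_n=r_m$, I would view the local picture as an auxiliary graph $\mathcal{H}$ whose vertices are the iso-paths of the system of graphs of $G$ which pass through $P$ and lie in the component $\Gamma(Z)$, and whose edges are the iso-lines incident to $P$, each joining the two iso-paths that are disperse connected through it. By Theorem~\ref{thm:iso-path-connectedness-disperse-paths}, each iso-path at $P$ has exactly two iso-lines incident to $P$, so every vertex of $\mathcal{H}$ has degree two; hence $\mathcal{H}$ is a disjoint union of simple cycles. The iteration traverses the connected component of $\omega_1$ in $\mathcal{H}$, which is one such cycle. Since $\omega_1=\omega$ has precisely the two iso-lines $r_1$ and $r_m$ at $P$, the cycle must return to $\omega_1$ through its remaining iso-line $r_m$, so $l_n=r_m$ as required. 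The uniqueness of the disperse-connected partner, which is part of Definition~\ref{def:component-1} and was already exploited in Theorem~\ref{thm:connect-3}, guarantees that the $\omega_i$ are pairwise distinct and that no iso-path in $\Gamma(Z)$ through $P$ is omitted.

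Finally, for the bounds $4\leq n\leq 8$, I would enumerate the cuboids meeting at $P$ in the system of graphs of $G$. If $P$ lies in the interior of an edge of some cuboid, exactly four cuboids share that edge, each contributing precisely one iso-path through $P$ (two iso-lines, one on each of the two faces of the cuboid that contain the edge), forcing $n=4$. If $P$ is an iso-node, then the upper bound $n\leq 8$ follows from Proposition~\ref{prop:connect-1} and the maximal-multiplicity results of Theorem~\ref{thm:connect-3}, while intermediate values $n\in\{5,6,7\}$ arise from mixed local labelings in which some surrounding cuboid graphs carry one and others two iso-paths through $P$. The main obstacle is the closure step, i.e.\ ruling out that the iteration either terminates prematurely or spirals into a second cycle: this hinges on the degree-two property of $\mathcal{H}$ and the uniqueness clause of disperse connectedness, which together force the orbit of $\omega_1$ in $\mathcal{H}$ to be a single cycle passing through $r_m$.
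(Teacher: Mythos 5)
Your construction of the cyclic sequence is sound and is a genuinely different (and arguably cleaner) formalization than the paper's: you iterate Theorem~\ref{thm:iso-path-connectedness-disperse-paths}, note that every iso-path through $P$, being a simple closed polygon, carries exactly two iso-lines ending at $P$, and conclude that the auxiliary graph $\mathcal{H}$ is $2$-regular, hence a disjoint union of cycles, so the orbit of $\omega_1$ closes through $r_m$. The paper never builds such a graph; it obtains the statement by a four-fold case distinction on the types of faces meeting at $P$ (all regular; all L-faces with $P$ not an iso-node; all L-faces with $P$ an iso-node; mixed), counting iso-lines with endpoint $P$ face by face.

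However, your final paragraph, which is supposed to deliver $4\leq n\leq 8$, has a concrete error and a gap. The error: you claim that if $P$ lies in the interior of a cuboid edge, then each of the four cuboids around that edge contributes exactly one iso-path through $P$, forcing $n=4$. This tacitly assumes each of the four faces around the edge carries a single iso-line at $P$, which holds only for regular faces. A non-trivial L-face containing that edge carries \emph{two} iso-lines with endpoint $P$ (one toward each adjacent iso-point on the face), and this is precisely Case~2 of the paper's proof: $P$ not an iso-node, all four surrounding faces non-trivial L-faces, yielding $n=8$ — so interior-edge position does not force $n=4$. The gap: your $2$-regularity argument by itself does not exclude cycles of length $2$ or $3$ in $\mathcal{H}$, so the lower bound $n\geq 4$ is never established; the paper gets it from the local face analysis (four faces meet at $P$, each contributing at least one iso-line ending at $P$, and three iso-paths cannot form a connected surface at $P$). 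Relatedly, your appeal to Proposition~\ref{prop:connect-1} and Theorem~\ref{thm:connect-3} for the upper bound is misplaced: those results bound the number of iso-paths through a common \emph{edge} with iso-node endpoints, not through a common point, so they do not by themselves give $n\leq 8$; that bound again requires counting iso-lines at $P$ over the (at most eight, when $P$ is an iso-node) faces containing $P$. To repair the proof, keep your cycle machinery for the closure step, but replace the last paragraph by the face-type case analysis at $P$ as in the paper.
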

\begin{proof}
We prove the claim by considering four cases. \\

\noindent{\bf Case 1: }All faces in the system of graphs of $G$, where the iso-point $P$ lies, are regular.\\

\noindent In this case the point $P$ is an end point of two iso-lines of $\omega_1$. Both iso-lines
lie in $G$ but on different regular faces. Note that $P$ does not lie on an L-face and hence both
regular faces do not each have two disperse and two iso-nodes (according to Lemma~\ref{lemma:connect-2}).
Hence, on each of these regular faces lies an edge of a unique iso-path according to
Proposition~\ref{prop:class-7}. Therefore, to each neighbor of the faces there exists an iso-path
which passes through an iso-line on the face. Hence, we have two additional iso-paths in two different
labeled cuboids. From these two additional iso-paths we get two additional iso-lines with an end point
$P$. Therefore, we get a total of four distinct iso-lines with an end point $P$. But only three
iso-paths cannot give a connected iso-surface at the point $P$ and hence, there exist at least one
additional iso-path that passes through $P$. Hence, there exist $\omega_2,\ldots,\omega_4$ satisfying
the claim of the proposition which is $n=4$.\\

\noindent{\bf Case 2: }The iso-point $P$ is not an iso-node of $G$ and all faces in the system of
graphs of $G$, where the iso-point $P$ lies, are L-faces.\\

\noindent The iso-point $P$ is common to four faces in the system of graphs of $G$. If each of the
L-faces is non-trivial, then on each face we have two iso-lines with end point $P$. Then we have a
total of $n=8$  iso-lines which are neighbors of $P$. If some of the L-faces are trivial L-faces,
we get $4\leq n\leq 8$.\\

\noindent{\bf Case 3: }The iso-point $P$ is an iso-node of $G$ and all faces in the system of graphs
of $G$, where the iso-point $P$ lies, are L-faces.\\

\noindent The iso-point $P$ is common to eight faces in the system of graphs of $G$. The prove uses
the same argument as in Case 1 to get that $n$ is at least four. In case each of the L-faces is non-trivial,
we get the same result as in Case 2. In this case, the same argument as in Case 2 can be applied.\\

\noindent{\bf Case 4: }The iso-point $P$ of $G$ lies on an L-face in the system of graphs of $G$.\\

\noindent By combining the cases 1, 2 and 3 we then get $4\leq n\leq 8$.
\end{proof}

In the next definition, we give the so-called {\it surface region} $\Gamma_P$ of an iso-point $P$ within the
component $\Gamma_0$ of $\Gamma$. The surface region $\Gamma_P$ contains $P$ and all iso-points in $\Gamma_0$
which are incident to $P$. Additionally, $\Gamma_P$ contains each center of the iso-elements which have $P$ in
common. The surface region $\Gamma_P$ of $P$ can be used to compute discrete mean curvature at $P$
(see e.g. \cite{Meyer02Vismath}).

\begin{definition}(Neighboring iso-lines, iso-points, points and surface region). Let $Z$ be an
iso-element of the iso-surface $\Gamma$ computed according to the algorithm given in Section 5.4.
and let $P\in Z$ an iso-point. Then we call the iso-lines $l_1,\ldots,l_n$ which we get from
Theorem~\ref{thm:iso-path-region-curvature} using $Z$, neighboring iso-lines to $P$. Then
$P$ and $l_1,\ldots,l_n$ lie in the component $\Gamma(Z)$. Let the iso-points $P_1,\ldots,P_n$ be
the other end points of the $l_i$, i.e. $l_i=\overline{PP_i}$ for $i=1,\ldots,n$.
We call $P_1,\ldots,P_n$ neighboring iso-points to $P$. Note that to each iso-line $l_i$ there exists an
iso-path $\omega_i$ in $\Gamma(Z)$ such that $l_i\subset \omega_i$. All the iso-paths
$\omega_i$ are different according to Theorem~\ref{thm:iso-path-region-curvature}.
Let $P_{c_i}$ be the center of $\omega_i$ and let us denote by $N(\omega_i)$ the number of iso-lines
of $\omega_i$. We define
\begin{equation*}
\mathcal{E}_i=\left\{\begin{tabular}{ll}
$Tri(P,P_i,P_{i+1})$\;\mbox{ if }\;$N(\omega_i)=3$\mbox{ or }\;$\omega_i$ \mbox{ is an outer iso-path}&\mbox{}\\
$Tri(P,P_i,P_{c_i})\cup Tri(P,P_{i+1},P_{c_{i+1}})$\;\mbox{ else}&\mbox{}
\end{tabular}\right.
\end{equation*}
for $i=1,\ldots,n-1$, and
\begin{equation*}
\mathcal{E}_n=\left\{\begin{tabular}{ll}
$Tri(P,P_n,P_1)$\;\mbox{ if }\;$N(\omega_n)=3$\mbox{ or }\;$\omega_n$ \mbox{ is an outer iso-path}&\mbox{}\\
$Tri(P,P_n,P_{c_n})\cup Tri(P,P_1,P_{c_1})$\;\mbox{ else}.&\mbox{}
\end{tabular}\right.
\end{equation*}
We then call the piece of iso-surface $\Gamma_P$ defined by
\[
\Gamma_P:=\cup_{i=1}^n\mathcal{E}_i
\]
the surface region of $P$. In addition, we define a set of points
$\mathcal{P}_i$ by
\begin{equation*}
\mathcal{P}_i=\left\{\begin{tabular}{ll}
$\{P_i,P_{i+1}\}$\;\mbox{ if }\;$N(\omega_i)=3$\mbox{ or }\;$\omega_i$ \mbox{ is an outer iso-path}&\mbox{}\\
$\{P_i,P_{c_i},P_{i+1},P_{c_{i+1}}\}$\;\mbox{ else}&\mbox{}
\end{tabular}\right.
\end{equation*}
for $i=1,\ldots,n-1$, and
\begin{equation*}
\mathcal{P}_n=\left\{\begin{tabular}{ll}
$\{P_n,P_1\}$\;\mbox{ if }\;$N(\omega_n)=3$\mbox{ or }\;$\omega_n$ \mbox{ is an outer iso-path}&\mbox{}\\
$\{P_n,P_{c_n},P_1,P_{c_1}\}$\;\mbox{ else}.&\mbox{}
\end{tabular}\right.
\end{equation*}
Finally, we set $\mathcal{D}_P:=\cup_{i=1}^n\mathcal{P}_i$.
\label{def:neighboring-points-1}
\end{definition}

\noindent{\bf Computation of discrete mean curvature: }Let $\Gamma_P$ and $P$ be as defined in
Definition~\ref{def:neighboring-points-1}. The discrete mean curvature at the iso-point $P$ is computed by
integrating $\Delta_{\Gamma}\pmb{x}$ over $\Gamma_P$, where $\Delta_{\Gamma}$ denotes the
Laplace-Beltrami operator. The surface $\Gamma_P$ is piecewise linear. Therefore, we can give
nodal functions defined on $\Gamma_P$ with values $1$ on one of the points $\{P\}\cup \mathcal{D}_P$
and zero on the other points, where $\mathcal{D}_P$ is computed according to
Definition~\ref{def:neighboring-points-1}. Applying these nodal functions as a basis for $\pmb{x}$
and using Gauss's theorem for surface integration of $\Delta_{\Gamma}\pmb{x}$ over $\Gamma_P$ we get
the discrete mean curvature at the iso-point $P$. For more details on this computation of discrete mean curvature
see \cite{Meyer02Vismath}.

\section*{Acknowledgements}
The authors gratefully acknowledge financial support by the German Research Foundation within the
Priority Program "Transport Processes at Fluidic Interfaces" (SPP 1506).
\newpage
\bibliographystyle{amsplain}
\bibliography{references}

\providecommand{\bysame}{\leavevmode\hbox to3em{\hrulefill}\thinspace}
\providecommand{\MR}{\relax\ifhmode\unskip\space\fi MR }
\providecommand{\MRhref}[2]{%
  \href{http://www.ams.org/mathscinet-getitem?mr=#1}{#2}
}
\providecommand{\href}[2]{#2}
\begin{thebibliography}{10}

\bibitem{second-article-ali-bothe}
A.~Ali and D.~Bothe, \emph{A graph-theoretical approach for the computation of
  connected iso-surfaces on tetrahedral mesh},  (2013).

\bibitem{Alke_and_Bothe}
A.~Alke and D.~Bothe, \emph{3d numerical modeling of soluble surfactant at
  fluidic interfaces based on the volume-of-fluid method}, Fluid Dynamics and
  Materials Processing \textbf{5} (2009), no.~4, 345--372.

\bibitem{Chernyaev95marchingcubes}
E.~V. Chernyaev, \emph{Marching cubes 33: Construction of topologically correct
  isosurfaces}, Tech. Report CN/95-17, CERN, 1995.

\bibitem{Etiene:2012:TVI:2197070.2197097}
T.~Etiene, L.~G. Nonato, C.~Scheidegger, J.~Tienry, T.~J. Peters, V.~Pascucci,
  R.~M. Kirby, and C.~T. Silva, \emph{Topology verification for isosurface
  extraction}, IEEE Transactions on Visualization and Computer Graphics
  \textbf{18} (2012), no.~6, 952--965.

\bibitem{Nich81}
C.~W. Hirt and B.~D. Nichols, \emph{Volume of fluid ({VOF}) method for the
  dynamics of free boundaries}, J. Comput. Phys. \textbf{39} (1981), no.~1,
  201--225.

\bibitem{conf/dgci/Lachaud96}
J.-O. Lachaud, \emph{Topologically defined iso-surfaces}, DGCI (Serge Miguet,
  Annick Montanvert, and Stéphane Ubéda, eds.), Lecture Notes in Computer
  Science, vol. 1176, Springer, 1996, pp.~245--256.

\bibitem{Lorensen87marchingcubes:}
W.~E. Lorensen and H.~E. Cline, \emph{Marching cubes: A high resolution 3d
  surface construction algorithm}, COMPUTER GRAPHICS \textbf{21} (1987), no.~4,
  163--169.

\bibitem{Meyer02Vismath}
M.~Meyer, M.~Desbrun, P.~Schr{\"o}der, and A.~H. Barr, \emph{Discrete
  differential-geometry operators for triangulated 2-manifolds},
  {V}isualization and {M}athematics III (Hans-Christian Hege and Konrad
  Polthier, eds.), Springer-Verlag, Heidelberg, 2003, pp.~35--57.

\bibitem{Nielson:1991:ADR:949607.949621}
G.~M. Nielson and B.~Hamann, \emph{The asymptotic decider: resolving the
  ambiguity in marching cubes}, Proceedings of the 2nd conference on
  Visualization '91 (Los Alamos, CA, USA), VIS '91, IEEE Computer Society
  Press, 1991, pp.~83--91.

\bibitem{Stan02}
S.~J. Osher and R.~P. Fedkiw, \emph{Level set methods and dynamic implicit
  surfaces}, Springer-Verlag, 2002.

\bibitem{Rider97reconstructingvolume}
W.~J. Rider and D.~B. Kothe, \emph{Reconstructing volume tracking}, J. Comput.
  Phys \textbf{141} (1997), 141--112.

\bibitem{opac-b1133222}
G.~Tryggvason, R.~Scardovelli, and S.~Zaleski, \emph{Direct numerical
  simulations of gas-liquid multiphase flows}, Cambridge University Press,
  Cambridge, New York, 2011.

\end{thebibliography}
\end{document}